\documentclass[11pt]{article}
\usepackage{graphicx} 
\usepackage[utf8]{inputenc}
\usepackage{microtype}
\usepackage{amsmath,amsthm,amssymb, color, tikz, mathtools}
\usepackage[margin = 2.50cm]{geometry}
\usepackage{lineno}
\usepackage{authblk}

\usepackage{bbm}

\usetikzlibrary{positioning}
\usetikzlibrary{shapes}
\usetikzlibrary{decorations.pathreplacing}
\usepackage{enumerate, enumitem} 

\usepackage{thm-restate}
\usepackage[hidelinks]{hyperref}
\usepackage{cleveref}

\usepackage{graphicx}
\graphicspath{{./image/}}

\usepackage{amsmath,amssymb,amsthm}
\usepackage{algorithm}
\usepackage[noend]{algpseudocode}  

\usepackage{comment}

\hypersetup{colorlinks,linkcolor={blue},citecolor={blue},urlcolor={blue}}
\usepackage{thm-restate}

\renewcommand{\hat}{\widehat}

\newcommand{\R}{\mathbb{R}}

\newtheorem{theorem}{Theorem}[section]
\newtheorem{corollary}[theorem]{Corollary}
\newtheorem{remark}[theorem]{Remark}
\newtheorem{lemma}[theorem]{Lemma}
\newtheorem{claim}[theorem]{Claim}
\newtheorem{definition}[theorem]{Definition}

\newtheorem{observation}[theorem]{Observation}

\newcommand{\zone}{\{0, 1\}}
\newcommand{\ftwo}{\mathbb{F}_2}

\renewcommand{\Tilde}{\widetilde}

\newcommand{\supp}{\mathrm{supp}}

\newcommand{\G}{\mathcal{G}}
\newcommand{\wt}{wt}
\newcommand{\Or}{\mathcal{O}}
\newcommand{\poly}{\mathrm{poly}}

\newcommand{\LCM}{LCM}
\newcommand{\Codim}{Codim}

\newcommand{\Ann}{Ann}

\newcommand{\size}{size}
\newcommand{\lcmG}{\tau}
\newcommand{\ExpImb}{ExpImb}
\newcommand{\pmo}{\{-1,+1\}}
\newcommand{\Density}{\Lambda}
\newcommand{\Tower}{\mathrm{Tower}}

\title{
Constructive Arithmetic Regularity for Abelian Groups
}

\author[1]{Swarnalipa Datta}
\author[1]{Arijit Ghosh}
\author[2]{Chandrima Kayal\thanks{Funded by French PEPR integrated project EPiQ
(ANR-22-PETQ-0007) and ANR Grant FLITTLA (ANR-21-CE48-0023).}}
\author[1]{Manmatha Roy}

\affil[1]{Indian Statistical Institute, Kolkata, India}
\affil[2]{Universit\'e Paris Cit\'e, CNRS, IRIF, Paris, France}

\date{}

\begin{document}

\maketitle

\begin{abstract}
    The arithmetic regularity lemma of Green is a fundamental tool in additive combinatorics, providing a decomposition of arbitrary subsets of finite Abelian groups into pseudorandom pieces with respect to Fourier analysis. Over the years, it has found numerous applications in combinatorics, property testing, and theoretical computer science. However, Green's proof is existential, and many algorithmic applications require an efficient procedure for constructing the underlying regular decomposition. De, Nadimpalli, and Servedio (SODA 2022) developed a
query-efficient constructive regularity framework over
$\mathbb{F}_2^n$ and explicitly raised the problem of
extending their dimension-independent sumset-estimation
algorithm to other finite Abelian groups.

In this work, we resolve this question for finite Abelian groups of constant torsion. We develop the first constructive version of Green's {\em Arithmetic Regularity Lemma} in this general setting. Given oracle access to a subset of the group, our algorithm computes an implicit description of a regular decomposition together with an oracle representation of a structured subset satisfying the guarantees of Green's lemma. The algorithm runs in time and query complexity polynomial in the regularity parameters and the tower bound, while remaining independent of the size of the ambient group.

As applications, we obtain efficient algorithms for
removing linear configurations and approximately
counting linear images, including weighted sumsets
and suitable matrix systems of Cauchy--Schwarz
complexity one, over finite Abelian groups of
constant torsion. Our results provide a unified
algorithmic framework extending dimension-independent
constructive arithmetic regularity and its applications
beyond $\mathbb{F}_2^n$.

\end{abstract}
\newpage
\tableofcontents

\newpage

\section{Introduction}

Regularity lemmas are among the central tools of modern combinatorics and
theoretical computer science. They decompose large combinatorial objects into
a bounded number of pieces that behave almost randomly, thereby reducing
global questions to the analysis of locally structured components. In the
arithmetic setting, the fundamental result of this kind is Green's arithmetic
regularity lemma~\cite{green2005szemeredi}: an analogue of Szemer\'edi's graph
regularity lemma~\cite{szemeredi1975regular} which asserts that every subset
of a finite Abelian group can be partitioned into cosets of a suitable
subgroup so that, on almost all cosets, the set is pseudorandom in the Fourier
sense: after restricting the indicator function of the set to a typical
coset, all of its non-trivial Fourier coefficients are uniformly small. This
notion of pseudorandomness is captured by the following definition.

\begin{definition}[$\varepsilon$-regular]
A function $f: \G \to \mathbb{C}$ is \emph{$\varepsilon$-regular} if
$\sup_{\chi \neq \chi_0} | \widehat{f}(\chi)| \leq \varepsilon$. Let $H$ be a
subgroup of $\G$. For a coset $H_i = H + a_i$ with $a_i \in \G$, the
\emph{restriction} of $f$ to $H_i$ is the function
$f|_{H_i} : H \to \mathbb{C}$ given by $f|_{H_i}(x) = f(x+a_i)$ for all
$x \in H$. The restriction $f|_{H_i}$ is $\varepsilon$-regular if
$\sup_{\chi_0 \neq \chi \in \Hat{H}}| \widehat{f|_{H_i}}(\chi)| \leq \varepsilon$,
where $\Hat{H}$ is the dual of $H$.
\end{definition}

\begin{definition}[$(\eta, \varepsilon)$-regular decomposition]
Let $\G$ be an Abelian group and let $A \subseteq \G$. The coset decomposition
of $\G$ with respect to a subgroup $H \leq \G$ is an
\emph{$(\eta,\varepsilon)$-regular decomposition} of $\G$ with respect to $A$
if, for all but an $\eta$-fraction of the cosets $H_i$ of $H$, the restricted
function $\mathbbm{1}_A|_{H_i} : H \to \zone$ is $\varepsilon$-regular. Here
$\mathbbm{1}_A : \G \to \zone$ is the indicator function of $A$,
\begin{align*}
    \mathbbm{1}_A(x) =
    \begin{cases}
        1 & \text{if } x \in A, \\
        0 & \text{otherwise}.
    \end{cases}
\end{align*}
\end{definition}

Green proved that such a decomposition always exists, with respect to a
subgroup of bounded codimension:

\begin{theorem}[Green's arithmetic regularity lemma~\cite{green2005szemeredi}]
\label{th:Greens_regularity}
Let $A\subseteq \G$ and let $0<\eta,\varepsilon<1$. Then there exists a
subgroup $H\leq \G$ with co diemnsion at most a tower of twos of height \(\bigl(O(\eta^{-1}\varepsilon^{-2})\bigr)
\) such that the coset decomposition of $\G$ with respect to
$H$ is an $(\eta,\varepsilon)$-regular decomposition.
\end{theorem}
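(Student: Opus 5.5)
The plan is to run the standard energy-increment argument: refine $H$ repeatedly, tracking the mean-square density
\[
\mathcal{E}(H) = \mathbb{E}_{\text{cosets } H_i}\bigl(\mathbb{E}_{x\in H}\mathbbm{1}_A|_{H_i}(x)\bigr)^2 .
\]
The key facts are that $\mathcal{E}(H)\in[0,1]$, that it never decreases when $H$ is replaced by a subgroup $H'\le H$ (by Cauchy--Schwarz, since densities on the $H'$-cosets average to the density on the $H$-coset), and that it increases by a definite amount whenever the decomposition fails to be $(\eta,\varepsilon)$-regular. We start with $H_0=\G$ (codimension $0$) and iterate. Since each step gains at least $\eta\varepsilon^2$, the process stops after at most $\eta^{-1}\varepsilon^{-2}$ steps.

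For the increment step, suppose that on more than an $\eta$-fraction of cosets $H_i=H+a_i$ the restriction $\mathbbm{1}_A|_{H_i}$ has a nontrivial character $\chi_i\in\hat H$ with $|\widehat{\mathbbm{1}_A|_{H_i}}(\chi_i)|>\varepsilon$. Set $H'=\bigcap_i \ker\chi_i$, intersecting over all such bad cosets, with $\ker\chi_i\le H$. By Parseval on $H$, the variance of the density of $\mathbbm{1}_A|_{H_i}$ across cosets of $H'$ inside $H_i$ equals $\sum_{\chi\ne\chi_0,\ \chi\in (H/H')^{\wedge}}|\widehat{\mathbbm{1}_A|_{H_i}}(\chi)|^2$. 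Since $\chi_i$ annihilates $H'$, this sum is at least $\varepsilon^2$. Averaging over cosets of $H$ then gives $\mathcal{E}(H')\ge\mathcal{E}(H)+\eta\varepsilon^2$.

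The main obstacle is controlling the codimension. I interpret codimension as $\log_2[\G:H]$; in the bounded-torsion setting this is comparable to the number of generators. Intersecting the kernels of up to $[\G:H]$ characters, one per bad coset, multiplies the index by at most $|\mathrm{im}\,\chi|^{[\G:H]}$. Writing $\mathrm{im}\,\chi\le\mu_r$, with $r$ the exponent (bounded by the torsion, or by $|\G|$ in general), each step sends the index $N$ to at most $N\cdot r^{N}$. This gives a recursion $d_{k+1}\le d_k+2^{d_k}\log r$ for $d_k=\log_2[\G:H_k]$, and hence a tower of height $O(\eta^{-1}\varepsilon^{-2})$, with the constant absorbing $\log r$ for constant torsion.

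For a general finite Abelian group, where $\chi_i$ may have large order, I would first try replacing $\ker\chi_i$ by a Bohr-set analogue. Since the theorem insists on subgroups, however, I would instead bound the index of $\ker\chi_i$ by the order of $\chi_i$, and accept a dependence on the exponent inside the tower, consistent with the paper's constant-torsion focus. Finally, at termination, fewer than an $\eta$-fraction of cosets are irregular, which is exactly the required $(\eta,\varepsilon)$-regular decomposition.
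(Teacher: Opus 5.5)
Your proposal is correct in the paper's standing constant-torsion setting and follows essentially the same route as the paper's appendix proof: an energy increment on the mean-square coset density (the paper's $\ExpImb$), refining every coset at once by all the witnessing characters, and a recursion of the form $d_{k+1}\le d_k+\tau^{d_k}$ that yields a tower (you phrase it via $\log_2$ of the index, the paper via the number of defining characters). The differences are minor and in your favour: you keep $\eta$ separate from $\varepsilon$, so you actually obtain the stated height $O(\eta^{-1}\varepsilon^{-2})$, whereas the paper only treats $\eta=\varepsilon$ and gets height $O(\varepsilon^{-3})$; you get the increment in one step by Parseval on $H/H'$, where the paper does a single-character computation plus monotonicity; and the dependence on the exponent that you acknowledge is the same restriction the paper's own proof carries.
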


Green's lemma has found numerous applications in additive
combinatorics~\cite{green2010arithmetic} and property
testing~\cite{de2022approximating}.
However, Green's proof, like the classical proof of Szemer\'edi's lemma,
is purely existential, whereas many algorithmic applications in learning
theory, property testing, and approximate counting, require access to the
regular decomposition itself. This raises the central question of this paper:

\begin{center}
\emph{Can a regular decomposition be constructed efficiently, given only query
access to the set?}
\end{center}

De, Nadimpalli, and Servedio~\cite{de2022approximating}
answered this question affirmatively over $\ftwo^n$,
developing a query-efficient constructive regularity
framework and applying it to approximate sumset-size
estimation. In the same work \cite{de2022approximating}, they explicitly
raised the problem of extending their dimension-independent
sumset-estimation algorithm to other finite Abelian
groups, identifying the lack of suitable implicit
Fourier-analytic procedures as a central obstacle.

\emph{In this work, we resolve this open problem
for finite Abelian groups of constant exponent.}
We develop a constructive arithmetic regularity
algorithm whose query complexity depends only on
the regularity parameters and the group exponent,
and is independent of the size of the ambient group.
Our framework also yields an approximate sumset
estimator in this setting, together with more general
algorithmic applications.

In general, extending a framework from $\ftwo^n$ to arbitrary finite Abelian groups is far from routine. The vector-space setting of $\ftwo^n$ benefits from an underlying field
structure in which
characters can be identified with linear forms and, the concept \emph{independence} is simply the
ordinary linear independence. For a group of torsion $\tau$, by contrast, the
coefficient ring $\mathbb{Z}_\tau$ may contain zero divisors, characters may
interact across cyclic components of different orders, and ordinary linear
independence must be replaced by the more flexible notion of
pseudo-independence (Definition~\ref{defn_independence}) defined in \cite{chakraborty2024fourier, DBLP:conf/approx/DattaGKPR25}.

There is, however, a more fundamental obstacle to extending the regularity
algorithm from $\mathbb{F}_2^n$ to a general finite Abelian group $\G$. The
difficulty is not merely that individual Fourier-analytic tools become more
delicate. Green's existential proof of the arithmetic regularity lemma relies
crucially on the structure of \emph{Bohr neighbourhoods}. In the vector-space setting,
the relevant Bohr neighbourhoods coincide with subspaces and
therefore admit an exact coset structure on which Fourier analysis can be
performed directly. \emph{For a general Abelian group, positive-radius Bohr
neighbourhoods need not be subgroups and do not partition the ambient group
into cosets}. This prevents a direct algorithmic translation of Green's
argument. Green overcomes this difficulty in the existential setting by
working with nested and smoothed Bohr neighbourhoods; algorithmically,
however, this would require efficient procedures for manipulating these
additional objects and goes substantially beyond the direct use of
first-order Fourier-analytic primitives.

We overcome these obstructions by developing an implicit subgroup-based regularity framework for finite Abelian groups. However, this raises a natural question: where does the subgroup structure come from, given that \cite{green2005szemeredi} works with Bohr neighbourhoods? To address this, we work with subgroups arising as common kernels of characters, equivalently, \emph{zero-radius Bohr neighbourhoods}. Each such neighbourhood is a normal subgroup of $\G$, a structural property that provides the first and most important step toward developing an efficient regularity algorithm for finite Abelian groups. We provide a more detailed exposition in Subsection~\ref{subsec_proof_ideas}. 

Although these subgroups provide an exact coset partition, the main algorithmic challenge is to construct and refine this partition implicitly, without explicitly identifying the defining characters. Several fundamental Fourier-analytic tools for finite Abelian groups have been developed in \cite{chakraborty2024fourier,DBLP:conf/approx/DattaGKPR25}, which we use in our proofs. However, these tools alone are insufficient for our purposes. For example, our regularity algorithm constructs a
character decision tree in the spirit of \cite{de2022approximating}, but each
node now involves randomized implicit procedures and must operate over a much
more complicated character group.

A further ingredient is therefore a precise relation between the Fourier
coefficients of restricted and truncated functions
(Definitions~\ref{defn_restricted_function} and~\ref{defn_truncated_function}). In the
vector-space setting this relationship is essentially transparent, whereas
for general Abelian groups it is obscured by the presence of cyclic
components of different orders and by zero divisors in $\mathbb{Z}_\tau$.
Nevertheless, using the dual-group structure of $\widehat{\G}$, we establish
a clean correspondence between the restricted and truncated Fourier spectra
(Theorem~\ref{thm:main_result_1}). This correspondence is what allows the
algorithm to detect local non-regularity using global Fourier-analytic
subroutines, while keeping its query complexity independent of $|\G|$.

Conceptually, this viewpoint reveals that regularity in general finite Abelian
groups can still be organized through exact subgroup fibres, despite the loss
of the vector-space structure available over $\ftwo^n$. Thus, beyond its algorithmic role, our approach provides a concrete way to
understand how an arbitrary subset of a bounded-torsion Abelian group
decomposes into structured cosets on which its Fourier behaviour is
pseudorandom. In place of positive-radius Bohr neighbourhoods, the
subgroup--coset structure becomes the basic object underlying our constructive
decomposition.

Having established the constructive regularity theorem, we demonstrate its
versatility through two algorithmic applications. Although regularity
decompositions often reduce global problems to the analysis of structured
local pieces, extending the corresponding applications from \(\ftwo^n\) to
general finite Abelian groups will introduce additional obstructions due to the missing vector space structure and the zero-divisor problem. In particular, multiplication by nonzero coefficients
need not be invertible, while general matrix systems
require additional determinantal non-degeneracy
and torsion-compatible Cauchy--Schwarz complexity
conditions. Establishing the corresponding counting
and covering lemmas therefore requires additional
arguments beyond those used over $\ftwo^n$.
We use these ingredients to obtain efficient
algorithms for the following two problems.


\textbf{Removal of linear configurations.}
One of the most celebrated consequences of Szemer\'edi's graph regularity
lemma is the graph removal lemma of Ruzsa and
Szemer\'edi~\cite{ruzsa1978triple}: a graph containing few copies of a fixed
subgraph can be made entirely free of that subgraph by deleting a small
fraction of its edges. Removal lemmas have since become a cornerstone of
extremal graph theory, additive combinatorics, and property
testing~\cite{fox2017tight, gishboliner2017removal}.
Green's arithmetic regularity lemma yields an analogous arithmetic removal
lemma, in which graphs are replaced by subsets of finite Abelian groups and
forbidden subgraphs by solutions to homogeneous systems of linear equations;
removing triangles from a graph becomes a special instance of removing
solutions to a linear configuration from a set. We give an query-efficient, randomized algorithm for this task. For a single linear equation, and more generally for any system of
Cauchy--Schwarz complexity one (the class
studied structurally by Green and Tao~\cite{GreenTao2010}), our algorithm extracts a large
configuration-free subset by deleting only a small fraction of the original
set, with complexity independent of the ambient group size. We describe in more detail in
Section~\ref{Subsection: Our results}.

\textbf{Approximate image counting.}
Our second application concerns estimating the size of additive images of a
set. The prototypical problem is \emph{sumset-size estimation}: given query
access to $A \subseteq \G$, estimate the density of $A+A$.
Sumsets and their structure are a classical theme in additive
combinatorics~\cite{ruzsa2009sumsets, green2007freiman}, and the exact
computational problem is intractable: even recognizing sumsets is
NP-complete~\cite{abboud2025recognizing}.
De, Nadimpalli, and Servedio~\cite{de2022approximating} showed that, over
$\ftwo^n$, the constructive regularity lemma yields an efficient estimator for
the density of the sumset of a slight, density-preserving perturbation of $A$
(a relaxation that is necessary, as they also prove a matching lower bound for
the exact problem). The same framework also extends to \(A+B\), given query access to both
\(A\) and \(B\), and to \(kA\) for any fixed \(k\). Note that such approximate counting becomes significantly more challenging and inherently different when moving from $\ftwo^n$ to a general finite Abelian group $\G$ because of the group structure. This is also why \cite{de2022approximating} left this task as an open problem.

We generalize this in two directions at once. For an
Abelian group $\G$ of constant torsion, integers $a_1,\dots,a_m$, and the
linear form $L(x)=\sum_{j=1}^m a_jx_j$, consider the image set
\[
  L(A^m)\;=\;a_1A+\cdots+a_mA\;=\;\Bigl\{\,\textstyle\sum_{j=1}^m a_jx_j \;:\; x_j\in A\,\Bigr\}\subseteq \G,
\]
which recovers the sumset $A+A$ when $m=2$ and $a=(1,1)$. We show that, given only query access to $A$,
one can efficiently approximate the density of
$L(A'^m)$ for a density-preserving perturbation
$A'\subseteq A$. We further extend this result
to suitable matrix images $M(A'^m)$, under
determinantal non-degeneracy and torsion-compatible
complexity-one assumptions, while keeping the
query complexity independent of $|\G|$. We give a furture detailed Definition and exposition in Section~\ref{Subsection: Our results}.

\
Thus, beyond resolving the open problem of
De, Nadimpalli, and Servedio~\cite{de2022approximating} our framework substantially generalizes the
original sumset application to general linear images and to suitable systems
of linear equations over finite Abelian groups.
\


\section{Our Results}
\label{Subsection: Our results}

We first introduce some notation. $\G$ denotes a finite Abelian group $\mathbb{Z}_{p_1^{m_1}}\times \cdots \times \mathbb{Z}_{p_T^{m_T}}$, where $p_i$ are primes for all $i \in [T]$ and may not be distinct, and $\omega_\lcmG$ denotes a primitive $\lcmG^{th}$ root of unity, where $\lcmG = \LCM \{p_1^{m_1}, \ldots, p_T^{m_T}\}$. Throughout this article we assume $\G$ is an Abelian group with constant torsion/constant exponent that is $\tau$ is constant. $\Tower(\tau, t)$ denotes the tower of $\tau$ of height $t$. $\mathbbm{1}_{A}:\G \to \{0,1\}$ denotes the \emph{Indicator function} of the set $A \subseteq \G$.
\subsection{An Overview of Our Results}
\label{subsec:our result theorem}
Before diving into the statements let us describe some basic definitions that we will be using:

 Given arbitrary subsets $A, B \subseteq \G$ of some abelian group $\G$, we denote the following:
\begin{align}
     \Density(A) := \frac{|A|}{|\G|},
\end{align}
 and 
 \begin{align}
     \Density_B(A) := \frac{|A \cap B|}{|B|}.
 \end{align} where $|.|$ denotes the size of the set.
 Let $f:\G \to \zone$. A randomized algorithm $\mathcal{O}$ with query access to $f$, is said to be a \emph{probabilistic oracle machine} for $g: \G \to \zone$ if for any input $x \in \G$, the algorithm $\Or^f$ outputs a bit $\Or^f(x)$ such that 
    \begin{align*}
        \Pr[\Or^f(x) = g(x)] \geq \frac{2}{3}
    \end{align*}
    where the probability is over internal randomness of $\Or^f$. The query complexity of the machine is the number of oracle calls made by $\Or$ to $f$, and the running time of the machine is the number of time steps it takes in the worst case. 
 Let $f:\G \to \zone$ and $g:\G \to \zone$ be two functions. An algorithm $\mathcal{A}$ with query access to $f$ is said to output a $(\delta, q, T)$-oracle $\Or_g^f$ for the function $g$ if it outputs \emph{a representation of a probabilistic oracle machine} $\Or_g^f: \G \to \R$ such that the following holds:

    \begin{align}
        \Pr_{x \in \G}[\Or_g^f(x) \neq g(x)] \leq \delta 
    \end{align}
     where the query complexity of  $\Or_g^f$ is at most $q$ and the running time is at most $T$.   
   If $\delta = 0$ we call $\Or_g^f$ to be an exact oracle.

The Regularity Lemma is one of the central tools in extremal combinatorics and additive combinatorics, with numerous applications in property testing, learning theory, and approximation algorithms. Green proved an arithmetic regularity lemma for finite Abelian groups~\cite{green2005szemeredi}. However, many algorithmic applications, such as learning and property testing, require a constructive version.~\cite{de2022approximating} gave an efficient constructive regularity lemma over $\mathbb F_2^n$, and left open the problem of obtaining an efficient algorithm for general finite Abelian groups. Our first and main contribution resolves the open problem posed by
De, Nadimpalli, and Servedio~\cite{de2022approximating}: we obtain a
query-efficient constructive version of Green's arithmetic regularity lemma
for finite Abelian groups of constant torsion, with complexity independent
of the size of the ambient group. This structural result forms the basis for
the more general algorithmic applications developed below.

\begin{theorem}[A Constructive Regularity lemma for Abelian groups(Informal Version)]
\label{th:Informal_statement_regularity_lemma}
Let $A$ be an arbitrary subset of some abelian group $\G$. Given query access to $A$, a regularity parameter $\varepsilon>0$ and, a density threshold $\zeta > 0$, there exists an algorithm that makes $\poly(\Tower(\tau, O(\frac{1}{\varepsilon^3})), \frac{1}{\zeta})$ queries to $A$ and does a $\poly(\Tower(\tau, O(\frac{1}{\varepsilon^3})), \frac{1}{\zeta})$ time computation and returns with probability $\frac{9}{10}$ returns an \emph{implicit description} of a subset $A'$ such that it approximately captures the \emph{density} of $A$, where $\tau$ is the torsion of the Abelian group $\G$.
\end{theorem}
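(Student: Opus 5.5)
The plan is to make Green's energy-increment argument constructive, following the character-decision-tree approach of \cite{de2022approximating} but working with subgroups of the form $H=\bigcap_{i}\ker\chi_i$, the zero-radius Bohr neighbourhoods. Such an $H$ is a genuine subgroup, so its cosets partition $\G$. Each coset is labelled by the vector of values $(\chi_1(x),\dots,\chi_k(x))\in\{\omega_\lcmG^j\}^k$. Consequently a point $x$ can be routed to its coset by evaluating the characters at $x$, which needs no queries to $A$. The algorithm keeps a current list of characters $\chi_1,\dots,\chi_k$ and repeats two steps.

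\textbf{Step 1 (testing).} Sample $\poly(\tau^k,1/\zeta)$ random points $x$. For each sampled coset $H+x$, estimate the largest nontrivial Fourier coefficient of the restriction $\mathbbm{1}_A|_{H+x}$. The restriction is never formed explicitly. Instead we apply the correspondence between restricted and truncated spectra (Theorem~\ref{thm:main_result_1}): a character $\psi\in\widehat H$ with a large restricted coefficient lifts to a character of $\G$ outside $\langle\chi_1,\dots,\chi_k\rangle$ on which the truncated function $\mathbbm{1}_A\cdot\mathbbm{1}_{H+x}$ has a large coefficient, up to a factor of $|H|/|\G|\ge\tau^{-k}$. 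We then run the implicit Goldreich--Levin-type routine of \cite{chakraborty2024fourier,DBLP:conf/approx/DattaGKPR25} on the truncated function. It returns an implicit handle to such a character, together with a pseudo-independence check (Definition~\ref{defn_independence}) confirming that the character is new modulo the current span.

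\textbf{Step 2 (refining).} If more than an $\eta$-fraction of the sampled cosets, with $\eta$ tied to $\zeta$, are not $\varepsilon$-regular, add the characters found in Step 1 to the list and return to Step 1. Otherwise stop. Adding the characters increases the energy $\sum_{\text{cosets}}\Density(H_i)\,\Density_{H_i}(A)^2$ by at least $\Omega(\eta\varepsilon^2)$, so the procedure stops after $O(\eta^{-1}\varepsilon^{-2})$ rounds; with $\eta\sim\varepsilon$ this is $O(\varepsilon^{-3})$. In each round the number of characters can grow by a factor of $\tau^k$. The codimension is therefore bounded by $\Tower(\tau,O(\varepsilon^{-3}))$, and every sampling step costs $\poly$ of that quantity.

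\textbf{Output.} The implicit description of $A'$ is the final character list together with, for each coset label, an estimated density $\Density_{H_i}(A)$. The set $A'$ is obtained from $A$ by deleting the non-regular cosets and the cosets of density below $\zeta$. Membership $x\in A'$ is decided by computing the label of $x$, looking up the stored data, and querying $A$ once. The total density removed is at most $\eta+\zeta$, which gives the claim that $A'$ approximately captures the density of $A$.

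Failure probabilities are controlled by a union bound over the $\Tower(\tau,O(\varepsilon^{-3}))$ estimates, with each estimate amplified to error $1/(10\cdot\text{total})$. This keeps the overall success probability at least $9/10$.

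The main obstacle is Step 1. We must certify that the character returned for a truncated function is genuinely new modulo $\langle\chi_1,\dots,\chi_k\rangle$, where zero divisors in $\mathbb{Z}_\tau$ make plain linear independence the wrong notion. We must also do this while handling characters only implicitly, through oracles for $\chi(x)$, and still guarantee the energy increment. The first thing to try is replacing each newly found character by a suitable power, so that it becomes pseudo-independent of the current list while its Fourier mass on the truncated function is preserved up to a factor of $\tau$.
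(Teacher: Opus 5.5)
Your outline is essentially the paper's proof: subgroups $H=\bigcap_i\ker\chi_i$ organised as a character decision tree, list correction on the truncated functions $\mathbbm{1}_A\|_{H_i}$ at a threshold rescaled by $|H|/|\G|$ via Theorem~\ref{thm:main_result_1} with characters trivial on $H$ discarded, Green's $\varepsilon^3$ energy increment with $d_{i+1}\le d_i+\tau^{d_i}$ giving the tower, then density thresholding at $\zeta$, the oracle $\mathcal{T}(x)\cdot\mathbbm{1}_A(x)$, and a union bound over tower-many randomized calls (one inaccuracy: the characters exist only as list-correction oracles, so routing $x$ to its coset does cost queries to $A$, though this stays tower-bounded). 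Drop the closing ``power'' fix, however: the energy increment only needs the new character to satisfy $\chi\notin\langle\chi_1,\dots,\chi_k\rangle$, i.e.\ $\chi|_H$ nontrivial, which your lift already provides and which is testable by brute force against the at most $\tau^k$ products $\prod_i\chi_i^{\lambda_i}$ on sampled points; pseudo-independence of the augmented list is the wrong target (in $\mathbb{Z}_4$ with list $\{\chi_2\}$ and $H=\{0,2\}$, the only refining characters are $\chi_1,\chi_3$, and both $\{2,1\}$ and $\{2,3\}$ are pseudo-dependent), and powers do not preserve Fourier mass (for $g=\mathbbm{1}_{\{0,1\}}$ on $\mathbb{Z}_4$, $\widehat g(\chi_1)\neq 0=\widehat g(\chi_2)$), so a non-unit power can destroy the increment.
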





More precisely, with probability at least $9/10$, it outputs an probabilistic oracle $\mathcal{O}_{A'}$ and a character decision tree $T$ for a subgroup $H \le \G$ of co-dimension $\Tower(\tau, O(1/\varepsilon^{3}))$ such that one can use the oracle to construct a structural subset $A' \subseteq A$ satisfying the following:
\begin{enumerate}
  \item $\Density(A \setminus A') \le \varepsilon + \zeta$;
  \item Every coset $K$ of $H$ satisfies either $\Density_{K}(A') = 0$ or $\Density_{K}(A') \ge \frac{\zeta}{2}$;
  \item For every coset $K$ with $\Density_{K}(A') > 0$, the restricted indicator function $\mathbbm{1}_{A'}|_{K}$ is $\varepsilon$-regular.
\end{enumerate}
The number of cosets of $H$ is bounded by $M := |\G|/|H| \le \Tower(\lcmG, O(1/\varepsilon^{3}))$. The query complexity remain independent of the size of the group $|\G|$.

For a more formal statement we refer to section~\ref{section3:regularity algorithm}, Theorem~\ref{th: Regularity for Abelian}. We prove this theorem in Subsection~\ref{section:proof_of_efficient_regularity}.

The regularity structure provides a natural decomposition of the domain, a feature that lies at the heart of many learning and testing algorithms. Consequently, designing an efficient procedure for identifying such a regularity structure immediately leads to a variety of algorithmic applications. In this section, we highlight some of the well-studied  applications that follow from Theorem~\ref{th:Informal_statement_regularity_lemma}. Precisely, they are known for the usual Boolean domain, and it continues to hold for the Abelian domain as well. For each of these problems, we show how the constructive regularity lemma can be leveraged to obtain efficient algorithms with provable guaranties. 


One of the central themes in additive combinatorics is understanding how large a subset can be while avoiding a prescribed additive configuration. Classical examples include triangle-free sets, progression-free sets, and sum-free sets, all of which can be viewed as forbidding solutions to a suitable system of linear equations. More generally, given a linear equation, $L(\mathbf{x})=\sum_{j=1}^{m}a_jx_j=0, $ one may ask whether a set containing only a small number of such configurations can be converted into a completely configuration-free set by deleting only a few elements. Green's arithmetic removal lemma answers this question existentially. Our goal is algorithmic, using only query access to the set, can we efficiently construct an almost largest subset that is entirely free of the forbidden configuration, while making a number of queries independent of the ambient dimension?

Before moving to the statement we define the configuration density operator $\Lambda_L$ for functions $f_1, \dots, f_m: \G \to [0,1]$ and subsets $A \subseteq \G$ as:
\[
   \Lambda_{L}(f_1,\dots,f_m)
   =\mathbb{E}_{L(\mathbf{x})=0}\;\prod_{j=1}^{m}f_j(x_j),
   \qquad
   \Lambda_{L}(A)=\Lambda_{L}(\mathbbm{1}_A,\dots,\mathbbm{1}_A).
\]

\begin{theorem}[Single linear configuration removal: informal version]
\label{thm:remove-single_informal_statement}
Let $\G$ be a finite Abelian group with constant torsion $\tau$ and $L(\mathbf{x})=\sum_{j=1}^{m}a_jx_j=0$ with $m\ge3$ and $\gcd(a_j,\lcmG)=1$. For every target tolerance $\theta>0$, there exists a threshold $\delta=\delta(\theta,m,\tau)>0 \footnote{$\delta =\delta(\theta,m,\tau)$ is a function of $(\theta,m,\tau)$.}$, such that the following holds for :
Given query access to $A\subseteq\G$, there exists an algorithm that makes $\poly\Bigl(\Tower\bigl(\tau,\;O_m(\theta^{-3m})\bigr)\Bigr)\footnote{$O_m$ means the hidden constant is allowed to depend on the parameter m.}$
queries and does $\poly\Bigl(\Tower\bigl(\tau,\;O_m(\theta^{-3m})\bigr)\Bigr)$ time computation  and with probability at least $9/10$, it outputs a subset $A'\subseteq A$ satisfying $|A\setminus A'|\le\theta\,|\G|$ such that, \emph{whenever} $\Lambda_{L}(A)\le\delta$, the set $A'$ contains absolutely no solutions to $L(\mathbf{x})=0$.
\end{theorem}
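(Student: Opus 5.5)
We would derive Theorem~\ref{thm:remove-single_informal_statement} from the constructive regularity lemma (Theorem~\ref{th:Informal_statement_regularity_lemma}) by a counting-lemma argument, following the familiar existential proof of the arithmetic removal lemma. Take the regularity parameter $\varepsilon=\varepsilon(\theta,m)$ and the density threshold $\zeta$ small; concretely $\zeta=\theta/2$ and $\varepsilon=c_m\,\theta^{m}$ with $\varepsilon\le\theta/2$. Run the algorithm of Theorem~\ref{th:Informal_statement_regularity_lemma} on $A$ with these parameters. With probability $9/10$ it returns a subgroup $H$ of index $M\le \Tower(\tau,O(\varepsilon^{-3}))=\Tower(\tau,O_m(\theta^{-3m}))$ and an oracle for $A'\subseteq A$ with three properties: $\Density(A\setminus A')\le\varepsilon+\zeta\le\theta$; every coset of $H$ has $A'$-density either $0$ or at least $\zeta/2$; and $\mathbbm 1_{A'}$ is $\varepsilon$-regular on every coset it meets. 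The output is this oracle. Its query and time complexity are inherited, which already gives the stated $\poly(\Tower(\tau,O_m(\theta^{-3m})))$ bounds.

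It remains to show that $\Lambda_L(A)\le\delta$ forces $A'$ to contain no solutions. Set $\delta:=\frac12 (\zeta/4)^m M^{-(m-1)}$, which depends only on $\theta,m,\tau$. Suppose $A'$ contains a solution $(y_1,\dots,y_m)$, and let $K_j=H+y_j$ be the coset of $y_j$. Then $\sum_j a_jK_j=H$, since $\sum a_jy_j=0$ and each $a_jH\subseteq H$. Each $K_j$ is a coset on which $A'$ has density at least $\zeta/2$ and is $\varepsilon$-regular. We then prove a \emph{coset counting lemma}: the number of solutions of $L(\mathbf x)=0$ with $x_j\in K_j$ equals $|H|^{m-1}$. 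On this set we expand
\[
\mathbb E\prod_j \mathbbm 1_{A'}(x_j)=\sum_{\chi\in\hat H}\prod_j \widehat{\mathbbm 1_{A'}|_{K_j}}(\chi^{a_j})\cdot(\text{phase}).
\]
The trivial character contributes $\prod_j\Density_{K_j}(A')\ge(\zeta/2)^m$. For $\chi\ne\chi_0$ we use $\gcd(a_j,\tau)=1$: then $\chi\mapsto\chi^{a_j}$ is a bijection of $\hat H$ that fixes only the trivial character, so $\chi^{a_j}$ is nontrivial. Bound two factors in $\ell^2$ by Cauchy--Schwarz and Parseval, and the remaining $m-2\ge1$ factors by $\varepsilon$. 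This gives an error of at most $\varepsilon^{m-2}\le\varepsilon$, which is below $(\zeta/2)^m/2$ for our choice of $c_m$. Consequently $\Lambda_L(A)\ge\Lambda_L(A')\ge \tfrac12(\zeta/2)^m\cdot |H|^{m-1}/|\G|^{m-1}=\tfrac12(\zeta/2)^mM^{-(m-1)}>\delta$, a contradiction. Here we normalise $\Lambda_L$ over the $|\G|^{m-1}$ solutions, which is valid because $a_m$ is invertible.

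The main obstacle is the oracle being only probabilistic. The set $A'$ is defined as the set of $x\in A$ lying in a coset that the decision tree labels as "kept". We must make sure this label is a deterministic function of the coset, fixed once when the tree is built, and not re-estimated on each query. Otherwise the exactness claim "absolutely no solutions" fails. We would therefore have the algorithm estimate each leaf's density and regularity once, with failure probability at most $1/(10M)$ per leaf, and union bound over the $M$ leaves. We also need to check that $H$, being a common kernel of characters, satisfies $a_jH\subseteq H$, which is immediate. The remaining work is routine Fourier bookkeeping over $\hat H$ on translated cosets.
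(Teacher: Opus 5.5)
Your proposal is correct and follows essentially the paper's route: run \textsc{ConstDT} with $\zeta=\theta/2$ and $\varepsilon$ of order $(\zeta/2)^m$, move a hypothetical solution in $A'$ to its cosets of $H$, and apply the unit-coefficient Fourier counting lemma inside $H$ (Parseval on two factors, regularity on the others) to get at least $\tfrac12(\zeta/2)^m|H|^{m-1}$ solutions, which contradicts $\Lambda_L(A)\le\delta$. One small correction: in your definition of $\delta$ you must use the tower bound $\Tower(\tau,O(\varepsilon^{-3}))$ rather than the actual index $M$, which depends on $A$ and on the algorithm's randomness; this is harmless because $M$ never exceeds that bound.
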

For the triangle equation $x+y=z$, every solution is uniquely determined by choosing the two free variables $x,y \in \G$, after which $z=x+y$ is forced. Thus, the solution space is naturally identified with $\G^2$. In particular, every triangle in the parameter space corresponds to exactly one triangle in the original set, and vice versa. For general linear equations, we would like the same one-to-one correspondence between the relevant space and the solution set. This is precisely the role of our \emph{non-degeneracy assumption}: it guarantees that the parameterization remains bijective, so that the density and count of solutions are faithfully preserved under the reduction. We have a more formal exposition of these in Subsection~\ref{ssec:single}, where we will prove this argument formally via \emph{Counting Lemma}~\ref{lem:count-single}.

\begin{remark}
Our non-degeneracy hypothesis requires
\(\gcd(a_j,\lcmG)=1\) for every coefficient \(a_j\), ensuring that multiplication by each coefficient is an automorphism of $\G$. This guarantees a bijective parametrization of the solution set by the free variables. Moreover, the corresponding maps on the dual group $\widehat{\G}$ are also automorphisms, preventing degeneracies arising from zero divisors in $\mathbb{Z}_{\lcmG}$. From the Fourier-analytic perspective, it also ensures that the induced maps on the dual group $\widehat{\G}$ are automorphisms, preventing any collapse arising from zero divisors in $\mathbb{Z}_{\lcmG}$.
\end{remark}

Our first theorem considers a single forbidden linear equation. A natural next question is whether the same guarantees hold when several linear constraints are imposed simultaneously. Such systems capture a much richer class of additive configurations, including many classical patterns studied in additive combinatorics. We show that, for every fixed (complexity-$1$) system of linear equations,  $M\mathbf{x}=0$ where $M\in\mathbb{Z}^{r\times m}$ is an integer matrix, whose number of pseudo--independent (Definition~\ref{defn_independence}) rows is at most $r$, one can again efficiently extract an almost largest subset avoiding all forbidden configurations. Remarkably, the algorithm remains dimension-independent, with both the query complexity and running time depending only on the system and the desired accuracy. 

First, we define the generalised configuration density operator $\Lambda_M$ for functions $f_1, \dots, f_m: \G \to [0,1]$ and subsets $A \subseteq \G$ as:
\[
   \Lambda_{M}(f_1,\dots,f_m)
   =\mathbb{E}_{M\mathbf{x}=0}\;\prod_{j=1}^{m}f_j(x_j),
   \qquad
   \Lambda_{M}(A)=\Lambda_{M}(\mathbbm{1}_A,\dots,\mathbbm{1}_A).
\]

Before describing our result about system-removal we introduce two fundamental non-degeneracy conditions, which are essential for the well-definedness of the problem.

\begin{definition}[Determinantal non-degeneracy]
\label{def:det}
The $r$-th determinantal divisor $d_r(M)$ is the greatest common divisor of all $r\times r$ minors of $M$. We require the matrix to be co-prime to the group torsion: $\gcd\bigl(d_r(M),\lcmG\bigr)=1$.
\end{definition}

This condition functions as our matrix-level regularity filter. It ensures that the reduction of $M$ to $[\,I_r\mid 0\,]$ over $\mathbb{Z}_\lcmG$ is completely invertible, allowing the solution space $\{\mathbf{x}:M\mathbf{x}=0\}$ to be bijectively parameterized by $\mathbf{t}\in\G^{m-r}$ through linear forms $x_j=\ell_j(\mathbf{t})$. Under this mapping, the total solution count is exactly $|\G|^{m-r}$.

Next let us define formally the configuration that we want to remove:


\begin{definition}[Torsion-compatible Cauchy--Schwarz complexity-$1$]
\label{def:cs1_our_result}
Let $\ell_1,\dots,\ell_m:\G^{m-r}\to\G$ be the
coordinate forms parameterizing the solution space of
$M\mathbf{x}=0$.

We say that $M$ has Cauchy--Schwarz complexity-$1$
over $\mathbb{Z}_{\lcmG}$ if, for every $i\in[m]$,
there exists a partition
\(
[m]\setminus\{i\}=J_1\sqcup J_2
\)
such that, for each $s\in\{1,2\}$,
\(
\ell_i\left(
\bigcap_{j\in J_s}\ker\ell_j
\right)=\G.
\)
\end{definition}


Definition~\ref{def:cs1_our_result} is a
torsion-compatible version of the usual Cauchy--Schwarz complexity-$1$ condition (similar to the notion that was defined in Green-Tao (Definition~1.5~\cite{GreenTao2010}) specialized to $s=1$). Over groups of prime exponent, it is equivalent to the standard non-containment condition over the corresponding finite field. Over groups of composite exponent, the surjectivity condition is necessary for the Cauchy--Schwarz argument used below;
noncontainment in rational linear spans alone does not guarantee it. We give a formal exposition in
Subsection~\ref{ssec:systems}. We show the following:


\color{black}

\begin{theorem}[Matrix configuration removal for complexity-$1$ systems: informal version]
\label{thm:remove-system1_informal_statement}
Let $M\in\mathbb{Z}^{r\times m}$, whose number of pseudo--independent (Definition~\ref{defn_independence}) rows is at most $r$, complexity-$1$, and $\gcd(d_r(M),\lcmG)=1$. For every target tolerance $\theta>0$, there exists a threshold $\delta=\delta(\theta,m,r,t)>0$, such that for every finite Abelian group $\G$ with constant torsion $\tau$, there exists an algorithm that makes $\poly\Bigl(\Tower\bigl(t,\;O_m(\theta^{-6m})\bigr)\Bigr)$
queries and does $\poly\Bigl(\Tower\bigl(t,\;O_m(\theta^{-6m})\bigr)\Bigr)$ time computation, with probability at least $9/10$, it outputs an exact oracle for a subset $A'\subseteq A$ satisfying $|A\setminus A'|\le\theta\,|\G|$ such that, whenever $\Lambda_{M}(A)\le\delta$, the set $A'$ contains absolutely no solutions to $M\mathbf{x}=0$.
\end{theorem}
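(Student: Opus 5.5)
The plan is the standard regularity-then-cleanup scheme, made algorithmic through Theorem~\ref{th:Informal_statement_regularity_lemma}. Fix $\theta$. We run the constructive regularity algorithm on $A$ with parameters $\varepsilon=\varepsilon(\theta,m)$ and $\zeta=\theta/2$, where $\varepsilon$ is chosen polynomially small in $\theta$, say $\varepsilon\approx c_m\theta^{2m}$. This produces a character decision tree for a subgroup $H$ with $M=[\G:H]\le\Tower(\tau,O(\varepsilon^{-3}))$, together with an oracle for $A'\subseteq A$. We may take $\varepsilon\le\theta/2$, so that $\Density(A\setminus A')\le\varepsilon+\zeta\le\theta$. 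Every coset $K$ then has either $\Density_K(A')=0$ or $\Density_K(A')\ge\zeta/2$, and on every coset of positive density $\mathbbm{1}_{A'}|_K$ is $\varepsilon$-regular. This gives the claimed query bound $\poly(\Tower(\tau,O_m(\theta^{-6m})))$. The output oracle is made exact as follows: membership of $x$ in $A'$ is decided by locating the coset of $x$ via the tree and querying $A$ at $x$. The set of \emph{retained} cosets is computed once, up front, by sampling, so that randomness does not enter per query.

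The combinatorial heart is a counting lemma on cosets. Solutions of $M\mathbf{x}=0$ are parametrized bijectively by $\mathbf{t}\in\G^{m-r}$ via $x_j=\ell_j(\mathbf{t})$; this uses Definition~\ref{def:det}. Reducing modulo $H$, a solution lands in a tuple of cosets $(K_1,\dots,K_m)$ that is itself a solution in $\G/H$, i.e.\ $K_j=\ell_j(\mathbf{s})+H$. For such a \emph{coset configuration}, we would show
\[
\Bigl|\Lambda_M\bigl(\mathbbm{1}_{A'}|_{K_1},\dots,\mathbbm{1}_{A'}|_{K_m}\bigr)-\prod_j\Density_{K_j}(A')\Bigr|\le m\,\varepsilon.
\]
This is proved by telescoping: write $\mathbbm{1}_{A'}|_{K_j}=\Density_{K_j}(A')+g_j$ and bound each term containing some $g_i$ by $\|\widehat{g_i}\|_\infty$. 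The bound on that term uses Definition~\ref{def:cs1_our_result}. Partition $[m]\setminus\{i\}=J_1\sqcup J_2$ and apply Cauchy--Schwarz twice to isolate $g_i$. The surjectivity $\ell_i(\bigcap_{j\in J_s}\ker\ell_j)=\G$ (equally on $H$, since the forms have integer coefficients and $H$ is a subgroup) guarantees that the resulting averages are uniform over $H$, so the expression collapses to a Fourier $\ell^4$ norm, bounded by $\|\widehat{g_i}\|_\infty\|g_i\|_2$.

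The key consequence is the following. Suppose $A'$ still contains a solution. Then every coordinate lies in a retained coset, so each $\Density_{K_j}(A')\ge\zeta/2$. If $\varepsilon<(\zeta/2)^m/(2m)$, the coset configuration then contributes at least $\tfrac12(\zeta/2)^m$ to its normalized count. Such a configuration carries a $1/M^{m-r}$ fraction of the parameter space, and $A'\subseteq A$, so
\[
\Lambda_M(A)\ge\tfrac12(\theta/4)^m M^{-(m-r)}.
\]
Define $\delta$ to be half this quantity, with $M$ replaced by its tower upper bound. Then $\Lambda_M(A)\le\delta$ forces $A'$ to be solution-free.

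The constraint $\varepsilon<(\zeta/2)^m/(2m)$ is exactly what produces the $\theta^{-O(m)}$ height, hence $\theta^{-6m}$ after cubing from $\varepsilon^{-3}$ (up to constants absorbed in $O_m$).

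I expect the main obstacle to be the complexity-one counting step over composite torsion. Definition~\ref{def:cs1_our_result} must be shown to survive restriction to $H$ and to the shifted cosets. A coset shift by $\ell_j(\mathbf{s})$ turns the functions into translates, which is harmless for Fourier magnitudes but requires care in setting up the change of variables. Zero divisors in $\mathbb{Z}_\tau$ could make the fibres of the $\ell_j$ non-uniform. If the surjectivity argument does not directly give uniform fibres, I would first reduce, via the determinantal condition, to $M=[I_r\mid B]$. I would then verify the fibre uniformity for each Cauchy--Schwarz step separately, using the fact that every surjective homomorphism between finite groups has fibres of equal size.
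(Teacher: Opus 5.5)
Your overall route is the paper's proof of Theorem~\ref{thm:remove-system}. You run \textsc{ConstDT} with $\zeta=\theta/2$, prove a complexity-$1$ counting lemma on each tuple of retained cosets (the paper phrases your translates as an inhomogeneous system $M\mathbf{y}=\mathbf{b}$ in $H$), and turn one surviving solution into at least $\tfrac12(\zeta/2)^m|H|^{m-r}$ solutions in $A^m$. Your telescoping expansion uses $m$ terms instead of the paper's $2^m$. That is a legitimate small improvement, since Lemma~\ref{lem:gvn} only needs the remaining factors to be $1$-bounded.

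However, the counting lemma as you state it is wrong. Two Cauchy--Schwarz steps give $|\Lambda|^4\le\sum_{\chi}|\widehat{g_i}(\chi)|^4\le\|\widehat{g_i}\|_\infty^2\|g_i\|_2^2$. The true inequality is therefore $\|\widehat{g_i}\|_{\ell^4}^2\le\|\widehat{g_i}\|_\infty\|g_i\|_2$, not $\|\widehat{g_i}\|_{\ell^4}\le\|\widehat{g_i}\|_\infty\|g_i\|_2$. The latter already fails for $g=c\chi$, where the two sides are $|c|$ and $|c|^2$. So each term is bounded only by $\varepsilon^{1/2}$, and the lemma should read $m\varepsilon^{1/2}$ (the paper's Lemma~\ref{lem:count-system} has $2^m\varepsilon^{1/2}$). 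The condition needed to retain $\tfrac12(\zeta/2)^m$ is then $\varepsilon\le\bigl((\zeta/2)^m/(2m)\bigr)^2$, not $\varepsilon<(\zeta/2)^m/(2m)$.

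This does not sink the proof. Your opening choice $\varepsilon\approx c_m\theta^{2m}$ satisfies the corrected condition once $c_m\le 4^{-2m}/(4m^2)$, and that squaring is exactly where the height $\theta^{-6m}=\varepsilon^{-3}$ comes from. Your closing explanation is therefore incorrect: the constraint $\varepsilon<(\zeta/2)^m/(2m)$ would only give $\theta^{-3m}$, and doubling an exponent cannot be absorbed into $O_m$.

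As for the obstacle you flag, inheritance of the surjectivity condition by $H$ does not follow merely from $H$ being a subgroup. The argument is as follows. $\G$ has a cyclic factor $\mathbb{Z}_\tau$ and the forms act factorwise, so surjectivity over $\G$ yields an integer vector $v$ with $u_j\cdot v\equiv0$ for $j\in J_s$ and $u_i\cdot v\equiv1\pmod{\tau}$, where $u_j$ is the coefficient vector of $\ell_j$. Then $y\mapsto(v_1y,\dots,v_{m-r}y)$ witnesses surjectivity on every group of exponent dividing $\tau$, as in Remark~\ref{rem:cs_complexity_surjectivity}. Fibre uniformity is then automatic, and no reduction to $[I_r\mid B]$ is needed.
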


\begin{remark}
Note that our removal result is restricted to systems of Torsion Compatible Cauchy-Schwarz complexity-$1$. This is essentially the broadest class of linear configurations that can be treated using first-order Fourier analysis (or $U^2$
-uniformity) . In general, systems of higher Cauchy--Schwarz complexity require higher-order Fourier analysis and inverse theorems for Gowers uniformity norms. In other words, more general systems require control over higher-order uniformity norms $(U^k
)$ and consequently the machinery of higher-order Fourier analysis. So, the current result reflects the scope of first-order Fourier analysis itself.
\end{remark}

The preceding results demonstrate that arithmetic regularity enables efficient algorithms for removing forbidden additive configurations. We next show that the same structural framework is useful for a different class of additive problems. For the sake of simplicity, we first describe one simplest version of the problem that asks the following:

\begin{center}
\emph{ Given any subset $A \subseteq \G$, where $\G$ is an Abelian group with constant torsion. Given oracle access to $A$, precisely, given query access to the indicator function $\mathbbm{1}_{A}$ (See Definition~\ref{defn_indicator}) can we estimate the size of the sum-set $A+A$ efficiently? Precisely, can we compute the \emph{density} of the sumset $A'+A'$ with some additive error, where $A'$ is such that $\Density(A\setminus A')$ is small.}
\end{center}

Note that finding the exact size of the sumset is a hard, and it is not possible to obtain an efficient algorithm that is independent of $n$. Naturally one can ask if we approximately capture the density of $A$. The regularity decomposition compresses the input $A$ into a structured subset $A'$ on which subset sums can be simulated efficiently, leading once again to algorithms whose complexity is independent of the ambient dimension. 
Towards this we prove the following:


\begin{theorem}[Approximate sumset: informal version]
\label{th:simulating_sumset_intro}
Let \(\G\) be a finite Abelian group of constant torsion \(\tau\), and let \(A\subseteq \G\). For every target accuracy parameter \(\theta\in(0,1)\), given query access to \(A\), there exists a randomized algorithm with query complexity and running time
\(\poly\bigl(\Tower(\tau,O(\theta^{-O(1)}))\bigr)\),
both independent of \(n\) and \(|\G|\) such that with probability at least \(9/10\), the algorithm outputs a value \(v\) and a \((\theta,0,O(n))\)-oracle for \(A'+A'\), where \(A'\subseteq A\) is a density-capturing perturbation satisfying
\(\Lambda(A\setminus A')\leq \theta\),
such that
\(\bigl|v-\Lambda(A'+A')\bigr|\leq \theta\).
\end{theorem}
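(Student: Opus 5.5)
The plan is to run the constructive regularity algorithm of Theorem~\ref{th:Informal_statement_regularity_lemma} with regularity parameter $\varepsilon$ and density threshold $\zeta$, both polynomial in $\theta$; the exact powers are chosen at the end. With probability $9/10$ this returns a character decision tree for a subgroup $H$ of codimension at most $\Tower(\tau,O(\varepsilon^{-3}))$, together with an oracle for $A'\subseteq A$. The guarantees are: $\Density(A\setminus A')\le\varepsilon+\zeta\le\theta$; every coset $K$ of $H$ has $\Density_K(A')\in\{0\}\cup[\zeta/2,1]$; and $\mathbbm{1}_{A'}|_K$ is $\varepsilon$-regular whenever it is nonzero. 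The estimate for $\Density(A'+A')$ will then be read off from the quotient $\G/H$, which has $M\le\Tower(\lcmG,O(\varepsilon^{-3}))$ elements.

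\textbf{Step 1: a covering lemma.} Let $K_1=H+a$ and $K_2=H+b$ be cosets on which $A'$ has density at least $\zeta/2$ and is $\varepsilon$-regular. I claim that $(A'\cap K_1)+(A'\cap K_2)$ covers all but a $\theta$-fraction of $K_1+K_2=H+a+b$. To prove it, translate both pieces to $H$ and write $f=\mathbbm{1}_{A'}|_{K_1}$, $g=\mathbbm{1}_{A'}|_{K_2}$. The set of uncovered points $x\in H$ is exactly the set where $f*g(x)=0$. Fourier inversion on $H$ gives
\[
f*g(x)=\hat f(\chi_0)\hat g(\chi_0)+\sum_{\chi\ne\chi_0}\hat f(\chi)\hat g(\chi)\chi(x).
\]
The main term is at least $\zeta^2/4$. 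By Cauchy--Schwarz and Parseval, the $\ell^2(H)$-norm of the error term is at most $\sup_{\chi\neq\chi_0}|\hat f(\chi)|\cdot\|g\|_2\le\varepsilon$. Chebyshev's inequality then bounds the uncovered fraction by $16\varepsilon^2/\zeta^4$. It therefore suffices to take $\varepsilon\le\theta^{1/2}\zeta^2/4$ with $\zeta=\theta/2$, so that $\varepsilon+\zeta\le\theta$.

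Only first-order Fourier analysis on $H$ is used here, so the zero-divisor issues do not arise. All that is needed is that $H$ is a genuine subgroup, which holds because it is a zero-radius Bohr set (an intersection of character kernels).

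\textbf{Step 2: estimation and oracle.} Let $S\subseteq\G/H$ be the set of cosets on which $A'$ is nonempty. Up to a $\theta$-fraction of each coset in $S+S$, the set $A'+A'$ is the union of the cosets $S+S$. Hence $\Density(A'+A')$ is within $\theta$ of $|S+S|/M$.

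\begin{itemize}
\item To estimate $|S+S|/M$ without enumerating the quotient, sample random $x\in\G$ and decide whether the coset of $x$ lies in $S+S$. For this, sample random pairs $y$ and $x-y$ and query membership of both in $A'$ through the oracle $\mathcal{O}_{A'}$. Since every relevant coset has density at least $\zeta/2$, a pair of nonempty cosets summing to the coset of $x$ is hit by a random $y$ with probability at least $1/M$, after which both points land in $A'$ with probability $\zeta^2/4$. So $\poly(M,1/\zeta)$ samples suffice.
\item Alternatively, the decision tree lets us compute coset labels in $\G/H$ directly, and then $S$ can be enumerated by sampling.
\end{itemize}

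The same membership test serves as the claimed oracle for $A'+A'$. Its error is at most $\theta$ on average over $x$: it errs only on the uncovered parts inside cosets of $S+S$, and it can be made exact outside them.

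\textbf{Main obstacle.} The hard part is controlling errors introduced by the probabilistic oracle $\mathcal{O}_{A'}$. It is only correct with probability $2/3$ on each point and only approximately equals $A'$. I would amplify by majority vote, paying $O(\log(\poly(M)/\theta))$ repetitions, and union bound over the samples. The formal statement of Theorem~\ref{th: Regularity for Abelian} presumably lets us treat the coset labels from the decision tree as exact, so the only randomness to control is in these membership queries. Overall the query complexity is $\poly(M,1/\theta)=\poly(\Tower(\tau,O(\theta^{-O(1)})))$.
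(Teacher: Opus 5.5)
This is the paper's argument. The paper obtains the statement as the case $m=2$, $a=(1,1)$ of Theorem~\ref{thm:lin-image}: it runs \textsc{ConstDT} with $\zeta=\theta/2$ and $\eta=\tfrac12\theta^{1/2}(\zeta/2)^2$, proves exactly your Chebyshev covering bound (Lemma~\ref{lem:image-cover}, applied with density parameter $\zeta/2$), and outputs $v=|S+S|/|\bar\G|$.

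You diverge only in your sampling-based Step 2, which is unnecessary and slightly off. The paper reads $S$ off the leaves of $T$, forms $S+S$ by direct addition in $\bar\G$, and lets the oracle simply look up the coset of $w$, which is what makes it a zero-query $(\theta,0,O(n))$-oracle. Your query-based test does not meet that specification, and its per-point success bound $\zeta^2/4$ holds only on average, since the events $y\in A'$ and $x-y\in A'$ are dependent. Your second bullet (computing coset labels directly from the tree) is the one to use.
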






\begin{remark}
Theorem~\ref{th:simulating_sumset_intro}
is a special case of our more general approximate
image-counting result for a single linear form
(Theorem~\ref{thm:lin-image}), obtained by taking
$m=2$ and $a_1=a_2=1$.

The problem of approximate sumset estimation was
studied by De et al.~\cite{de2022approximating},
who gave a query-efficient algorithm for subsets
of $\ftwo^n$ and posed its extension to other
finite Abelian groups as an open problem.
We resolve this problem for finite Abelian groups
of constant exponents and, more generally, obtain
an approximate image-counting algorithm. We therefore omit a separate proof of
Theorem~\ref{th:simulating_sumset_intro}.
Instead, we state the more general
Theorem~\ref{thm:lin-image} below and then
briefly explain how the approximate sumset
result follows as a corollary.
\end{remark}

\color{black}

\begin{definition}[Linear image set]
For integers $a_1,\dots,a_m\in\mathbb Z$ and the form $L(\mathbf x)=\sum_{j=1}^m a_jx_j$,
the \emph{image set} of $A\subseteq \G$ under $L$ is
\[
  L(A^m)\;=\;a_1A+\cdots+a_mA\;=\;\Bigl\{\,\textstyle\sum_{j=1}^m a_jx_j \;:\; x_j\in A\,\Bigr\}\subseteq \G.
\]
For $m=2$, $a=(1,1)$ this is the sumset $A+A$.
\end{definition}

Note the genuine image $L(A^m)$ cannot be estimated to any
nontrivial accuracy with $|\G|$-independent query complexity (see \cite{de2022approximating}), so we estimate the density of the
image of a \emph{density-capturing} perturbation $A'\subseteq A$, exactly as in
Theorem~\ref{th:Informal_statement_regularity_lemma} and that is sufficient to simulate the density of the image set with some additive error. 

\textbf{Non-degeneracy hypothesis.}
We assume every coefficient is a unit modulo the torsion,
$\gcd(a_j,\tau)=1$ for all $j$. This is the single-form instance of
pseudo-independence (Definition~A.13): it makes each map
$\chi\mapsto\chi^{a_j}$ an automorphism of $\widehat\G$, which is what the
Parseval steps below require in the presence of zero divisors. No
complexity-$1$ assumption is needed: at $r=1$ the second moment closes for all
$m\ge2$.

\begin{theorem}[Approximate image counting for a linear system]
\label{thm:matrix-image}
Fix $M\in\mathbb Z^{r\times m}$, 
$m>r$,
$\gcd(d_r(M),\lcmG)=1$, and complexity-$1$ over $\mathbb Z_\lcmG$. For every
$\theta>0$ and every finite Abelian $\G$ with constant torsion $\lcmG$, given query
access to $A\subseteq\G$, Algorithm~\ref{alg:matrix-image} makes
\[
   \poly\Bigl(\Tower\bigl(\lcmG,\,O_m(\theta^{-O(m)})\bigr)\Bigr)
\]
queries and time, both independent of $n$ and $|\G|$, and with probability
at least $9/10$ outputs a value $v$ and an oracle $\mathcal O_{M(A'^m)}$ such that,
for a density-capturing perturbation $A'\subseteq A$ with
$\Density(A\setminus A')\le\theta$,
\[
   \bigl|\,v-\Density\bigl(M(A'^m)\bigr)\,\bigr|\le\theta,
\]
and $\mathcal O_{M(A'^m)}$ is an oracle for the image set $M(A'^m)\subseteq \G^{r}$.
\end{theorem}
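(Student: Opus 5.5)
The proof runs the constructive regularity lemma (Theorem~\ref{th:Informal_statement_regularity_lemma}) with parameters $\varepsilon=\varepsilon(\theta,m)$ and $\zeta=\theta/2$. Since $\varepsilon\le\theta/2$, this gives $\Density(A\setminus A')\le\theta$. It also gives an implicit description of a subgroup $H$ of bounded codimension through a character decision tree. It gives an oracle for $A'$ such that every coset $K$ of $H$ is either empty in $A'$, or has density at least $\zeta/2$ with $\mathbbm{1}_{A'}|_K$ $\varepsilon$-regular. Call the nonempty cosets \emph{active}. They correspond to points $c$ of the quotient $\G/H$, which has at most $\Tower(\tau,O(\varepsilon^{-3}))$ elements. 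The set of active cosets $S\subseteq \G/H$ is learned with bounded query complexity: sample points, locate each coset via the decision tree, and estimate each coset's density with accuracy $\zeta/8$. This is legitimate because every coset's density is either $0$ or at least $\zeta/2$. Using $\gcd(d_r(M),\tau)=1$, the solution space is parametrized bijectively by $\mathbf t\in\G^{m-r}$ through the forms $\ell_j$, and $M$ acts coordinatewise on cosets. Hence $M(A'^m)\subseteq \bigcup_{\mathbf c\in S^m} M\mathbf c+H^r$, where $M\mathbf c$ ranges over cosets of $H^r$ in $\G^r$.

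The key step is a covering lemma: for every $\mathbf c\in S^m$, the set $M(A'^m)$ covers all but an $O(\theta)$-fraction of the coset $M\mathbf c+H^r$. Fix $y\in M\mathbf c+H^r$. Count representations $R(y)=\mathbb E_{\mathbf x\in\prod (H+c_j)}\bigl[\prod_j \mathbbm{1}_{A'}(x_j)\,\mathbbm 1[M\mathbf x=y]\bigr]$ in terms of the restricted functions $f_j=\mathbbm{1}_{A'}|_{H+c_j}$, and expand in Fourier on $\hat H$. The main term is $\prod_j\Density_{H+c_j}(A')\cdot|H|^{-r}$ up to normalization, and it is at least $(\zeta/2)^m$. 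Next bound $\mathbb E_y|R(y)-\text{main}|^2$ using Parseval. Each nonmain Fourier term involves some $\widehat{f_j}(\chi)$ with $\chi\neq\chi_0$. The second-moment error is at most $\max_j\sup_{\chi\ne\chi_0}|\widehat{f_j}(\chi)|^{2}$ times $L^2$ norms, where determinantal nondegeneracy makes the transposed maps $\chi\mapsto\chi^{M^T}$ injective and prevents zero-divisor collapse. For $r\ge2$, apply the torsion-compatible complexity-$1$ partition $J_1\sqcup J_2$ of Definition~\ref{def:cs1_our_result} and Cauchy--Schwarz twice, using the surjectivity $\ell_i(\bigcap_{J_s}\ker\ell_j)=\G$, to reduce each error term to $\|\widehat{f_i}\|_\infty$. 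Chebyshev then shows that the fraction of $y$ in the coset with $R(y)=0$ is at most $O(\varepsilon^{2}(\zeta/2)^{-2m})$. Choosing $\varepsilon=\theta^{O(m)}$ makes this at most $\theta/2$, which gives the tower height $O_m(\theta^{-O(m)})$.

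The estimator and oracle follow. Set $T=\{M\mathbf c:\mathbf c\in S^m\}\subseteq(\G/H)^r$ and output $v=|T|/|\G/H|^r$. By the covering lemma, $|v-\Density(M(A'^m))|\le\theta$, and since $T$ is computed exactly from $S$ in time $\poly(|\G/H|^{m})$, there is no further error. The oracle, given $y\in\G^r$, evaluates the decision tree on each coordinate to get its coset and answers whether it lies in $T$. It errs only on the uncovered $\theta$-fraction. Union bounds over learning $S$ and running the regularity algorithm, with boosted success probabilities, give overall success $9/10$. I expect the main obstacle to be the second-moment bound in the covering lemma for general $M$ over $\mathbb Z_\tau$: one must verify that the Fourier expansion over $\hat H^{\,r}$ collapses correctly even though the parametrization uses restricted rather than global characters. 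Handling this requires the restricted/truncated correspondence of Theorem~\ref{thm:main_result_1} to move between $\hat H$ and $\hat\G$.
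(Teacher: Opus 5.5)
Your overall architecture matches the paper's: run \textsc{ConstDT}, take the surviving cosets $S$, compute $\bar M(S^m)$ in the quotient, and answer membership by coset. But the covering lemma, which is the heart of the proof, rests on a second-moment/Chebyshev argument that does not go through for $r\ge 2$ as you describe it.

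Expanding the normalized representation count on $\widehat H^{\,r}$, its variance is $\sum_{\boldsymbol\psi\neq\mathbf 0}\prod_{j}|\widehat{f_j}((M^{\top}\boldsymbol\psi)_j)|^2$. After you extract one factor bounded by $\varepsilon^2$, Parseval closes the remaining sum only if you can find $r$ further indices, avoiding the extracted one, on which $\boldsymbol\psi\mapsto((M^{\top}\boldsymbol\psi)_j)_j$ is a bijection of $\widehat H^{\,r}$. Injectivity of the full map $\boldsymbol\psi\mapsto M^{\top}\boldsymbol\psi$ does not provide this. In fact $\gcd(d_r(M),\lcmG)=1$ does not even guarantee a single $r\times r$ minor that is a unit mod $\lcmG$. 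With a single coordinate, the fibres have size of order $|H|^{r-1}$, and you only get $\varepsilon^2|H|^{r-1}$. This is exactly the obstruction the paper points out before switching to a first-moment argument.

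Your fallback for $r\ge2$, Cauchy--Schwarz twice along the partition $J_1\sqcup J_2$, does not give an $\varepsilon^2$ second-moment bound either. It is the generalized von Neumann inequality, which bounds each target separately: $|\Lambda_{M,\mathbf b}(g_1,\dots,g_m)|^4\le\sum_{\chi}|\widehat{g_i}(\chi)|^4\le\varepsilon^2$. So each error term is $O(\varepsilon^{1/2})$ pointwise, not $O(\varepsilon^2)$ in mean square.

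The repair is to use that inequality for what it actually gives. Write $f_j=\mu_j+g_j$ and expand over the $2^m$ terms. This yields $|\Lambda_{M,\mathbf b}(f_1,\dots,f_m)-\prod_j\mu_j|\le 2^m\varepsilon^{1/2}$ \emph{uniformly} in $\mathbf b$. So with $\varepsilon<(\zeta/4)^{2m}$, every target in every reachable coset $M\mathbf c+H^r$ has a representation in $A'^m$. Covering is then exact: $v=\Density(M(A'^m))$ with no covering loss, the oracle is exact, and Chebyshev is unnecessary. Your claimed rate $O(\varepsilon^2(\zeta/2)^{-2m})$ is the single-form rate and should be replaced by this $\varepsilon^{1/2}$ dependence. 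That still gives $\varepsilon=\theta^{O(m)}$ and the same tower height.

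Also, the obstacle you anticipate about restricted versus global characters does not arise. All counting is done inside $H$ after translating the cosets. What you need is only that $\gcd(d_r(M),\exp H)=1$ and the torsion-compatible complexity-$1$ condition are inherited by $H$, which holds because $\exp(H)\mid\lcmG$. Theorem~\ref{thm:main_result_1} plays no role in this step.
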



\subsection{Overview of the Proof}\label{subsec_proof_ideas}

We will start by describing the existing efficient regularity algorithm by \cite{de2022approximating} and why it can not be extended to Abelian groups. The goal of the algorithm is to construct a bounded-codimension subspace \(H\leq \mathbb F_2^n\) whose cosets partition \(\mathbb F_2^n\) so that \(A\) becomes regular on almost all cosets. Equivalently, it seeks a partition into structured pieces on which all nontrivial Fourier coefficients of the corresponding restrictions of \(A\) are small.\\

\textbf{The existing Algorithm for $\ftwo^n$ by \cite{de2022approximating}:}

 The algorithm constructs the desired partition iteratively by growing a parity decision tree. Initially, the tree consists of a single leaf, corresponding to the whole space, $\ftwo^n$. At any subsequent stage, the parity labels appearing at the internal nodes determine a subspace \(H\), and the leaves of the tree correspond precisely to the cosets of \(H\).

\begin{itemize}
    \item For each current leaf, corresponding to a coset \(H_i\), the algorithm  search for a nontrivial Fourier coefficient witnessing that \(A_{H_i}\) is not \(\varepsilon\)-regular. If no such witness is found for at least a \(1-\varepsilon\)-fraction of the leaves, the refinement process terminates. Otherwise, witnessing parities are extracted from the irregular leaves. Whenever such a parity is linearly independent of the parities already used in the tree, every current leaf is split according to that parity. Thus, adjoining a new parity refines the current collection of leaves and, equivalently, produces the cosets of a smaller subspace.
    \item Termination is established through an energy-increment argument. Green~\cite{green2005szemeredi} defined the `progress measure' $\ExpImb$ (see Equation~\ref{progress_measure}) and proved that at each step, it increases by at least $\varepsilon^3$. Since, the progress measure $\ExpImb$ is at most $1$, therefore, the algorithm terminates after at most $O(\frac{1}{\varepsilon^3})$ many iterations, such that all but $\varepsilon$-fraction of the surviving cosets are $\varepsilon$ regular.

    \item Note that each node of the algorithm needs to find one heavy Fourier coefficient. One can apply famous \emph{Goldreich Levin} to find such heavy Fourier co-efficients but it is itself costly, takes exponential many queries, and we need to do it at each step, even for each node in the worst-case scenario. To make this step efficient what \cite{de2022approximating} is that, they apply another famous trick called \emph{List correction} framework, where instead of finding a heavy Fourier co-efficient explicitly, we find them implicitly and work with the linear dependency with each other. \cite{de2022approximating} call this procedure \emph{Implicit Goldreich Levin} algorithm. Note that each node of the algorithm now contains other random procedures, and they accumulate some error that needs to be handled carefully till the end. 
\end{itemize}
 
 \textbf{Proof idea of Theorem~\ref{th: Regularity for Abelian}.}

While our algorithm for general Abelian groups follows the basic skeleton of the framework in \cite{de2022approximating}, it should first be noted that it is not a direct generalization. Indeed, this issue is also mentioned as future work in \cite{de2022approximating}, where the authors point out some natural obstacles to extending the Goldreich-Levin framework beyond their setting. Here, we briefly describe our algorithm and highlight the new elements required in the general Abelian group setting.

Algorithm~\ref{algo: efficient regularity lemma} takes as black-box access a subset $A$ of $\G$, along with the regularity parameters $\varepsilon$ and $\zeta$. It constructs a character decision tree~\ref{defn_char_tree} that recursively partitions the group $\G$ into cosets, until all but $\varepsilon$-fraction of the resulting cosets~\ref{defn_cosets_groups} become $\varepsilon$ regular. Now, finding one heavy Fourier coefficient is itself costly, and we need to do it at each step, even for each node in the worst-case scenario.

\begin{itemize}
    \item \textbf{Not every Bohr neighbourhood is a subgroup, hence not possible to algorithmize Green's proof.} Let $\G$ be a finite Abelian group and let $\Gamma=\{\chi_{\gamma_1},\ldots,\chi_{\gamma_d}\}\subseteq \widehat{G}$ be a set of characters. A Bohr neighbourhood is defined in the following way:
$$B_{\Gamma,\delta}
=
\left\{
x\in G:
|\arg \chi_{\gamma}(x)|\leq 2\pi\delta
\text{ for every } \chi_\gamma\in\Gamma
\right\}.$$
Here $d$ is called the rank of $B_{\Gamma, \delta}$, and $\delta$ is the radius. These $B_{\Gamma,\delta}$ has the following property:
$$B_{\Gamma,\delta}
+
B_{\Gamma,\delta}
\subseteq
B_{\Gamma,2\delta},$$
which is because, if $x,y \in B_{\Gamma,\delta}$, then, since for every $\chi_\gamma \in \Gamma$, $\chi_\gamma(x+y)=\chi_\gamma(x)\chi_\gamma(y)$, and so $\arg \chi_\gamma(x+y) = \arg \chi_\gamma(x) + \arg \chi_\gamma(y)$, by triangle inequality we have $|\arg \chi_\gamma(x+y)| \leq |\arg \chi_\gamma(x)| + |\arg \chi_\gamma(y)| \leq \delta + \delta =2\delta$, which implies $B_{\Gamma,\delta}
+
B_{\Gamma,\delta}
\subseteq
B_{\Gamma,2\delta}$. 

Also, for a subgroup $H$ of $\G$, we have $H+H=H$. So, Bohr neighborhoods are not subgroups in general. Moreover, Green points out that a rank-$d$ Bohr neighborhood behaves geometrically more like a
$d$-dimensional convex body than like a subgroup. In particular, one
typically expects $|B_{\Gamma,\delta}+B_{\Gamma,\delta}|$ to be closer to $2^d|B_{\Gamma,\delta}|$ than to $|B_{\Gamma,\delta}|.$ Hence, a Bohr neighborhood is not invariant under addition in the way that a subgroup is, and one cannot perform harmonic analysis on it, not even approximately. Consequently, Bohr neighborhoods do not partition $\G$ into cosets in general. This prevents us from directly extending the Goldreich-Levin or Implicit Sieve type algorithmic framework uniformly to arbitrary
positive-radius Bohr neighbourhoods. Green overcomes the structural difficulty by working with
nested and smoothed Bohr neighbourhoods, but algorithmizing this framework would require corresponding
algorithms for these objects. Thus our algorithmization cannot simply use all
$B_{\Gamma,\delta}$ as replacements for subgroups and cosets.

\item \textbf{The task can be done without using the Generalised Bohr neighbourhood.} First let us make an observation that, for $\delta =0$, $$B_{\Gamma,0}
=
\left\{
x\in G:
|\arg\chi_\gamma(x)|=0
\text{ for every }\chi_\gamma\in\Gamma
\right\}.$$
Since $\arg\gamma(x)=0$ if and only if $\chi_\gamma(x)=1$, it follows that
$$B_{\Gamma,0}
=
\{x\in G:\chi_\gamma(x)=1
\text{ for every }\chi_\gamma\in\Gamma\} = V_{0,\gamma_1,\ldots,\gamma_d},$$
where $V_{0,\gamma_1,\ldots,\gamma_d}$ is defined as in Definition~\ref{defn_cosets_groups}.
Thus Green's zero-radius Bohr neighbourhood is precisely the normal
subgroup $V_{0,\gamma_1,\ldots,\gamma_d}$ appearing in our framework. In particular, at radius zero, the approximate Bohr neighbourhood becomes a subgroup. Our first contribution towards the algorithm is that bypassing generalised Bohr neighbourhood argument and work with generale subgroup structure so that we can use Fourier analytic tools. Now every process has it's cost, our bypassing the Bohr neighbourhood techniques comes with the coset that now the co-dimension of the resulting subgroup depends on $\tau$, the exponent of the domain $\G$ while in Green's original proof the exponent depends on only tower of $2$ and some other constant.

\item \textbf{List Correction framework at each node.} While trying to algorithmize this procedure, the main idea here is to save queries by not identifying the Fourier coefficient explicitly, and still get a good estimate of it. This phenomenon is known and well studied in the literature, namely as \emph{List Correction}. \cite{de2022approximating} used a similar technique to give an efficient regularity lemma for $\ftwo^n$. Eventually, for an Abelian group, one needs more structure, but thanks to the work by~\cite{DBLP:conf/approx/DattaGKPR25}, the ground of the work was already established, but not enough. But here comes more subtlety with the fact that we need to run this random procedure \emph{list correction} at each node, and we do not have the access of the charcaters, while we need to run the process each time on a new restricted subfunctions. Hence, alone \cite{DBLP:conf/approx/DattaGKPR25} framework is not sufficient, we still need more structural understanding of the Abelian group to be able to extend the List correction framework and  apply it recursively at each node for different restricted functions without knowing the character explicitly that was used for the restrictions. 
    
One very subtle but important tool of our proof is that the algorithm works with the {\em truncated} function, with respect to $A$, $f\|_A : \G \to \{0,1\}$ is 
    \[
    f\|_A(x) =
    \begin{cases}
    f(x) & x \in A, \\
    0 & x \notin A.
    \end{cases}
    \](Definition~\ref{defn_truncated_function}) on the cosets. On the other hand, the main algorithm originally has access to the restricted function and the aim to get a regularity condition also with respect to the restricted function $f|_A : A \to \{0,1\}$, with respect to $A$, is defined as  
    \[
        f\mid_A(x) := f(x)
    \]
    for all $x$ in $A$. If we were working with the vector space, establishing a relation between the Fourier coefficients is quite straightforward and simple, and was done by~\cite{de2022approximating} also. In the case of an Abelian group, this becomes quite different. 

    This issue is resolved by Theorem~\ref{thm:main_result_1} (see Section~\ref{section_relation_res_trun}), which gives a relation between the coefficients of a truncated function with the corresponding restricted function (Definition~\ref{defn_restricted_function}). While the proof is quite simple but it gives a deep insight about the structure of the coefficients with respect to Abelian groups.

    At each stage, the algorithm runs the List Correction (Algorithm~\ref{algo:Implicit_Sieve_generalized}) on the current restricted functions on cosets in order to identify the heavy Fourier coefficients. Once a heavy coefficient is found, the algorithm partitions the group further by splitting all the cosets at this stage with respect to the corresponding heavy character that it chose. Thus, one new heavy character is added at each stage, unless a good fraction of the surviving cosets are regular. Green~\cite{green2005szemeredi} defined the `progress measure' $\ExpImb$ (see Equation~\ref{progress_measure}) and proved that at each step, it increases by at least $\varepsilon^3$. Since, the progress measure $\ExpImb$ is at most $1$, therefore, the algorithm terminates after at most $O(\frac{1}{\varepsilon^3})$ many iterations, such that all but $\varepsilon$-fraction of the surviving cosets are $\varepsilon$ regular.

    We use this subroutine of List-correction which spends constant query to detect the heavy, that is, non-regular Fourier coefficient. We give a formal proof of these steps where we will use the subroutine in the main algorithm in Appendix~\ref{section_implicit_sieve} and Appendix~\ref{section_estimation}.
    \item \textbf{Query Complexity.}After obtaining the $\varepsilon$ regular cosets, the algorithm estimates the Density (see Definition~\ref{defn_density}) on each surviving cosets. Cosets whose density is at least $\zeta$ are not discarded. The union of the remaining surviving cosets defines the set $A' \subseteq A$, which satisfies the properties in Lemma~\ref{Prop: prop of Green's regularity}. So, we have obtained a good approximation $A'$ of the set $A$. Note that the character decision tree constructed by the algorithm defines an exact oracle for $A'$.

    The query complexity of Algorithm~\ref{algo: efficient regularity lemma} is $\poly(\Tower(\tau, \frac{1}{\varepsilon^3}), \zeta)$, which follows from that of the List-Correction Algorithm~\ref{algo:Implicit_Sieve_generalized}, the Coefficient Estimation Algorithm~\ref{algo: estimation}, and the fact that the Algorithm terminates after $O(\frac{1}{\varepsilon^3})$ many iterations. We present the main Algorithm~\ref{algo: efficient regularity lemma} and the proof in Section~\ref{section3:regularity algorithm}.

\end{itemize}


\begin{description}

\item[Proof idea of `Algorithmic removal of linear configurations' (Theorem~\ref{thm:remove-single} and Theorem~\ref{thm:remove-system}).] 
The goal is to establish a constructive framework for the algorithmic removal of linear configurations within finite Abelian groups of constant torsion. The proof of this follows in two main steps.

In the first step, we apply the algorithmic version of the Green's regularity lemma (see Theorem~\ref{th: Regularity for Abelian} and Algorithm~\ref{algo: efficient regularity lemma}), which partitions the group into cosets such that the set $A$ is $\varepsilon$ regular on all but $\varepsilon$-fraction of the cosets, as stated in the proof idea of Theorem~\ref{th: Regularity for Abelian}. As described earlier, it then discards the cosets that are not regular and keeps only those cosets whose estimated $\Density$ is $\geq \zeta$. The approximation $A'$ of $A$ preserves the behavior of $A$ across the cosets. Here $A'$ is the set that satisfies the properties in Lemma~\ref{Prop: prop of Green's regularity}.

We then consider the non-degeneracy hypothesis, that is, every coefficient of the equation $$L(\mathbf{x})=\sum_{j=1}^{m}a_{j}x_{j}=0$$ is a unit modulo the torsion of the group, $\gcd(a_j,\lcmG) =1$, for all $j \in [m]$. This ensures that the map that takes a character $\chi$ to $\chi^{a_j}$ is an automorphism over $\widehat{\G}$. This generalizes to Determinantal non-degeneracy for a system of equations $(M\mathbf{x}=0)$ (see Definition~\ref{def:det}), where $M$ is a matrix with Cauchy--Schwarz complexity -$1$ (see Definition~\ref{def:cs1}) and has number of pseudo-independent (Definition~\ref{defn_independence}) rows at most $r$.
This condition should allow us to apply the Counting lemma for a single equation (Lemma~\ref{lem:count-single}), and the generalized Von-Neumann for system of linear equations (Lemma~\ref{lem:count-system}), which shows that the configuration density operator $\Density_L$ for function $f_1, \dots, f_m: \G \to [0,1]$ is within $\pm \varepsilon$ error of the product of their individual averages, inside $A'$. But, here we should slow down and note that counting doesn't directly hold for Abelian groups just by given non-degeneracy condition, one has to define the refined notion of \emph{Torsion compatible Cauchy--Schwarz Complexity -$1$ } to be able to prove such counting Lemma.

Once we establish the \emph{Counting Lemma}, this implies that if there exists even one solution to the system in $A'$, it yields many solutions to the system inside the original set $A$. This further implies that if the density of the solutions inside the $\varepsilon$-regular cosets is sufficiently small, then there cannot exist any solution to the system in the approximation $A'$ of $A$. The proof follows by contradiction, where we assume that there exists a solution to the system in $A'$, and then show that this leads to many solutions in $A'$ ($\geq c |\G|^{m-r}$ to be precise), which violates our initial assumption that the density of the system of linear configurations in $A$ is sufficiently small.

The key ingredient is the assumption that the system has Cauchy-Schwarz Complexity-$1$. By repeatedly applying the Cauchy--Schwarz inequality and duplicating variables according to the partition $J_1 \sqcup J_2$ (see the definition of Cauchy--Schwarz Complexity -$1$, Definition~\ref{def:cs1}), the linear forms become separated from one another. This separation allows the error terms to be controlled through their Gowers $U^2$-norms. Note that Complexity-$1$ system could be controlled by $U^2$-norms and which is the reason we can get the correct count of configuration, which is not possible for any higher order complexity system. See Section~\ref{sec:linear-removal} for more details.

\item[Proof idea of `Approximate image counting' (Theorem~\ref{thm:lin-image} and Theorem~\ref{thm:matrix-image}).]
The goal is to prove that given a subset $A \subseteq \G$, one can efficiently estimate the density of its image under both a single general linear form ($L(A^m) = a_1 A + \dots + a_m A$) and a matrix system ($M(A^m) \subseteq \G^r$), whose number of pseudo--independent (Definition~\ref{defn_independence}) rows is at most $r$, and has Cauchy--Schwarz complexity -$1$ (see Definition~\ref{def:cs1}), with a query complexity that is entirely independent of the group's size $|\G|$. The proof of this also follows in two main steps.

The first step is again the algorithmic version of the Green's regularity lemma (Theorem~\ref{th: Regularity for Abelian} and Algorithm~\ref{algo: efficient regularity lemma}), which partitions the group into cosets such that the set $A$ is $\varepsilon$ regular on all but $\varepsilon$-fraction of the cosets, as stated in the proof idea of Theorem~\ref{algo: efficient regularity lemma}. As described earlier, it then discards the cosets that are not regular, and keeps only those cosets whose estimated $\Density$ is $\geq \zeta$. The approximation $A'$ of $A$, obtained in Lemma~\ref{Prop: prop of Green's regularity}, preserves the behavior of $A$ with the cosets. 

The second part is the covering lemma (Lemma~\ref{lem:image-cover}), which states that if the set $A$ has density $\geq \zeta$, a single linear form that satisfies the non-degeneracy hypothesis, that is, every coefficient of the equation $L(\mathbf{x})=\sum_{j=1}^{m}a_{j}x_{j}=0$ is a unit modulo the torsion of the group, $\gcd(a_j,\lcmG) =1$, for all $j \in [m]$ (which again ensures that the map that takes a character $\chi$ to $\chi^{a_j}$ is an automorphism over $\widehat{\G}$), is almost completely contained in ($a_1A+\cdots +a_mA$). More precisely, Lemma~\ref{lem:image-cover} shows that $L(A'^m)$ covers all
but an $\varepsilon^2/\zeta^{2m}$ fraction of every coset in
\(
  \bar L(S^m)\;=\;a_1S+\cdots+a_mS\;\subseteq\;\bar\G ,
\)
and lands in no other coset, where $\bar L:\bar\G^{\,m}\to\bar\G$ be the form induced by
$L$ on the quotient group $\bar\G=\G/H$. This generalizes to the case of a matrix
$M\in\mathbb Z^{r\times m}$, whose number of pseudo--independent (Definition~\ref{defn_independence}) rows is at most $r$, and Cauchy--Schwarz complexity -$1$ (see Definition~\ref{def:cs1_our_result}), where we want to compute the density of the image
\[
   M(A^m)=\{M\mathbf x:\mathbf x\in A^m\}\subseteq \G^{r}.
\]
Note that to be able to prove \emph{Covering Lemma}, here also one has to define suitable notion of Cauchy–Schwarz complexity as merely the non-degeneracy condition will not be sufficient for this purposes. 
Here we use the first moment (a generalized von Neumann inequality, Lemma~\ref{lem:gvn}), which forces the
`Cauchy--Schwarz complexity-$1$' hypothesis, but, being an $\ell^{\infty}$ bound
over targets, yields exact covering of each reachable output coset. We then show that the algorithm (Algorithm~\ref{alg:lin-image}) outputs a value $v$, which is within an error of $\pm \theta$ from the density of $(a_1A'+\cdots+a_mA')$. This gives us the desired result. This again can be extended to the general case of the matrix (Algorithm~\ref{alg:matrix-image}). See Section~\ref{sec:lin-image} for more details.
\end{description}

\subsection{Technical Challenges Encountered in Establishing the Applications}

\begin{itemize}

    \item \textbf{Non-invertible coefficients and Fourier-analytic
    counting for linear equations.}
    Unlike the setting of $\mathbb{F}_2^n$, multiplication by a
    nonzero integer need not be invertible over a general finite
    Abelian group $\G$. Consequently, the map
    $\chi\mapsto\chi^a$ need not be a permutation of
    $\widehat{\G}$, preventing a direct application of the
    standard Fourier-analytic counting argument. We overcome
    this difficulty by imposing suitable non-degeneracy
    conditions, requiring the relevant coefficients to be
    coprime to the exponent $\tau$ of $\G$. This ensures
    that the corresponding character maps are automorphisms,
    allowing Fourier reindexing and the application of
    Parseval's identity. Using these properties, we establish
    a counting lemma for general linear equations
    \[
        L(\mathbf{x})=\sum_{j=1}^{m}a_jx_j=0,
    \]
    including the inhomogeneous case needed when restricting
    configurations to cosets of a subgroup.

    \item \textbf{Determinantal non-degeneracy and exact
    parameterization of linear systems.}
    For systems of linear equations
    $M\mathbf{x}=0$, where $M\in\mathbb{Z}^{r\times m}$,
    ordinary rank considerations are insufficient over
    groups of composite exponent, since the coefficient
    ring $\mathbb{Z}_{\tau}$ may contain zero divisors.
    We address this difficulty by imposing the
    determinantal non-degeneracy condition
    \[
        \gcd(d_r(M),\tau)=1,
    \]
    where $d_r(M)$ denotes the $r$-th determinantal
    divisor of $M$. Using Smith normal form, we show
    that this condition allows the system to be reduced
    to $[I_r\mid 0]$ through invertible transformations
    over $\mathbb{Z}_{\tau}$. Consequently, the solution
    space admits a bijective parameterization by
    $\G^{m-r}$, yielding
    \[
        |\ker M|=|\G|^{m-r}.
    \]
    Moreover, since $\exp(H)\mid\tau$ for every subgroup
    $H\leq\G$, this non-degeneracy condition is preserved
    upon restriction to $H$. These properties are essential
    for obtaining the correct normalization of configuration
    densities in the generalized counting and removal
    arguments.

    \item \textbf{Torsion-compatible Cauchy--Schwarz complexity
    and generalized counting lemmas.}
    The usual Cauchy--Schwarz complexity condition,
    formulated in terms of linear spans over fields,
    does not directly provide the surjectivity properties
    required for the counting argument over finite
    Abelian groups of composite exponent. In particular,
    noncontainment in a rational linear span does not
    guarantee the required surjectivity modulo $\tau$.
    We therefore introduce a torsion-compatible
    complexity-$1$ condition requiring, for every
    $i\in[m]$, a partition
    $[m]\setminus\{i\}=J_1\sqcup J_2$ such that
    \[
        \ell_i\left(
        \bigcap_{j\in J_s}\ker\ell_j
        \right)=\G,
        \qquad s\in\{1,2\}.
    \]
    This condition guarantees the uniformity of the
    shifts arising in the two Cauchy--Schwarz steps,
    allowing us to control the counting error through
    Fourier fourth-moment estimates. We further establish
    that the torsion-compatible condition is inherited
    by every subgroup $H\leq\G$, so that the generalized
    counting lemma remains applicable to the regular
    cosets produced by our constructive decomposition.
    This is crucial for the removal argument: a single
    configuration in the structured subset $A'$ can
    then be shown to force many configurations in the
    original set $A$, yielding a dimension-independent
    algorithmic removal guarantee for the specified
    class of non-degenerate complexity-$1$ systems.

\end{itemize}

\section{An Algorithmic Regularity Lemma with Low Query Complexity}
\label{section3:regularity algorithm}
Throughout this section, $\G$ denotes a finite Abelian group $\mathbb{Z}_{p_1^{m_1}}\times \cdots \times \mathbb{Z}_{p_T^{m_T}}$, where $p_i$ are primes for all $i \in [T]$ and may not be distinct, and $\omega_\lcmG$ denotes a primitive $\lcmG^{th}$ root of unity, where $\lcmG = \LCM \{p_1^{m_1}, \ldots, p_T^{m_T}\}$. We start by proving Green's regularity lemma for Finite Abelian group using thus Green's zero-radius Bohr neighbourhood~\cite{green2005szemeredi} while the proof in ~\cite{green2005szemeredi} allows non-zero radius for Bohr neighbourhood. Note that the original proof of ~\cite{green2005szemeredi} holds for any Abelian groups, while for the sake of an efficient algorithm we first reproduce the proof of the Regularity lemma for Abelian groups of constant torsion (that is $\lcmG$ is a constant). The proof is presented in Appendix~\ref{section_Green_regularity}.

\begin{theorem}[Green's regularity lemma~\cite{green2005szemeredi}]\label{thm_green_regularity}

    Let $A \subseteq \G$ and $\varepsilon> 0$. There exists a normal subgroup $H \subseteq \G$ with cosets $\{H_i\}$ such that 

    \begin{itemize}
        \item $H$ can be derived on making $\Tower(\lcmG,O(\frac{1}{\varepsilon^3}))$ many fixing; that is, the pseudo-codimension (see Definition~\ref{defn_codimension}) of $H$ is at most $\Tower(\lcmG,O(\frac{1}{\varepsilon^3}))$. 
        \item for all but $\varepsilon$-fraction of the cosets of $H$, the functions $\mathbbm{1}_A|_{H_i}: H \to \{0,1\}$ is $\varepsilon$-regular , where 
        $$\mathbbm{1}_A|_{H_i}(x) = 
\begin{cases}
1, & x+a_i\in H_i\cap A \\
0, & x+a_i \notin H_i\cap A,
\end{cases}$$
where $H_i = H + a_i$ with $a_i \in \G$.
    \end{itemize}
\end{theorem}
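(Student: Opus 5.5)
We argue by an energy-increment iteration over subgroups of the form $V_{0,\gamma_1,\ldots,\gamma_d}=\bigcap_{i}\ker\chi_{\gamma_i}$ (zero-radius Bohr neighbourhoods). We begin with $H_0=\G$ and the progress measure
\[
\ExpImb(H)=\mathbb{E}_{i}\,\Density_{H_i}(A)^2=\frac{1}{[\G:H]}\sum_{i}\Density_{H_i}(A)^2 ,
\]
the mean square density of $A$ on the cosets of $H$. This quantity is at most $1$ and never decreases under refinement, by Cauchy--Schwarz, since the density on a coset is the average of the densities on its subcosets. At stage $k$ we have $H_k$ with cosets $\{H_{k,i}\}$.

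If all but an $\varepsilon$-fraction of the restrictions $\mathbbm{1}_A|_{H_{k,i}}$ are $\varepsilon$-regular, we stop. Otherwise, for each irregular coset $H_{k,i}$ we choose a nontrivial character $\psi_i\in\widehat{H_k}$ with $|\widehat{\mathbbm{1}_A|_{H_{k,i}}}(\psi_i)|>\varepsilon$. Every character of $H_k$ extends to a character of $\G$, because $\widehat{\G}\to\widehat{H_k}$ is surjective for finite Abelian groups. We therefore pick extensions $\chi_i\in\widehat{\G}$ and set
\[
H_{k+1}=H_k\cap\bigcap_i\ker\chi_i .
\]
This is again a common kernel of characters, hence a subgroup, and so normal.

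\textbf{Energy increment.} Inside a fixed irregular coset $H_{k,i}$, the character $\chi_i$ takes at most $\lcmG$ values, each a $\lcmG$-th root of unity. The level sets of $\chi_i$ on $H_{k,i}$ are unions of cosets of $H_{k+1}$. Put $g=\mathbbm{1}_A|_{H_{k,i}}-\Density_{H_{k,i}}(A)$, which has mean zero. Since $\psi_i$ is nontrivial and constant on the cosets of $\ker\chi_i\cap H_k$, we get
\[
\varepsilon<\bigl|\mathbb{E}_{x\in H_k}\,g(x)\overline{\psi_i(x)}\bigr|=\bigl|\mathbb{E}_{x}\,\mathbb{E}[g\mid\mathcal{B}_{k+1}](x)\,\overline{\psi_i(x)}\bigr|\le \bigl\|\mathbb{E}[g\mid\mathcal{B}_{k+1}]\bigr\|_2 ,
\]
where $\mathcal{B}_{k+1}$ is the partition of $H_{k,i}$ into cosets of $H_{k+1}$. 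By Pythagoras, the variance of the subcoset densities inside $H_{k,i}$ is therefore at least $\varepsilon^2$. Averaging over cosets, and using that more than an $\varepsilon$-fraction are irregular, gives
\[
\ExpImb(H_{k+1})\ge \ExpImb(H_k)+\varepsilon^3 .
\]
Refining the regular cosets costs nothing, by monotonicity. Hence the process stops after at most $O(\varepsilon^{-3})$ stages.

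\textbf{Codimension bound.} At each stage we add at most one character per coset, that is, at most $[\G:H_k]$ new characters. Each new character reduces the index by a factor of at most $\lcmG$, since the image of $\chi$ has order dividing $\lcmG$. So if $N_k=[\G:H_k]$, then $N_{k+1}\le N_k\cdot\lcmG^{N_k}$. This gives $N_k\le\Tower(\lcmG,O(k))$, and the pseudo-codimension, i.e.\ the number of characters fixed, is bounded by $\sum_k N_k\le\Tower(\lcmG,O(\varepsilon^{-3}))$. Measuring pseudo-codimension through Definition~\ref{defn_codimension} via pseudo-independent characters only decreases the count, because dependent characters can be discarded without changing the kernel.

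\textbf{Main obstacle.} The delicate point is the first display: making rigorous that a large Fourier coefficient on $H_k$ converts into density variance on the finer partition. In $\ftwo^n$ the parity cuts a coset into two halves. Here $\chi_i$ cuts $H_{k,i}$ into up to $\lcmG$ level sets, possibly fewer because of zero divisors; the image of $\chi_i$ on $H_k$ is a subgroup of the $\lcmG$-th roots of unity of some order $d\mid\lcmG$. The argument must treat the expansion of $\psi_i$ as a function constant on those level sets, rather than relying on a two-valued split. We would handle this with the conditional-expectation formulation above, which is insensitive to the number of level sets, and verify that the fibres are genuine cosets of $H_{k+1}$ by exhibiting $H_{k+1}\cap H_k=\ker(\chi_i|_{H_k})$ as a subgroup of index $d$.
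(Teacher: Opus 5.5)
Your proposal is correct and follows essentially the same route as the paper. Both run an energy increment on the mean-square coset density over common kernels $V_{0,r_1,\ldots,r_t}$ of characters, refine every coset simultaneously by one heavy character per irregular coset, gain $\varepsilon^3$ per stage, and bound the number of characters by the recursion $d_{i+1}\le d_i+\lcmG^{d_i}$ (equivalently your $N_{k+1}\le N_k\lcmG^{N_k}$). The only difference is cosmetic: the paper obtains the per-coset gain from the explicit identity $\frac{1}{d_r}\sum_b|\widehat{g_b}(\chi_0)|^2=\sum_{i=0}^{d_r-1}|\widehat{\mathbbm{1}_A}(\chi_{ir})|^2$ together with monotonicity under further refinement, whereas your conditional-expectation/Cauchy--Schwarz bound gives $\varepsilon^2$ directly for the full simultaneous refinement. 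One small slip: your remark $H_{k+1}\cap H_k=\ker(\chi_i|_{H_k})$ should read $H_{k+1}\subseteq\ker\chi_i\cap H_k=\ker(\chi_i|_{H_k})$, which is all the argument actually uses.
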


The proof of this theorem is given in the appendix (Section~\ref{section_Green_regularity}).

Before moving to the main statement, we will also describe some preliminaries and facts about Fourier structure of the functions on Abelian group which will be main tools of our algorithm.

\subsection{Fourier structure of restriction and truncation}
\label{subsec:truncated-restrictedfunction}

To prove Theorem~\ref{th: Regularity for Abelian}, we need the relation between the Fourier coefficients of the restricted function and the truncated function, as the Algorithm~\ref{algo: efficient regularity lemma} works with the truncated function instead of the restricted function. We define the restricted function and the truncated function first.

\begin{definition}[Restricted function]\label{defn_restricted_function}
   Let $\G$ be a finite Abelian group and $A \le \G$ be a subgroup. The {\em restricted} function $f|_A : A \to \{0,1\}$, with respect to $A$, is defined as  
    \begin{align}
        f\mid_A(x) := f(x)
    \end{align}
    for all $x$ in $A$. 
\end{definition}

\begin{definition}[Truncated function]\label{defn_truncated_function}
    Let $\G$ be a finite Abelian group and $A \le \G$ be a subgroup. The {\em truncated} function, with respect to $A$, $f\|_A : \G \to \{0,1\}$ is 
    \[
    f\|_A(x) =
    \begin{cases}
    f(x) & x \in A, \\
    0 & x \notin A.
    \end{cases}
    \]
\end{definition}

\subsection*{Relation between restricted and truncated function}\label{section_relation_res_trun}
For a function $f: \G \to \{0,1\}$, we will try to understand the Fourier structure of the restricted and the truncated function.

We denote by $\widehat{\G}$ the group of {\em characters} of $\G$. For $\chi \in \widehat{\G}$, recall that the Fourier coefficient $\widehat{f}(\chi)$ with $\chi$ is given by 
\begin{align*}
    \widehat{f}(\chi) = \mathbb{E}_{x \in \G}[f(x)\overline{\chi(x)}].
\end{align*}
Similarly, for $\psi \in \widehat{A}$, we have
\begin{align*}
    \widehat{f\mid_{A}}(\psi) = \mathbb{E}_{x \in A}[f(x)\overline{\psi(x)}].
\end{align*}
For $\chi \in \widehat{\G}$, define restriction of $\chi$ to $A$ by $\chi|_A(x) := \chi(x)$, for all $x$ in $A$. 

The {\em annihilator} $A^\perp$ of $A$ is defined as 
$$
    A^\perp = \left\{ \chi \in \widehat{\G} : \chi(a) = 1, \forall a \in A \right\}.
$$

Observe that 
$$
    \mathrm{res} : \widehat{\G} \to \widehat{A},
$$
where $\chi \mapsto \chi\mid_{A}$, is a group homomorphism. Additionally, we have 
\begin{lemma}
The map $\mathrm{res}$ is surjective and $\ker(\mathrm{res}) = A^\perp$.
\end{lemma}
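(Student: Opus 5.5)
The kernel statement is a direct check; surjectivity is the substantive part. I would get surjectivity by counting, using two standard facts for finite Abelian groups: $|\widehat{K}| = |K|$ for every finite Abelian group $K$, and $|A^\perp| = |\G|/|A|$. If I establish both, then by the first isomorphism theorem
\[
|\mathrm{res}(\widehat{\G})| \;=\; \frac{|\widehat{\G}|}{|A^\perp|} \;=\; \frac{|\G|}{|\G|/|A|} \;=\; |A| \;=\; |\widehat{A}|,
\]
and since $\mathrm{res}(\widehat{\G})\le \widehat{A}$, the map is onto.

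\textbf{Kernel.} By definition, $\chi \in \ker(\mathrm{res})$ if and only if $\chi|_A$ is the trivial character of $A$. That is, $\chi(a)=1$ for all $a\in A$, which is exactly the condition $\chi\in A^\perp$.

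\textbf{Size of $\widehat{K}$.} Write $K \cong \prod_i \mathbb{Z}_{n_i}$. A character of $\mathbb{Z}_{n}$ is determined by the image of $1$, which must be an $n$-th root of unity, so $|\widehat{\mathbb{Z}_n}|=n$. Characters of a direct product are products of characters of the factors, so $|\widehat{K}|=|K|$.

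\textbf{Size of $A^\perp$.} This is the key step, and I expect it to be the only real obstacle. Every character of $\G$ that is trivial on $A$ factors through the quotient $\G/A$. Conversely, every character of $\G/A$ pulls back to an element of $A^\perp$. Hence $A^\perp \cong \widehat{\G/A}$, and so $|A^\perp| = |\G/A| = |\G|/|A|$. This uses only that $\G/A$ is again a finite Abelian group, to which the previous paragraph applies.

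The counting argument above then completes the proof.

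An alternative to the counting argument is to extend a character $\psi$ of $A$ to $\G$ directly. One adjoins one element $g$ at a time. If $k$ is the least positive integer with $kg\in A$, set $\chi(g)$ to be any $k$-th root of $\psi(kg)$, and check that this is well defined on $A+\langle g\rangle$. I would keep this extension argument only as a fallback if the duality count needs justification in the paper's notation.
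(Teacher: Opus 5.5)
Your proof is correct. The kernel part is identical to the paper's. The real difference is surjectivity: the paper only asserts that it ``follows directly'' and gives no argument, whereas you actually prove it by counting. You show $|\widehat{K}|=|K|$ and $A^\perp\cong\widehat{\G/A}$ by pulling back along the quotient map. From these you get $|\mathrm{res}(\widehat{\G})|=|\widehat{\G}|/|A^\perp|=|A|=|\widehat{A}|$, so the image is all of $\widehat{A}$.

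This extra care is warranted. Surjectivity of restriction is not a formality; it is exactly where finiteness, or the divisibility of $\mathbb{C}^\times$, has to enter. Your key step $A^\perp\cong\widehat{\G/A}$ is essentially what the paper proves later in the appendix. Lemma~\ref{lemma_iso_H_perp_G/H} constructs the isomorphism $\zeta\mapsto\zeta\circ q$ from $\widehat{\G/H}$ onto the annihilator of $H$, and Corollary~\ref{lemma_H_perp_new} gives the size count. So within the paper you could cite that result rather than reprove it.

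As for the two routes you offer: the counting argument is short, but non-constructive and tied to $|\G|<\infty$. Your fallback extension argument (adjoin one $g$ at a time and take a $k$-th root of $\psi(kg)$) is constructive and uses only divisibility of $\mathbb{C}^\times$, so it generalizes further. For this paper, though, the counting argument is entirely sufficient.
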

\begin{proof}
It is straightforward that $A^\perp \subseteq \ker(\mathrm{res})$.
Conversely, if $\chi|_A$ is trivial, then $\chi(a)=1$ for all $a \in A$,
so $\chi \in A^\perp$.
Surjectivity also follows directly.
\end{proof}
As a corollary of the above lemma, we get the following: 
\begin{corollary}
    $\widehat{A} \cong \widehat{\G} / A^\perp$.
\end{corollary}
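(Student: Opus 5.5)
The corollary follows from the preceding lemma together with the first isomorphism theorem for groups. By that lemma, $\mathrm{res}:\widehat{\G}\to\widehat{A}$ is a surjective homomorphism with $\ker(\mathrm{res})=A^\perp$. The first isomorphism theorem then gives a well-defined, injective, surjective homomorphism
\[
\widehat{\G}/A^\perp \longrightarrow \widehat{A}, \qquad \chi A^\perp \mapsto \chi|_A .
\]
This is exactly the claimed isomorphism. Normality of $A^\perp$ is automatic because $\widehat{\G}$ is Abelian.

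The only step carrying real content is the surjectivity claimed in the lemma, which the text dismisses as following ``directly''. I would justify it in one of two ways.

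The first is by counting. The map $\mathrm{res}$ induces an injection $\widehat{\G}/A^\perp\hookrightarrow\widehat{A}$, so it suffices to show $|A^\perp| = |\G|/|A|$. Every $\chi\in A^\perp$ factors through $\G/A$, which identifies $A^\perp$ with $\widehat{\G/A}$. For a finite Abelian group $K$ we have $|\widehat{K}|=|K|$. Hence
\[
|\widehat{\G}/A^\perp| = \frac{|\G|}{|\G|/|A|} = |A| = |\widehat{A}|,
\]
and the injection is therefore a bijection.

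The second is a direct extension argument. A character of $A$ is extended step by step to $A+\langle g\rangle$ by choosing a suitable root of unity for the value at $g$. This is the standard divisibility argument: $\mathbb{C}^\times$ is divisible, hence injective as a $\mathbb{Z}$-module.

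I would present the counting argument, since it needs only the duality $|\widehat{K}|=|K|$ for finite Abelian groups. This fact is immediate from the decomposition $\G\cong\mathbb{Z}_{p_1^{m_1}}\times\cdots\times\mathbb{Z}_{p_T^{m_T}}$ fixed in this section. No real obstacle remains beyond checking this cardinality identity.
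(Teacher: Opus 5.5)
Your proof is correct and follows the paper's route: the corollary is exactly the first isomorphism theorem applied to the surjective homomorphism $\mathrm{res}$ with kernel $A^\perp$ from the preceding lemma. Your counting argument for surjectivity, which the paper only asserts, is sound, and the identification $A^\perp\cong\widehat{\G/A}$ it relies on is the same one the paper proves in its appendix for $H^\perp$. One small correction: you need $|\widehat{K}|=|K|$ for $K=A$ and $K=\G/A$, so the fact you should cite is the structure theorem for arbitrary finite Abelian groups, not the fixed decomposition of $\G$.
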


We will need the following theorem in the sumset paper. 
\begin{theorem}\label{thm:main_result_1}
For every $\chi \in \widehat{\G}$, we have 
$$
    \widehat{f\|_A}(\chi) = \frac{|A|}{|\G|} \cdot \widehat{f|_A}(\chi|_A).
$$
\end{theorem}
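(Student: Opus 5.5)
The identity follows by unfolding both sides as averages and comparing normalizations. I would start from the definition of the Fourier coefficient on $\G$ and use that $f\|_A$ vanishes off the subgroup $A$:
\[
\widehat{f\|_A}(\chi)=\frac{1}{|\G|}\sum_{x\in\G} f\|_A(x)\overline{\chi(x)}=\frac{1}{|\G|}\sum_{x\in A} f(x)\overline{\chi(x)}.
\]
The only input here is Definition~\ref{defn_truncated_function}: the summand is $f(x)\overline{\chi(x)}$ for $x\in A$ and $0$ otherwise.

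Next I would rewrite the remaining sum over $A$ as an average over $A$, multiplying and dividing by $|A|$:
\[
\frac{1}{|\G|}\sum_{x\in A} f(x)\overline{\chi(x)}=\frac{|A|}{|\G|}\,\mathbb{E}_{x\in A}\bigl[f|_A(x)\overline{\chi|_A(x)}\bigr].
\]
On $A$ we have $f(x)=f|_A(x)$ by Definition~\ref{defn_restricted_function} and $\chi(x)=\chi|_A(x)$ by the definition of restriction. The expectation is then exactly $\widehat{f|_A}(\chi|_A)$, since $\chi|_A\in\widehat{A}$ (it is a homomorphism $A\to\mathbb{C}^\times$, i.e.\ the image of $\chi$ under the map $\mathrm{res}$ from the preceding lemma). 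This yields the claimed formula.

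There is no real obstacle. The one point to check is that $\chi|_A$ is a legitimate argument of the Fourier transform on $A$, which holds because $\mathrm{res}$ lands in $\widehat{A}$. I would also remark, as a consequence rather than a proof step, that the right-hand side depends only on the coset $\chi A^\perp$, because $\ker(\mathrm{res})=A^\perp$. So via the corollary $\widehat{A}\cong\widehat{\G}/A^\perp$, each coefficient of the restricted function appears, scaled by $|A|/|\G|$, on a full coset of $A^\perp$ in the spectrum of the truncated function. This is the form in which the result will be used later.
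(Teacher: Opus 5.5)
Your proof is correct and matches the paper's argument: expand $\widehat{f\|_A}(\chi)$ over $\G$, drop the terms outside $A$ using the truncation, and renormalize the remaining sum as $\frac{|A|}{|\G|}$ times an average over $A$, which is $\widehat{f|_A}(\chi|_A)$. Your closing remark about constancy on cosets of $A^\perp$ is exactly the paper's next result, Theorem~\ref{thm:main_result_2}.
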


\begin{proof}
By definition,
$$
    \widehat{f\|_A}(\chi) = \mathbb{E}_{x \in \G}[f\|_A(x)\overline{\chi(x)}].
$$
Since $f\|_A(x)=0$ for $x \notin A$, this becomes
$$
    \widehat{f\|_A}(\chi) = \frac{1}{|\G|} \sum_{x \in A} f(x)\overline{\chi(x)}.
$$
Rewriting,
$$
    \widehat{f\|_A}(\chi) = \frac{|A|}{|\G|} \cdot \mathbb{E}_{x \in A}[f(x)\overline{\chi(x)}].
$$
Since $\chi(x)$ restricted to $A$ is $\chi|_A(x)$,
$$
    \widehat{f\|_A}(\chi) = \frac{|A|}{|\G|} \cdot \widehat{f|_A}(\chi|_A).
$$
\end{proof}

The following theorem gives the annihilator connection. 

\begin{theorem}\label{thm:main_result_2}
The function $\widehat{f\|_A}$ is constant on cosets of $A^\perp$.
\end{theorem}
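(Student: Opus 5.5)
The plan is to reduce the claim directly to Theorem~\ref{thm:main_result_1} together with the description of the kernel of the restriction map. Fix $\chi \in \widehat{\G}$ and an arbitrary $\psi \in A^\perp$. We must show that $\widehat{f\|_A}(\chi\psi) = \widehat{f\|_A}(\chi)$. The elements of the coset $\chi A^\perp$ are exactly the characters of the form $\chi\psi$ with $\psi \in A^\perp$, since $\widehat{\G}$ is written multiplicatively. Hence this identity is precisely the statement that $\widehat{f\|_A}$ is constant on each coset of $A^\perp$.

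The first step applies Theorem~\ref{thm:main_result_1} to the character $\chi\psi$:
$$\widehat{f\|_A}(\chi\psi) = \frac{|A|}{|\G|}\,\widehat{f|_A}\bigl((\chi\psi)|_A\bigr).$$
The second step uses that $\mathrm{res}$ is a group homomorphism whose kernel is $A^\perp$, as in the lemma above. This gives $(\chi\psi)|_A = \chi|_A\cdot\psi|_A = \chi|_A$, because $\psi(a)=1$ for all $a\in A$. Substituting back and applying Theorem~\ref{thm:main_result_1} once more, now to $\chi$, yields $\widehat{f\|_A}(\chi)$.

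If one prefers to avoid citing the earlier theorem, the same argument can be run directly from the definition. Write
$$\widehat{f\|_A}(\chi\psi) = \frac{1}{|\G|}\sum_{x\in A} f(x)\overline{\chi(x)}\,\overline{\psi(x)},$$
where the sum runs only over $x\in A$ because $f\|_A$ vanishes outside $A$. On $A$ we have $\overline{\psi(x)}=1$, so the factor drops out and we are left with $\widehat{f\|_A}(\chi)$.

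There is no real obstacle here. The only point needing care is that the identity must be stated for the whole coset $\chi A^\perp$ rather than for $A^\perp$ alone. A clean way to phrase the conclusion is that $\widehat{f\|_A}$ factors through the quotient $\widehat{\G}/A^\perp \cong \widehat{A}$, as recorded in the corollary above.
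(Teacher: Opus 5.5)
Your proof is correct and follows essentially the paper's route: combine Theorem~\ref{thm:main_result_1} with the fact that $\ker(\mathrm{res})=A^\perp$, so characters in the same coset of $A^\perp$ have the same restriction to $A$. Your version states the needed implication ($\psi\in A^\perp \Rightarrow (\chi\psi)|_A=\chi|_A$) more directly than the paper's write-up, which argues in the converse direction, from equal restrictions to membership in a common coset.
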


\begin{proof}
If $\chi_1, \chi_2 \in \widehat{\G}$ satisfy $\chi_1|_A = \chi_2|_A$,
then $\chi_1^{-1}\chi_2 \in A^\perp$.
Hence $\chi_1$ and $\chi_2$ lie in the same coset of $A^\perp$. Also, observe that, from Theorem~\ref{thm:main_result_1}, we have
$$
\widehat{f\|_A}(\chi_1)
= \frac{|A|}{|\G|} \cdot \widehat{f|_A}(\chi_1|_A)
= \frac{|A|}{|\G|} \cdot \widehat{f|_A}(\chi_2|_A)
= \widehat{f\|_A}(\chi_2).
$$
\end{proof}

\begin{corollary}
For any $\psi \in \widehat{A}$ and any $\chi \in \widehat{\G}$ with $\chi|_A=\psi$, we have 
$$
    \widehat{f\|_A}(\chi) = \frac{|A|}{|\G|} \cdot \widehat{f|_A}(\psi).
$$
\end{corollary}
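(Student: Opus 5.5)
This corollary is a direct combination of Theorem~\ref{thm:main_result_1} with the surjectivity of the restriction map $\mathrm{res}$. Fix $\psi \in \widehat{A}$ and any $\chi \in \widehat{\G}$ with $\chi|_A = \psi$. Such a $\chi$ always exists, since $\mathrm{res}$ is surjective by the lemma preceding the corollary $\widehat{A}\cong\widehat{\G}/A^\perp$. The hypothesis on $\chi$ is therefore never vacuous, and the statement genuinely describes every coefficient of $f|_A$.

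Apply Theorem~\ref{thm:main_result_1} to this $\chi$. This gives
\[
\widehat{f\|_A}(\chi) = \frac{|A|}{|\G|}\,\widehat{f|_A}(\chi|_A).
\]
Now substitute $\chi|_A = \psi$ on the right-hand side; this yields the claimed identity. To make the argument self-contained, I would also recall the underlying computation of Theorem~\ref{thm:main_result_1}. The truncated function vanishes off $A$, so
\[
\frac{1}{|\G|}\sum_{x\in A} f(x)\overline{\chi(x)} = \frac{|A|}{|\G|}\,\mathbb{E}_{x\in A}\bigl[f(x)\overline{\psi(x)}\bigr],
\]
and the right-hand side is exactly $\frac{|A|}{|\G|}\widehat{f|_A}(\psi)$.

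Finally, I would remark that the right-hand side depends only on $\psi$, not on the chosen lift $\chi$. This is consistent with Theorem~\ref{thm:main_result_2}, because any two lifts of $\psi$ differ by an element of $A^\perp = \ker(\mathrm{res})$. There is no real obstacle here. The only point needing care is the existence of a lift $\chi$, i.e.\ surjectivity of $\mathrm{res}$; it follows from the standard counting $|\widehat{\G}/A^\perp| = |\G|/|A^\perp|\cdot|A^\perp|/|\G|\cdot|A|$, i.e.\ $|A^\perp| = |\G|/|A|$ together with $|\widehat{A}| = |A|$.
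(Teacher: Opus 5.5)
Your proof is correct and matches the paper, which obtains the corollary simply by substituting $\chi|_A=\psi$ into Theorem~\ref{thm:main_result_1}. Two small remarks: surjectivity of $\mathrm{res}$ is not actually needed for the proof, since without a lift the statement would hold vacuously. Also, your displayed counting identity is garbled; the intended argument is $|\widehat{\G}/A^\perp| = |\widehat{\G}|/|A^\perp| = |A| = |\widehat{A}|$, combined with injectivity of the induced map $\widehat{\G}/A^\perp\to\widehat{A}$.
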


\subsection{Algorithmizing the Regularity Lemma (Theorem~\ref{thm_green_regularity})}
\label{section:proof_of_efficient_regularity}
In this section, we present an algorithmic version of the regularity lemma (Theorem~\ref{th: Regularity for Abelian}) for functions over an Abelian group, whose both query and time complexity are independent of the order of the group. We work in the standard query model. The input is the Boolean function $\mathbbm{1}_A:\G\to\{0,1\}$, which is the indicator function of the set $A\subseteq \G$, where $\G$ is a finite Abelian group. The function $\mathbbm{1}_A$ is not known to the algorithm. Instead, it can query an oracle that, on input $x\in \G$, returns the value $\mathbbm{1}_A(x)$. The complexity of an algorithm is measured by the total number of oracle queries it makes, referred to as its query complexity. We first state and prove Lemma~\ref{Prop: prop of Green's regularity}, before proving Theorem~\ref{th: Regularity for Abelian}.

From Green's regularity lemma on Abelian group it follows that given any arbitrary subsets $A \subseteq \G$, and error parameter $\varepsilon$, one can get a partition of $\G$  of size $\Tower(\lcmG, \frac{1}{\varepsilon^3})$ such that $A$ is almost regular on all of the sets of the partition. The following lemma (Lemma~\ref{Prop: prop of Green's regularity}) gives us an approximation, $A'$ of the set $A\subseteq \G$ by removing all cosets that are not $\varepsilon$-regular, along with those which have density below a threshold $\zeta$. Consequently, every remaining coset is both dense and $\varepsilon$-regular.

\begin{lemma}[$A'\subseteq A$: A density-capturing representation of $A$]
\label{Prop: prop of Green's regularity}
Given $A\subseteq \G$ and $\varepsilon,\zeta>0$, there exists a subgroup
$H\leq \G$ of pseudo-codimension at most
\[
    \Tower\left(\tau,O\left(\frac{1}{\varepsilon^3}\right)\right)
\]
and a set $A'\subseteq A$ such that
\begin{enumerate}
    \item $\Lambda(A\setminus A')\leq \varepsilon+\zeta;$

    \item for every coset $H_i$ of $H$,
    $\Lambda_{H_i}(A')=0
        \qquad\text{or}\qquad
        \Lambda_{H_i}(A')\geq \frac{\zeta}{2};$

    \item for every coset $H_i$ for which
    $\Lambda_{H_i}(A')>0$, the restricted function $
        \mathbbm{1}_{A'}|_{H_i} $
    is $\varepsilon$-regular.
\end{enumerate}
We call $A'$ a \textbf{density-capturing representation} of $A$.
\end{lemma}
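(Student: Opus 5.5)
We derive this lemma directly from Green's regularity lemma for finite Abelian groups (Theorem~\ref{thm_green_regularity}) by deleting the bad cosets. First apply Theorem~\ref{thm_green_regularity} to $A$ with parameter $\varepsilon$. This gives a subgroup $H$ of pseudo-codimension at most $\Tower(\tau,O(1/\varepsilon^3))$ such that $\mathbbm{1}_A|_{H_i}$ is $\varepsilon$-regular on all but an $\varepsilon$-fraction of the cosets $H_i$. Call a coset \emph{good} if $\mathbbm{1}_A|_{H_i}$ is $\varepsilon$-regular and $\Lambda_{H_i}(A)\ge \zeta/2$. Define
\[
A' := \bigcup_{H_i \text{ good}} (A\cap H_i).
\]
Then $A'\subseteq A$, and on each coset $A'$ agrees either with $A$ or with the empty set.

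Properties 2 and 3 are then immediate. If $H_i$ is good, then $\Lambda_{H_i}(A')=\Lambda_{H_i}(A)\ge\zeta/2$, and $\mathbbm{1}_{A'}|_{H_i}=\mathbbm{1}_A|_{H_i}$ is $\varepsilon$-regular. If $H_i$ is not good, then $\Lambda_{H_i}(A')=0$. The regularity condition refers only to the values on $H_i$, since the restriction is a function on $H$ translated by $a_i$. So no interaction between cosets arises.

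For property 1, split $A\setminus A'$ into two kinds of bad cosets. The first kind are the cosets where $\mathbbm{1}_A|_{H_i}$ is not $\varepsilon$-regular. There are at most $\varepsilon M$ of them, where $M=|\G|/|H|$, and each holds at most $|H|$ elements. They contribute density at most $\varepsilon$. The second kind are the regular cosets with $\Lambda_{H_i}(A)<\zeta/2$. Each contributes fewer than $\frac{\zeta}{2}|H|$ elements, so in total they contribute density less than $\zeta/2\le\zeta$. Hence $\Lambda(A\setminus A')\le \varepsilon+\zeta$.

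The threshold $\zeta/2$ rather than $\zeta$ is chosen deliberately. It matches the algorithmic version, where densities are only estimated and a coset of true density between $\zeta/2$ and $\zeta$ may or may not be retained. The existential lemma holds for any threshold in $[\zeta/2,\zeta]$.

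The only step requiring care is confirming that the pseudo-codimension bound and the coset structure are inherited verbatim from Theorem~\ref{thm_green_regularity}. Since $H$ is unchanged, this holds. No real obstacle remains beyond the appeal to Theorem~\ref{thm_green_regularity}, whose proof in the appendix carries the tower-type energy-increment argument.
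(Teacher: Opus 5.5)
Your proof is correct and follows the paper's argument: apply Theorem~\ref{thm_green_regularity}, keep $A\cap H_i$ exactly on the cosets that are regular and sufficiently dense, and bound the discarded density by $\varepsilon$ (from the non-regular cosets) plus the low-density contribution. The only difference is the threshold. The paper discards cosets with $\Lambda_{H_i}(A)\le\zeta$, which gives the stronger dichotomy $0$ or $\ge\zeta$. Your threshold $\zeta/2$ gives exactly the stated dichotomy, with the slightly better loss $\varepsilon+\zeta/2$.
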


\color{black}
\begin{proof}
    Let $H \subseteq \G$ be a subgroup of pseudo--codimension (Definition~\ref{defn_codimension}) at most $\Tower(\lcmG, O(\frac{1}{\varepsilon^3}))$ that we get by applying Green's regularity Theorem~\ref{thm_green_regularity}. Also let $\{H_1, H_2, \dots, H_M\}$ be the cosets of $H$ where $M= \frac{|\G|}{|H|}$. Recall the definition of the density of a subset $A$ in $\G$, $\Density$ (Definition~\ref{defn_density}) and the density of a subset $A$ in the subgroup $H$, $\Density_{H}$ (Definition~\ref{defn_density_H}). It is also guaranteed by Theorem~\ref{thm_green_regularity} that for all but $\varepsilon$-fraction of $\{H_i\}$, the function $\mathbbm{1}_A|_{H_i}: H_i \to \zone$ is $\varepsilon$-regular. 
    Define disjoint subsets $A_1', \dots, A_M'$, where each $A_i' \subseteq A \cap H_i$, as follows:
    \begin{itemize}
        \item If $\mathbbm{1}_A|_{H_i}$ is not $\varepsilon$-regular, then $A_i'=\emptyset$; 
        \item If $\Density_{H_i}(A) \leq \zeta$, then $A_i'=\emptyset$;
        \item Otherwise, set $A_i'= A \cap H_i$.
    \end{itemize}
    Now we define $A' \subseteq \G$ as 
    \begin{align*}
        A' = \cup_i A_i'.
    \end{align*}
    Note that $\mathbbm{1}_A|_{H_i}'$ is $\varepsilon$-regular for all $i \in [M]$ and by the construction of $A'$ it follows that $\Density_{H_i}(A')=0$ or $\Density_{H_i}(A') \geq \zeta$.

\textbf{Claim:} $\Density(A\setminus A') \leq \varepsilon+ \zeta$.

Let us define two sets $S_1$ and $S_2$ in the following way:
$$S_1 = \{i: \mathbbm{1}_A|_{H_i} \text{ is not } \varepsilon \text{-regular}\}, \quad \quad \quad S_2 = \{i: \Density_{H_i}(A) \leq \zeta\}$$
Then,
$$A' = \cup_i A_i' = \cup_{i\notin S_1 \cup S_2} (A\cap H_i).$$

So, $A\setminus A' = \cup_{i\in S_1 \cup S_2} (A\cap H_i) = \biggl(\cup_{i\in S_1} (A\cap H_i)\biggr) \cup \biggl(\cup_{i\in S_2} (A\cap H_i)\biggr)$.

Therefore,

\begin{align*}
    \Density(A\setminus A') &\leq \frac{|\cup_{i \in S_1}A\cap H_i|}{|\G|} + \frac{|\cup_{i \in S_2}A\cap H_i|}{|\G|} \\
    &\leq \varepsilon \times M \max_{H_i, i \in S_1} \frac{|A\cap H_i|}{|H_i|} \cdot \frac{|H_i|}{|\G|} + \sum_{i \in S_2}\frac{|A\cap H_i|}{|H_i|} \cdot \frac{|H_i|}{|\G|} \\
    &= \varepsilon \times M \times \max_{H_i, i \in S_1}\bigl(\Density_{H_i}(A) \cdot \frac{1}{M}\bigr) + \sum_{i\in S_2}\Density_{H_i}(A) \cdot \Density(H_i) \\
    &\leq \varepsilon \times M \times 1\times \frac{1}{M} + \zeta \\
    &\leq \varepsilon + \zeta,
\end{align*}
where the inequality in the second last line follows from the following facts:
\begin{itemize}
    \item $|S_1| \leq \varepsilon M$, as at most $\varepsilon M$ many cosets are non-regular,

    \item $\Density_{H_i}(A) \leq 1$ for $H_i, i\in S_1$, $M = \frac{|\G|}{|H_i|}$, and

    \item $\Density_{H_i}(A) \leq \zeta$ for $i \in S_2$, and $\sum \Density(H_i) \leq 1$ as $H_i$'s form a partition of $\G$.
\end{itemize}
\end{proof}

Now we are ready to state the algorithmic version of Green's regularity lemma for finite Abelian groups, Theorem~\ref{th: Regularity for Abelian}.

\begin{theorem}[Algorithmic version of Green's regularity lemma for finite Abelian groups]
\label{th: Regularity for Abelian}
There is an algorithm $\mathrm{ConstDT}(A,\varepsilon,\zeta)$ that,
given query access to a subset $A\subseteq \G$, a regularity parameter
$\varepsilon>0$, and a density threshold $\zeta>0$, runs in time and
query complexity bounded by
\[
    \operatorname{poly}\left(
        \Tower\left(\tau,O(\frac{1}{\varepsilon^3})\right),
        \frac{1}{\zeta}
    \right).
\]
With probability at least $9/10$, it outputs an probabilistic oracle
$O_{A'}$ and a character decision tree $T$ for a subgroup
$H\leq \G$ of pseudo-codimension

   $ \Tower\left(\tau,O\left(\frac{1}{\varepsilon^3}\right)\right),$
together with a structural subset $A'\subseteq A$ such that:
\begin{enumerate}
    \item
        $\Lambda(A\setminus A')\leq \varepsilon+\zeta;$

    \item every coset $K$ of $H$ satisfies
    $  \Lambda_K(A')=0
        \qquad\text{or}\qquad
        \Lambda_K(A')\geq \frac{\zeta}{2};$

    \item for every coset $K$ with $\Lambda_K(A')>0$,
    the restricted indicator function $
        \mathbbm{1}_{A'}|_K $
    is $\varepsilon$-regular.
\end{enumerate}
The number of cosets of $H$ is bounded by
$\frac{|\G|}{|H|}
    \leq
    \Tower\left(\tau,O\left(\frac{1}{\varepsilon^3}\right)\right).$ The query complexity remains independent of the size of the group $|\G|$.
\end{theorem}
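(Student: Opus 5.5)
The plan is to algorithmize the energy-increment proof of Theorem~\ref{thm_green_regularity} by growing a character decision tree, following the template of \cite{de2022approximating}. We never name characters explicitly: each node stores only oracle access to a heavy character, via the implicit list-correction procedure (Algorithm~\ref{algo:Implicit_Sieve_generalized}) and the coefficient estimator (Algorithm~\ref{algo: estimation}). The tree starts as a single leaf (the whole of $\G$). At stage $t$ the labels $\chi_{\gamma_1},\dots,\chi_{\gamma_t}$ define $H_t=V_{0,\gamma_1,\ldots,\gamma_t}$, and the leaves are indexed by the value vectors $(\chi_{\gamma_j}(x))_j\in\mu_\tau^t$. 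A leaf can therefore be simulated on a point $x$ by evaluating the implicit characters. Membership of $x$ in a given coset $K$ thus costs $O(t)$ oracle calls, each answered correctly with high probability.

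In each round we perform the following steps.
\begin{enumerate}
\item Sample $\poly(\Tower(\tau,t),1/\varepsilon)$ cosets $K$. For each one, estimate its density $\Lambda(K)$ and $\Lambda_K(A)$ by sampling.
\item On each sampled coset, run list correction on the truncated function $\mathbbm{1}_A\|_K$. This function is queryable because it only needs coset membership plus one query to $A$.
\item Use Theorem~\ref{thm:main_result_1} and Theorem~\ref{thm:main_result_2}, which give $\widehat{\mathbbm{1}_A\|_K}(\chi)=\Lambda(K)\,\widehat{\mathbbm{1}_A|_K}(\chi|_K)$ (after translating $K$ to $H_t$, which only multiplies coefficients by a phase). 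A coefficient of the restriction of size $>\varepsilon$ is then exactly a global coefficient of size $>\varepsilon\Lambda(K)$. Because $\Lambda(K)\ge 1/\Tower(\tau,t)$, this threshold is independent of $|\G|$.
\item Restrict attention to characters outside $H_t^\perp$, i.e.\ nontrivial on $H_t$. Pseudo-independence from the current labels (Definition~\ref{defn_independence}) can be tested implicitly by checking whether the candidate is constant on sampled points of $H_t$.
\item If more than an $\varepsilon$-fraction (estimated) of the sampled cosets report such a heavy coefficient, adjoin one witness character to every leaf and continue. Otherwise, stop.
\end{enumerate}

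Termination follows Green's argument. Splitting by a character that correlates $\varepsilon$ with $\mathbbm{1}_A$ on more than an $\varepsilon$-fraction of cosets raises $\ExpImb$ by $\Omega(\varepsilon^3)$. Since $\ExpImb\le 1$, there are $O(\varepsilon^{-3})$ rounds. The issue is that a single character chosen from one coset need not help the others. We handle this as in the appendix proof of Theorem~\ref{thm_green_regularity}: we adjoin the witnesses from all irregular sampled cosets at once. This is what produces the tower growth, with each round multiplying the codimension by up to $\tau^{\text{(number of leaves)}}$. That gives pseudo-codimension and number of cosets $\Tower(\tau,O(\varepsilon^{-3}))$.

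At termination we apply the recipe of Lemma~\ref{Prop: prop of Green's regularity}. The oracle $O_{A'}$ on input $x$ locates the coset $K\ni x$. It outputs $\mathbbm{1}_A(x)$ if $K$ is estimated regular with estimated density $\ge 3\zeta/4$, and $0$ otherwise. The choice of $3\zeta/4$ is deliberate: with additive estimation error $\zeta/4$, every kept coset has true density $\ge\zeta/2$, and discarded ones have density $\le\zeta$. This gives properties (1)–(2) with the same calculation as in Lemma~\ref{Prop: prop of Green's regularity}. Property (3) holds because a kept coset passed the list-correction test at threshold slightly below $\varepsilon$.

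The main obstacle is error accounting. Every query of a truncated function on a deep node itself invokes randomized implicit character evaluations, and the success probabilities must compose over $\Tower(\tau,O(\varepsilon^{-3}))$ leaves and $O(\varepsilon^{-3})$ rounds. We plan to amplify each implicit evaluation to failure probability $1/(100\cdot\#\text{queries})$ by majority voting, which costs only a logarithmic factor, and then take a union bound to reach total success $9/10$. Checking (3) for all cosets, not just the sampled ones, is also delicate. We will address it by running the regularity test on each coset when the oracle is built (there are only boundedly many cosets) rather than relying on sampling alone. Both fixes keep the complexity $\poly(\Tower(\tau,O(\varepsilon^{-3})),1/\zeta)$ and independent of $|\G|$.
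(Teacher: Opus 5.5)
Your proposal is correct and follows essentially the same route as the paper's proof of Algorithm~\ref{algo: efficient regularity lemma}. The shared ingredients are:
\begin{itemize}
\item a non-adaptive character decision tree whose leaves are the cosets of $H$;
\item list correction on the truncated functions $\mathbbm{1}_A\|_K$ at threshold $\varepsilon/M$, justified by Theorem~\ref{thm:main_result_1};
\item discarding characters that are trivial on $H$;
\item refining every leaf by the witnesses of all irregular cosets, so that Green's $\varepsilon^3$ energy increment bounds the number of rounds by $O(\varepsilon^{-3})$ and gives the tower bound;
\item a final density cutoff at $3\zeta/4$ with estimation error $\zeta/4$, followed by a union bound over all randomized calls.
\end{itemize}
Beyond this, you make explicit several points the paper's proof leaves implicit: you remove irregular cosets from the final oracle, you fix all keep/discard decisions when the oracle is built, and you amplify the implicit character oracles.
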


\color{black}

\begin{algorithm}[]
    \textbf{Input.}
    Given query access to $A \subseteq \G$, regularity parameter $\varepsilon$, density threshold $\zeta$. 
    
    \textbf{Output.} An probabilistic oracle $\Or_{A'}^A, $ a character decision tree $\mathcal{T}$ with $A'$ as described in Lemma~\ref{Prop: prop of Green's regularity}.\\ 
 \vspace{0.7cm}
 \textbf{ConstDT$(A,\varepsilon,\zeta)$}

\begin{enumerate}
    
    \item \textbf{Initialization: }  Initialize the decision tree $\mathcal{T}$ to contain one leaf labelled by $\mathbbm{1}_A: \G \to \{0,1\}$. Define, 
    \begin{align*}
        \eta = \frac{1}{10} \times \frac{1}{\Tower(\lcmG,{O(\frac{1}{\varepsilon^3})})}
    \end{align*}
    \item for 
    \begin{enumerate}
     \item Let $\{O_1,\ldots,O_k\}$, where
$k = O_{\tau}(\log M),$
denote the query oracles for the $k$ characters used by the decision tree at the current stage. 
These characters determine the $M$ cosets $H_1,\ldots,H_M$ corresponding to the leaves of the decision tree. The $i$-th leaf is labelled by the function $\mathbbm{1}_A|_{H_i}.$

\item For each coset $H_i$, call,
       \begin{align*}
           \mathcal{L}_i\gets \mathsf{LIST_{Corr}(\mathbbm{1}_A\|_{H_i}, \frac{\varepsilon}{M},\eta)},
       \end{align*}
 $\mathbbm{1}_A\|_{H_i}$ is the truncated function (Definition~\ref{defn_truncated_function}) on coset $H_i$. To simulate query access to $\mathbbm{1}_A\parallel_{H_i}$ on an input
$x\in G$, the algorithm queries the
$k=O_{\tau}(\log M)$ character oracles
$\{O_1,\ldots,O_k\}$ to determine the coset containing $x$. Discard from \(\mathcal{L}_i\) any character that is constant on \(H_i\), since such a character does not witness non-regularity.
 \item For each $\mathcal{L}_i$ and for each $\alpha \in \mathcal{L}_i$, call,
 \begin{align*}
     \{\widehat{\mathbbm{1}_A|_{H_i}}(\alpha): \alpha \in \mathcal{L}_i\} \gets \mathsf{ESTIMATE(\widehat{\mathbbm{1}_A|_{H_i}}(\alpha) , \varepsilon, \eta)}.
 \end{align*}
 
 estimate the magnitude of 
$\widehat{\mathbbm{1}_A|_{H_i}}(\alpha)$ up to additive error $\frac{\varepsilon}{4}$ with confidence $\eta$. If the estimated magnitude is less than $\frac{3\varepsilon}{4}$, then remove $\alpha$ from $\mathcal{L}_i$.
         \item If $\mathcal{L}_i = \phi$ for at least $(1-\varepsilon)$-fraction of the set list $\{\mathcal{L}_1,\dots,\mathcal{L}_M\}$ go to step-3.
         \item Let $\mathcal{O}$ be the collection of the oracles that describes of the internal nodes. For each non-empty $\mathcal{L}_i$ do the following: 
         \begin{itemize}
             \item Choose the oracle for $\alpha$ from $\mathcal{L}_i$. Check, whether
the candidate character $\alpha$ is generated by the characters represented
by $\mathcal{O}$, with failure probability at most $\eta$. 
             \item If not, then we add the oracle of $\alpha$, $O_{\alpha}$ to $\mathcal{O}$. Add $\{\alpha\}$ to $\mathcal{T}$ and split \emph{all nodes} at the current stage according to the restrictions $\alpha$.
         \end{itemize}
         \item Repeat Step-2.
    \end{enumerate}
     \item For each leaf node, corresponding to some cosets (say) $H_i$, estimate $\Density_{H_i}(A_i)$ upto an additive error of $\frac{\zeta}{4}$ with confidence $\eta$.
     \begin{itemize}
         \item If the  $\Density_{H_i}(A_i) \geq \frac{3\zeta}{4}$, set the function associated with the leaf node to be identically $1$ function.
         \item Else, set it to be the identically $0$ function. 
     \end{itemize}
     \item Define the oracle $\Or_{A'}^A=\mathcal{T}(x)\cdot A(x)$.
    \end{enumerate}
\caption{A constructive Regularity lemma for Abelian groups}
\label{algo: efficient regularity lemma}
\end{algorithm}

\subsection*{Proof Of Theorem~\ref{th: Regularity for Abelian}}\label{sec_efficient_regularity}

\begin{proof}
Algorithm~\ref{algo: efficient regularity lemma} takes blackbox access to a set $A \subseteq \G$ as input, along with regularity parameter $\varepsilon$ and density threshold $\zeta$. At any given point of the algorithm let $H$ be the subgroup generated by the nodes of the tree and the leaves are labelled by $\mathbbm{1}_A|_{H_i}: H \to \{0,1\}$ is $\varepsilon$-regular , where 
        $$\mathbbm{1}_A|_{H_i}(x) = 
\begin{cases}
1, & x+a_i\in H_i\cap A \\
0, & x+a_i \notin H_i\cap A,
\end{cases}$$
where $H_i = H + a_i$ with $a_i \in \G$.


Note that the algorithm uses the truncated function (Definition~\ref{defn_truncated_function}) in the process. Using Lemma~\ref{thm:main_result_1} (see Section~\ref{section_relation_res_trun}), we can relate the coefficients of the restricted function (Definition~\ref{defn_restricted_function}) with the truncated function (Definition~\ref{defn_truncated_function}). We will analyze the steps of Algorithm~\ref{algo: efficient regularity lemma} now.

\begin{description}
    \item[Step 2 of Algorithm~\ref{algo: efficient regularity lemma}:] In Step (2), the algorithm applies List Correction for Abelian groups, Algorithm~\ref{algo:Implicit_Sieve_generalized} repeatedly to find a subset $A$ of $\G$ that is $\varepsilon$-regular. By Green's regularity lemma for finite Abelian groups (Theorem~\ref{thm_green_regularity}), Step 2 of Algorithm~\ref{algo: efficient regularity lemma} terminates after at most $O(\frac{1}{\varepsilon^3})$ many steps. Let the total number of cosets at a certain stage be $M$. If number of survived cosets is $\leq \varepsilon M$, the algorithm moves to Step 3. 

\begin{itemize}
    \item \textbf{Step 2(b) and 2(c) of Algorithm~\ref{algo: efficient regularity lemma}:} In the first iteration, Step 2(b) of Algorithm~\ref{algo: efficient regularity lemma} runs List Correction for Abelian groups, Algorithm~\ref{algo:Implicit_Sieve_generalized} on the function $\mathbbm{1}_A\|_{H_i}$. The goal of this stage is to detect a non-regular Fourier-coefficient. Usually finding one Fourier character is quite costly, but we do not need it exactly, we can go away by learning the Fourier character implicitly also. That's why, the output of List Correction for Abelian groups, Algorithm~\ref{algo:Implicit_Sieve_generalized} gives a list $\mathcal{L}_i$ of oracles to the non-regular Fourier coefficient.  In other word, it gives the description of those Fourier character as an implicit description in terms of oracle. Among the heavy Fourier characters returned by the List Correction algorithm, we ignore those characters that are constant on the subgroup \(H\), since such characters do not witness failure of regularity.
    
    \
    
    Here we are running the Algorithm~\ref{algo:Implicit_Sieve_generalized} on the truncated function $\mathbbm{1}_A\|_{H_i}$. Note that using Theorem~\ref{thm:main_result_1} from the previous subsection~\ref{subsec:truncated-restrictedfunction}, 
    the Fourier
coefficients of the restricted and truncated functions satisfy,

\[
\left|
\widehat{\mathbbm{1}_A\parallel_{H_i}}(\alpha)
\right|
=
\frac{1}{M}
\left|
\widehat{\mathbbm{1}_A|_{H_i}}(\alpha|_H)
\right|,
\]
where $M$ is the number of cosets of $H$ at the current stage.
Consequently,
\[
\left|
\widehat{\mathbbm{1}_A|_{H_i}}(\alpha|_H)
\right|>\varepsilon
\quad\Longleftrightarrow\quad
\left|
\widehat{\mathbbm{1}_A\parallel_{H_i}}(\alpha)
\right|>
\frac{\varepsilon}{M}.
\]
Therefore, Algorithm~\ref{algo:Implicit_Sieve_generalized} is called on
$\mathbbm{1}_A\parallel_{H_i}$ with threshold $\varepsilon/M$ in
order to obtain a list of candidate characters corresponding to non-regular Fourier coefficients of the restricted function. One can also simulate the query of  $\widehat{\mathbbm{1}_A|_{H_i}}(\alpha)$ using the query access of $\widehat{\mathbbm{1}_A\|_{H_i}}(\alpha)$.
    
    \
    
    With the help of these oracles, one can query the value of those characters (at certain inputs), which corresponds to the non-regular or heavy coefficients of the function $\mathbbm{1}_A\|_{H_i}$. In Step 2(c), it estimates the corresponding Fourier coefficients of $\mathbbm{1}_A\|_{H_i}$. If the estimate is less than $\frac{3\varepsilon}{4}$, then remove it from the list, else it keeps in the list. We retain only those heavy characters that vary non-trivially on \(H\), as these correspond to the Fourier coefficients relevant to the \(\varepsilon\)-regularity condition.
    
Next it chooses one heavy character from the list in Step 2(e), and splits all the nodes of the current decision tree by fixing that character. So it splits each nodes into atmost $\lcmG$ many nodes at this stage. This step of the algorithm continues till it reaches a point where the number of survived cosets (heavy Fourier coefficient corresponding to the cosets) is at most $\varepsilon$ times the total number of cosets at that stage.

\noindent

\

So, in the first iteration, Step 2(b) of Algorithm~\ref{algo: efficient regularity lemma} takes error parameter $\varepsilon$ and confidence $\eta$. It makes $O\biggl( \frac{1}{\varepsilon^{10}} \log\biggl(\frac{1}{\varepsilon^{24}}\log \frac{1}{\eta\varepsilon^2}\biggr) \log \frac{1}{\eta\varepsilon^2} \biggr)$ many queries at this stage, by Theorem~\ref{query_implicit_sieve}. 

At the $2^{nd}$ stage of Step 2(b), if the loop continues, number of cosets $M$ that has survived is $\geq \varepsilon \lcmG^{1+\lcmG}$. So, at the $i^{th}$ stage of Step 2(b), if the loop continues, number of cosets $M$ that has survived is $\leq \Tower(\lcmG, i)$. Therefore, the error at the $i^{th}$ iteration is $\frac{\varepsilon}{\Tower(\lcmG, i)}$. Number of queries made at this stage is $$O\biggl( \frac{(\Tower(\lcmG, i))^{12}}{\varepsilon^{12}} \log \frac{(\Tower(\lcmG, i))^{2}}{\eta\varepsilon^2} \biggr) \log\biggl(\frac{(\Tower(\lcmG, i))^{10}}{\eta\varepsilon^{10}}\log \frac{(\Tower(\lcmG, i))^{2}}{\eta\varepsilon^2}\biggr).$$ Therefore, at the final stage of Step 2(b), total number of queries made is 

$O\biggl( \frac{(\Tower(\lcmG,{O(\frac{1}{\varepsilon^3})}))^{12}}{\varepsilon^{12}} \log\biggl(\frac{(\Tower(\lcmG,{O(\frac{1}{\varepsilon^3})}))^{10}}{\eta\varepsilon^{10}}\log \frac{(\Tower(\lcmG,{O(\frac{1}{\varepsilon^3})}))^{2}}{\eta\varepsilon^2}\biggr) \log \frac{(\Tower(\lcmG,{O(\frac{1}{\varepsilon^3})}))^{2}}{\eta\varepsilon^2} \biggr)$.

Now, the total number of cosets involved in the whole process in the worst case is again $\leq \Tower(\lcmG,{O(\frac{1}{\varepsilon^3})})$. 

Note that there are various other randomized procedures are going in the next steps of the algorithms and when fixing $\eta$ for the worst case scenario we have to take account of all such randomized procedures. But here is an important observation at each coset only a constant number of randomized procedures
are invoked, namely List Correction, Fourier-coefficient estimation, and
the independence test, together with the final density estimation in
Step 3. We absorb this constant number of randomized calls into the
constant hidden in the tower bound. Thus, assigning failure probability
at most $\eta$ to each such randomized call and taking a union bound over
all calls still gives an overall failure probability at most $1/10$.

Hence, taking confidence at each step $ \eta = \frac{1}{10} \times \frac{1}{\Tower(\lcmG,{O(\frac{1}{\varepsilon^3}))}}$ works.

Step 2(c), number of queries made is equal to $O(\frac{1}{\varepsilon^2}\log\frac{1}{\eta})$ on reusing queries.

\item \textbf{Step 2(d) of Algorithm~\ref{algo: efficient regularity lemma}:}
Recall the relation between the restricted and the truncated function (Section~\ref{section_relation_res_trun}). Upon termination at Step 2(d), the algorithm ensures that for at least a $(1-\varepsilon)$-fraction of the cosets $H_i$, the list $\mathcal{L}_i = \emptyset$. Since $\mathsf{LIST_{Corr}}$ (see Appendix~\ref{section_implicit_sieve}) is guaranteed to return the matrix $\mathcal{O}$, which columns represent the values of those characters (at sufficiently many points)) with Fourier coefficients larger than $\varepsilon$, $\mathcal{L}_i = \emptyset$ implies that:
$$ \max_{\alpha \neq 0} \lvert \widehat{\mathbbm{1}_A|_{H_i}}(\alpha) \rvert < \varepsilon. $$
By definition, this means the restricted subfunction $\mathbbm{1}_A|_{H_i}$ is $\varepsilon$-regular for all but an $\varepsilon$-fraction of the cosets. For the $\varepsilon$-fraction of cosets that did not satisfy this, their lists were not empty, but the loop terminated because the threshold was met; these non-regular cosets will be implicitly handled in the next step.

\item \textbf{Step 2(e) of Algorithm~\ref{algo: efficient regularity lemma}:}
Since $\mathcal{O}$ is the collection of the oracles that describe the internal nodes, we can think of $\mathcal{O}$ as a matrix, whose columns denote the value of the characters evaluated at certain points, each distributed uniformly and independently in $\mathcal{G}$, which the oracles inside $\mathcal{O}$ output on querying. We proceed along the lines of the proof of correctness of Algorithm~\ref{algo: estimation}. Let the matrix $\mathcal{O}$ be a $k_i \times N_i$ matrix at the $i^{th}$ iteration; that is, its $j^{th}$ column denotes the values of the character evaluated at $k_i$ many points, each distributed uniformly and independently in $\mathcal{G}$, which the oracle $O_j$ outputs on querying.
So, the columns are labelled by $1, \dots, N_i$, which correspond to some $M_1', \dots, M_{N_i}' \in \mathcal{G}$ (note that the $M_{n_j}$'s are not known).
The $(i,j)$th entry of $Q$ is in $\{1, \omega_\lcmG, \ldots, \omega_\lcmG^{\lcmG-1}\}$. 

Now, let us consider all the group automorphisms on $\G$ which are formed by only choosing the images of the elements in $S'=\{M_1', \dots, M_{N_i}'\}$, which are pseudo--independent by construction. Then, by the property of group automorphism, the images of any other element that are not pseudo--independent from $M_1', \dots, M_{N_i}'$ are defined automatically. That is, for an element $\alpha \notin \{M_1', \dots, M_{N_i}'\}$, let $\lambda_1, \ldots, \lambda_{|N_i|}, \lambda \in \mathbb{Z}_\lcmG$, with $\lambda$ invertible in $\mathbb{Z}_\lcmG$ be such that $\lambda \alpha = \sum_{j=1}^{N_i} \lambda_i M_{j}'$. If $\psi$ is a group automorphism on $\G$, then $\psi(\alpha) = \frac{1}{\lambda} \sum_{j=1}^{N_i} \lambda_i \psi(M_{j}')$.

Proceeding along the lines of the proof of correctness of Algorithm~\ref{algo: estimation}, the probability that an $\alpha$ is chosen which is pseudo--dependent on $S'$, is $\lcmG^{-k_i} \times \frac{\varphi(\lcmG)}{\lcmG}$. So, for a large enough choice of $k_i$, a pseudo--independent $\alpha$ is chosen with probability $1- \lcmG^{-k_i} \times \frac{\varphi(\lcmG)}{\lcmG}$, which is $\geq \frac{2}{3}$ whenever a large enough $k_i$ is chosen. We use a brute-force algorithm to do this. Note that, since the number of cosets that have survived in the $i^{th}$ iteration is $\geq \Tower(\lcmG, i)$, the error at the $i^{th}$ iteration is at most $\frac{\varepsilon}{\Tower(\lcmG, i)}$. Therefore, $k_i = O(\frac{(\Tower(\lcmG, i))^{8}}{\varepsilon^8} \log \frac{(\Tower(\lcmG, i))}{\eta\varepsilon})$, and $N_i = O(\frac{(\Tower(\lcmG, i))^{2}}{\varepsilon^2})$, from Algorithm~\ref{algo:Implicit_Sieve_generalized} and \ref{algo: estimation}.
\end{itemize}

\item[Step 3 of Algorithm~\ref{algo: efficient regularity lemma}:]
For a coset $H_i$, $$\Density_{H_i}(A_i) = \frac{|A_i\cap H_i|}{|H_i|}.$$

Also, 
\begin{align*}
    \widehat{\mathbbm{1}_{A_{i}}|_{H_i}}(\chi_0) &= \frac{1}{|H_i|} \sum_{x\in H_i} \mathbbm{1}_{A_{i}}(x) \chi_0(x) = \frac{1}{|H_i|} \sum_{x\in H_i} \mathbbm{1}_{A_{i}}(x) = \mathbb{E}_{x\in H_i} [\mathbbm{1}_{A_{i}}(x)] \\
    &= \frac{|A_i\cap H_i|}{|H_i|} = \Density_{H_i}(A_i).
\end{align*}

So, $\Density_{H_i}(A_i)$ estimation can be done in the same way as $\wt_2$-estimation (see Section~\ref{sec_wt_2_estimation}). With error $\frac{\zeta}{4}$ and confidence $\eta$, this can be done using $O(\frac{1}{\zeta^2}\log\frac{1}{\eta})$ many queries.

In this step, the algorithm computes an estimate of the density of $A$ for each leaf, that is, coset
$H_i$, up to an additive error of $\zeta/4$. We consider the following two cases:

\begin{itemize}
\item \textbf{Case 1:} If the estimate of the true density $\Lambda_{H_i}(A)$ is less than
$3\zeta/4$, then with high probability
\[
\Lambda_{H_i}(A)
<
\frac{3\zeta}{4}+\frac{\zeta}{4}
=
\zeta.
\]
The algorithm sets $T(x)=0$ on this coset, which means
$A'\cap H_i=\emptyset$. This satisfies the condition
$\Lambda_{H_i}(A')=0$.

\item \textbf{Case 2:} If the estimate of the true density $\Lambda_{H_i}(A)$ is at least
$3\zeta/4$, then with high probability
\[
\Lambda_{H_i}(A)
\geq
\frac{3\zeta}{4}-\frac{\zeta}{4}
=
\frac{\zeta}{2}.
\]
The algorithm sets $T(x)=1$ on this coset, which means
$A'\cap H_i=A\cap H_i$.
This guarantees
\[
\Lambda_{H_i}(A')
=
\Lambda_{H_i}(A)
\geq
\frac{\zeta}{2}.
\]
\end{itemize}

\item[Step 4 of Algorithm~\ref{algo: efficient regularity lemma}:] The output oracle is defined as $\Or_{A'}^A(x) = \mathcal{T}(x) \cdot A(x)$. Since $\mathcal{T}(x) \in \{0,1\}$ acts as the indicator function of the union of dense, regular cosets, the function $\Or_{A'}^A(x)$ is probabilistic oracle for the indicator function of the set $A'$ defined in Lemma~\ref{Prop: prop of Green's regularity}.

By the exact same counting logic proven in the claim of Lemma~\ref{Prop: prop of Green's regularity}, removing the non-regular cosets (at most an $\varepsilon$-fraction) and cosets discarded by the density test (density $\le \zeta$) bounds the total discarded density by $\Density(A \setminus A') \leq \varepsilon + \zeta$. Thus, the oracle is correct.
\end{description}

\subsection{Query and time complexity of Algorithm~\ref{algo: efficient regularity lemma}}
Note that Step 2(b), 2(c), and Step 3 of Algorithm~\ref{algo: efficient regularity lemma} contribute to the query complexity of the algorithm. The correctness of this algorithm follows from the proof of correctness of List Correction for Abelian groups, Algorithm~\ref{algo:Implicit_Sieve_generalized} (Section~\ref{section_implicit_sieve}), Coefficient estimate algorithm~\ref{algo: estimation} (Section~\ref{section_estimation}), and from the proof of Green's regularity Lemma~\ref{thm_green_regularity} (Section~\ref{section_Green_regularity}).

By Theorem~\ref{thm_green_regularity}, Step 2 of Algorithm~\ref{algo: efficient regularity lemma} terminates after at most $O(\frac{1}{\varepsilon^3})$ many steps. Let the total number of cosets at a certain stage be $M$. If number of survived cosets is $\leq \varepsilon M$, the algorithm moves to Step 3. 
\begin{itemize}
    \item Queries made at the final stage of Step 2(b) is $$O\biggl( \frac{(\Tower(\lcmG,{O(\frac{1}{\varepsilon^3})}))^{12}}{\varepsilon^{12}} \log\biggl(\frac{(\Tower(\lcmG,{O(\frac{1}{\varepsilon^3})}))^{10}}{\eta\varepsilon^{10}}\log \frac{(\Tower(\lcmG,{O(\frac{1}{\varepsilon^3})}))^{2}}{\eta\varepsilon^2}\biggr) \log \frac{(\Tower(\lcmG,{O(\frac{1}{\varepsilon^3})}))^{2}}{\eta\varepsilon^2} \biggr).$$

Now, the total number of cosets involved in the whole process in the worst case is again $\leq \Tower(\lcmG,{O(\frac{1}{\varepsilon^3})}$. Taking confidence at each step $ \eta = \frac{1}{10} \times \frac{1}{\Tower(\varepsilon\lcmG,{O(\frac{1}{\varepsilon^3}))}}$ works, as stated earlier.

\item Step 2(c), number of queries made is equal to $O(\frac{1}{\varepsilon^2}\log\frac{1}{\eta})$ on reusing queries, where $ \eta = \frac{1}{10} \times \frac{1}{\Tower(\lcmG,{O(\frac{1}{\varepsilon^3}))}}$.

\item In Step 3, number of queries made is equal to $O(\frac{1}{\zeta^2}\log\frac{1}{\eta})$ on reusing queries, where $ \eta = \frac{1}{10} \times \frac{1}{\Tower(\lcmG,{O(\frac{1}{\varepsilon^3}))}}$.

\end{itemize}

Therefore, total queries made is equal to $$O\biggl( \frac{(\Tower(\lcmG,{O(\frac{1}{\varepsilon^3})}))^{12}}{\varepsilon^{12}} \log\biggl(\frac{(\Tower(\lcmG,{O(\frac{1}{\varepsilon^3})}))^{10}}{\eta\varepsilon^{10}}\log \frac{(\Tower(\lcmG,{O(\frac{1}{\varepsilon^3})}))^{2}}{\eta\varepsilon^2}\biggr) \log \frac{(\Tower(\lcmG,{O(\frac{1}{\varepsilon^3})}))^{2}}{\eta\varepsilon^2} \biggr),$$ where $ \eta = \frac{1}{10} \times \frac{1}{\Tower(\varepsilon\lcmG,{O(\frac{1}{\varepsilon^3}))}}$.

\medskip
 The time complexity of Algorithm~\ref{algo: efficient regularity lemma} is also at most $\poly(\Tower(\lcmG, O(\frac
{1}{\varepsilon^3})), \frac{1}{\zeta})$, where the tower-term comes from the depth of the character decision tree constructed in Green's regularity lemma (Theorem~\ref{thm_green_regularity}), exactly like the case of query complexity. It is also completely independent of the order of the group $\G$.
\end{proof}


\section{Testing linear configuration freeness}
\label{sec:linear-removal}

The primary objective of this section is to establish a constructive framework for testing freeness of linear configurations in finite Abelian groups of constant torsion, leading to the proofs of Theorem~\ref{thm:remove-single_informal_statement} and Theorem~\ref{thm:remove-system1_informal_statement}.

The core subroutine for all our testing algorithms will be the efficient regularity algorithm \textsc{ConstDT} (Algorithm~\ref{algo: efficient regularity lemma}).
We demonstrate that the subset $A'$ of $A$, satisfying the properties of Lemma~\ref{Prop: prop of Green's regularity} and produced by \textsc{ConstDT} (Algorithm~\ref{algo: efficient regularity lemma}) is free of the target linear configuration provided its density is sufficiently small. 
This structural guarantee is first proved for a single linear configuration in  Section~\ref{ssec:single} and then extended to matrix linear configuration of complexity-$1$ systems in Section~\ref{ssec:systems}.

\subsection{A single linear equation}
\label{ssec:single}

Let $L(\mathbf{x})=\sum_{j=1}^{m}a_{j}x_{j}=0$ with coefficients $a_{j}\in\mathbb{Z}$. We define the configuration density operator $\Lambda_L$ for functions $f_1, \dots, f_m: \G \to [0,1]$ and subsets $A \subseteq \G$ as:
\[
   \Lambda_{L}(f_1,\dots,f_m)
   =\mathbb{E}_{L(\mathbf{x})=0}\;\prod_{j=1}^{m}f_j(x_j),
   \qquad
   \Lambda_{L}(A)=\Lambda_{L}(\mathbbm{1}_A,\dots,\mathbbm{1}_A).
\]
Our non-degeneracy hypothesis requires every coefficient to be a unit modulo the torsion ($\gcd(a_j, \lcmG) = 1$). This algebraic condition ensures that the relevant character maps on $\widehat{\G}$ remain bijective, even in the presence of zero divisors within $\mathbb{Z}_\lcmG$.

\begin{lemma}[Counting lemma for a single equation]
\label{lem:count-single}
Let $m \ge 3$ and suppose at least three of the coefficients $a_j$ are units in $\mathbb{Z}_\lcmG$. Let $f_1,\dots,f_m:\G\to[0,1]$, and suppose some function $f_{j_0}$, whose corresponding coefficient $a_{j_0}$ is a unit, is $\varepsilon$-regular. Then
\[
   \Bigl|\,\Lambda_{L}(f_1,\dots,f_m)-\prod_{j=1}^{m}\mathbb{E}[f_j]\,\Bigr|\le\varepsilon.
\]
The same uniform bound holds for the inhomogeneous equation $L(\mathbf{x})=b$, for any $b\in\G$ lying in the image of $L$.
\end{lemma}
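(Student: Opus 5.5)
The plan is the standard Fourier expansion of the configuration count, with the unit hypotheses used in two places. First, they make the normalisation exact. Second, they turn the maps $\chi\mapsto\chi^{a_j}$ into automorphisms of $\widehat{\G}$, which is what Parseval needs.

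\textbf{Normalisation.} Fix $b$ in the image of $L$; the case $b=0$ is the homogeneous one. Since some coefficient, say $a_{j_0}$, is a unit modulo $\lcmG$, multiplication by $a_{j_0}$ is an automorphism of $\G$. So for any choice of the other $m-1$ variables there is exactly one $x_{j_0}$ with $L(\mathbf{x})=b$. Hence $L:\G^m\to\G$ is surjective, and $|\{L(\mathbf{x})=b\}|=|\G|^{m-1}$. I will then write the indicator of $\{L(\mathbf{x})=b\}$ by orthogonality of characters as $\frac{1}{|\G|}\sum_{\chi\in\widehat{\G}}\chi(L(\mathbf{x})-b)$. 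Substituting and using $\chi(a_jx_j)=\chi^{a_j}(x_j)$ gives
\[
\Lambda_L(f_1,\dots,f_m)\;=\;\sum_{\chi\in\widehat{\G}}\overline{\chi(b)}\prod_{j=1}^{m}\widehat{f_j}\bigl(\chi^{-a_j}\bigr),
\]
with the normalising factors $|\G|^m/(|\G|^{m-1}\cdot|\G|)=1$. The term $\chi=\chi_0$ is exactly $\prod_j\mathbb{E}[f_j]$. It therefore suffices to bound the sum over $\chi\neq\chi_0$ by $\varepsilon$, in absolute value.

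\textbf{Bounding the non-trivial terms.} Let $j_0$ be the index whose unit coefficient carries the $\varepsilon$-regular function. Because at least three coefficients are units, I can choose two further indices $j_1,j_2$, distinct from each other and from $j_0$, with $a_{j_1},a_{j_2}$ units.

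For $\chi\neq\chi_0$, the character $\chi^{-a_{j_0}}$ is non-trivial. Indeed, $\chi\mapsto\chi^{-a_{j_0}}$ is an automorphism of $\widehat{\G}$: every character takes values in $\lcmG$-th roots of unity, and $a_{j_0}$ is invertible modulo $\lcmG$. So $|\widehat{f_{j_0}}(\chi^{-a_{j_0}})|\le\varepsilon$. For every remaining $j\notin\{j_0,j_1,j_2\}$ I use the trivial bound $|\widehat{f_j}(\psi)|\le\mathbb{E}|f_j|\le1$, and $|\chi(b)|=1$. The error is then at most
\[
\varepsilon\sum_{\chi}\bigl|\widehat{f_{j_1}}(\chi^{-a_{j_1}})\bigr|\,\bigl|\widehat{f_{j_2}}(\chi^{-a_{j_2}})\bigr|.
\]
By Cauchy--Schwarz this is at most $\varepsilon\bigl(\sum_\chi|\widehat{f_{j_1}}(\chi^{-a_{j_1}})|^2\bigr)^{1/2}\bigl(\sum_\chi|\widehat{f_{j_2}}(\chi^{-a_{j_2}})|^2\bigr)^{1/2}$. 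Since each map $\chi\mapsto\chi^{-a_{j_s}}$ is a bijection of $\widehat{\G}$, each sum is a reindexing of $\sum_\psi|\widehat{f_{j_s}}(\psi)|^2=\mathbb{E}[f_{j_s}^2]\le1$, by Parseval and $0\le f\le1$. This gives the bound $\varepsilon$, uniformly in $b$.

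\textbf{Main obstacle.} The step needing the most care is the reindexing. The maps $\chi\mapsto\chi^{a_j}$ must be genuine permutations of $\widehat{\G}$, and this fails for non-unit $a_j$ because of zero divisors in $\mathbb{Z}_\lcmG$. The requirement of three distinct unit coefficients is exactly what supplies one regular factor plus two $\ell^2$ factors. I will also check that the inhomogeneous case changes only the phase $\overline{\chi(b)}$, which disappears in absolute value, and the solution count, which stays $|\G|^{m-1}$ because $b$ lies in the image of $L$.
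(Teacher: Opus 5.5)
Your proposal is correct and follows the paper's proof essentially step for step. Both expand the solution indicator into characters with the exact normalisation $|\G|^{m-1}$, isolate the regular unit index together with two further unit indices, and apply Cauchy--Schwarz and Parseval after reindexing by the automorphisms $\chi\mapsto\chi^{-a_j}$; in the inhomogeneous case both absorb the unimodular phase $\overline{\chi(b)}$. You also state explicitly why $\chi^{-a_{j_0}}\neq\chi_0$ whenever $\chi\neq\chi_0$, which the paper uses without comment.
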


\begin{proof}
Expanding the solution indicator $\mathbbm{1}_{[L(\mathbf{x})=0]}$ into a character sum over the dual group $\widehat{\G}$ yields
\[
   \sum_{\mathbf{x}:\,L(\mathbf{x})=0}\prod_{j=1}^m f_j(x_j)
   =\frac{1}{|\G|}\sum_{\chi\in\widehat{\G}}\prod_{j=1}^m \sum_{x_j\in\G}f_j(x_j)\chi(a_j x_j)
   =|\G|^{m-1}\sum_{\chi\in\widehat{\G}}\prod_{j=1}^m c_j(\chi),
\]
where $c_j(\chi):=\widehat{f_j}\bigl(\overline{\chi^{a_j}}\bigr)=\mathbb{E}_x[f_j(x)\chi(a_j x)]$, 
at least one $a_j$ is a unit, one coordinate is uniquely determined by the remaining $m-1$ variables, so the number of solutions in $\G^m$ is exactly $|\G|^{m-1}$; hence $\Lambda_L(f_1,\dots,f_m)=\sum_{\chi}\prod_j c_j(\chi)$. 

The principal term corresponding to the trivial character $\chi=\chi_0$ equals $\prod_j\mathbb{E}[f_j]$. For the remainder of the sum, we isolate the regular index $j_0$ along with two further unit indices $j_1,j_2$, bounding all other factors trivially by $\lvert c_j(\chi)\rvert\le\mathbb{E}[f_j]\le1$:
\[
   \Bigl|\sum_{\chi\ne\chi_0}\prod_{j=1}^m c_j(\chi)\Bigr|
   \le\Bigl(\max_{\chi\ne\chi_0}\lvert c_{j_0}(\chi)\rvert\Bigr)
      \sum_{\chi\ne\chi_0}\lvert c_{j_1}(\chi)\rvert\,\lvert c_{j_2}(\chi)\rvert
   \le\varepsilon \cdot \lVert c_{j_1}\rVert_2\,\lVert c_{j_2}\rVert_2 .
\]
For any unit $a_j$, the mapping $\chi\mapsto\overline{\chi^{a_j}}$ is an automorphism of $\widehat{\G}$. Thus, by Parseval's identity, $\lVert c_j\rVert_2^2=\sum_{\chi}\lvert\widehat{f_j}(\overline{\chi^{a_j}})\rvert^2=\sum_{\psi}\lvert\widehat{f_j}(\psi)\rvert^2=\lVert f_j\rVert_2^2\le1$, which matches the error bound $\varepsilon$. For the inhomogeneous equation, each non-principal term simply picks up a unimodular phase $\overline{\chi(b)}$, leaving the absolute value estimate unchanged.
\end{proof}

The above counting lemma, a standard result in additive combinatorics, can be interpreted as stating that if a set is $\varepsilon$-regular, then the number of solutions to a linear system is close to the number one would expect if the elements of a set of the same density were chosen independently at random. Since ConstDT (Algorithm~\ref{algo: efficient regularity lemma}) guarantees that the restriction of $A'$ to each regular coset is $\varepsilon$-regular, the density of solutions within such a coset is approximately $(\frac{\zeta}{2})^m$, where $\frac{\zeta}{2}$ denotes the density of the set in that coset and $m$ is the number of variables in the linear configuration. In the proof below, we choose $\varepsilon<(\zeta/2)^m$. Consequently, the error term is too small to account for the disappearance of all solutions. Thus, the single solution that we initially assume exists must, in fact, expand into a large cluster of solutions within the coset.

\begin{theorem}[Algorithmic removal for a single equation]
\label{thm:remove-single}
Fix $t\ge2$ and let $L(\mathbf{x})=\sum_{j=1}^{m}a_jx_j=0$ with $m\ge3$ and $\gcd(a_j,\lcmG)=1$. For every target tolerance $\theta>0$, there exists a threshold $\delta=\delta(\theta,m,t)>0$, \emph{independent of $\G$}, such that the following holds for every finite Abelian group $\G$ with constant torsion $\lcmG$. Given query access to $A\subseteq\G$, Algorithm~\ref{alg:remove-single} requires at most
\[
   \poly\Bigl(\Tower\bigl(\lcmG,\;O_m(\theta^{-3m})\bigr)\Bigr)
\]
queries and time, both independent of the minimum number of elements required to generate $|\G|$, as well as $|\G|$. With probability at least $9/10$, it outputs an exact oracle for a subset $A'\subseteq A$ satisfying $|A\setminus A'|\le\theta\,|\G|$ such that, \emph{whenever} $\Lambda_{L}(A)\le\delta$, the set $A'$ contains absolutely no solutions to $L(\mathbf{x})=0$. One may explicitly take
\(
\delta=
\operatorname{Tower}\!\left(\tau,O_m(\theta^{-3m})\right)^{-1}.
\)

\end{theorem}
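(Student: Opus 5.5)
The plan is to run \textsc{ConstDT} (Theorem~\ref{th: Regularity for Abelian}) with parameters $\zeta=\theta/2$ and $\varepsilon<\tfrac12(\zeta/2)^m$, say $\varepsilon=c_m\theta^m$ with $\varepsilon\le\theta/2$. We output its oracle $\mathcal{O}_{A'}$, made exact by evaluating the character decision tree $\mathcal{T}$ deterministically once its randomness is fixed. With probability $9/10$, the properties of Lemma~\ref{Prop: prop of Green's regularity} then hold. First, $\Lambda(A\setminus A')\le\varepsilon+\zeta\le\theta$. Second, every coset $K$ of $H$ with $\Lambda_K(A')>0$ has density at least $\zeta/2$ and $\mathbbm{1}_{A'}|_K$ is $\varepsilon$-regular. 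Third, the number of cosets is $N:=|\G|/|H|\le \Tower(\lcmG,O(\varepsilon^{-3}))=\Tower(\lcmG,O_m(\theta^{-3m}))$. The query bound is inherited directly from Theorem~\ref{th: Regularity for Abelian}. The remaining work is the structural claim: if $\Lambda_L(A)\le\delta:=c'_m(\zeta/2)^m N^{-(m-1)}$, which is of the form $\Tower(\lcmG,O_m(\theta^{-3m}))^{-1}$, then $A'$ has no solution.

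For the structural claim we argue by contradiction. Suppose $\mathbf{y}=(y_1,\dots,y_m)\in A'^m$ satisfies $L(\mathbf{y})=0$, and let $K_j=H+y_j$ be the coset of $y_j$. Each $K_j$ is then a kept coset, so it is dense and regular. We count solutions $\mathbf{x}$ with $x_j\in K_j$ by writing $x_j=y_j+h_j$ with $h_j\in H$. The equation becomes the homogeneous equation $\sum_j a_jh_j=0$ in $H$. The coefficients remain units modulo $\exp(H)$, since $\exp(H)\mid\lcmG$, so Lemma~\ref{lem:count-single} applies inside the group $H$ with $f_j(h)=\mathbbm{1}_{A'}(y_j+h)$. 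These are exactly the restricted functions $\mathbbm{1}_{A'}|_{K_j}$, and each is $\varepsilon$-regular on $H$. The lemma gives
\[
\mathbb{E}_{\sum a_jh_j=0}\prod_j f_j(h_j)\ \ge\ \prod_j\Lambda_{K_j}(A')-\varepsilon\ \ge\ (\zeta/2)^m-\varepsilon\ \ge\ \tfrac12(\zeta/2)^m .
\]
Since $a_1$ is a unit, the solution set in $H^m$ has exactly $|H|^{m-1}$ elements. This yields at least $\tfrac12(\zeta/2)^m|H|^{m-1}$ solutions in $A'^m\subseteq A^m$. The full solution set in $\G^m$ has $|\G|^{m-1}$ elements, so
\[
\Lambda_L(A)\ \ge\ \tfrac12(\zeta/2)^m N^{-(m-1)}\ >\ \delta,
\]
which is a contradiction. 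The threshold $\delta$ depends only on $\theta$, $m$, and $\lcmG$, because $N$ is bounded by the tower and does not depend on $\G$. More precisely, $\delta$ should be defined using the worst-case tower bound on $N$ rather than the realized $N$.

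The main obstacle is making the regularity inside each coset match the hypotheses of Lemma~\ref{lem:count-single}. Two points need care. The first is that the translated functions $h\mapsto\mathbbm{1}_{A'}(y_j+h)$ must be regular with respect to $\widehat H$; Theorem~\ref{th: Regularity for Abelian} guarantees exactly this for the restriction, and translation only changes phases. The second is that the lemma needs only one regular function, but it also needs three unit coefficients, and here all coefficients are units. A minor subtlety is that the algorithm's guarantee holds only with probability $9/10$. On that event the output oracle is exact for a fixed $A'$, so the deterministic argument above applies.
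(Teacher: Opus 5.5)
Your proposal is correct and follows the paper's proof. You run \textsc{ConstDT} with $\zeta=\theta/2$ and $\varepsilon\approx\tfrac12(\zeta/2)^m$, take a putative solution in $A'^m$, and apply Lemma~\ref{lem:count-single} inside $H$ to the phase-shifted restrictions on the $m$ dense regular cosets, which yields $\tfrac12(\zeta/2)^m|H|^{m-1}$ solutions and contradicts $\Lambda_L(A)\le\delta$ via the worst-case tower bound on $|\G|/|H|$. The only difference is cosmetic: anchoring the shift at the solution $\mathbf{y}$ itself gives you the homogeneous equation $\sum_j a_jh_j=0$ in $H$, whereas the paper shifts by arbitrary coset representatives and uses the inhomogeneous case $\sum_j a_jy_j=b$ of the counting lemma.
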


\begin{algorithm}[H]
\caption{\textsc{ConfigRemoval}$(A,L,\theta)$ \quad(single equation)}
\label{alg:remove-single}
\begin{algorithmic}[1]
\Require query access to $A\subseteq\G$; equation $L(\mathbf{x})=\sum_{j=1}^{m}a_jx_j=0$ with $m\ge3$ and $\gcd(a_j,\lcmG)=1$ for all $j$; global error parameter $\theta\in(0,1)$.
\Ensure an exact oracle $\mathcal{O}_{A'}$ and character decision tree $T$ for a set $A'\subseteq A$.
\State Set the density threshold $\zeta\gets\theta/2$.
\State Set the target local regularity parameter $\varepsilon\gets\tfrac12(\frac{\zeta}{2})^{\,m}$.
\State $(\mathcal{O}_{A'},T)\gets\textsc{ConstDT}(A,\varepsilon,\zeta)$. \Comment{Run the constructive regularity routine}
\State \Return $(\mathcal{O}_{A'},T)$.
\end{algorithmic}
\end{algorithm}

\begin{proof}[Proof of Theorem~\ref{thm:remove-single}]
The query/time bounds and the existence of $\mathcal{O}_{A'}$ follow from Theorem~\ref{th: Regularity for Abelian} evaluated at $\varepsilon=\tfrac12(\zeta/2)^m$ and $\zeta=\theta/2$. Under these parameters, $1/\varepsilon^3=O_m(\theta^{-3m})$, and the constant torsion $\lcmG$ updates all structural dependencies to the constant $\lcmG$. By Theorem~\ref{th: Regularity for Abelian}, the discarded volume satisfies $|A\setminus A'|\le(\varepsilon+\zeta)|\G|\le\theta|\G|$.

It remains to certify solution-freeness. Suppose for contradiction that there exists a valid solution $\mathbf{x}\in (A')^{m}$ to $L(\mathbf{x})=0$, with components $x_j\in c_j+H$. By Theorem~\ref{th: Regularity for Abelian}, each underlying coset carries $\Density_{c_j+H}(A')\ge \frac{\zeta}{2}$ and the restricted indicator function $\mathbbm{1}_{A'}|_{c_j+H}$ is $\varepsilon$-regular. 

Shifting variables to the core subgroup via $y_j:=x_j-c_j\in H$, the constraint translates to the inhomogeneous equation
\[
   \sum_{j=1}^{m}a_jy_j=b\quad\text{in }H,\qquad b:=-\sum_{j=1}^m a_jc_j\in H,
\]
where $b\in H$ precisely because $\mathbf{x}$ is a solution (so $b=\sum_j a_j(x_j-c_j)\in H$, since $a_jH\subseteq H$). Let $g_j:H\to[0,1]$ be the indicator function of $A'$ on $c_j+H$ pulled back to $H$. By construction, $\mathbb{E}_H[g_j]\ge\frac{\zeta}{2}$ and $g_j$ is $\varepsilon$-regular on $H$. Because $\exp(H)\mid\lcmG$, the coefficients $a_j$ remain coprime to the internal torsion of $H$. Applying Lemma~\ref{lem:count-single} inside the subgroup $H$ yields:
\[
   \mathbb{E}_{\sum_j a_jy_j=b}\;\prod_{j=1}^m g_j(y_j)
   \;\ge\;\prod_{j=1}^m\mathbb{E}[g_j]-\varepsilon
   \;\ge\;(\frac{\zeta}{2})^{\,m}-\varepsilon > \frac{1}{2}(\frac{\zeta}{2})^m,
\]
since $\varepsilon=\tfrac12(\frac{\zeta}{2})^m$. The number of solutions counted here is at least $\tfrac12 (\frac{\zeta}{2})^m \,|H|^{\,m-1}$. Every such structural tuple $(c_j+y_j)_j$ lies in $(A')^m\subseteq A^m$ and satisfies $L(\mathbf{x})=0$. Embedding this back into the global group yields the density bound:

\[
\Lambda_L(A)
\ge
\frac{\frac12(\zeta/2)^m}{M^{m-1}}
\ge
\frac{\frac12(\zeta/2)^m}
{\operatorname{Tower}\!\left(\tau,O(1/\varepsilon^3)\right)^{m-1}}
=: \delta_0.
\]

Since \(1/\varepsilon^3=O_m(\theta^{-3m})\), we may take
\(
\delta=
\operatorname{Tower}\!\left(\tau,O_m(\theta^{-3m})\right)^{-1}.
\)
Setting our global threshold $\delta<\delta_0$ directly contradicts the hypothesis $\Lambda_L(A)\le\delta$. Therefore, $A'$ is completely solution-free.
\end{proof}

Observe that testing triangle-freeness in subsets of the vector space $\mathbb{F}_{p}^{n}$ is a special case of the above theorem, obtained by taking $m=3$, $a=(1,1,-1)$, and $\G=\mathbb{F}_{p}^{n}$. A polynomial dependence on $1/\varepsilon$ for testing triangle-freeness in this setting is already known~\cite{fox2017tight}. Our result substantially generalizes this by showing that triangle-freeness, as well as the absence of a broad class of linear configurations, can be tested in arbitrary finite Abelian groups of constant torsion.

Our contribution yields a query-efficient, constructive, and dimension-independent tester under the constant-torsion assumption. The hypothesis that $\gcd(a_j,\lcmG)=1$ for every coefficient $a_j$ is also structurally necessary. For example, the equation defining a $3$-term arithmetic progression,
\[
x-2y+z=0,
\]
contains the coefficient $-2$, which is invertible precisely when $\lcmG$ is odd. This coincides with the well-known degeneracy of $3$-term arithmetic progressions in characteristic~$2$.

\subsection{Complexity-\texorpdfstring{$1$}{1} systems}
\label{ssec:systems}

We now treat systems of equations $M\mathbf{x}=0$ where $M\in\mathbb{Z}^{r\times m}$ is an integer matrix, whose number of pseudo--independent (Definition~\ref{defn_independence}) rows is at most $r$. To proceed via basic Fourier methods, the system must satisfy two fundamental non-degeneracy conditions.

\begin{definition}[Determinantal non-degeneracy, Definition~\ref{def:det}, restated]
The $r$-th determinantal divisor $d_r(M)$ is the greatest common divisor of all $r\times r$ minors of $M$. We require the matrix to be coprime to the group torsion: $\gcd\bigl(d_r(M),\lcmG\bigr)=1$.
\end{definition}




The determinantal non-degeneracy condition
$\gcd(d_r(M),\lcmG)=1$ ensures that $M$
can be transformed into $[I_r\mid 0]$
using invertible row and column operations
over $\mathbb Z_\lcmG$.
Consequently, the solution space
$\{\mathbf{x}\in\G^m:M\mathbf{x}=0\}$
can be bijectively parameterized by
$\mathbf{t}\in\G^{m-r}$ through
integer-linear forms
$x_j=\ell_j(\mathbf{t})$.
In particular, the total number of
solutions is exactly $|\G|^{m-r}$.

To establish the counting lemma, we also
require a suitable Cauchy--Schwarz
complexity-$1$ condition on these forms.
The usual complexity condition, formulated
in terms of linear spans, does not directly
provide the surjectivity properties required
by our argument over general finite Abelian
groups. In particular, noncontainment in
the $\mathbb Q$-linear span does not imply
that
$\ell_i\left(
\bigcap_{j\in J}\ker\ell_j
\right)=\G.$
This difficulty arises because rational
linear independence need not be preserved
modulo the torsion of the group.

We therefore introduce a torsion-compatible
version of Cauchy--Schwarz complexity-$1$,
which explicitly guarantees the required
surjectivity of the coordinate forms on the
relevant common kernels. This allows the
generalized counting lemma to be established
using the Cauchy--Schwarz inequality and
first-order Fourier analysis.

We therefore use the following torsion-compatible version of
Cauchy--Schwarz complexity-$1$. The definition directly encodes the
surjectivity property required by the two Cauchy--Schwarz steps. Over
groups of prime exponent it reduces to the usual complexity-$1$
condition over the corresponding finite field.

\begin{definition}[Torsion-compatible Cauchy-Schwarz complexity-$1$]
\label{def:cs1}
Let $\ell_1,\ldots,\ell_m$ be the coordinate forms parameterizing the
solution space of $Mx=0$. We say that $M$ has Cauchy--Schwarz
complexity-$1$ if, for every $i\in[m]$, there exists a partition
\([m]\setminus\{i\}=J_1\sqcup J_2\)
such that, for $s\in\{1,2\}$,
\(\ell_i\notin\operatorname{span}_{\mathbb Q}\{\ell_j:j\in J_s\}
\text{ and, }
\ell_i\!\left(\bigcap_{j\in J_s}\ker\ell_j\right)=G.\)
\end{definition}


\begin{remark}
\label{rem:cs_complexity_surjectivity}
The coefficient-vector formulation in
Definition~\ref{def:cs1}
implies the required surjectivity directly.
Indeed, fix $s\in\{1,2\}$ and let
$v_{i,s}=(v_1,\ldots,v_d)$ be the vector guaranteed by the
definition. For any $y\in\G$, define
\(
    h_y:=(v_1y,\ldots,v_dy)\in\G^d.
\)
For every $j\in J_s$,
\(
    \ell_j(h_y)
    =
    (a_j\cdot v_{i,s})y
    =
    0,
\)
since $a_j\cdot v_{i,s}=0\pmod{\tau}$ and $\tau y=0$.
Hence
\(
    h_y\in W_{J_s}.
\)
On the other hand,
\(
    \ell_i(h_y)
    =
    (a_i\cdot v_{i,s})y
    =
    y.
\)
Therefore
\(
    \ell_i(W_{J_s})=\G.
\)

Moreover, the same argument holds verbatim for every subgroup
$H\leq\G$, since $\exp(H)\mid\tau$. Consequently, the
torsion-compatible complexity-$1$ condition is inherited when the
forms are restricted from $\G$ to any subgroup $H\leq\G$.

When $\tau=p$ is prime, this condition is equivalent to the usual
linear-algebraic condition
\(
    a_i
    \notin
    \operatorname{span}_{\mathbb F_p}
    \{a_j:j\in J_s\}.
\)
Indeed, over a field the latter condition is equivalent to the
existence of a vector $v$ annihilating all $a_j$, $j\in J_s$, while
satisfying $a_i\cdot v\neq 0$; rescaling $v$ gives
$a_i\cdot v=1$.
Thus, Definition~\ref{def:cs1} recovers the usual
Cauchy--Schwarz complexity-$1$ condition over fields.
\end{remark}

\color{black}

Using the above definition, we will show that the configuration density operator is completely controlled by the uniformity of the Fourier spectrum.

\begin{lemma}[Counting lemma for matrix systems]
\label{lem:count-system}
Let $M$ be a matrix that has number of pseudo-independent (Definition~\ref{defn_independence}) rows at most $r$, satisfies Definition~\ref{def:det}, and has complexity-$1$. If the functions $f_1,\dots,f_m:\G\to[0,1]$ are each $\varepsilon$-regular, then
\[
   \Bigl|\,\Lambda_{M}(f_1,\dots,f_m)-\prod_{j=1}^{m}\mathbb{E}[f_j]\,\Bigr|
   \;\le\;2^{m}\,\varepsilon^{1/2}.
\]
The same bound holds identically for inhomogeneous systems $M\mathbf{x}=\mathbf{b}$ provided $\mathbf{b}\in\operatorname{im}M$.
\end{lemma}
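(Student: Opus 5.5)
We will prove the counting lemma for matrix systems by reducing it to a telescoping sum of ``one non-constant function'' terms, and bounding each such term by $\varepsilon^{1/2}$ through two applications of the Cauchy--Schwarz inequality, using the partition from Definition~\ref{def:cs1}.

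\textbf{Step 1: parametrize the solution space.} By Definition~\ref{def:det}, $M$ reduces to $[I_r\mid 0]$ under invertible row and column operations over $\mathbb Z_\lcmG$. Hence the solution set is exactly $\{(\ell_1(\mathbf t),\dots,\ell_m(\mathbf t)):\mathbf t\in\G^{m-r}\}$, with each point counted once, and
\[
\Lambda_M(f_1,\dots,f_m)=\mathbb E_{\mathbf t\in\G^{m-r}}\prod_{j=1}^m f_j(\ell_j(\mathbf t)).
\]
For the inhomogeneous system $M\mathbf x=\mathbf b$ with $\mathbf b\in\operatorname{im}M$, we fix one solution $\mathbf x^0$. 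The solution set is then $\mathbf x^0+\ker M$, so each $\ell_j(\mathbf t)$ is replaced by $\ell_j(\mathbf t)+x^0_j$. This amounts to translating each $f_j$, which preserves both means and regularity. It therefore suffices to treat the homogeneous case.

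\textbf{Step 2: telescope.} Write $f_j=\mathbb E[f_j]+g_j$, where $g_j$ has mean zero, $\|g_j\|_\infty\le1$, and $\widehat{g_j}(\chi)=\widehat{f_j}(\chi)$ for $\chi\neq\chi_0$. The function $g_j$ is therefore $\varepsilon$-regular with $\|\widehat{g_j}\|_\infty\le\varepsilon$. Expand $\prod_j f_j$ by multilinearity. The term with all means gives $\prod_j\mathbb E[f_j]$. There are at most $2^m$ remaining terms, and each contains at least one $g_i$; the other factors are means or $g_j$'s, all bounded by $1$ in sup norm. So it suffices to show that for each $i$ and any $1$-bounded functions $h_j$,
\[
\Bigl|\mathbb E_{\mathbf t}\, g_i(\ell_i(\mathbf t))\prod_{j\neq i}h_j(\ell_j(\mathbf t))\Bigr|\le\varepsilon^{1/2}.
\]

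\textbf{Step 3: two Cauchy--Schwarz steps (the main obstacle).} Fix $i$ and take the partition $J_1\sqcup J_2$ from Definition~\ref{def:cs1}. Let $W_s=\bigcap_{j\in J_s}\ker\ell_j$.

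For $h\in W_1$, the substitution $\mathbf t\mapsto \mathbf t+h$ leaves the factors indexed by $J_1$ unchanged. Averaging over $h\in W_1$ and applying Cauchy--Schwarz in $\mathbf t$ removes those factors, bounding the square of the average by
\[
\mathbb E_{\mathbf t}\,\mathbb E_{h,h'\in W_1}\, g_i(\ell_i(\mathbf t)+\ell_i h)\,\overline{g_i(\ell_i(\mathbf t)+\ell_i h')}\cdots
\]
times the $J_2$-factors evaluated at the two shifts. We then repeat the same step with $h_2,h_2'\in W_2$, which removes the $J_2$-factors. What remains is an average of $g_i$ over a ``box'' $\ell_i(\mathbf t)+\ell_i(h_1)+\ell_i(h_2)$, with signs and conjugations.

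The key point is the surjectivity $\ell_i(W_s)=\G$ from Definition~\ref{def:cs1}. Since $\ell_i|_{W_s}$ is a surjective homomorphism, the pushforward of the uniform measure on $W_s$ is uniform on $\G$. Hence the shifts $\ell_i(h_1)$ and $\ell_i(h_2)$ are independent and uniform on $\G$, and the box average is exactly
\[
\|g_i\|_{U^2}^4=\sum_\chi|\widehat{g_i}(\chi)|^4\le\|\widehat{g_i}\|_\infty^2\|g_i\|_2^2\le\varepsilon^2,
\]
using Parseval for the last inequality. Unwinding the two squarings gives a bound of $\varepsilon^{1/2}$ per term, hence $2^m\varepsilon^{1/2}$ in total.

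The delicate part is organizing the Cauchy--Schwarz steps so that the reparametrizations by $W_1$ and $W_2$ are legitimate. This works because averaging over a subgroup coset is invariant under translation. One must also check that the residual dependence on $\mathbf t$ only enters through $\ell_i(\mathbf t)$, which is uniform because $\ell_i$ is surjective. If the direct route through $\ell_i(\mathbf t)$ is awkward, we would instead first average $\mathbf t$ over cosets of $W_1+W_2$.
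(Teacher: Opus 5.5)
Your proposal is correct and follows essentially the same route as the paper: you parametrize the kernel via the determinantal condition, write $f_j=\mu_j+g_j$ and expand into at most $2^m$ terms, and bound each non-principal term by $\varepsilon^{1/2}$ through two Cauchy--Schwarz steps along the partition $J_1\sqcup J_2$, using $\ell_i(W_{J_s})=\G$ to make $(\ell_i(\mathbf t),\ell_i(\mathbf h),\ell_i(\mathbf h'))$ independent and uniform, so that the box average equals $\sum_\chi|\widehat{g_i}(\chi)|^4\le\varepsilon^2$. As in the paper, you handle the inhomogeneous case by translating each $f_j$ by a particular solution, which changes the Fourier coefficients only by unimodular phases.
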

\begin{proof}
By Definition~\ref{def:det}, the parametrization $\mathbf{x}=(\ell_j(\mathbf{t}))_j$ forms a perfect bijection from $\G^{m-r}$ to the kernel $\{\mathbf{x}:M\mathbf{x}=0\}$. Thus, the configuration density simplifies to:
\[
   \Lambda_M(f_1,\dots,f_m)=\mathbb{E}_{\mathbf{t}\in\G^{m-r}} \prod_{j=1}^{m}f_j(\ell_j(\mathbf{t})).
\]
We decompose each function as $f_j=\mu_j+g_j$, where $\mu_j=\mathbb{E}[f_j]$ is the mean component and $g_j$ is a mean-zero error term bounded by $\lVert g_j\rVert_\infty\le1$. By our regularity assumption, $\max_{\chi}\lvert\widehat{g_j}(\chi)\rvert = \max_{\chi\ne\chi_0}\lvert\widehat{f_j}(\chi)\rvert\le\varepsilon$. Expanding the product generates $2^m$ terms indexed by subsets $S\subseteq[m]$, where the base case $S=\varnothing$ yields the principal term $\prod_j\mu_j$. Since $0\le\mu_j\le1$, it suffices to show that
\[
   \Lambda_S:=\mathbb{E}_{\mathbf{t}\in\G^{m-r}}\prod_{j\in S}g_j(\ell_j(\mathbf{t}))
\]
satisfies $\lvert\Lambda_S\rvert\le\varepsilon^{1/2}$ for every non-empty $S\subseteq[m]$.

Fix a non-empty subset $S$ and pick an index $i\in S$; for notational convenience set $g_j\equiv1$ for $j\notin S$, so that $\Lambda_S=\mathbb{E}_{\mathbf{t}}\prod_{j\ne i}g_j(\ell_j(\mathbf{t}))\,g_i(\ell_i(\mathbf{t}))$. Let $[m]\setminus\{i\}=J_1\sqcup J_2$ be the index partition provided by the complexity-$1$ condition (Definition~\ref{def:cs1}), and for $J\in\{J_1,J_2\}$ define the subgroup
\[
   W_J:=\bigcap_{j\in J}\ker\bigl(\ell_j:\G^{m-r}\to\G\bigr)\;\le\;\G^{m-r}.
\]

By Definition~\ref{def:cs1},
\(
\ell_i(W_{J_1})=\ell_i(W_{J_2})=G.
\)
In particular, the restrictions of $\ell_i$ to $W_{J_1}$ and
$W_{J_2}$, and hence $\ell_i$ itself, are surjective onto $G$.


\textbf{First Cauchy--Schwarz (eliminating $J_1$).} The factor $F(\mathbf{t}):=\prod_{j\in J_1}g_j(\ell_j(\mathbf{t}))$ is invariant under translation by $W_{J_1}$, so writing $H(\mathbf{t}):=\prod_{j\in J_2\cup\{i\}}g_j(\ell_j(\mathbf{t}))$ and using $\lVert F\rVert_\infty\le1$,
\[
   \lvert\Lambda_S\rvert^{2}
   =\Bigl|\mathbb{E}_{\mathbf{t}}\,F(\mathbf{t})\,\mathbb{E}_{\mathbf{w}\in W_{J_1}}H(\mathbf{t}+\mathbf{w})\Bigr|^{2}
   \le\mathbb{E}_{\mathbf{t}}\Bigl|\mathbb{E}_{\mathbf{w}\in W_{J_1}}H(\mathbf{t}+\mathbf{w})\Bigr|^{2}
   =\mathbb{E}_{\mathbf{h}\in W_{J_1}}\mathbb{E}_{\mathbf{t}}\,H(\mathbf{t})\,\overline{H(\mathbf{t}+\mathbf{h})}.
\]

\textbf{Second Cauchy--Schwarz (eliminating $J_2$).} For each fixed $\mathbf{h}\in W_{J_1}$, the factor 
$$
    \prod_{j\in J_2}g_j(\ell_j(\mathbf{t}))\,\overline{g_j(\ell_j(\mathbf{t})+\ell_j(\mathbf{h}))}
$$ 
is invariant under translation of $\mathbf{t}$ by $W_{J_2}$; applying the Cauchy--Schwarz inequality in $\mathbf{t}$ exactly as above (and averaging over $\mathbf{h}$) leaves only the $j=i$ factor:
\[
   \lvert\Lambda_S\rvert^{4}
   \le\mathbb{E}_{\mathbf{h}\in W_{J_1},\,\mathbf{h}'\in W_{J_2}}\;\mathbb{E}_{\mathbf{t}}\;
   g_i(y)\,\overline{g_i(y+a)}\,\overline{g_i(y+b)}\,g_i(y+a+b),
\]
where $y=\ell_i(\mathbf{t})$, $a=\ell_i(\mathbf{h})$, $b=\ell_i(\mathbf{h}')$. 

Since the restrictions
\(
\ell_i|_{W_{J_1}}:W_{J_1}\to G
\text{ and, }
\ell_i|_{W_{J_2}}:W_{J_2}\to G
\)
are surjective homomorphisms, $a=\ell_i(h)$ and
$b=\ell_i(h')$ are uniform on $G$ when $h$ and $h'$ are
uniform on $W_{J_1}$ and $W_{J_2}$, respectively.
Likewise, $\ell_i$ is surjective, so $y=\ell_i(t)$ is uniform on $G$.
Since $t,h,h'$ are independent, so are $y,a,b$.
By the surjectivity noted above, $a$ is uniform on $\G$ as $\mathbf{h}$ ranges over $W_{J_1}$, likewise $b$ over $W_{J_2}$, and $y$ is uniform on $\G$; moreover $(y,a,b)$ are independent. Hence
\[
   \lvert\Lambda_S\rvert^{4}
   \le\mathbb{E}_{y,a,b\in\G}\;g_i(y)\,\overline{g_i(y+a)}\,\overline{g_i(y+b)}\,g_i(y+a+b)
   =\sum_{\chi\in\widehat{\G}}\lvert\widehat{g_i}(\chi)\rvert^{4},
\]
where the identity follows by expanding each factor into its Fourier series: the averages over $a$, $b$, and $y$ annihilate, via the orthogonality of characters, every term in which the four characters do not all coincide, leaving exactly the diagonal contribution $\sum_{\chi}\lvert\widehat{g_i}(\chi)\rvert^{4}$.

Finally, since $g_i$ is mean-zero, $\widehat{g_i}(\chi_0)=0$, and by Parseval's identity,(Theorem~\ref{theorem_Parseval_groups}),
\[
   \sum_{\chi\in\widehat{\G}}\lvert\widehat{g_i}(\chi)\rvert^{4}
   =\sum_{\chi\ne\chi_0}\lvert\widehat{g_i}(\chi)\rvert^{4}
   \le\Bigl(\max_{\chi\ne\chi_0}\lvert\widehat{f_i}(\chi)\rvert\Bigr)^{2}\,\lVert g_i\rVert_2^{2}
   \le\varepsilon^{2},
\]
so that $\lvert\Lambda_S\rvert\le\varepsilon^{1/2}$. Summing over all $2^m$ terms yields our uniform error bound. Because these steps rely purely on the bijective variable substitution, the orthogonality of characters, and Parseval's identity, the inequality holds over any finite Abelian group. The inhomogeneous case follows identically: shifting each variable by a particular solution multiplies each Fourier coefficient by a unimodular phase, leaving every absolute-value estimate unchanged.
\end{proof}

\begin{algorithm}[H]
\caption{\textsc{ConfigRemoval}$(A,M,\theta)$ \quad(complexity-$1$ system)}
\label{alg:remove-system}
\begin{algorithmic}[1]
\Require query access to $A\subseteq\G$; matrix $M\in\mathbb{Z}^{r\times m}$, whose number of pseudo--independent (Definition~\ref{defn_independence}) rows is at most $r$, with $m-r\ge1$, complexity-$1$, and $\gcd(d_r(M),\lcmG)=1$; global error parameter $\theta\in(0,1)$.
\Ensure an exact oracle $\mathcal{O}_{A'}$ and character decision tree $T$ for a set $A'\subseteq A$.
\State Set the density threshold $\zeta\gets\theta/2$.
\State Set the target local regularity parameter $\varepsilon\gets \zeta^{\,2m}/(4\cdot 4\cdot 4^{m}) = \zeta^{2m}/4^{m+2}$.
\State $(\mathcal{O}_{A'},T)\gets\textsc{ConstDT}(A,\varepsilon,\zeta)$.
\State \Return $(\mathcal{O}_{A'},T)$.
\end{algorithmic}
\end{algorithm}

\begin{theorem}[Algorithmic removal for complexity-$1$ systems]
\label{thm:remove-system}
Fix $m>r\ge1$. Let $M\in\mathbb{Z}^{r\times m}$ be a matrix, whose number of pseudo--independent (Definition~\ref{defn_independence}) rows is at most $r$, is of complexity-$1$, and $\gcd(d_r(M),\lcmG)=1$. For every target tolerance $\theta>0$, there exists a threshold $\delta=\delta(\theta,m,r,\lcmG)>0$, independent of $|\G|$, such that for every finite Abelian group $\G$ with constant torsion $\lcmG$, Algorithm~\ref{alg:remove-system} makes at most
\[
   \poly\Bigl(\Tower\bigl(\lcmG,\;O_m(\theta^{-6m})\bigr)\Bigr)
\]
queries and time, both independent of $n$ and $|\G|$. With probability at least $9/10$, it outputs an exact oracle for a subset $A'\subseteq A$ satisfying $|A\setminus A'|\le\theta\,|\G|$ such that, whenever $\Lambda_{M}(A)\le\delta$, the set $A'$ contains absolutely no solutions to $M\mathbf{x}=0$. One may explicitly take \(
\delta=
\operatorname{Tower}\!\left(\tau,O_m(\theta^{-6m})\right)^{-1}.
\)

\end{theorem}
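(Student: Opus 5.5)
The proof will mirror that of Theorem~\ref{thm:remove-single}, with Lemma~\ref{lem:count-system} replacing Lemma~\ref{lem:count-single}. The query and time bounds and the bound on $|A\setminus A'|$ come straight from Theorem~\ref{th: Regularity for Abelian} with the parameters of Algorithm~\ref{alg:remove-system}: $\zeta=\theta/2$ and $\varepsilon=\zeta^{2m}/4^{m+2}$. Then $1/\varepsilon^3=O_m(\theta^{-6m})$, which gives the $\Tower(\lcmG,O_m(\theta^{-6m}))$ complexity. We also get $\Lambda(A\setminus A')\le \varepsilon+\zeta\le\theta$. The oracle produced is the one from ConstDT restricted to the dense regular leaves.

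For solution-freeness, argue by contradiction. Suppose $\mathbf{x}\in(A')^m$ satisfies $M\mathbf{x}=0$, with $x_j\in c_j+H$. By Theorem~\ref{th: Regularity for Abelian}, each coset $c_j+H$ has $\Lambda_{c_j+H}(A')\ge\zeta/2$, and $\mathbbm{1}_{A'}|_{c_j+H}$ is $\varepsilon$-regular. Write $y_j=x_j-c_j\in H$, and let $g_j:H\to[0,1]$ be the pulled-back indicators. Then $\mathbf{y}$ ranges over solutions of the inhomogeneous system $M\mathbf{y}=\mathbf{b}$ with $\mathbf{b}=-M\mathbf{c}$. The given solution shows $\mathbf{b}\in\operatorname{im}(M|_{H^m})$, since $M$ maps $H^m$ into $H^r$.

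Next, verify that the hypotheses of Lemma~\ref{lem:count-system} transfer to $H$. Determinantal non-degeneracy transfers because $\exp(H)\mid\lcmG$, so $\gcd(d_r(M),\exp(H))=1$. The Smith-normal-form reduction therefore still parametrizes the solutions in $H^m$ bijectively by $H^{m-r}$, via the same forms $\ell_j$. Torsion-compatible complexity-$1$ transfers to $H$ by Remark~\ref{rem:cs_complexity_surjectivity}. The lemma then gives
\[
\mathbb{E}_{M\mathbf{y}=\mathbf{b}}\prod_j g_j(y_j)\ \ge\ (\zeta/2)^m-2^m\varepsilon^{1/2}.
\]
With the chosen $\varepsilon$ we have $2^m\varepsilon^{1/2}=2^m\zeta^m/2^{m+2}=\zeta^m/4$. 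This beats $(\zeta/2)^m$ only when $m$ is small, so here the parameter choice looks wrong. I would adjust to $\varepsilon=\zeta^{2m}/4^{2m+2}$, giving $2^m\varepsilon^{1/2}\le\tfrac12(\zeta/2)^m$; this is still $1/\varepsilon^3=O_m(\theta^{-6m})$. The result is at least $\tfrac12(\zeta/2)^m|H|^{m-r}$ solutions inside $(A')^m\subseteq A^m$. Dividing by the total count $|\G|^{m-r}$ yields
\[
\Lambda_M(A)\ge \tfrac12(\zeta/2)^m M_H^{-(m-r)},
\]
where $M_H=|\G|/|H|\le\Tower(\lcmG,O_m(\theta^{-6m}))$. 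Taking $\delta$ smaller than this quantity, of the form $\Tower(\tau,O_m(\theta^{-6m}))^{-1}$, contradicts $\Lambda_M(A)\le\delta$.

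The main obstacle is the transfer to the coset/subgroup setting. Lemma~\ref{lem:count-system} is stated for the same function in each slot being regular; here the $g_j$ differ but are each regular, which the lemma allows. The delicate points are the inhomogeneous shift and the parametrization over $H$. I would handle the shift by fixing a particular solution $\mathbf{y}^0\in H^m$ and substituting $\mathbf{y}=\mathbf{y}^0+(\ell_j(\mathbf{t}))_j$. Translating each $g_j$ preserves its mean and the magnitudes of its Fourier coefficients, so this reduces directly to the homogeneous case on $H$.
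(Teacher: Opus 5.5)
Your proposal follows the paper's proof step for step. You run \textsc{ConstDT} with $\zeta=\theta/2$ and assume a solution in $(A')^m$. You pull it back to $H$ as an inhomogeneous system, using $\exp(H)\mid\lcmG$ for the determinantal condition and Remark~\ref{rem:cs_complexity_surjectivity} for complexity-$1$, and apply Lemma~\ref{lem:count-system} over $H$. Dividing the resulting count, a constant multiple of $(\zeta/2)^m|H|^{m-r}$, by $|\G|^{m-r}$ gives the contradiction with $\Lambda_M(A)\le\delta$.

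The one divergence is the regularity parameter, and there you are right and the paper is wrong. With the paper's $\varepsilon=\zeta^{2m}/4^{m+2}$ one gets $2^m\varepsilon^{1/2}=\zeta^m/4$, not $\tfrac12(\zeta/2)^m$ as the paper asserts. Your remark that it works ``only when $m$ is small'' understates the problem. The hypothesis $m>r\ge1$ forces $m\ge2$, and then the lower bound $(\zeta/2)^m-\zeta^m/4$ is exactly $0$ at $m=2$ and negative for $m\ge3$. So the paper's choice fails for every admissible $m$.

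Your replacement $\varepsilon=\zeta^{2m}/4^{2m+2}$ gives $2^m\varepsilon^{1/2}=\tfrac14(\zeta/2)^m$. It still satisfies $\varepsilon+\zeta\le\theta$ and $1/\varepsilon^3=O_m(\theta^{-6m})$. Hence the rest of the argument, the stated complexity, and the form $\delta=\Tower(\tau,O_m(\theta^{-6m}))^{-1}$ all go through unchanged.
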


\begin{proof}
The structural framework matches the proof of Theorem~\ref{thm:remove-single}, substituting Lemma~\ref{lem:count-system} in place of Lemma~\ref{lem:count-single}. Our choice of regularity parameter $\varepsilon=\zeta^{2m}/(4^{m+2})$ ensures that the spectral error component evaluates to exactly half the main density term ($2^{m}\varepsilon^{1/2} = \tfrac12 (\frac{\zeta}{2})^{m}$), satisfying the strict positivity lower bound required by our counting lemma. Invoking \textsc{ConstDT} at parameters $(\varepsilon,\zeta)$ limits our discarded set size to $|A\setminus A'|\le(\varepsilon+\zeta)|\G|\le\theta|\G|$ with the required tower complexity torsion scaling as $1/\varepsilon^3=O_m(\theta^{-6m})$.

Suppose for contradiction that a solution $\mathbf{x}\in(A')^m$ to $M\mathbf{x}=0$ exists with components falling into cosets $x_j\in c_j+H$. Restricting the system to these cosets isolates an inhomogeneous system $M\mathbf{y}=\mathbf{b}$ within the subgroup $H$, where the shift vector is $\mathbf{b}=-M\mathbf{c}\in H^{r}$. This restricted system inherits our complexity-1 structure and satisfies $\gcd(d_r(M),\exp H)=1$ since $\exp(H)\mid\lcmG$. Because our pulled-back indicator functions $g_j$ are $\varepsilon$-regular with $\mathbb{E}_H[g_j]\ge\frac{\zeta}{2}$, Lemma~\ref{lem:count-system} guarantees that the system contains at least
\[
   ((\frac{\zeta}{2})^{m}-2^{m}\varepsilon^{1/2})\,|H|^{m-r} = ((\frac{\zeta}{2})^m - \tfrac12 (\frac{\zeta}{2})^m)\,|H|^{m-r} = \tfrac12 (\frac{\zeta}{2})^m|H|^{m-r}
\]
solutions in $(A')^m\subseteq A^m$. Embedding this count back into the global group gives the density lower bound:

\[
\Lambda_M(A)
\ge
\frac{(\zeta/2)^m-2^m\varepsilon^{1/2}}
{M^{m-r}}
\ge
\frac{(\zeta/2)^m-2^m\varepsilon^{1/2}}
{\operatorname{Tower}\!\left(\tau,O(1/\varepsilon^3)\right)^{m-r}}
=: \delta_0.
\]
Since \(1/\varepsilon^3=O_m(\theta^{-6m})\), we may take
\(
\delta=
\operatorname{Tower}\!\left(\tau,O_m(\theta^{-6m})\right)^{-1}.
\) Choosing our threshold $\delta<\delta_0$ directly contradicts the structural hypothesis $\Lambda_M(A)\le\delta$.
\end{proof}

Theorem~\ref{thm:remove-system} covers genuine multi-equation systems ($r\ge2$), and both non-degeneracy conditions are essential. The co-primality condition $\gcd(d_r(M),\lcmG)=1$ ensures that the solution set is well-behaved: without it, the count of solutions is distorted by the interaction between the matrix and the group structure, and the main density term in the counting lemma is no longer correct. The same condition appears in the work of Kr\'al'--Serra--Vena~\cite{KSV2012}. The complexity-1 condition marks the natural limit of our method. Our arguments rely only on Fourier coefficients, and complexity-1 systems are precisely those whose solution counts are controlled by them. For systems of higher complexity, such as 4-term arithmetic progressions, small Fourier coefficients no longer guaranty the expected number of solutions, a fact already observed by Green~\cite{green2005szemeredi}. Removal lemmas for such systems are known via hypergraph methods~\cite{Shapira2009, KSV2012}, but these proofs are purely existential and give no query-efficient or constructive guarantees. In this sense, our result covers the full range of systems that can be handled constructively by basic Fourier analysis.

%
%

\section{Approximate image counting for a general linear form}
\label{sec:lin-image}

In this section, we ask how well one can estimate the size of the image of a set under a linear map. Given query access to a subset $A \subseteq \G$, we want to estimate the density of its image under a single linear form, $L(A^m) = a_1 A + \dots + a_m A$, and more generally under a matrix system, $M(A^m) \subseteq \G^r$, where $M$ has at most $r$ pseudo--independent (Definition~\ref{defn_independence}) rows. As before, we want the number of queries to be independent of the group's size $|\G|$. As before, we want the number of queries to be independent of the group's
size $|G|$. A known lower bound rules out estimating the size of the genuine
image with a bounded number of queries, so instead we slightly perturb $A$
into a dense, structured subset $A'$. The regularity partition produced by
Algorithm~\ref{algo: efficient regularity lemma} then allows us to estimate the image density
inside the small quotient $G/H$. It also yields an approximate membership
oracle in the single-form case and an exact membership oracle for the
matrix-system case.

\color{black}
Along the way, a natural contrast emerges with the removal results of Section~\ref{sec:linear-removal}: single forms are handled by a second-moment (variance) argument, while matrix systems require a first-moment argument and hence the complexity-1 hypothesis.

\subsection{A single general linear form}\label{ssec:lin-image}

In this subsection, we study the image sets generated by evaluating a single general linear form across our subset. Let us first define a linear image set.

\begin{definition}[Linear image set]
For integers $a_1,\dots,a_m\in\mathbb Z$ and the form $L(\mathbf x)=\sum_{j=1}^m a_jx_j$,
the \emph{image set} of $A\subseteq \G$ under $L$ is
\[
  L(A^m)\;=\;a_1A+\cdots+a_mA\;=\;\Bigl\{\,\textstyle\sum_{j=1}^m a_jx_j \;:\; x_j\in A\,\Bigr\}\subseteq \G.
\]
For $m=2$, $a=(1,1)$ this is the sumset $A+A$.
\end{definition}

The genuine image $L(A^m)$ cannot be estimated to any
nontrivial accuracy with $|\G|$-independent query complexity , so we estimate the density of the
image of a \emph{density-capturing} perturbation $A'\subseteq A$, exactly as in
Lemma~\ref{Prop: prop of Green's regularity}.


\begin{remark}[Running hypothesis for this subsection]
We assume every coefficient is a unit modulo the torsion,
$\gcd(a_j,\lcmG)=1$ for all $j$. Hence each map
$\chi\mapsto\chi^{a_j}$ an automorphism of $\widehat\G$, which is what the
Parseval steps below require in the presence of zero divisors. No
complexity-$1$ assumption is needed: at $r=1$ the second moment closes for all
$m\ge2$.
\end{remark}

\subsubsection*{The covering lemma}

We now show that if the set has sufficiently large density on each relevant
regular coset, then the image of the linear form covers almost all of the
corresponding output coset. Its proof is
the second-moment computation: the mean of the representation count over an
output coset is $\prod_j\mu_j$, the variance is at most $\varepsilon^2$, and
Chebyshev finishes.

\begin{lemma}[Image covering for a single form]
\label{lem:image-cover}
Let $m\ge2$ and $\gcd(a_j,\lcmG)=1$ for all $j$. Let $c_1,\dots,c_m\in \G$ be such
that, for every $j$, the restriction $\mathbbm{1}_A|_{c_j+H}$ is $\varepsilon$-regular
and $\Density_{c_j+H}(A)\ge\zeta$. Then, writing $w_0:=\sum_{j}a_jc_j$,
\[
  \Density_{\,w_0+H}\bigl(a_1A+\cdots+a_mA\bigr)\;\ge\;1-\frac{\varepsilon^{2}}{\zeta^{2m}} .
\]
\end{lemma}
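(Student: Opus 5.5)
The plan is a second-moment (variance) argument on the subgroup $H$, together with Chebyshev's inequality. First I pull everything back to $H$. For each $j$, let $g_j:H\to\{0,1\}$ be given by $g_j(y)=\mathbbm{1}_A(c_j+y)$. By hypothesis each $g_j$ is $\varepsilon$-regular on $H$, and $\mu_j:=\mathbb{E}_H[g_j]\ge\zeta$. For $z\in H$ define the normalized representation function
\[
  r(z)\;:=\;\mathbb{E}_{\,y\in H^m:\ \sum_j a_jy_j=z}\;\prod_{j=1}^m g_j(y_j).
\]
If $r(z)>0$, then $w_0+z=\sum_j a_j(c_j+y_j)$ for some $c_j+y_j\in A$, so $w_0+z\in a_1A+\cdots+a_mA$. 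It therefore suffices to show that $\Pr_{z\in H}[r(z)=0]\le\varepsilon^2/\zeta^{2m}$.

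\textbf{Step 1: normalization and mean.} The exponent satisfies $\exp(H)\mid\lcmG$, so each $a_j$ is a unit on $H$. Hence every fibre $\{y:\sum_j a_jy_j=z\}$ has exactly $|H|^{m-1}$ points, as in the proof of Lemma~\ref{lem:count-single}. Averaging over $z$ gives $\mathbb{E}_z\, r(z)=\prod_j\mu_j\ge\zeta^m$.

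\textbf{Step 2: Fourier expansion of $r$.} I expand the constraint indicator with characters of $H$, exactly as in the proof of Lemma~\ref{lem:count-single}. This gives
\[
  r(z)=\sum_{\psi\in\widehat H}\overline{\psi(z)}\prod_{j=1}^m\widehat{g_j}\bigl(\overline{\psi^{a_j}}\bigr).
\]
The $\psi=\psi_0$ term is the mean $\prod_j\mu_j$. By Parseval on $H$,
\[
  \mathrm{Var}_z(r)=\sum_{\psi\neq\psi_0}\prod_{j}\bigl|\widehat{g_j}(\overline{\psi^{a_j}})\bigr|^2 .
\]

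\textbf{Step 3: variance bound.} This is the step that uses the unit hypothesis. Because $a_j$ is a unit modulo $\exp(H)$, the map $\psi\mapsto\overline{\psi^{a_j}}$ is an automorphism of $\widehat H$, so it sends nontrivial characters to nontrivial ones. For $j\ge3$ I bound each factor by $|\widehat{g_j}|\le 1$. For $j=1$ I use $\varepsilon$-regularity, $|\widehat{g_1}(\overline{\psi^{a_1}})|^2\le\varepsilon^2$. For $j=2$, which exists since $m\ge2$, I use Parseval after reindexing by the automorphism: $\sum_\psi|\widehat{g_2}(\overline{\psi^{a_2}})|^2=\|g_2\|_2^2\le1$. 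Together these give $\mathrm{Var}(r)\le\varepsilon^2$.

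\textbf{Step 4: Chebyshev.} If $r(z)=0$, then $|r(z)-\mathbb{E}r|\ge\prod_j\mu_j$. Chebyshev's inequality then gives
\[
  \Pr_z[r(z)=0]\le\frac{\varepsilon^2}{(\prod_j\mu_j)^2}\le\frac{\varepsilon^2}{\zeta^{2m}},
\]
which is the claimed density bound on $w_0+H$.

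I expect the main care to lie in Steps 1 to 3: getting the fibre normalization right, and checking that nontriviality and regularity transfer correctly under $\psi\mapsto\psi^{a_j}$ on $\widehat H$ rather than on $\widehat{\G}$. Both points rest on the unit hypothesis holding modulo $\exp(H)$.
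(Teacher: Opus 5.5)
Your proposal is correct and follows the paper's proof essentially step for step. Like the paper, you pull back to $g_j$ on $H$ and expand the normalized representation count (the paper's $N(b)/|H|^{m-1}$) in characters of $\widehat H$. You then bound the variance by $\varepsilon^2$, using regularity of $g_1$ through the automorphism $\psi\mapsto\overline{\psi^{a_1}}$ and Parseval for $g_2$, and finish with Chebyshev. Your explicit check that the unit hypothesis holds modulo $\exp(H)\mid\tau$ is a point the paper's proof of this lemma uses only implicitly.
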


\begin{proof}
Let $g_j:H\to\{0,1\}$ be the restriction translated to $H$, $g_j(h)=\mathbbm{1}_A(c_j+h)$,
so $\mathbb E_H[g_j]=\mu_j\ge\zeta$ and $g_j$ is $\varepsilon$-regular on $H$. For
$b\in H$ let
\[
  N(b)=\#\Bigl\{\mathbf h\in H^m : \textstyle\sum_j a_jh_j=b,\ g_j(h_j)=1\ \forall j\Bigr\}
\]
count the representations landing in the output coset $w_0+H$ at offset $b$, and
set $\widetilde N(b)=N(b)/|H|^{m-1}$. Using
$\mathbbm{1}_{[g=0]}=\tfrac1{|H|}\sum_{\chi}\chi(g)$ and
$\sum_{h}g_j(h)\chi^{a_j}(h)=|H|\,c_j(\chi)$ with
$c_j(\chi):=\widehat{g_j}(\overline{\chi^{a_j}})$,
\[
  \widetilde N(b)=\sum_{\chi\in\widehat H}\overline{\chi(b)}\prod_{j=1}^m c_j(\chi).
\]
Thus $\widetilde N$, as a function of $b$, has Fourier coefficients
$\prod_j c_j(\chi)$. Its mean is the $\chi_0$ term,
$\mathbb E_b[\widetilde N]=\prod_j c_j(\chi_0)=\prod_j\mu_j\ge\zeta^{m}$, and by
Parseval (Theorem~\ref{theorem_Parseval_groups}),
\[
  \operatorname{Var}_b[\widetilde N]
  =\sum_{\chi\ne\chi_0}\Bigl|\prod_{j}c_j(\chi)\Bigr|^{2}
  =\sum_{\chi\ne\chi_0}\prod_{j}|c_j(\chi)|^{2}.
\]
Fix $\chi\ne\chi_0$. Since $a_1$ is a unit, $\overline{\chi^{a_1}}\ne\chi_0$, so
$|c_1(\chi)|=|\widehat{g_1}(\overline{\chi^{a_1}})|\le\varepsilon$ by regularity;
and $|c_j(\chi)|\le\mathbb E|g_j|=\mu_j\le1$ for every $j$. Hence, bounding all
factors past the second by $1$,
\[
  \prod_{j}|c_j(\chi)|^{2}\le\varepsilon^{2}\,|c_2(\chi)|^{2},
  \qquad
  \operatorname{Var}_b[\widetilde N]\le\varepsilon^{2}\sum_{\chi}|c_2(\chi)|^{2}
  =\varepsilon^{2}\|g_2\|_2^{2}\le\varepsilon^{2},
\]
where $\sum_\chi|c_2(\chi)|^2=\|g_2\|_2^2$ because $a_2$ is a unit, so
$\chi\mapsto\overline{\chi^{a_2}}$ is an automorphism of $\widehat H$. By
Chebyshev,
\[
  \Pr_{b}\bigl[N(b)=0\bigr]
  \le\Pr_{b}\Bigl[\,\bigl|\widetilde N(b)-\textstyle\prod_j\mu_j\bigr|\ge\textstyle\prod_j\mu_j\Bigr]
  \le\frac{\operatorname{Var}_b[\widetilde N]}{(\prod_j\mu_j)^{2}}
  \le\frac{\varepsilon^{2}}{\zeta^{2m}} ,
\]
and $\Density_{w_0+H}(L(A^m))=\Pr_b[N(b)\ge1]\ge1-\varepsilon^{2}/\zeta^{2m}$.
\end{proof}

\subsubsection*{The induced form on the quotient}

Let $\bar\G=\G/H$ and let $\bar L:\bar\G^{\,m}\to\bar\G$ be the form induced by
$L$, $\bar L(\bar{\mathbf x})=\sum_j a_j\bar x_j$ (well-defined as $a_jH\subseteq H$).
Because each $a_j$ is a unit, $L(H^m)=H$ and the fibres of $L$ over $w_0+H$ have
size exactly $|H|^{m-1}$, so output cosets are precisely cosets of $H$ indexed by
$\bar\G$. 

Writing $S\subseteq\bar\G$ for the surviving leaves of
\textsc{ConstDT}, each surviving coset is
$\varepsilon$-regular and has density at least $\zeta/2$.
Applying Lemma~\ref{lem:image-cover} with density
parameter $\zeta/2$, we obtain that $L(A'^m)$ covers
all but an $\varepsilon^2/(\zeta/2)^{2m}$ fraction of
every coset in
\[
\bar L(S^m)=a_1S+\cdots+a_mS\subseteq\bar\G,
\]
and lands in no other coset. Hence,
\begin{equation}
\label{eq:vol-identity}
\Density\bigl(L(A'^m)\bigr)
=
\frac{|\bar L(S^m)|}{|\bar\G|}
\;\pm\;
\frac{\varepsilon^2}{(\zeta/2)^{2m}}.
\end{equation}
\color{black}
The right-hand side is a finite computation in the quotient $\bar\G$, whose size $|\bar\G|\le\Tower\bigl(\lcmG,O(1/\eta^3)\bigr)$
\color{black}
is independent of $|\G|$ and of $n$; the iterated
sum $a_1S+\cdots+a_mS$ is formed in $\bar\G$ in $O(m\,|\bar\G|^2)$ time.

\subsubsection*{The algorithm}

\begin{algorithm}[H]
\caption{\textsc{ApproxLinImage}$(A,L,\theta)$ \quad (single linear form)}
\label{alg:lin-image}
\begin{algorithmic}[1]
\Require query access to $A\subseteq\G$; form $L(\mathbf x)=\sum_{j=1}^m a_jx_j$
         with $\gcd(a_j,\lcmG)=1$ for all $j$; accuracy $\theta\in(0,1)$.
\Ensure  a value $v$ and a character-decision-tree oracle $\mathcal O_{L(A'^m)}$.

\State Set $\zeta\gets\theta/2$ and
$\eta\gets\tfrac12\,\theta^{1/2}(\zeta/2)^m$.
\Comment{So $\eta^2/(\zeta/2)^{2m}\le\theta$
and $\eta+\zeta\le\theta$}
\color{black}
\State $(\mathcal O_{A'},T)\gets\textsc{ConstDT}(A,\eta,\zeta)$; let
       $S\subseteq\bar\G=\G/H$ be the surviving (regular, density $\ge\frac{\zeta}{2}$)
       leaves of $T$.
\State Compute $\bar L(S^m)=a_1S+\cdots+a_mS\subseteq\bar\G$ by iterated addition
       in the quotient.
\State Build $\mathcal O_{L(A'^m)}$: on input $w\in \G$, use $T$ to read the coset
       $\bar w\in\bar\G$ and output $1$ iff $\bar w\in\bar L(S^m)$.
\State \Return $v\gets |\bar L(S^m)|/|\bar\G|$ and $\mathcal O_{L(A'^m)}$.
\end{algorithmic}
\end{algorithm}

\begin{theorem}[Approximate image counting for a single form]
\label{thm:lin-image}
Fix $\tau \ge2$, an integer $m\geq 2$, and a form $L(\mathbf x)=\sum_{j=1}^m a_jx_j$ with
$\gcd(a_j,\lcmG)=1$ for all $j$. For every $\theta>0$ and every finite
Abelian $\G$ with constant torsion $\lcmG$, given query access to $A\subseteq\G$,
Algorithm~\ref{alg:lin-image} makes
\[
   \poly\Bigl(\Tower\bigl(\lcmG,\,O_m(\theta^{-O(m)})\bigr)\Bigr)
\]
queries and time, both independent of $n$ and $|\G|$, and with probability
at least $9/10$ outputs a value $v$ and an oracle $\mathcal O_{L(A'^m)}$ such that,
for a density-capturing perturbation $A'\subseteq A$ with
$\Density(A\setminus A')\le\theta$,



\[
\left|v-\Lambda\bigl(L(A'^m)\bigr)\right|\leq\theta,
\]
and $O_{L(A'^m)}$ is a $(\theta,0,O(n))$-oracle
for the image set $L(A'^m)=a_1A'+\cdots+a_mA'$.

\color{black}

\end{theorem}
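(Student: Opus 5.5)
The proof combines Theorem~\ref{th: Regularity for Abelian} (run with the parameters of Algorithm~\ref{alg:lin-image}), the covering Lemma~\ref{lem:image-cover}, and a finite computation in the quotient $\bar\G=\G/H$.

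\textbf{Step 1: the perturbation $A'$ and the complexity bound.} Set $\zeta=\theta/2$ and $\eta=\tfrac12\theta^{1/2}(\zeta/2)^m$, and run $\textsc{ConstDT}(A,\eta,\zeta)$. With probability at least $9/10$, Theorem~\ref{th: Regularity for Abelian} gives a subgroup $H$ with $|\bar\G|\le\Tower(\lcmG,O(\eta^{-3}))$ and a set $A'\subseteq A$. This $A'$ is the union of $A\cap K$ over the surviving cosets $K\in S$. Each such coset is $\eta$-regular and has $\Density_K(A')\ge\zeta/2$. Moreover $\Density(A\setminus A')\le\eta+\zeta\le\theta$, since $\eta\le\theta/2$. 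Because $1/\eta^3=O_m(\theta^{-O(m)})$, the query and time bound $\poly(\Tower(\lcmG,O_m(\theta^{-O(m)})))$ follows. The quotient computation of $a_1S+\cdots+a_mS$ adds only $O(m|\bar\G|^2)$ time, which is again tower-bounded and independent of $|\G|$. I will condition on the success event for the rest of the proof.

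\textbf{Step 2: locating $L(A'^m)$ modulo $H$.} Every element of $L(A'^m)$ has the form $\sum_j a_jx_j$ with $x_j\in c_j+H$ and $\bar c_j\in S$. Since $a_jH\subseteq H$, such an element lies in the coset $\bar L(\bar c_1,\dots,\bar c_m)+H$. So $L(A'^m)$ is contained in the union of the cosets indexed by $\bar L(S^m)$.

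Conversely, fix $\bar w\in\bar L(S^m)$ and pick $\bar c\in S^m$ with $\bar L(\bar c)=\bar w$. Apply Lemma~\ref{lem:image-cover} to $A'$ with regularity parameter $\eta$ and density parameter $\zeta/2$. This shows that $L(A'^m)$ covers at least a $1-\eta^2/(\zeta/2)^{2m}$ fraction of $\bar w+H$.

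One point needs care. The lemma is stated for a single representative tuple per output coset, and its conclusion concerns $a_1A+\cdots+a_mA$. Replacing $A$ by $A'$ is harmless, because $A'$ agrees with $A$ on surviving cosets. Any one representative tuple already suffices for the coverage bound. Since all cosets have equal size, summing over the cosets in $\bar L(S^m)$ yields \eqref{eq:vol-identity}:
\[
0\le \frac{|\bar L(S^m)|}{|\bar\G|}-\Density(L(A'^m))\le \frac{\eta^2}{(\zeta/2)^{2m}}=\frac{\theta}{4}\le\theta .
\]
This gives $|v-\Density(L(A'^m))|\le\theta$.

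\textbf{Step 3: the oracle.} $\mathcal O_{L(A'^m)}$ reads the coset of $w$ through the character decision tree $T$, using $O(n)$ time to evaluate characters, and answers whether $\bar w\in\bar L(S^m)$. It makes no queries to $A$. By Step 2 it never errs on points outside these cosets. Inside them, it errs only on points of $\bar L(S^m)+H$ not covered by $L(A'^m)$, which is at most a $\theta$ fraction of $\G$. Hence it is a $(\theta,0,O(n))$-oracle.

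\textbf{Main obstacle.} I expect the delicate point to be the application of Lemma~\ref{lem:image-cover}. Its hypotheses demand density $\ge\zeta$, whereas the surviving cosets only have $\Density\ge\zeta/2$; this is why the lemma is invoked with density parameter $\zeta/2$ and the $(\zeta/2)^{2m}$ denominator appears. The lemma also needs regularity of $\mathbbm{1}_{A'}$ restricted to each coset, and this transfers from $A$ because $A'=A$ on surviving cosets. Finally, the unit hypothesis $\gcd(a_j,\lcmG)=1$ must descend to $H$; it does, since $\exp(H)\mid\lcmG$. This is needed so that the Parseval reindexing inside the lemma is valid on $\widehat H$.
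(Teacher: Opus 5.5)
Your proposal is correct and follows the paper's own proof. In both, one runs \textsc{ConstDT} at $(\eta,\zeta)$, notes that $L(A'^m)$ meets only the cosets indexed by $\bar L(S^m)$, and applies Lemma~\ref{lem:image-cover} with density parameter $\zeta/2$ to each such coset; this gives both the bound on $|v-\Density(L(A'^m))|$ and the oracle error bound. Your explicit value $\eta^2/(\zeta/2)^{2m}=\theta/4$ and the one-sided form of the estimate slightly sharpen the paper's stated $\le\theta$, but the argument is the same.
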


\begin{proof}
Theorem~\ref{th: Regularity for Abelian} (\textsc{ConstDT}) returns, at parameters $(\eta,\zeta)$, the oracle
$\mathcal O_{A'}$, the tree $\mathcal{T}$, and the surviving leaves $S$, within the stated complexity (here $1/\eta^3=O_m(\theta^{-O(m)})$); the
density-capturing set $A'$ satisfies $\Density(A\setminus A')\le\eta+\zeta\le\theta$
by Lemma~\ref{Prop: prop of Green's regularity}. 
By Lemma~\ref{lem:image-cover}, applied with density
parameter $\zeta/2$, every coset of $\bar L(S^m)$
is covered by $L(A'^m)$ up to an
$\eta^2/(\zeta/2)^{2m}\le\theta$ fraction, and
$L(A'^m)$ meets no coset outside $\bar L(S^m)$.
This gives
\(
\left|v-\Density(L(A'^m))\right|\le\theta,
\)
where
\(
v=\frac{|\bar L(S^m)|}{|\bar\G|}.
\)

Moreover, the oracle $\mathcal O_{L(A'^m)}$ outputs
$1$ precisely on the cosets indexed by $\bar L(S^m)$.
Therefore,
\(
\Pr_{w\in\G}
\left[
\mathcal O_{L(A'^m)}(w)
\neq \mathbbm{1}_{L(A'^m)}(w)
\right]
\le\frac{\eta^2}{(\zeta/2)^{2m}}
\le\theta.
\)
Thus, $\mathcal O_{L(A'^m)}$ is an approximate
membership oracle with error at most $\theta$.
\color{black}
\end{proof}

\subsection{Approximate image counting for a linear system}
\label{ssec:matrix-image}

We now generalise Section~\ref{ssec:lin-image} from a single form to a matrix
$M\in\mathbb Z^{r\times m}$, whose number of pseudo--independent (Definition~\ref{defn_independence}) rows is at most $r$, estimating the density of the image
\[
   M(A^m)=\{M\mathbf x:\mathbf x\in A^m\}\subseteq \G^{r}.
\]
Unlike the single-form case, this requires genuine restrictions on $M$, and for
a structural reason: at $r\ge2$ the second-moment argument of
Lemma~\ref{lem:image-cover} is unavailable, because the Fourier variance over
$\G^{r}$ is $\sum_{\boldsymbol\eta\ne\mathbf 0}\prod_j|\widehat g_j((M^{\top}\boldsymbol\eta)_j)|^{2}$
and the fibres of $\boldsymbol\eta\mapsto(M^{\top}\boldsymbol\eta)_j$ have size
$\approx|\G|^{r-1}$, so Chebyshev blows up. We therefore route through the
\emph{first} moment (a generalised Von Neumann inequality), which forces the
Cauchy--Schwarz complexity-$1$ hypothesis but, being an $\ell^{\infty}$ bound
over targets, yields \emph{exact} covering of each reachable output coset.

\textbf{Hypotheses we will use for the rest of the section.}
Let $\G$ be a finite Abelian group with constant torsion $\lcmG$. Also let $m>r$, and $d_r(M)$ (the $r$-th determinantal
divisor, i.e.\ the gcd of all $r\times r$ minors of $M$) be coprime to $\lcmG$.
Write $\Lambda_{M,\mathbf b}(f_1,\dots,f_m)=\mathbb E_{M\mathbf x=\mathbf b}\prod_j f_j(x_j)$
for the counting operator with right-hand side $\mathbf b\in \G^{r}$.



\begin{definition}[Complexity-$1$ over $\mathbb Z_\lcmG$]
\label{def:cs1-matrix}
Let $\ell_1,\ldots,\ell_m:\G^{m-r}\to\G$
be the coordinate forms of Lemma~\ref{lem:variety}.
We say that $M$ has Cauchy--Schwarz complexity-$1$
over $\mathbb Z_\lcmG$ if these forms satisfy the
torsion-compatible complexity-$1$ condition of
Definition~\ref{def:cs1}.
\end{definition}
\color{black}

\noindent

This condition guarantees the surjectivity properties
required by the two Cauchy--Schwarz steps and is
preserved when the forms are restricted to any
subgroup $H\leq\G$, as established in
Remark~\ref{rem:cs_complexity_surjectivity}.
\color{black}

\subsubsection*{The solution space}

\begin{lemma}[Parametrisation]
\label{lem:variety}
There are integer-linear forms $\ell_1,\dots,\ell_m:\G^{m-r}\to \G$ such that
$\Phi(\mathbf t)=(\ell_1(\mathbf t),\dots,\ell_m(\mathbf t))$ is a bijection from
$\G^{m-r}$ onto $\{\mathbf x:M\mathbf x=0\}$. Consequently $M$ is surjective,
$|\{\mathbf x:M\mathbf x=\mathbf b\}|=|\G|^{m-r}$ for every $\mathbf b\in \G^{r}$, and
\[
   \Lambda_{M,\mathbf b}(f_1,\dots,f_m)
   =\mathbb E_{\mathbf t\in \G^{m-r}}\ \prod_{j=1}^{m}f_j\bigl(\ell_j(\mathbf t)+x_{0,j}(\mathbf b)\bigr),
\]
where $\mathbf x_0(\mathbf b)$ is any particular solution, depending $\mathbb Z$-linearly on $\mathbf b$.
\end{lemma}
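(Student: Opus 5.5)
The plan is to reduce everything to Smith normal form over $\mathbb Z_\lcmG$. Since all of $\G$ is killed by $\lcmG$, the action of $M$ on $\G^m$ depends only on $M \bmod \lcmG$. By the Smith normal form over $\mathbb Z$, there are unimodular integer matrices $U\in GL_r(\mathbb Z)$ and $V\in GL_m(\mathbb Z)$ with
\[
UMV=[\,D\mid 0\,], \qquad D=\mathrm{diag}(d_1,\dots,d_r),
\]
where $d_1\cdots d_r=d_r(M)$ up to sign. The hypothesis $\gcd(d_r(M),\lcmG)=1$ forces each $d_i$ to be a unit mod $\lcmG$. Choose integers $e_i$ with $e_id_i\equiv 1 \pmod{\lcmG}$ and set $E=\mathrm{diag}(e_i)$. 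Then $U':=EU$ acts as an automorphism of $\G^r$, because $E$ is invertible mod $\lcmG$ and $U$ is unimodular. It follows that $U'MV$ acts on $\G^m$ exactly as $[\,I_r\mid 0\,]$.

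Next I would write down the parametrisation. Put $\mathbf x=V\mathbf z$ with $\mathbf z=(\mathbf z',\mathbf t)\in\G^r\times\G^{m-r}$. Since $V$ is unimodular, it is a bijection of $\G^m$. Then
\[
M\mathbf x=\mathbf b \iff [\,I_r\mid 0\,]\mathbf z=U'\mathbf b \iff \mathbf z'=U'\mathbf b .
\]
Define $\ell_j(\mathbf t)$ to be the $j$-th coordinate of $V(0,\mathbf t)$, an integer-linear form in $\mathbf t$. Define $\mathbf x_0(\mathbf b):=V(U'\mathbf b,0)$, which is $\mathbb Z$-linear in $\mathbf b$. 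The solution set of $M\mathbf x=\mathbf b$ is then exactly $\{\Phi(\mathbf t)+\mathbf x_0(\mathbf b):\mathbf t\in\G^{m-r}\}$, and the map $\mathbf t\mapsto\Phi(\mathbf t)+\mathbf x_0(\mathbf b)$ is injective because $V$ is.

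The stated consequences then follow directly:
\begin{itemize}
\item Taking $\mathbf b=0$ shows that $\Phi$ is a bijection from $\G^{m-r}$ onto $\ker M$.
\item Every $\mathbf b\in\G^r$ has the preimage $\mathbf x_0(\mathbf b)$, so $M$ is surjective.
\item Each fibre is a coset of $\ker M$, of size $|\G|^{m-r}$.
\item The formula for $\Lambda_{M,\mathbf b}$ is the change of variables along this bijection, since the uniform measure on the fibre pushes forward from the uniform measure on $\G^{m-r}$.
\item Any other particular solution differs from $\mathbf x_0(\mathbf b)$ by an element of $\ker M=\Phi(\G^{m-r})$. Translating $\mathbf t$ absorbs this difference, so the average does not depend on which particular solution is used.
\end{itemize}

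The only delicate step is the claim that $U'MV$ acts as $[\,I_r\mid 0\,]$ on $\G^m$. This needs $E D\equiv I_r \pmod{\lcmG}$ entrywise, together with the fact that $\lcmG\G=0$. Both the unit property of each $d_i$ and the annihilation of $\G$ by $\lcmG$ are used there, and this is where zero divisors in $\mathbb Z_\lcmG$ would otherwise break the argument. The pseudo-independence hypothesis on the rows is not needed for this lemma.
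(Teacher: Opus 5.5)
Your proposal is correct and is essentially the paper's proof. Both use the Smith normal form, multiply by the inverses of the unit invariant factors mod $\tau$ to act as $[I_r\mid 0]$ on $\G^m$, and read off the free coordinates; your $U,V$ are the paper's $V,U$. Your explicit particular solution $\mathbf x_0(\mathbf b)=V(U'\mathbf b,0)$ and your check that the average does not depend on which particular solution is chosen just spell out the step the paper dismisses as ``the coset version.''
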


\begin{proof}
By the Smith normal form there are $U\in GL_m(\mathbb Z)$, $V\in GL_r(\mathbb Z)$
with $VMU=[\,\mathrm{diag}(d_1,\dots,d_r)\mid 0\,]$, and $d_1\cdots d_r=d_r(M)$ is
coprime to $\lcmG$, so each $d_i$ is a unit in $\mathbb Z_\lcmG$. Replacing $V$ by
$\mathrm{diag}(d_i)^{-1}V\in GL_r(\mathbb Z_\lcmG)$ gives $VMU\equiv[I_r\mid 0]\pmod\lcmG$.
As integer matrices invertible mod $\lcmG$, $U$ and $V$ act as automorphisms of
$\G^{m}$ and $\G^{r}$. Writing $\mathbf x=U\mathbf y$, $M\mathbf x=0$ becomes
$[I_r\mid0]\mathbf y=0$, i.e.\ $y_1=\dots=y_r=0$; the free variables
$\mathbf t=(y_{r+1},\dots,y_m)\in \G^{m-r}$ then give $\mathbf x=\Phi(\mathbf t)$ with
$\ell_j(\mathbf t)=\sum_{k}U_{j,\,r+k}\,t_k$. Surjectivity of $M$ and the fibre
count follow from $VMU\equiv[I_r\mid0]$, and the inhomogeneous formula is the
coset version.
\end{proof}


\subsubsection*{The generalised covering}

For a function $g:K\to\mathbb C$ on a finite Abelian group $K$, we will use the fourth moment of its Fourier coefficients, $\sum_{\chi\in\widehat K}|\widehat g(\chi)|^{4}$, which admits the physical-side expression
\[
   \sum_{\chi\in\widehat K}|\widehat g(\chi)|^{4}
   =\mathbb E_{y,a,b\in K}\ g(y)\,\overline{g(y+a)}\,\overline{g(y+b)}\,g(y+a+b),
\]
as one checks by expanding each factor into characters and using orthogonality. In particular, for mean-zero $g$ with $\|g\|_\infty\le1$,
\(
   \sum_{\chi\in\widehat K}|\widehat g(\chi)|^{4}
   \le\bigl(\max_{\chi\ne\chi_0}|\widehat g(\chi)|\bigr)^{2}\|g\|_2^{2}
   \le\bigl(\max_{\chi\ne\chi_0}|\widehat g(\chi)|\bigr)^{2}.
\)

\begin{lemma}
\label{lem:gvn}
Suppose $M$ has complexity $1$ over $\mathbb Z_\lcmG$. Then for all
$g_1,\dots,g_m:\G\to\mathbb C$ with $\|g_j\|_\infty\le1$ and every
$\mathbf b\in \G^{r}$,
\[
   \bigl|\Lambda_{M,\mathbf b}(g_1,\dots,g_m)\bigr|^{4}\;\le\;\min_{i\in[m]}\ \sum_{\chi\in\widehat\G}|\widehat{g_i}(\chi)|^{4} .
\]
\end{lemma}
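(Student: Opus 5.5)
We follow the two-step Cauchy--Schwarz argument already used in the proof of Lemma~\ref{lem:count-system}, now run for general bounded $g_j$ (not assumed mean-zero) and an arbitrary right-hand side $\mathbf b$. By Lemma~\ref{lem:variety},
\[
   \Lambda_{M,\mathbf b}(g_1,\dots,g_m)=\mathbb E_{\mathbf t\in\G^{m-r}}\prod_{j=1}^m \tilde g_j(\ell_j(\mathbf t)),
\]
where $\tilde g_j(x):=g_j(x+x_{0,j}(\mathbf b))$ is a translate of $g_j$. Translation multiplies each Fourier coefficient by a unimodular phase, so $\sum_\chi|\widehat{\tilde g_j}(\chi)|^4=\sum_\chi|\widehat{g_j}(\chi)|^4$. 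It therefore suffices to treat $\mathbf b=0$ and prove the bound for each fixed $i\in[m]$; the minimum over $i$ then follows at once.

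Fix $i$ and take the partition $[m]\setminus\{i\}=J_1\sqcup J_2$ from Definition~\ref{def:cs1}. Let $W_s=\bigcap_{j\in J_s}\ker\ell_j\le\G^{m-r}$. Set $F_1(\mathbf t)=\prod_{j\in J_1}g_j(\ell_j(\mathbf t))$. This function is $W_1$-invariant and satisfies $\|F_1\|_\infty\le1$.

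\textbf{First Cauchy--Schwarz.} Average $\mathbf t$ over cosets of $W_1$ and apply Cauchy--Schwarz to remove $F_1$. This gives
\[
|\Lambda|^2\le\mathbb E_{\mathbf h\in W_1}\mathbb E_{\mathbf t}\,G(\mathbf t)\overline{G(\mathbf t+\mathbf h)},\qquad G=\prod_{j\in J_2\cup\{i\}}g_j\circ\ell_j .
\]

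\textbf{Second Cauchy--Schwarz.} For each fixed $\mathbf h$, the $J_2$-part $\prod_{j\in J_2}g_j(\ell_j(\mathbf t))\overline{g_j(\ell_j(\mathbf t+\mathbf h))}$ is $W_2$-invariant in $\mathbf t$, since $\ell_j(W_2)=0$ for $j\in J_2$. Apply Cauchy--Schwarz jointly over $(\mathbf h,\mathbf t)$, which is legitimate because the outer average over $\mathbf h$ is a probability average. This yields
\[
|\Lambda|^4\le\mathbb E_{\mathbf h\in W_1,\mathbf h'\in W_2,\mathbf t}\,g_i(y)\overline{g_i(y+a)}\,\overline{g_i(y+b)}g_i(y+a+b),
\]
with $y=\ell_i(\mathbf t)$, $a=\ell_i(\mathbf h)$ and $b=\ell_i(\mathbf h')$.

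\textbf{Distribution of $(y,a,b)$.} By the surjectivity clause of Definition~\ref{def:cs1}, $\ell_i|_{W_1}$ and $\ell_i|_{W_2}$ are surjective homomorphisms. Their pushforwards of the uniform measures are therefore uniform on $\G$. Moreover $\ell_i$ is itself surjective, so $y$ is uniform. Since $\mathbf t,\mathbf h,\mathbf h'$ are independent, so are $y,a,b$. The right-hand side is therefore $\mathbb E_{y,a,b\in\G}\,g_i(y)\overline{g_i(y+a)}\,\overline{g_i(y+b)}g_i(y+a+b)$. By the physical-side identity stated just before the lemma, this equals $\sum_\chi|\widehat{g_i}(\chi)|^4$.

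The main obstacle is the second Cauchy--Schwarz step. One must check carefully that after the first step the $J_2$-factors are genuinely $W_2$-invariant for every fixed $\mathbf h\in W_1$. One must also handle the complex conjugates so that the final four-fold product has exactly the sign pattern of the fourth-moment identity. We would write this step out explicitly by decomposing $\mathbf t=\mathbf u+\mathbf w$ with $\mathbf w\in W_2$. The remaining work is bookkeeping of translates and uses only Definition~\ref{def:cs1} and Lemma~\ref{lem:variety}.
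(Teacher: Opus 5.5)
Your proposal is correct and follows the paper's proof essentially step for step: you translate to $\mathbf b=\mathbf 0$, fix $i$ with its partition $J_1\sqcup J_2$, apply two Cauchy--Schwarz steps to eliminate the $W_{J_1}$- and $W_{J_2}$-invariant factors, use the uniformity and independence of $(y,a,b)$, and finish with the fourth-moment identity. The only cosmetic difference is how $\ell_i(W_{J_s})=\G$ is justified: the paper uses an auxiliary vector $v$ with $a_j\cdot v\equiv 0$ for $j\in J_s$ and $a_i\cdot v\equiv 1 \pmod{\tau}$, whereas you read it directly off the surjectivity clause of Definition~\ref{def:cs1}, which is the more faithful use of the definition as stated.
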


\begin{proof}
Shifting each $g_j$ by $x_{0,j}(\mathbf b)$ reduces to $\mathbf b=\mathbf 0$ (each shift multiplies the Fourier coefficients of $g_j$ by unimodular phases, so the quantities below are unchanged), so we bound
$\Lambda:=\mathbb E_{\mathbf t\in \G^{m-r}}\prod_j g_j(\ell_j(\mathbf t))$. Fix $i$;
relabel so that $i=m$, and let $[m-1]=J_1\sqcup J_2$ be its partition. For
$J\in\{J_1,J_2\}$ put
\[
   W_J:=\bigcap_{j\in J}\ker\bigl(\ell_j:\G^{m-r}\to \G\bigr)\le \G^{m-r}.
\]

\emph{We claim that : 
$\ell_m(W_J)=\G$.}
By Definition~\ref{def:cs1-matrix}, there exists
$v=(v_1,\ldots,v_{m-r})\in\mathbb Z_\lcmG^{m-r}$
such that
\(
a_j\cdot v=0\pmod{\lcmG} \text{ for every }j\in J,
a_m\cdot v=1\pmod{\lcmG}.
\)
For any $y\in\G$, set
\(
h_y=(v_1y,\ldots,v_{m-r}y).
\)
Then $h_y\in W_J$ and $\ell_m(h_y)=y$.
Consequently,
\(
\ell_m(W_J)=\G.
\)
In particular, $\ell_m$ is surjective onto $\G$.


\textbf{First Cauchy--Schwarz (eliminate $J_1$).} The factor
$F(\mathbf t)=\prod_{j\in J_1}g_j(\ell_j(\mathbf t))$ is $W_1$-invariant, so with
$H(\mathbf t)=\prod_{j\in J_2\cup\{m\}}g_j(\ell_j(\mathbf t))$,
\[
   \Lambda=\mathbb E_{\mathbf t}\,F(\mathbf t)\,\mathbb E_{\mathbf w\in W_1}H(\mathbf t+\mathbf w),
   \qquad
   |\Lambda|^{2}\le\mathbb E_{\mathbf t}\Bigl|\mathbb E_{\mathbf w\in W_1}H(\mathbf t+\mathbf w)\Bigr|^{2}
   =\mathbb E_{\mathbf h\in W_1}\,\mathbb E_{\mathbf t}\,H(\mathbf t)\,\overline{H(\mathbf t+\mathbf h)} .
\]
For $\mathbf h\in W_1$,
$H(\mathbf t)\overline{H(\mathbf t+\mathbf h)}=\prod_{j\in J_2\cup\{m\}}
g_j(\ell_j(\mathbf t))\,\overline{g_j(\ell_j(\mathbf t)+\ell_j(\mathbf h))}$.

\textbf{Second Cauchy--Schwarz (eliminate $J_2$).} For fixed
$\mathbf h$ the factor $\prod_{j\in J_2}g_j(\ell_j(\mathbf t))\overline{g_j(\ell_j(\mathbf t)+\ell_j(\mathbf h))}$
is $W_2$-invariant; applying Cauchy--Schwarz in $\mathbf t$ (and over
$\mathbf h\in W_1$) as above leaves only the $j=m$ factor:
\[
   |\Lambda|^{4}\le\mathbb E_{\mathbf h\in W_1,\,\mathbf h'\in W_2}\ \mathbb E_{\mathbf t}\
   \Delta_{\mathbf h}g_m(\ell_m(\mathbf t))\,\overline{\Delta_{\mathbf h}g_m(\ell_m(\mathbf t)+\ell_m(\mathbf h'))},
\]
where $\Delta_{\mathbf h}g_m(y)=g_m(y)\overline{g_m(y+\ell_m(\mathbf h))}$.
Expanding,
\[
   |\Lambda|^{4}\le\mathbb E_{\mathbf h\in W_1,\,\mathbf h'\in W_2,\,\mathbf t}\
   g_m(y)\,\overline{g_m(y+a)}\,\overline{g_m(y+b)}\,g_m(y+a+b),
   \quad y=\ell_m(\mathbf t),\ a=\ell_m(\mathbf h),\ b=\ell_m(\mathbf h').
\]
By the Claim, $a=\ell_m(\mathbf h)$ is uniform on $\G$ as $\mathbf h$ ranges over
$W_1$, and likewise $b$ over $W_2$; and $y=\ell_m(\mathbf t)$ is uniform on $\G$
since $\ell_m$ is surjective. Hence the right-hand side is
$$
    \mathbb E_{y,a,b\in \G}\,g_m(y)\overline{g_m(y+a)}\,\overline{g_m(y+b)}\,g_m(y+a+b) =\sum_{\chi\in\widehat\G}|\widehat{g_m}(\chi)|^{4},
$$
by the identity stated at
the start of this subsection. As $i=m$ was an arbitrary index, the minimum
over $i$ follows.
\end{proof}

\begin{corollary}[Generalized counting lemma]
\label{cor:count-matrix}
Suppose $M$ has complexity $1$ over $\mathbb Z_\lcmG$. If $f_1,\dots,f_m:\G\to[0,1]$
are each $\varepsilon$-regular, then for \emph{every} $\mathbf b\in \G^{r}$,
\[
   \Bigl|\Lambda_{M,\mathbf b}(f_1,\dots,f_m)-\prod_{j=1}^{m}\mathbb E[f_j]\Bigr|\le 2^{m}\varepsilon^{1/2}.
\]
\end{corollary}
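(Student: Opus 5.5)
The plan is to derive the corollary from Lemma~\ref{lem:gvn} by the standard telescoping (or multilinear) expansion, exactly parallel to the proof of Lemma~\ref{lem:count-system}. First reduce to $\mathbf b=\mathbf 0$: by Lemma~\ref{lem:variety}, $\Lambda_{M,\mathbf b}(f_1,\dots,f_m)=\mathbb E_{\mathbf t}\prod_j f_j(\ell_j(\mathbf t)+x_{0,j}(\mathbf b))$. Replacing $f_j$ by its translate $f_j(\cdot+x_{0,j}(\mathbf b))$ preserves the mean, preserves $[0,1]$-valuedness, and multiplies each Fourier coefficient by a unimodular phase. In particular it preserves $\varepsilon$-regularity. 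Alternatively, one can note that Lemma~\ref{lem:gvn} is already stated for every $\mathbf b$, so no reduction is needed at all.

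Next, write $f_j=\mu_j+g_j$ with $\mu_j=\mathbb E[f_j]\in[0,1]$ and $g_j=f_j-\mu_j$. Then $g_j$ is mean-zero, $\|g_j\|_\infty\le 1$ (since both $f_j$ and $\mu_j$ lie in $[0,1]$), and $\widehat{g_j}(\chi)=\widehat{f_j}(\chi)$ for $\chi\neq\chi_0$, so $\max_\chi|\widehat{g_j}(\chi)|\le\varepsilon$. By multilinearity of $\Lambda_{M,\mathbf b}$, expand over subsets $S\subseteq[m]$:
\[
\Lambda_{M,\mathbf b}(f_1,\dots,f_m)=\sum_{S\subseteq[m]}\Bigl(\prod_{j\notin S}\mu_j\Bigr)\Lambda_{M,\mathbf b}(h^S_1,\dots,h^S_m),
\]
where $h^S_j=g_j$ for $j\in S$ and $h^S_j\equiv 1$ otherwise. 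The $S=\varnothing$ term equals $\prod_j\mu_j$, because the average of $1$ over the (nonempty, by surjectivity of $M$ in Lemma~\ref{lem:variety}) fibre is $1$.

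For nonempty $S$, pick $i\in S$ and apply Lemma~\ref{lem:gvn} to the functions $h^S_1,\dots,h^S_m$. These all satisfy $\|h^S_j\|_\infty\le 1$, and the lemma places no requirement on the functions indexed by $j\neq i$, so the constant functions are allowed. This gives
\[
|\Lambda_{M,\mathbf b}(h^S)|^4\le\sum_\chi|\widehat{g_i}(\chi)|^4.
\]
Using the fourth-moment bound stated before Lemma~\ref{lem:gvn} (valid since $g_i$ is mean-zero and bounded by $1$), the right side is at most $\varepsilon^2\|g_i\|_2^2\le\varepsilon^2$, so $|\Lambda_{M,\mathbf b}(h^S)|\le\varepsilon^{1/2}$. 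Since $\prod_{j\notin S}\mu_j\le 1$, summing over the at most $2^m-1$ nonempty $S$ yields the bound $2^m\varepsilon^{1/2}$.

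The only delicate point is the fourth-moment estimate. The constraint $\|g_i\|_\infty\le1$ with $g_i$ taking values in $[-1,1]$ gives $\|g_i\|_2^2\le 1$, and Parseval then yields $\sum_\chi|\widehat{g_i}|^4\le\max_\chi|\widehat{g_i}|^2\sum_\chi|\widehat{g_i}|^2\le\varepsilon^2$. This uses the absolute maximum over all characters, which is fine because $\widehat{g_i}(\chi_0)=0$. Everything else is bookkeeping, so I do not expect a genuine obstacle beyond confirming that Lemma~\ref{lem:gvn} applies with the chosen index $i\in S$.
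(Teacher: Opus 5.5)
Your proposal is correct and follows the paper's proof essentially verbatim. You decompose $f_j=\mu_j+g_j$, expand multilinearly over the $2^m$ subsets $S\subseteq[m]$, and identify the $S=\varnothing$ term as $\prod_j\mu_j$ via the fibre count of Lemma~\ref{lem:variety}; you then bound each nonempty-$S$ term by $\varepsilon^{1/2}$ using Lemma~\ref{lem:gvn} (with the factors outside $S$ set to $1$) and the fourth-moment bound $\sum_\chi|\widehat{g_i}(\chi)|^4\le\varepsilon^2$. Your observation that Lemma~\ref{lem:gvn} already holds for every $\mathbf b$, so no separate translation is needed, also matches the paper.
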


\begin{proof}
Write $f_j=\mu_j+g_j$ with $\mu_j=\mathbb E[f_j]$ and $g_j$ mean-zero,
$\|g_j\|_\infty\le1$, $\max_{\chi\ne\chi_0}|\widehat{g_j}(\chi)|\le\varepsilon$.
Expand $\Lambda_{M,\mathbf b}(f_1,\dots,f_m)$ over the $2^m$ subsets
$S\subseteq[m]$. The term $S=\varnothing$ is $\prod_j\mu_j$ (the constant
functions give the full average, and $|\{M\mathbf x=\mathbf b\}|=|\G|^{m-r}$ by
Lemma~\ref{lem:variety}). Each $S\ne\varnothing$ has a factor $g_{i}$
($i\in S$); setting the other factors to $1$ and applying
Lemma~\ref{lem:gvn} together with the moment bound
$\sum_{\chi}|\widehat{g_i}(\chi)|^{4}\le\varepsilon^{2}$ bounds that term by
$\varepsilon^{1/2}$. Summing the at most $2^m$ terms gives the claim.
\end{proof}

\subsubsection*{Covering, the quotient image, and the algorithm}

Let $H$ be the subgroup returned by \textsc{ConstDT}, $\bar\G=\G/H$, and let
$\bar M:\bar\G^{\,m}\to\bar\G^{\,r}$ be the induced map (well-defined as
$M(H^m)=H^r$, which holds because $d_r(M)$ is coprime to $\exp(H)\mid\lcmG$, where $\exp(H)$ denotes the torsion of $H$).
Output cosets are cosets of $H^{r}$ in $\G^{r}$, indexed by $\bar\G^{\,r}$. Let
$S\subseteq\bar\G$ be the regular, density $\ge\frac{\zeta}{2}$ leaves of \textsc{ConstDT}.

\begin{lemma}[Exact covering]
\label{lem:cover-matrix}
Suppose $M$ has complexity $1$ over $\mathbb Z_\lcmG$ and the regularity parameter
satisfies $\varepsilon<(\zeta/4)^{2m}$. Then for every $\bar{\mathbf c}\in S^{m}$,
the entire output coset $M({\mathbf c})+H^{r}$ is contained in
$M(A'^m)$; and $M(A'^m)$ meets no output coset outside $\bar M(S^m)$. Hence
\begin{equation}
\label{eq:vol-matrix}
   \Density\bigl(M(A'^m)\bigr)=\frac{\bigl|\bar M(S^m)\bigr|}{|\bar\G|^{\,r}} .
\end{equation}
\end{lemma}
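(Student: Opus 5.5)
Fix $\bar{\mathbf c}=(\bar c_1,\dots,\bar c_m)\in S^m$ with representatives $c_j\in\G$. I would show that every point of the output coset $M(\mathbf c)+H^r$ has a preimage in $(A')^m$ by applying the generalized counting lemma (Corollary~\ref{cor:count-matrix}) inside $H$, one target at a time. The quotient statement and \eqref{eq:vol-matrix} then follow by bookkeeping.

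\textbf{Covering.} Take a target $M(\mathbf c)+\mathbf b$ with $\mathbf b\in H^r$. A tuple $\mathbf x=\mathbf c+\mathbf y$ with $\mathbf y\in H^m$ maps to this target exactly when $M\mathbf y=\mathbf b$. Let $g_j:H\to\{0,1\}$ be the pullback $g_j(h)=\mathbbm 1_{A'}(c_j+h)$. By Theorem~\ref{th: Regularity for Abelian}, each $g_j$ is $\varepsilon$-regular on $H$ with $\mathbb E_H[g_j]\ge\zeta/2$.

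Next I would check that the hypotheses carry over to $H$. Since $\exp(H)\mid\lcmG$, we get $\gcd(d_r(M),\exp H)=1$. So Lemma~\ref{lem:variety} holds on $H$, and every $\mathbf b\in H^r$ lies in the image of $M$ restricted to $H^m$. By Remark~\ref{rem:cs_complexity_surjectivity}, the torsion-compatible complexity-$1$ condition is also inherited by $H$.

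Corollary~\ref{cor:count-matrix} on $H$ then gives
\[
\Lambda_{M,\mathbf b}(g_1,\dots,g_m)\ \ge\ (\zeta/2)^m-2^m\varepsilon^{1/2}.
\]
The hypothesis $\varepsilon<(\zeta/4)^{2m}$ gives $2^m\varepsilon^{1/2}<2^m(\zeta/4)^m=(\zeta/2)^m$, so the right-hand side is strictly positive. Hence some $\mathbf y\in H^m$ with $M\mathbf y=\mathbf b$ has $c_j+y_j\in A'$ for all $j$.

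Because Corollary~\ref{cor:count-matrix} is uniform in $\mathbf b$ (the $\ell^\infty$ feature of the first-moment argument), this holds for every $\mathbf b\in H^r$. Therefore the whole coset $M(\mathbf c)+H^r$ lies in $M(A'^m)$.

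\textbf{No other cosets.} Suppose $\mathbf x\in(A')^m$. Every coset met by $A'$ has positive $A'$-density, so by Theorem~\ref{th: Regularity for Abelian} it is a surviving leaf, i.e.\ $\bar x_j\in S$. Thus $M\mathbf x+H^r=\bar M(\bar{\mathbf x})$ with $\bar{\mathbf x}\in S^m$, using that $\bar M$ is well defined since $M(H^m)\subseteq H^r$. So $M(A'^m)$ is exactly the union of the $H^r$-cosets indexed by $\bar M(S^m)$.

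\textbf{Density.} Each $H^r$-coset has measure $|H|^r/|\G|^r=1/|\bar\G|^r$. Summing over the cosets in $\bar M(S^m)$ gives \eqref{eq:vol-matrix}.

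\textbf{Main obstacle.} The delicate step is justifying that Corollary~\ref{cor:count-matrix} applies verbatim inside $H$ for an arbitrary inhomogeneous right-hand side $\mathbf b\in H^r$. This needs three things:
\begin{itemize}
\item the parametrisation of Lemma~\ref{lem:variety} restricted to $H$ is still a bijection, which follows from unit determinantal divisors modulo $\exp H$;
\item the surjectivity $\ell_i(W_{J_s}\cap H^{m-r})=H$, which is the vector $v$ construction in Remark~\ref{rem:cs_complexity_surjectivity} applied with $y\in H$;
\item a particular solution $\mathbf x_0(\mathbf b)\in H^m$ exists, which follows from $[I_r\mid 0]$ reduction over $\mathbb Z_{\exp H}$.
\end{itemize}
I would verify these three points explicitly first, then run the argument above.
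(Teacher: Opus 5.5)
Your proposal is correct and follows the paper's proof essentially verbatim. Like the paper, you pull back to $H$, apply Corollary~\ref{cor:count-matrix} uniformly in the right-hand side $\mathbf b\in H^r$, use $\varepsilon<(\zeta/4)^{2m}$ to get $2^m\varepsilon^{1/2}<(\zeta/2)^m$, and then count $H^r$-cosets. Your explicit check of the inherited facts over $H$ (bijective parametrisation, a particular solution in $H^m$, and $\ell_i(W_{J_s}\cap H^{m-r})=H$ via the vector $v$) is more careful than the paper, which simply asserts that the restricted system inherits these properties.
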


\begin{proof}
Fix $\bar{\mathbf c}\in S^m$ and a target $\mathbf b$ in the coset
$ M({\mathbf c})+H^{r}$. Restricting the input coordinates to the dense regular cosets $c_j+H$ and pulling back to $H$ gives functions
$g_j=\mathbbm{1}_{A'}|_{c_j+H}$ on $H$ with $\mathbb E_H[g_j]\ge\frac{\zeta}{2}$, each
$\varepsilon$-regular; the restricted system $M\mathbf y=\mathbf b'$ in $H$ inherits
complexity $1$ over $\mathbb Z_{\exp(H)}$, where $\exp(H)$ denotes the torsion of $H$. By Corollary~\ref{cor:count-matrix}
applied over $H$, uniformly in $\mathbf b'$,

\[
\begin{aligned}
\Lambda_{M,\mathbf b'}(g_1,\dots,g_m)
&\ge
\prod_j\mathbb E_H[g_j]
-2^m\varepsilon^{1/2}\\
&\ge
(\zeta/2)^m-2^m\varepsilon^{1/2}
>0,
\end{aligned}
\]
since $\varepsilon<(\zeta/4)^{2m}$.
\color{black}
Thus $\mathbf b$ has a representation in
$A'^m$, proving full covering; and $M(A'^m)$ lands only in cosets $\bar M(\bar{\mathbf c})$
with $\bar{\mathbf c}\in S^m$. Counting cosets times $|H^{r}|$ and dividing by
$|\G|^{r}$ gives \eqref{eq:vol-matrix}.
\end{proof}

\begin{algorithm}[H]
\caption{\textsc{ApproxMatrixImage}$(A,M,\theta)$}
\label{alg:matrix-image}
\begin{algorithmic}[1]
\Require query access to $A\subseteq\G$; matrix $M\in\mathbb Z^{r\times m}$, whose number of pseudo--independent (Definition~\ref{defn_independence}) rows is at most $r$, with $m>r$, $\gcd(d_r(M),\lcmG)=1$, and complexity $1$ over
         $\mathbb Z_\lcmG$; accuracy $\theta\in(0,1)$.
\Ensure A value $v$ and a $(\theta,0,O(n))$-oracle
$O_{M(A'^m)}$ for the image set $M(A'^m)$.

\State Set $\zeta\gets\theta/2$ and
$\eta\gets\tfrac12(\zeta/4)^{2m}$.
\Comment{So $\eta<(\zeta/4)^{2m}$
and $\eta+\zeta\le\theta$}
\color{black}
       \Comment{so $2^m\eta^{1/2}<\zeta^m$ (exact covering) and $\eta+\zeta\le\theta$}
\State $(\mathcal O_{A'},T)\gets\textsc{ConstDT}(A,\eta,\zeta)$; let
       $S\subseteq\bar\G=\G/H$ be the surviving leaves of $T$.
\State Compute $\bar M(S^m)\subseteq\bar\G^{\,r}$ by evaluating $\bar M$ on $S^m$
       in the quotient.
\State Build $\mathcal O_{M(A'^m)}$: on input $\mathbf w\in \G^{r}$, read its coset
       $\bar{\mathbf w}\in\bar\G^{\,r}$ from $T$ and output $1$ iff
       $\bar{\mathbf w}\in\bar M(S^m)$.
\State \Return $v\gets|\bar M(S^m)|/|\bar\G|^{\,r}$ and $\mathcal O_{M(A'^m)}$.
\end{algorithmic}
\end{algorithm}

\begin{theorem}[Approximate image counting for matrix linear system, Theorem~\ref{thm:matrix-image}, restated]
Fix $M\in\mathbb Z^{r\times m}$, whose number of pseudo--independent (Definition~\ref{defn_independence}) rows is at most $r$, with $m>r$,
$\gcd(d_r(M),\lcmG)=1$, and complexity $1$ over $\mathbb Z_\lcmG$. For every
$\theta>0$ and every finite Abelian $\G$ with constant torsion $\lcmG$, given query
access to $A\subseteq\G$, Algorithm~\ref{alg:matrix-image} makes
\[
   \poly\Bigl(\Tower\bigl(\lcmG,\,O_m(\theta^{-O(m)})\bigr)\Bigr)
\]
queries and time, both independent of $n$ and $|\G|$, and with probability
at least $9/10$ outputs a value $v$ and an oracle $\mathcal O_{M(A'^m)}$ such that,
for a density-capturing perturbation $A'\subseteq A$ with
$\Density(A\setminus A')\le\theta$,
\[
   \bigl|\,v-\Density\bigl(M(A'^m)\bigr)\,\bigr|\le\theta,
\]
and $\mathcal O_{M(A'^m)}$ is an oracle for the image set $M(A'^m)\subseteq \G^{r}$.
\end{theorem}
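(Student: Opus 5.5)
The plan mirrors the proof of Theorem~\ref{thm:lin-image}, with Lemma~\ref{lem:cover-matrix} (exact covering) in place of the second-moment Lemma~\ref{lem:image-cover}.

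\textbf{Step 1: the regularity call and the perturbation.} Run Algorithm~\ref{alg:matrix-image} with $\zeta=\theta/2$ and $\eta=\tfrac12(\zeta/4)^{2m}$. By Theorem~\ref{th: Regularity for Abelian}, $\textsc{ConstDT}(A,\eta,\zeta)$ succeeds with probability at least $9/10$. On success it returns the tree $T$, the subgroup $H$ and the set $A'$. Lemma~\ref{Prop: prop of Green's regularity} then gives $\Density(A\setminus A')\le\eta+\zeta\le\theta$. It also guarantees that every surviving coset in $S\subseteq\bar\G$ is $\eta$-regular with density at least $\zeta/2$, and that $A'$ is empty on every other coset.

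\textbf{Step 2: complexity.} We have $1/\eta^3=O_m(\theta^{-6m})=O_m(\theta^{-O(m)})$. Hence the queries and time of $\textsc{ConstDT}$ are $\poly(\Tower(\lcmG,O_m(\theta^{-O(m)})))$. After that call, no further queries are made. The quotient computation of $\bar M(S^m)$ evaluates $\bar M$ on at most $|\bar\G|^m$ tuples, each with $O(rm)$ group operations in $\bar\G$. Since $|\bar\G|\le\Tower(\lcmG,O(1/\eta^3))$, this cost is also polynomial in the tower for fixed $m$. Neither bound depends on $|\G|$ or $n$.

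\textbf{Step 3: the density value.} Since $\eta<(\zeta/4)^{2m}$, the hypothesis of Lemma~\ref{lem:cover-matrix} holds. That lemma requires the inherited complexity-$1$ condition over $H$ from Remark~\ref{rem:cs_complexity_surjectivity}, and well-definedness of $\bar M$ from $M(H^m)=H^r$, which follows from Lemma~\ref{lem:variety} applied to $H$ because $\exp(H)\mid\lcmG$. Identity~\eqref{eq:vol-matrix} then gives exactly $\Density(M(A'^m))=|\bar M(S^m)|/|\bar\G|^r=v$. So the error is $0\le\theta$, and the bound is in fact stronger than needed.

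\textbf{Step 4: the oracle.} The oracle $\mathcal O_{M(A'^m)}$ reads the coset of each coordinate of $\mathbf w$ through the character oracles of $T$ and outputs $1$ iff $\bar{\mathbf w}\in\bar M(S^m)$. By the two halves of Lemma~\ref{lem:cover-matrix} (full covering of reachable cosets, no mass elsewhere), this is exactly $\mathbbm 1_{M(A'^m)}$ whenever the character oracles answer correctly.

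\textbf{Main obstacle.} The delicate point is that the surviving set $S$ and the coset labels are only implicitly and probabilistically determined. The argument has to ensure that the $S$ used by the algorithm coincides with the $S$ defining $A'$. I would handle this by defining $A'$ as the set induced by the algorithm's own tree, as in Step~4 of the proof of Theorem~\ref{th: Regularity for Abelian}. On the success event (probability at least $9/10$ by the union bound with per-call confidence $\eta$ chosen there), every leaf decision is correct. It then remains to check that coset membership of each coordinate of $\mathbf w\in\G^r$ is computed consistently through the same character oracles. Any residual error of these oracles is charged to the $(\theta,0,\cdot)$ tolerance.
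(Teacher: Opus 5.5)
Your proposal is correct and follows the paper's proof essentially step for step. You run \textsc{ConstDT} with $\zeta=\theta/2$ and $\eta=\tfrac12(\zeta/4)^{2m}$, obtain $\Density(A\setminus A')\le\eta+\zeta\le\theta$, and get the complexity from $1/\eta^3=O_m(\theta^{-6m})$ together with the $|\bar\G|^m$ quotient enumeration; you then invoke Lemma~\ref{lem:cover-matrix} to get $v=\Density(M(A'^m))$ exactly and read the oracle off the same coset description. Your added care about defining $A'$ through the algorithm's own tree, and charging any residual error of the implicit character oracles to the oracle's tolerance, addresses a point the paper passes over by simply asserting exact membership.
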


\begin{proof}
Computing $\bar M(S^m)$ touches at most
\[
|S|^m\le|\bar\G|^m
\le\Bigl(\Tower(\lcmG,O(\eta^{-3}))\Bigr)^m
\]
tuples, independently of $|\G|$.
\color{black}
Theorem~\ref{th: Regularity for Abelian} (\textsc{ConstDT}) supplies $\mathcal O_{A'}$, $T$, and $S$ within the
stated complexity ($1/\eta^3=O_m(\theta^{-6m})$), with
$\Density(A\setminus A')\le\eta+\zeta\le\theta$ by Lemma~\ref{Prop: prop of Green's regularity}. The parameter
choice gives $\varepsilon=\eta<(\zeta/4)^{2m}$, so Lemma~\ref{lem:cover-matrix}
applies: $\Density(M(A'^m))=|\bar M(S^m)|/|\bar\G|^{\,r}=v$ \emph{exactly} (the only
losses are the perturbation $A\setminus A'$ and the $\le1/10$ failure of
\textsc{ConstDT}). Oracle correctness is the same coset description:
$\mathbf w\in M(A'^m)$ iff $\bar{\mathbf w}\in\bar M(S^m)$.
\end{proof}


Theorem~\ref{thm:matrix-image} is the matrix
analogue of Theorem~\ref{thm:lin-image} for systems
satisfying the complexity-$1$ hypothesis.
In particular, it applies to suitable single equations
with $m\ge3$. The case $r=1$, $m=2$, and
$a=(1,1)$ is handled separately by
Theorem~\ref{thm:lin-image}, since the corresponding
two-variable system does not satisfy the
complexity-$1$ condition.
\color{black}
A few points are worth noting. First, the covering here is \emph{exact} (Lemma~\ref{lem:cover-matrix}), with no analogue of the $1-\varepsilon^2/\zeta^{2m}$ loss in the single-form case; this is because Corollary~\ref{cor:count-matrix} holds uniformly over all targets $\mathbf b$, rather than only on average. Second, complexity $1$ is the natural limit of our method: for systems of higher complexity, small Fourier coefficients no longer control the number of solutions, as observed by Green~\cite{green2005szemeredi}, and the known removal and counting results for such systems go through hypergraph methods~\cite{Shapira2009, KSV2012}, which give no constructive guarantee. Third, the coprimality condition $\gcd(d_r(M),\lcmG)=1$ plays the same role for the matrix $M$ that pseudo-independence (Definition~\ref{defn_independence}) plays for a single form; without it, the solution count, and with it the main term of Corollary~\ref{cor:count-matrix}, is no longer correct. Finally, as in Section~\ref{ssec:lin-image}, it is the image of the perturbation $A'$ that we estimate, not the genuine image $M(A^m)$: the latter cannot be estimated to any nontrivial accuracy with a number of queries independent of $|\G|$, already at $r=1$, by the lower bound of~\cite{de2022approximating}.

\bigskip

\textbf{AI use statement.}
All mathematical ideas and proofs presented in this work are the authors' own. AI tools were used solely for proofreading and editing to improve the readability of the paper. The authors take full responsibility for the content of the submitted manuscript.

\bibliography{ref}
\appendix

\section{Preliminaries}
\label{sec_prelims}
Throughout this paper, $\G$ denotes a finite Abelian group $\mathbb{Z}_{p_1^{m_1}}\times \cdots \times \mathbb{Z}_{p_T^{m_T}}$, where $p_i$ are primes for all $i \in [T]$ and may not be distinct, and $\omega_\lcmG$ denotes a primitive $\lcmG^{th}$ root of unity, where $\lcmG = \LCM \{p_1^{m_1}, \ldots, p_T^{m_T}\}$.

\subsection{Basics on Fourier Analysis over \texorpdfstring{$\G$}{}}

Let us define the characters of $\G$ first.

\begin{definition}\label{defn_character_1_groups}
    \textbf{(Character)} For $\G = \mathbb{Z}_{p_1^{m_1}} \times \cdots \times \mathbb{Z}_{p_T^{m_T}}$, where $p_i, i \in [T]$ are primes, a character of the group $\G$ is a homomorphism $\chi: \G \to \mathbb{C}^\times$ of $\G$, that is, $\chi$ satisfies the following: for all $x,y \in \G$ we have $\chi(x+y)=\chi(x) \chi(y)$.

    \noindent Equivalently, a character of $\G$ is of the form $$\chi(x^{(1)},\dots, x^{(T)}) = \chi_{r^{(1)}}(x^{(1)}) \cdots \chi_{r^{(T)}}(x^{(T)}),$$ where $r^{(i)} \in \mathbb{Z}_{p_i^{m_i}}$ and $\chi_{r^{(i)}}$ is a character of $\mathbb{Z}_{p_i^{m_i}}$ and is defined by $$\chi_{r^{(i)}}(x^{(i)}) = \omega_{p_i^{m_i}}^{r^{(i)} x^{(i)}}.$$ 
    Thus for any $(r^{(1)}, \dots, r^{(T)})\in \G$ we can define a corresponding character of $\G$ as $\chi_{r^{(1)}, \dots, r^{(T)}}$ that on input  $x = (x^{(1)}, \dots, x^{(T)})$ is defined as
    \begin{align}\label{eq:character}
    \chi_{(r^{(1)}, \ldots, r^{(n)})}(x^{(1)}, \ldots, x^{(T)}) 
    &= \omega_{p_1^{m_1}}^{r^{(1)}\cdot x^{(1)}} \cdots \omega_{p_T^{m_T}}^{r^{(T)}\cdot x^{(T)}} \\
    &= \omega_\lcmG^{\sum_{i=1}^T r^{(i)}\cdot x^{(i)} \frac{\lcmG}{p_i^{m_i}} \pmod{\lcmG}},
\end{align}
where $\lcmG = \LCM \{ p_1^{m_1}, \ldots, p_T^{m_T} \}$.
\end{definition}

Now let us look at some properties of characters.

\begin{lemma}\label{lemma_character_groups}
Let $\chi$ be a character of $\G$. Then,
\begin{enumerate}
    \item $\chi_0(x)=1$ for all $x\in \G$. 
    \item $\chi(-x)= \chi(x)^{-1}=\overline{\chi(x)}$ for all $x \in \G$. 
    \item For any character $\chi$ of $\G$, where $\chi\neq \chi_0$, $\sum_{x\in \G} \chi(x)=0$.
    \item $|\chi(x)|=1$ for all $x \in \G$.
\end{enumerate}
\end{lemma}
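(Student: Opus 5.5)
The plan is to verify the four items of Lemma~\ref{lemma_character_groups} directly from Definition~\ref{defn_character_1_groups}, using only that a character $\chi:\G\to\mathbb{C}^\times$ is a homomorphism and that $\G$ is finite with exponent $\lcmG$. Here $\chi_0$ denotes the trivial character, i.e.\ the one indexed by $(r^{(1)},\dots,r^{(T)})=(0,\dots,0)$ in the explicit formula~\eqref{eq:character}. Item~1 is then immediate: every exponent $r^{(i)}x^{(i)}$ vanishes, so $\chi_0(x)=\omega^0=1$.

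I will prove item~4 next, since item~2 depends on it. For any $x\in\G$ we have $\lcmG x=0$, so the homomorphism property gives $\chi(x)^{\lcmG}=\chi(\lcmG x)=\chi(0)$. Also $\chi(0)=\chi(0+0)=\chi(0)^2$, and since $\chi(0)\neq0$ this forces $\chi(0)=1$. Hence $\chi(x)$ is a $\lcmG$-th root of unity and $|\chi(x)|=1$. Alternatively, the explicit formula~\eqref{eq:character} exhibits $\chi(x)$ as a power of $\omega_\lcmG$, which gives the same conclusion.

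For item~2, I will use $\chi(x)\chi(-x)=\chi(0)=1$, so $\chi(-x)=\chi(x)^{-1}$. Since $|\chi(x)|=1$ by item~4, $\chi(x)^{-1}=\overline{\chi(x)}$.

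For item~3, the standard shift trick applies. Since $\chi\ne\chi_0$, there is some $y$ with $\chi(y)\ne1$. Translation $x\mapsto x+y$ is a bijection of $\G$, so
\[
\sum_{x}\chi(x)=\sum_x\chi(x+y)=\chi(y)\sum_x\chi(x).
\]
This gives $(1-\chi(y))\sum_x\chi(x)=0$, and since $1-\chi(y)\neq0$ the sum vanishes.

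There is no real obstacle here. The only point needing care is to pin down $\chi_0$ as the trivial character and to establish $\chi(0)=1$ before it is used in items~2 and~4.
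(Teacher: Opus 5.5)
Your proof is correct and complete. The paper states this lemma without proof as a standard fact, and your arguments are the standard ones: $\chi(0)=1$ from $\chi(0)=\chi(0)^2$, then $\chi(x)^{\lcmG}=1$, then inverses equal conjugates, and the translation trick for item~3. That translation-invariance argument is essentially the one the paper itself uses in Case~2 of the proof of Claim~\ref{claim_H_perp_new}, applied there to the character index rather than the group element.
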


Now let us define the dual group of $\G$.

\begin{definition}\label{defn_dual_group}
    \textbf{(Dual group)} The set of characters of $\G$ forms a group under the operation $(\chi\psi) (x) = \chi(x) \psi(x)$ and is denoted by $\widehat{\G}$, where $\chi$ and $\psi$ are characters of $\G$. $\widehat{\G}$ is called the dual group of $\G$.
\end{definition}

The following theorem states that $\G$ is isomorphic to its dual group.

\begin{theorem}
    $\widehat{\G} \cong \G$, that is $\G$ and $\widehat{\G}$ are isomorphic to each other.
\end{theorem}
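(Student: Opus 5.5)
The claim is $\widehat{\G}\cong\G$ for $\G=\mathbb{Z}_{p_1^{m_1}}\times\cdots\times\mathbb{Z}_{p_T^{m_T}}$. The plan is to build the explicit map suggested by Definition~\ref{defn_character_1_groups}, namely
\[
\Phi:\G\to\widehat{\G},\qquad \Phi(r^{(1)},\dots,r^{(T)})=\chi_{(r^{(1)},\dots,r^{(T)})},
\]
and then check three things: $\Phi$ is well defined and lands in $\widehat{\G}$, it is a homomorphism, and it is a bijection.

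\textbf{Well-definedness and homomorphism.} The exponent $r^{(i)}x^{(i)}$ of $\omega_{p_i^{m_i}}$ depends only on the residues mod $p_i^{m_i}$, because $\omega_{p_i^{m_i}}^{p_i^{m_i}}=1$. So each factor $\chi_{r^{(i)}}$ is a well-defined function. It is multiplicative in $x$, since the exponent is additive in $x^{(i)}$, so $\Phi(r)$ is a character. The same additivity in $r^{(i)}$ gives $\chi_{r+s}=\chi_r\chi_s$ pointwise, so $\Phi$ is a group homomorphism.

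\textbf{Injectivity.} Suppose $\chi_r\equiv 1$. Evaluate it at $e_i$, the element that is $1$ in coordinate $i$ and $0$ elsewhere. This gives $\omega_{p_i^{m_i}}^{r^{(i)}}=1$. Since $\omega_{p_i^{m_i}}$ is a primitive $p_i^{m_i}$-th root of unity, $p_i^{m_i}\mid r^{(i)}$, so $r^{(i)}=0$ in $\mathbb{Z}_{p_i^{m_i}}$. Hence $\ker\Phi$ is trivial.

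\textbf{Surjectivity.} This is the one step that needs a genuine argument and does not follow just by unwinding notation. Take any homomorphism $\chi:\G\to\mathbb{C}^\times$. Because the $e_i$ generate $\G$, $\chi$ is determined by the values $\chi(e_i)$. Since $p_i^{m_i}e_i=0$, we get $\chi(e_i)^{p_i^{m_i}}=1$. So $\chi(e_i)$ is a $p_i^{m_i}$-th root of unity, and therefore $\chi(e_i)=\omega_{p_i^{m_i}}^{r^{(i)}}$ for some $r^{(i)}\in\mathbb{Z}_{p_i^{m_i}}$. Then $\chi$ and $\chi_{(r^{(1)},\dots,r^{(T)})}$ agree on the generators, and both are homomorphisms, so they are equal. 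Thus $\chi\in\operatorname{im}\Phi$.

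The same computation shows $|\widehat{\G}|\le\prod_i p_i^{m_i}=|\G|$. So injectivity together with this counting bound would also give bijectivity, and either route closes the proof.
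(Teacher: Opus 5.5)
Your proof is correct and complete. The paper itself gives no proof of this statement. It records it as a standard fact, and its content is implicit in the sentence of Definition~\ref{defn_character_1_groups} asserting that every character is ``equivalently'' of the form $\chi_{(r^{(1)},\dots,r^{(T)})}$.

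Your argument is exactly the verification of that assertion. The map $r\mapsto\chi_r$ is a well-defined homomorphism. It is injective by evaluating at the $e_i$, and surjective because each $\chi(e_i)$ is forced to be a $p_i^{m_i}$-th root of unity. So you supply precisely what the paper takes for granted. The repeated primes allowed in the decomposition cause no trouble, since every step is coordinatewise.

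One remark: in the proof of Lemma~\ref{lemma_iso_H_perp_G/H}, the paper invokes $G\cong\widehat{G}$ for groups such as $H^\perp$ and $\G/H$, which are not presented in product form. Your argument covers these too, once you first apply the structure theorem for finite abelian groups and note that isomorphic groups have isomorphic duals.
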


Let us look at the definition of the Fourier transform for functions on $\G$.

\begin{definition} \label{def:foruier-transform_groups}
    \textbf{(Fourier transform)} For any $\G = \mathbb{Z}_{p_1^{m_1}} \times \cdots \times\mathbb{Z}_{p_T^{m_T}}$, with $p_i, i \in [T]$ primes, and any function $f: \G \to \mathbb{C}$, the Fourier transform $\widehat{f}: \widehat{\G}\to \mathbb{C}$ is 
    $$\widehat{f}(\chi_{r^{(1)},\ldots,r^{(T)}}) =\frac{1}{|\G|} \sum_{x\in \G} f(x) \omega_{p_1^{m_1}}^{-r^{(1)}\cdot x^{(1)}} \cdots \omega_{p_T^{m_T}}^{-r^{(T)}\cdot x^{(T)}},$$ where $x = (x^{(1)}, \dots, x^{(T)})$. 

\end{definition}

\begin{remark}
    The Fourier transform of a function $f:\G \to \mathbb{C}$ is defined by $$\widehat{f}(\chi)= \frac{1}{|\G|} \sum_{x\in \G} f(x) \overline{\chi(x)},$$ where $\overline{\chi(x)}$ is the conjugate of $\chi(x)$. The Definition~\ref{def:foruier-transform_groups} follows from this, as $\chi=\chi_{r^{(1)},\ldots,r^{(T)}}$ for some $r^{(i)} \in \mathbb{Z}_{p_i^{m_i}}, \ i\in \{1,\ldots,T\}$.
\end{remark}

The following theorem states that any function from $\G$ to $\mathbb{C}$ can be written as a linear combination of characters of $\G$.

\begin{theorem}\label{Fourier_inversion_groups}
    \textbf{(Fourier inversion formula)} Any function $f: \G \to \mathbb{C}$ can be uniquely written as a linear combination of characters of $\G$, that is,
    \begin{equation} \label{eq:fourierex}
    f(x)= \sum_{\chi_{r^{(1)},\ldots,r^{(T)}} \in \widehat{\G}} \widehat{f}(\chi_{r^{(1)},\ldots,r^{(T)}}) \chi_{r^{(1)}, \dots, r^{(T)}}(x),\end{equation} 
    where $x = (x^{(1)}, \dots, x^{(T)})$. 
\end{theorem}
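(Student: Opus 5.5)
The plan is to prove that the characters form an orthonormal basis of the space $\mathbb{C}^{\G}$ of functions $\G\to\mathbb{C}$, equipped with the normalized inner product $\langle f,g\rangle=\mathbb{E}_{x\in\G}[f(x)\overline{g(x)}]$. Once that is done, the expansion and its coefficients follow from elementary linear algebra. With this normalization, $\widehat{f}(\chi)=\langle f,\chi\rangle$ by the Remark following Definition~\ref{def:foruier-transform_groups}.

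\textbf{Step 1 (orthonormality).} For $\chi,\psi\in\widehat{\G}$, the pointwise product $\chi\overline{\psi}$ is again a character; this is part of Definition~\ref{defn_dual_group}, together with item 2 of Lemma~\ref{lemma_character_groups}. Hence $\langle\chi,\psi\rangle=\mathbb{E}_x[(\chi\overline{\psi})(x)]$. If $\chi=\psi$, then $\chi\overline{\psi}=\chi_0$ by item 4, so the average is $1$. If $\chi\neq\psi$, then $\chi\overline{\psi}\neq\chi_0$, and item 3 gives average $0$. Therefore distinct characters are orthonormal, and in particular linearly independent.

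\textbf{Step 2 (counting characters).} I will show that $r=(r^{(1)},\dots,r^{(T)})\mapsto\chi_{r}$ from Definition~\ref{defn_character_1_groups} is injective on $\prod_i\mathbb{Z}_{p_i^{m_i}}$. Let $e_i$ denote the $i$-th standard generator of $\G$. Then $\chi_r(e_i)=\omega_{p_i^{m_i}}^{r^{(i)}}$, and since $\omega_{p_i^{m_i}}$ is a primitive $p_i^{m_i}$-th root of unity, this value determines $r^{(i)}\bmod p_i^{m_i}$. So there are at least $|\G|$ distinct characters. This is consistent with the stated isomorphism $\widehat{\G}\cong\G$, and even without using that isomorphism, the count cannot exceed $\dim\mathbb{C}^{\G}=|\G|$ by Step 1. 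I expect this step to be the only place needing care, specifically making sure the parametrization really yields $|\G|$ pairwise distinct functions. The evaluation at generators handles it.

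\textbf{Step 3 (conclusion).} An orthonormal set of $|\G|$ vectors in a $|\G|$-dimensional inner product space is an orthonormal basis. Hence every $f$ can be written as $f=\sum_{\chi}c_\chi\chi$. Taking the inner product of both sides with a fixed $\chi$ and using Step 1 gives $c_\chi=\langle f,\chi\rangle=\widehat{f}(\chi)$, which is the formula in \eqref{eq:fourierex}. Uniqueness of the expansion follows from the linear independence of the characters established in Step 1.
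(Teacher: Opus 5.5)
Your proof is correct and complete: orthonormality from the character-sum identities of Lemma~\ref{lemma_character_groups}, the count $|\widehat{\G}|=|\G|$ obtained by evaluating $\chi_{r}$ at the generators $e_i$, and a dimension count together make up the standard argument. The paper gives no proof of this statement (it is quoted as background in the preliminaries), so there is no alternative route to compare against. One point worth keeping is that your Step~2 also justifies the paper's unproved assertion in Definition~\ref{defn_character_1_groups} that every character has the form $\chi_{r^{(1)},\ldots,r^{(T)}}$. Consequently the sum in the statement runs over each character of $\widehat{\G}$ exactly once.
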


\begin{theorem}\label{theorem_Parseval_groups}
    \textbf{(Parseval)} For any two functions $f,g: \G \to \mathbb{C}$, $$\mathbb{E}_{x\in \G} [f(x)\overline{g(x)}] =\sum_{\chi\in \widehat{\G}} \widehat{f}(\chi) \overline{\widehat{g}(\chi)}.$$
    More specifically, if $f:\G \to \{0,1\}$ is a Boolean-valued function then
    $$
        \sum_{\chi \in \widehat{\G}} |\widehat{f}(\chi)|^2 \leq 1.
    $$
\end{theorem}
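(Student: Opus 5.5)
The plan is to prove Parseval by direct expansion and then read off the Boolean specialisation. Write $\widehat{f}(\chi)=\mathbb{E}_{x}[f(x)\overline{\chi(x)}]$ and $\widehat{g}(\chi)=\mathbb{E}_{y}[g(y)\overline{\chi(y)}]$. Then
\[
\sum_{\chi\in\widehat{\G}}\widehat{f}(\chi)\overline{\widehat{g}(\chi)}
=\frac{1}{|\G|^2}\sum_{x,y\in\G}f(x)\overline{g(y)}\sum_{\chi\in\widehat{\G}}\chi(y-x),
\]
using $\overline{\chi(x)}\chi(y)=\chi(y-x)$ (Lemma~\ref{lemma_character_groups}, item 2). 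The argument therefore reduces to the dual orthogonality relation: $\sum_{\chi\in\widehat{\G}}\chi(z)=|\G|$ if $z=0$, and $0$ otherwise.

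To establish the dual orthogonality relation, I use the explicit description of characters in Definition~\ref{defn_character_1_groups}. The map $r\mapsto\chi_r$ identifies $\widehat{\G}$ with $\G$, and $\chi_r(z)=\chi_z(r)$ by the symmetric form of the exponent. Hence $\sum_{\chi}\chi(z)=\sum_{r\in\G}\chi_z(r)$.

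If $z\neq 0$, then $\chi_z$ is a nontrivial character. Indeed, some coordinate $z^{(i)}\neq0$ in $\mathbb{Z}_{p_i^{m_i}}$, so choosing $r=e_i$ gives $\omega_{p_i^{m_i}}^{z^{(i)}}\neq1$. By item 3 of Lemma~\ref{lemma_character_groups}, the sum then vanishes. If $z=0$, every term equals $1$ and the sum is $|\G|$.

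Substituting back, only the diagonal $x=y$ survives, which gives $\frac{1}{|\G|}\sum_x f(x)\overline{g(x)}=\mathbb{E}_x[f(x)\overline{g(x)}]$.

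For the second claim, take $g=f$ with $f$ Boolean. Then $\sum_\chi|\widehat f(\chi)|^2=\mathbb{E}[f^2]=\mathbb{E}[f]\le1$, since $f^2=f$.

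The only step needing care is the dual orthogonality, specifically the symmetry $\chi_r(z)=\chi_z(r)$ and the nontriviality of $\chi_z$ for $z\ne0$. Both follow immediately from the explicit product formula, so I expect no real obstacle.
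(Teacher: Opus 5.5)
The paper gives no proof of this statement to compare with: Parseval is listed in the appendix preliminaries as standard background. Your direct-expansion argument is correct and complete.

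The one step you leave implicit is that $r\mapsto\chi_r$ is a bijection $\G\to\widehat{\G}$. You need this to replace $\sum_{\chi\in\widehat{\G}}$ by $\sum_{r\in\G}$. Surjectivity is the ``equivalently'' clause of Definition~\ref{defn_character_1_groups}. Injectivity follows from your own nontriviality argument: $\chi_r\overline{\chi_{r'}}=\chi_{r-r'}$ by the product formula, so $\chi_r=\chi_{r'}$ forces $\chi_{r-r'}$ to be trivial, hence $r=r'$.

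The other standard route available from the paper's preliminaries is to substitute the inversion formula of Theorem~\ref{Fourier_inversion_groups} for both $f$ and $g$ into $\mathbb{E}_x[f(x)\overline{g(x)}]$. One then needs only orthogonality in $x$: $\mathbb{E}_x[\chi(x)\overline{\psi(x)}]$ equals $1$ if $\chi=\psi$ and $0$ otherwise, which is item~3 of Lemma~\ref{lemma_character_groups} applied to the character $\chi\overline{\psi}$. That route is shorter, but it pushes the completeness of the characters into the inversion theorem, which the paper also does not prove.

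Your route proves that completeness directly, as the dual orthogonality relation, from the explicit symmetric character formula. It therefore rests only on the elementary facts in Lemma~\ref{lemma_character_groups}.

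Your Boolean specialization actually yields the equality $\sum_{\chi}|\widehat{f}(\chi)|^2=\mathbb{E}[f]$, which is slightly stronger than the stated bound.
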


Now let us define the Fourier sparsity $s_f$ of a function $f$ on $\G$. 

\begin{definition}
\label{defi: Fourier spectral norm}
\textbf{(Fourier Spectral Norm)}
    The Fourier Spectral Norm or the Spectral norm of $f : \G \to \mathbb{C}$, denoted by $\|f\|_1$ is defined as the sum of the absolute value of its Fourier coefficients. That is,
    \begin{align*}
        \left\| \widehat{f} \right\|_1 = \sum_{r \in \G} \left| \widehat{f}(\chi_r) \right|.
    \end{align*}
\end{definition}

\begin{definition}
\textbf{(Sparsity and Fourier Support)} 
\begin{itemize}
\item 
        Fourier support $\supp(\widehat{f})$ of a function $f: \G \to \mathbb{C}$ denotes the set $\left\{ \chi \,\mid\, \widehat{f}(\chi) \neq 0\right\}$. 
    \item
        The Fourier sparsity $s_f$ of a function $f: \G \to \mathbb{C}$ is defined to be the number of non-zero Fourier coefficients in the Fourier expansion of $f$ (Theorem~\ref{Fourier_inversion_groups}).  Alternately, Fourier sparsity can be defined as the size of the Fourier support. In this paper, by sparsity of a function, we mean the Fourier sparsity of the function. Moreover, by $s$-sparse function we mean functions with Fourier Sparsity $s$. 
\end{itemize}
\end{definition}

\subsection{Density of a set}
\begin{definition}[Indicator function]\label{defn_indicator}
    The indicator function on a set $A\subseteq \G$ is defined by the following:
    \begin{align*}
        \mathbbm{1}_A(x) = 
        \begin{cases}
            1, & x\in A \\
            0, &\text{otherwise.}
        \end{cases}
    \end{align*}
\end{definition}

Let us now define the density of a set.

\begin{definition}\label{defn_density}
    Let $A$ be a subset of $\G$. Then the density of $A$ in $\G$ is defined in the following way:
    $$\Density (A) = \frac{|A|}{|\G|},$$
    where $|.|$ denotes the size of the set.
\end{definition}

\begin{definition}\label{defn_density_H}
Given arbitrary subsets $A, B \subseteq \G$ of some abelian group $\G$, we denote the following:
\begin{align}
    \Density(A) := \frac{|A|}{|\G|}, 
\end{align}
 and 
 \begin{align}
     \Density_B(A) := \frac{|A \cap B|}{|B|}.
 \end{align}
 \end{definition}


\subsection{Regularity}
Let us fix some notations first. For any arbitrary subset $A \subseteq \G$ and a subgroup $H \leq \G$,  
\begin{definition}[$\epsilon$-regular]\label{defn_regular}
    A function $f: \G \to \mathbb{C}$ is $\epsilon$-regular if 
    \begin{align*}
        \textit{ sup}_{\chi \neq \chi_0} | \widehat{f}(\chi)| \leq \epsilon.
    \end{align*}
    Let $H$ be a normal subgroup of $\G$ and $H'$ be a coset of $H$. A function $f$ restricted to $H'$, $f_{H'}: H' \to \mathbb{C}$ is $\epsilon$-regular if 
     \begin{align*}
        \textit{ sup}_{\chi_0 \neq \chi \in H}| \widehat{f_{H'}}(\chi)| \leq \epsilon,
    \end{align*}
    where $\widehat{f_{H'}}(\chi)$ is defined in Definition~\ref{defn_restricted_function}. 
\end{definition}





\subsection{The Pseudo Inner Product Framework}
We know that since $\mathbb{Z}_2^n$ is also a vector space over the field $\ftwo$, it has a linear algebraic structure. The characters are in one-to-one correspondence with $\ftwo$-linear forms. However, since a general Abelian group is not a vector space, these do not hold here. But one can define something along similar lines using the `pseudo inner product framework'. We borrow this framework from Chakraborty et al.~\cite{chakraborty2024fourier} and Datta et al.~\cite{DBLP:conf/approx/DattaGKPR25}.

An element $r$ of $\G$ is 
of the form $(r^{(1)}, \dots, r^{(T)})$, where $r^{(i)}$ is an element of $\mathbb{Z}_{p_i^{m_i}}$.

\begin{definition}
\label{defi:pseudo-inner product}
    \textbf{(Pseudo inner product)} For $x, r \in \G$ we denote by $*$ the following pseudo inner product. $$r * x := \biggl( \sum_{i=1}^{T} \frac{\lcmG}{p_i^{m_i}} r^{(i)} \cdot x^{(i)} \biggr) \pmod{\lcmG},$$ where $\lcmG = \LCM \{ p_1^{m_1}, \ldots, p_T^{m_T} \}$.
\end{definition}

\begin{observation}\label{obs_chi_r_x}
    $r* x \neq 0 \Rightarrow \chi_r(x)\neq 1$. 
\end{observation}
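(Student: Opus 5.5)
The statement to prove is Observation~\ref{obs_chi_r_x}: if $r*x\neq 0$ then $\chi_r(x)\neq 1$. I would prove the contrapositive form directly, by computing $\chi_r(x)$ in terms of the pseudo inner product. The first step is to rewrite the character from Definition~\ref{defn_character_1_groups} over the common root of unity $\omega_\lcmG$. For each $i\in[T]$ we have $\omega_{p_i^{m_i}}=\omega_\lcmG^{\lcmG/p_i^{m_i}}$, because $p_i^{m_i}$ divides $\lcmG$. This gives
\[
\chi_r(x)=\prod_{i=1}^T \omega_{p_i^{m_i}}^{r^{(i)}x^{(i)}}=\omega_\lcmG^{\sum_{i}\frac{\lcmG}{p_i^{m_i}}r^{(i)}x^{(i)}},
\]
which is exactly the second line of \eqref{eq:character}.

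The one point needing care is well-definedness. The entries $r^{(i)},x^{(i)}$ are only defined modulo $p_i^{m_i}$. Changing a representative changes $r^{(i)}x^{(i)}$ by a multiple of $p_i^{m_i}$, so the $i$-th term of the exponent changes by a multiple of $\lcmG$. Hence the exponent is well defined modulo $\lcmG$. Reducing it modulo $\lcmG$ gives exactly $r*x$ from Definition~\ref{defi:pseudo-inner product}, and therefore $\chi_r(x)=\omega_\lcmG^{\,r*x}$.

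To finish, $\omega_\lcmG$ is a primitive $\lcmG$-th root of unity. So $\omega_\lcmG^{k}=1$ holds if and only if $\lcmG\mid k$. Since $r*x\in\{0,\dots,\lcmG-1\}$, $r*x\neq 0$ means $\lcmG\nmid r*x$, and so $\chi_r(x)\neq 1$.

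The same computation also gives the converse, since $r*x=0$ forces $\chi_r(x)=\omega_\lcmG^0=1$. In fact $\chi_r(x)=1\iff r*x=0$, which is the form used later to identify $V_{0,\gamma_1,\dots,\gamma_d}$ with the zero-radius Bohr set. I see no real obstacle; the only subtle step is the well-definedness of the exponent modulo $\lcmG$ in the second paragraph.
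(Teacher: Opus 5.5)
Your proof is correct and follows the same route as the paper, which states this observation without a separate proof because it is immediate from the identity $\chi_r(x)=\omega_\lcmG^{\,r*x}$ in Definition~\ref{defn_character_1_groups} together with primitivity of $\omega_\lcmG$. One small precision point: $\omega_{p_i^{m_i}}=\omega_\lcmG^{\lcmG/p_i^{m_i}}$ is a normalization convention on the chosen roots of unity (e.g.\ $\omega_n=e^{2\pi i/n}$) that is built into that definition, not a consequence of $p_i^{m_i}\mid\lcmG$ alone, and the observation can fail without this convention.
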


\begin{definition}\label{defn_independence}
    Let $r_1, \ldots, r_k$ be elements in $\G$. We call $r_1, \ldots, r_k$ to be \textbf{pseudo--dependent} if there exists $\lambda_1,\ldots, \lambda_k \in \mathbb{Z}_\lcmG$ with at least one $\lambda_j$ invertible in $\mathbb{Z}_\lcmG$ such that $\sum_{i=1}^k \lambda_i r_i = 0$.

    If there does not exist any such $\lambda_1,\ldots, \lambda_k \in \mathbb{Z}_\lcmG$ with at least one $\lambda_j$ invertible in $\mathbb{Z}_\lcmG$ such that $\sum_{i=1}^k \lambda_i r_i = 0$, then we call $r_1, \ldots, r_k$ to be \textbf{pseudo--independent}. 
\end{definition}

For example, consider the group $\mathbb{Z}_4$. The element $3$ is dependent on the element $1$, but it is independent of the element $2$. Whereas, $2$ is dependent on $3$, as $2 = 2 \cdot 3 (\mod 4)$. So the set $\{2, 3\}$ is dependent.

\begin{definition}[Collapsing or identification of characters]
\label{defn_collapsing_groups}
Let $f:\G \to \pmo$ be a Boolean function. For any two characters $\chi_\alpha, \chi_\beta$ of $\G$ in $\supp(\widehat{f})$, we say $\chi_\alpha$ collapses or identifies with $\chi_\beta$ if $\alpha,\beta$ belong to the same coset $C$ in $\mathcal{C}_{r_1, \ldots, r_k}(f)$, (that is, $(\alpha-\beta)$ belongs to $S_{\{r_1,\ldots,r_k\}}$), where $\mathcal{C}_{r_1, \ldots, r_k}(f)$ is as in Definition~\ref{defn_bucket_complexity_groups}.
\end{definition}

\begin{definition}\label{defn_bucket_complexity_groups}
\textbf{(Bucket complexity)} Let $f:\G \to \{\-1,+1\}$ be a function and let $\Gamma=\{r_1,\ldots,r_t\}$ be a set of elements in $\G$.
Let $\mathcal{C}_{r_1,\cdots,r_t}$ be the set of cosets of $S_{\Gamma}$, that is necessary to partition the Fourier support of $f$ with respect to $S_\Gamma$. That is, 
$$\mathcal{C}_{r_1, \ldots, r_t}(f) = \{r+ S_\Gamma | (r+S_\Gamma) \cap \supp(\widehat{f}) \neq \emptyset, r \in \G\}.$$
The cardinality of the set $\mathcal{C}_{r_1, \ldots, r_t}(f)$ is called the bucket complexity of $f$ with respect to $\Gamma$ and is denoted by $\mathfrak{b}_{\Gamma}(f)$. 
\end{definition}

Now, we partition $\G$ with the help of a normal subgroup.

\begin{definition}\label{defn_normal_subgroup}
\label{defn_cosets_groups}
    Let $r_1,\ldots,r_k \in \G$ such that $r_j= (r_j^{(1)},\ldots, r_j^{(T)})$, where $r_{j}^{(i)} \in \mathbb{Z}_{p_i^{m_i}}$ for all $i \in [T]$ and $j \in [k]$. Let $b= (b_1, \ldots, b_k) \in \mathbb{Z}_\lcmG^k$, where $\lcmG = \LCM\{ p_1^{m_1}, \ldots, p_T^{m_T}\}$. 
    We define the set $V_{b,r_1,\ldots,r_k}$ by
    \begin{align}\label{eqn_cosets_groups}
        V_{b,r_1,\ldots,r_k} = \{ x\in \G : r_j * x = b_j \pmod{\lcmG} \ \forall j \in [k]\}.
    \end{align}
     That is, 
    \begin{align*}
        V_{b,r_1,\ldots,r_k} 
        &= \biggl\{x \in \G : \biggl( \sum_{i=1}^{T} \frac{\lcmG}{p_i^{m_i}} r_j^{(i)} \cdot x^{(i)} \biggr) = b_j \pmod{\lcmG} \ \forall j \in [k] \biggr\}.
    \end{align*}

    If $b_1 = b_2 = \dots = 0$ then we use $V_{0,r_1,\ldots,r_k}$ to denote $V_{b,r_1,\ldots,r_k}$.
\end{definition}

\begin{remark}\label{remark_cosets_groups}
    When $k=1$, then $b \in \mathbb{Z}_\lcmG$, and we have only one element $r \in \G$, so this set becomes $V_{b,r} = \{ r * x =b\pmod{\lcmG}\}$.
\end{remark}

\begin{lemma}\label{lemma_cosets_of_V_0}
    $V_{0,r_1,\ldots,r_k}$ is a normal subgroup of $\G$ and for any $b, r_1, \dots, r_k$ either $V_{b,r_1,\ldots,r_k}$ is a coset of $V_{0,r_1,\ldots,r_k}$ or $V_{b,r_1,\ldots,r_k} = \emptyset$.
\end{lemma}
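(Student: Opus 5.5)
The plan is to identify $V_{0,r_1,\ldots,r_k}$ as the common kernel of a family of group homomorphisms $\G\to\mathbb{Z}_\lcmG$. The coset claim then becomes the standard statement that a level set of a homomorphism is either empty or a coset of its kernel.

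\textbf{The homomorphisms.} For each $j\in[k]$ I will check that the map $\phi_j:\G\to\mathbb{Z}_\lcmG$, $\phi_j(x)=r_j*x$, is a well-defined group homomorphism. Each coordinate $x^{(i)}$ is only defined modulo $p_i^{m_i}$, so I must check that the summand $\frac{\lcmG}{p_i^{m_i}}r_j^{(i)}x^{(i)}$ is well defined modulo $\lcmG$. Replacing $x^{(i)}$ by $x^{(i)}+p_i^{m_i}$ changes that summand by $\lcmG\, r_j^{(i)}\equiv 0\pmod{\lcmG}$; the same holds in the variable $r_j^{(i)}$. Additivity in $x$ is then immediate from distributivity. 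This well-definedness check is the only place where any care is needed, and it is where the weights $\lcmG/p_i^{m_i}$ in Definition~\ref{defi:pseudo-inner product} are used.

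\textbf{Subgroup.} Put $\Phi=(\phi_1,\ldots,\phi_k):\G\to\mathbb{Z}_\lcmG^k$. This is a homomorphism, and $V_{0,r_1,\ldots,r_k}=\ker\Phi$, so it is a subgroup. Since $\G$ is Abelian, every subgroup is normal, and this gives the first claim.

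\textbf{Cosets.} Here $V_{b,r_1,\ldots,r_k}=\Phi^{-1}(b)$. If this fiber is empty we are done. Otherwise fix $x_0$ with $\Phi(x_0)=b$. For any $x\in\G$ we have $\Phi(x)=b$ if and only if $\Phi(x-x_0)=0$, that is, if and only if $x\in x_0+V_{0,r_1,\ldots,r_k}$. Hence $V_{b,r_1,\ldots,r_k}=x_0+V_{0,r_1,\ldots,r_k}$, which is a coset, as required.
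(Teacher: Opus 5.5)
Your proof is correct and follows essentially the same route as the paper, which also uses the linearity of $x\mapsto r_j*x$ to get closure under subtraction and then observes that each $r_j*\cdot$ is constant on cosets of $V_{0,r_1,\ldots,r_k}$. Your fibre argument, $\Phi(x)=b \iff x\in x_0+V_{0,r_1,\ldots,r_k}$, is slightly tighter than the paper's: it shows explicitly that $V_{b,r_1,\ldots,r_k}$ is a single coset rather than just a union of cosets. Your check that the pseudo-inner product is well defined modulo $\lcmG$ also fills in a point the paper takes for granted.
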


\begin{proof}
    Clearly, $0\in V_{0,r_1,\ldots,r_k}$, where $0$ is the identity element of $\G$. Let $x,y \in V_{0,r_1,\ldots,r_k}$. Then, since $r_j * (x-y) = \sum_{i=1}^{T} \frac{\lcmG}{p_i^{m_i}} r_j^{(i)} \cdot x^{(i)} - \sum_{i=1}^{T} \frac{\lcmG}{p_i^{m_i}} r_j^{(i)} \cdot y^{(i)} =0 \pmod{\lcmG}$ for all $j\in [k]$, so $x-y \in V_{0,r_1,\ldots,r_k}$. So $V_{0,r_1,\ldots,r_k}$ is a subgroup of $\G$, and hence a normal subgroup of $\G$ since $\G$ is Abelian.

    Since $V_{0,r_1,\ldots,r_k}$ is a normal subgroup of $\G$ we can now consider its cosets. 
    For each coset $B$ of $V_{0,r_1, \ldots,r_k}$, let $a(B)$ be a fixed coset representative (we choose a coset representative and fix it all through the proof). Then, if $y \in B$, then $y = a(B)+x$, where $x \in V_{0,r_1,\ldots,r_k}$. Now, for each $r_j$, 
    $$r_j * y = r_j * (a(B)+x) = r_j * a(B),$$ 
    since $r_j * x =0$ for all $r_j, \ j \in [k]$. 
    So, $r_j * y$ is fixed for each coset $B$. If we assume $r_j * a(B) = b_j \pmod{\lcmG}$ for each $j \in [k]$, then we have $V_{b,r_1,\ldots,r_k}$ as a coset of $V_{0,r_1,\ldots,r_k}$, where $b=(b_1, \ldots ,b_k)$.

    If for a $b = (b_1, \ldots,b_k) \in \mathbb{Z}_\lcmG^k$, there does not exist any $a(B)$ such that $r_j * a(B) = b_j \ \forall j \in [k]$, $V_{b, r_1, \ldots, r_k} = \emptyset$. 
\end{proof}

\begin{remark}\label{remark_subgroup_implies_normal}
    Note that since $\G$ is an Abelian group, any subgroup of $\G$ is a normal subgroup of $\G$.
\end{remark}

\begin{lemma}\label{lemma_restriction}
Let us consider $r_1,\ldots,r_t \in \G$. Then, for any $b=(b_1,\ldots,b_t) \in \mathbb{Z}_\lcmG^t$, if we fix $r_j*x = b_j \pmod{\lcmG}$ for each $j \in [t]$, we get the following:
\begin{enumerate}
\item The characters $r(C)+r_1,\ldots, r(C)+r_t$ collapse (see Definition~\ref{defn_collapsing_groups}) to the same coset $\mathcal{C}_{r_1,\ldots,r_t}(f)$ of $S_{\{r_1,\ldots,r_t\}}$, where $r(C)$ is the coset representative of the coset $\mathcal{C}_{r_1,\ldots,r_t}(f)$. 

\item This fixing gives us the subfunctions $f|_{V_{b,r_1,\ldots,r_t}}$, where $V_{b,r_1,\ldots,r_t}$ is as defined in Definition~\ref{defn_cosets_groups}, whose characters are $\chi_{r(C)}$, and the coefficients are given by
    \begin{align*}
        \widehat{f|_{V_{b,r_1,\ldots,r_t}}} (\chi_{r(C)}) &= P_C(b_1, \ldots, b_t)
        = \sum_{r\in S_\Gamma} \widehat{f}(r(C)+r) \chi_r(x).
    \end{align*}
\end{enumerate}
\end{lemma}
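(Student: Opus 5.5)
The plan is to expand $f$ in the Fourier basis of $\G$ and regroup the terms according to the cosets of $S_\Gamma$, which I take to be the subgroup of $\G$ generated by $r_1,\ldots,r_t$. By Theorem~\ref{Fourier_inversion_groups},
\[
f(x)=\sum_{\alpha\in\G}\widehat f(\chi_\alpha)\chi_\alpha(x).
\]
Every index $\alpha$ in $\supp(\widehat f)$ lies in exactly one coset $C\in\mathcal C_{r_1,\ldots,r_t}(f)$. It can therefore be written uniquely as $\alpha=r(C)+r$ with $r\in S_\Gamma$. Since characters are homomorphisms in the index, $\chi_{r(C)+r}(x)=\chi_{r(C)}(x)\chi_r(x)$, and so
\[
f(x)=\sum_{C}\chi_{r(C)}(x)\sum_{r\in S_\Gamma}\widehat f(r(C)+r)\chi_r(x).
\]

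The key observation is that the inner factor $\chi_r(x)$ is constant on $V_{b,r_1,\ldots,r_t}$. Write $r=\sum_j\lambda_jr_j$ with $\lambda_j\in\mathbb Z_\lcmG$. By bilinearity of the pseudo inner product (Definition~\ref{defi:pseudo-inner product}) and formula~\eqref{eq:character},
\[
\chi_r(x)=\omega_\lcmG^{\,r*x}=\omega_\lcmG^{\sum_j\lambda_j (r_j*x)}=\omega_\lcmG^{\sum_j\lambda_jb_j}.
\]
This value depends only on $b$. It is also well defined, independent of the chosen representation of $r$: if $\sum_j\lambda_jr_j=\sum_j\lambda'_jr_j$, then both exponents equal $r*x$ computed at the same point $x$. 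Hence on $V_{b,r_1,\ldots,r_t}$ we get $f(x)=\sum_C P_C(b)\chi_{r(C)}(x)$, which is exactly the form claimed in part~2. Part~1 also follows: for each $i$, the character $\chi_{r(C)+r_i}$ agrees with $\chi_{r(C)}$ on $V_{b,r_1,\ldots,r_t}$ up to the constant phase $\omega_\lcmG^{b_i}$. So all the characters $r(C)+r_i$, and more generally all $r(C)+S_\Gamma$, become identified in the sense of Definition~\ref{defn_collapsing_groups}.

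To read off $P_C(b)$ as a genuine Fourier coefficient of the subfunction, I would translate to the subgroup. By Lemma~\ref{lemma_cosets_of_V_0}, $V_{b,r_1,\ldots,r_t}=a+V_{0,r_1,\ldots,r_t}$, and the resulting representative-dependent phase $\chi_{r(C)}(a)$ is absorbed into the coefficient. The remaining task is to show that distinct cosets $C\neq C'$ give distinct characters when restricted to $V_0:=V_{0,r_1,\ldots,r_t}$. Equivalently, the annihilator $V_0^\perp$ must equal $S_\Gamma$. Once this holds, the uniqueness part of Fourier inversion on the subgroup $V_0$ (via the isomorphism $\widehat{V_0}\cong\widehat\G/V_0^\perp$ from the corollary above) identifies each $P_C(b)$ as the coefficient of $\chi_{r(C)}|_{V_0}$.

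I expect this annihilator identity to be the main obstacle. The inclusion $S_\Gamma\subseteq V_0^\perp$ is immediate from the computation above with $b=0$. For the reverse inclusion I would use that $*$ is a nondegenerate pairing $\G\times\G\to\mathbb Z_\lcmG$; this holds because $\chi_r\equiv1$ forces $r=0$, via the isomorphism $\widehat\G\cong\G$. Under this pairing, $V_0$ is exactly the orthogonal complement of $S_\Gamma$, so $|V_0|=|\G|/|S_\Gamma|$. Then $|V_0^\perp|=|\G|/|V_0|=|S_\Gamma|$, and the two sets coincide. Care is needed here because $\mathbb Z_\lcmG$ has zero divisors, so linear algebra is unavailable. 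The counting therefore has to go through character orthogonality, $\frac1{|\G|}\sum_{x}\chi_r(x)=\mathbbm 1[r=0]$, rather than through dimension arguments.
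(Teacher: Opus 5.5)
Your proposal is correct and follows the same route as the paper: expand $f$ in characters, group the support by the cosets $r(C)+S_\Gamma$, factor $\chi_{r(C)+r}=\chi_{r(C)}\chi_r$, and read off $P_C(b)$ as the coefficient of $\chi_{r(C)}$ on $V_{b,r_1,\ldots,r_t}$. The paper simply asserts this last reading-off, whereas you also check two things it leaves implicit: that $\chi_r$ is constant on $V_{b,r_1,\ldots,r_t}$ for $r\in S_\Gamma$, and that $V_0^\perp=S_\Gamma$ (via $|H|\,|H^\perp|=|\G|$, cf.\ Corollary~\ref{lemma_H_perp_new}), which makes distinct cosets restrict to distinct characters so the coefficients are uniquely determined; moreover, the phase $\chi_{r(C)}(a)$ you mention disappears if the coefficient is computed as an average over the coset $V_{b,r_1,\ldots,r_t}$ itself.
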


\begin{proof}[Proof of Lemma~\ref{lemma_restriction}]

    Let $\mathcal{C}_{r_1,\ldots,r_t}(f)$ be the set defined in Definition~\ref{defn_bucket_complexity_groups}.
    We can write the Fourier expansion of $f$ in the following way.

\begin{align}\label{eqn_fixing}
f(x)
&=\sum_{r'\in \G}\widehat{f}(\chi_{r'})\chi_{r'}(x)\nonumber\\
&=\sum_{C\in \mathcal{C}_{r_1,\ldots,r_t}}\ \sum_{r'\in C}\widehat{f}(\chi_{r'})\chi_{r'}(x)\nonumber\\
&=\sum_{C\in \mathcal{C}_{r_1,\ldots,r_t}}\ \sum_{r\in S_{\{r_1,\ldots,r_t\}}}\widehat{f}(\chi_{r(C)+r})\chi_{r(C)+r}(x)\nonumber\\
&=\sum_{C\in \mathcal{C}_{r_1,\ldots,r_t}}\biggl(\sum_{r\in S_{\{r_1,\ldots,r_t\}}}\widehat{f}(\chi_{r(C)+r})\chi_{r}(x)\biggr)\chi_{r(C)}(x) \\
&= \sum_{C \in \mathcal{C}_{r_1,\ldots,r_t}(f)} \bigg ( P_C(\chi_{r_1}(x), \ldots, \chi_{r_t}(x)) \bigg ) \chi_{r(C)}(x),
\end{align}

    where $r(C)$ is the representative of the coset $C\in \mathcal{C}_{r_1,\ldots,r_t}(f)$ and $$P_C(\chi_{r_1}(x), \ldots, \chi_{r_t}(x))=\sum_{r\in S_\Gamma} \widehat{f}(\chi_{r(C)+r}) \chi_r(x).$$ If we fix $r_j * x = b_j \pmod{\lcmG}$ for all $j\in [t]$, where $x \in \G$, then $f$ becomes $f|_{V_{b,r_1,\ldots,r_t}}$, which follows from the definition of $V_{b,r_1,\ldots,r_t}$ and $f|_{V_{b,r_1,\ldots,r_t}}$. Its characters are $\chi_{r(C)}, \ C \in \mathcal{C}_{r_1,\ldots,r_t}(f)$, and the coefficients are given by
    \begin{align*}
        \widehat{f|_{V_{b,r_1,\ldots,r_t}}} (\chi_{r(C)}) &= P_C(b_1, \ldots, b_t)
        = \sum_{r\in S_\Gamma} \widehat{f}(\chi_{r(C)+r}) \chi_r(x).
    \end{align*}
\end{proof}

\begin{definition}\label{defn_codimension}
    \textbf{(Codimension)} Let $V_{b,r_1,\ldots,r_k}$ be as defined in Definition~\ref{defn_cosets_groups}. Then the codimension of $V_{b,r_1,\ldots,r_k}$ is given by 
    \begin{align*}
        \Codim(&V_{b,r_1,\ldots,r_k}) =  \min_{k'} \biggl\{k: \exists r_1',\ldots, r_{k'}' \in \{r_1, \ldots, r_k\} \text{ such that } \\
        &V_{b,r_1,\ldots,r_k} = \{x\in \mathcal{G}: r_j' * x =b_j \pmod{\lcmG} \ \forall j \in [k]\} \biggr\},
    \end{align*}
    where $b =(b_1, \ldots, b_k) \in \mathbb{Z}_\lcmG^k$, and $\lcmG = \LCM\{p_1^{m_1}, \ldots, p_T^{m_T}\}$. 
\end{definition}

\begin{definition}\label{defn_random_coset_structure}
    \textbf{(Random coset structure)} Let us consider $\beta_1, \ldots, \beta_k \in \mathcal{G}$, which are chosen independently and uniformly at random from $\mathcal{G}$. Also let $H= \{\alpha \in \mathcal{G} :  \alpha * \beta_i = 0 \ \forall i \in [k]\}$. So $H$ is a subgroup, hence a normal subgroup of $\mathcal{G}$. Then, we define the sets $C(b)$ by $C(b)= \{\alpha \in \mathcal{G} : \alpha * \beta_i = b_i \ \forall i \in [k]\}$, for all $b=(b_1,\ldots, b_k) \in \mathbb{Z}_{\lcmG}^k$. 
\end{definition}

\begin{remark}\label{remark_random_coset_structure}
    Observe that the nonempty $C(b)$ are cosets of $H$. We will often refer to them as buckets.
\end{remark}

\subsection{The subgroup \texorpdfstring{$H^\perp$}{Hperp}}\label{section_H_perp}

Throughout this section, $\G$ denotes the finite Abelian group $\mathbb{Z}_{p_1^{m_1}}\times \cdots \times \mathbb{Z}_{p_T^{m_T}}$, where $p_i$ are primes for all $i \in [T]$ and not necessarily distinct, $\langle x \rangle$ denotes the subgroup generated by $x$, and $\omega_\lcmG$ denotes a primitive $\lcmG^{th}$ root of unity and $\lcmG = \LCM\{p_1^{m_1},\cdots,p_T^{m_T}\}$. Let $H$ be a subgroup of $\G$.

Let us look at the definition of $H^\perp$.

\begin{definition}\label{defn_1_new}
    \textbf{(Definition 1)} Let $H$ be a subgroup of $\G$. Then $H^\perp$ is the subgroup given by
    \begin{align*}
        H^\perp := \{ x \in \G : x * y =0 \ \forall y \in H \},
    \end{align*}
    where $$x * y := \biggl( \sum_{i=1}^{T} \frac{\lcmG}{p_i^{m_i}} x^{(i)} \cdot y^{(i)} \biggr) \pmod{\lcmG},$$ and $\lcmG = \LCM \{p_1^{m_1} \cdots p_n^{m_n}\}$.
\end{definition}

\begin{remark}
    Observe that $H^\perp$ is also a subgroup of $\G$.
\end{remark}

\begin{claim}\label{claim_H_perp_new}
    For $z \in H^\perp$, 
    \begin{align*}
        \sum_{\beta \in H} \omega_\lcmG^{\beta * z} =|H|,
    \end{align*}
    where $\omega_\lcmG$ is a primitive $\lcmG^{th}$ root of unity of order $\lcmG$, and $|H|$ denotes the order of the subgroup $H$. 
    
    Also, for $z \notin H^\perp$, 
    \begin{align*}
        \sum_{\beta \in H} \omega_\lcmG^{\beta * z} =0.
    \end{align*}
\end{claim}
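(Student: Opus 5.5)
The plan is to reduce both identities to the orthogonality of characters on the finite Abelian group $H$, after identifying $\beta\mapsto \omega_\lcmG^{\beta * z}$ as a character of $\G$ restricted to $H$. Fix $z\in\G$ and define $\psi_z(\beta):=\omega_\lcmG^{\beta * z}$. By Definition~\ref{defi:pseudo-inner product} and equation~\eqref{eq:character}, $\psi_z$ is exactly the character $\chi_z$ of Definition~\ref{defn_character_1_groups}. In particular, it is a homomorphism $\G\to\mathbb{C}^\times$, because the pseudo inner product is additive in $\beta$ modulo $\lcmG$. Hence its restriction $\psi_z|_H$ is a character of $H$, and the sum in question is $\sum_{\beta\in H}\psi_z|_H(\beta)$.

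\textbf{Case $z\in H^\perp$.} By Definition~\ref{defn_1_new}, $z * \beta=0$ for all $\beta\in H$. We need symmetry of the pseudo inner product, $\beta * z = z * \beta$, which is immediate from the defining formula since each term $r^{(i)}x^{(i)}$ is symmetric. Hence every summand equals $\omega_\lcmG^0=1$, and the sum is $|H|$.

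\textbf{Case $z\notin H^\perp$.} By definition there is some $\beta_0\in H$ with $\beta_0 * z\neq 0 \pmod{\lcmG}$. Since $\omega_\lcmG$ is primitive of order $\lcmG$, this gives $\psi_z(\beta_0)\neq 1$, in line with Observation~\ref{obs_chi_r_x}. So $\psi_z|_H$ is a nontrivial character of $H$. I would then apply the standard shift argument, i.e.\ item 3 of Lemma~\ref{lemma_character_groups} with $H$ in place of $\G$. Since $\beta\mapsto\beta+\beta_0$ is a bijection of $H$,
\[
S:=\sum_{\beta\in H}\psi_z(\beta)=\sum_{\beta\in H}\psi_z(\beta+\beta_0)=\psi_z(\beta_0)\,S .
\]
Because $\psi_z(\beta_0)\neq 1$, this forces $S=0$.

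The only point requiring care is the homomorphism property of $\beta\mapsto\omega_\lcmG^{\beta * z}$. One must check that the weights $\lcmG/p_i^{m_i}$ make the exponent well defined modulo $\lcmG$ even though the coordinates $\beta^{(i)}$ live in $\mathbb{Z}_{p_i^{m_i}}$. This holds because changing $\beta^{(i)}$ by a multiple of $p_i^{m_i}$ changes $\frac{\lcmG}{p_i^{m_i}}\beta^{(i)}z^{(i)}$ by a multiple of $\lcmG$. With that, linearity modulo $\lcmG$ is immediate. Everything else is routine.
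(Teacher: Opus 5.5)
Your proposal is correct and follows essentially the same route as the paper: the case $z\in H^\perp$ because every summand equals $1$, and the case $z\notin H^\perp$ by the translation $\beta\mapsto\beta+\beta_0$ inside $H$, which gives $S=\omega_\lcmG^{\beta_0 * z}S$ with $\omega_\lcmG^{\beta_0 * z}\neq 1$. The only difference is that the paper writes the summand as $\chi_\beta(z)$ and uses the symmetry $\beta * z = z * \beta$ and well-definedness modulo $\lcmG$ implicitly, whereas you check both explicitly.
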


\begin{proof}
    \begin{description}
        \item [Case 1: When $z\in H^\perp$.] Then $z * x =0$ for all $x \in H$. Hence 
    \begin{align*}
        \sum_{\beta \in H} \omega_\lcmG^{\beta * z} =|H|.
    \end{align*}

        \item [Case 2: When $z\notin H^\perp$.] Let $\sum_{\beta \in H} \chi_\beta (z) =A$. Since $z \notin H^\perp$, so, by Definition~\ref{defn_1_new}, there exists $\gamma \in H$ such that $\gamma * z \neq 0$, that is, $\chi_\gamma(z) \neq 1$. (see Observation~\ref{obs_chi_r_x})
        Then, 
        \begin{align*}
            \chi_\gamma(z) \times A &= \chi_\gamma(z) \sum_{\beta \in H} \chi_\beta(z) \\
            &= \sum_{\beta \in H} \chi_{\beta+\gamma}(z) \\
            &= \sum_{\gamma' \in H} \chi_{\gamma'}(z), &\gamma' = \beta+\gamma \\
            &= A,
        \end{align*}
        which implies that 
        \begin{align*}
            A (\chi_\gamma(z)-1) =0 \Rightarrow A=0,
        \end{align*}
        since $\chi_\gamma(z) \neq 1$.
    \end{description}
\end{proof}

We need the following isomorphism result:

\begin{lemma}\label{lemma_iso_H_perp_G/H}
    $H^\perp$ is isomorphic to the quotient group $\G/H$.
\end{lemma}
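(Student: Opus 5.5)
We need to prove Lemma~\ref{lemma_iso_H_perp_G/H}, that $H^\perp\cong \G/H$. The plan is to build an explicit surjective homomorphism from $H^\perp$ onto $\widehat{\G/H}$ and then use $\widehat{K}\cong K$ for finite Abelian $K$.

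First, each $z\in\G$ determines the character $\chi_z(x)=\omega_\lcmG^{z*x}$, and the map $z\mapsto \chi_z$ is the isomorphism $\G\cong\widehat{\G}$ recorded earlier. Indeed, by Definition~\ref{defn_character_1_groups} every character has this form, and the map is a homomorphism because $*$ is bilinear modulo $\lcmG$. Under this map, $z\in H^\perp$ holds exactly when $z*y=0$ for all $y\in H$. The condition $z*y=0$ is equivalent to $\chi_z(y)=1$: the forward direction is immediate, and the converse holds because $\omega_\lcmG$ is primitive, so $\omega_\lcmG^{k}=1$ forces $k\equiv 0 \pmod{\lcmG}$. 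Hence the isomorphism carries $H^\perp$ exactly onto the annihilator $\{\chi\in\widehat{\G}:\chi|_H\equiv 1\}$.

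Second, I identify this annihilator with $\widehat{\G/H}$. A character $\chi$ that is trivial on $H$ factors through the projection $\pi:\G\to\G/H$, giving a character $\bar\chi$ of $\G/H$. Conversely, any character $\psi$ of $\G/H$ pulls back to $\psi\circ\pi$, which is a character of $\G$ trivial on $H$. These two constructions are mutually inverse homomorphisms.

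Third, $\widehat{\G/H}\cong\G/H$, because $\G/H$ is a finite Abelian group and the stated theorem $\widehat{K}\cong K$ applies to it. Composing the three isomorphisms gives $H^\perp\cong\G/H$.

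The main point to check is that the parametrisation $z\mapsto\chi_z$ really is a bijection, so that the annihilator corresponds exactly to $H^\perp$. Injectivity is where primitivity enters: if $z*x=0$ for all $x$, then taking $x$ to be the standard generator of the $i$-th coordinate gives $\frac{\lcmG}{p_i^{m_i}}z^{(i)}\equiv 0\pmod{\lcmG}$, hence $z^{(i)}\equiv0 \pmod{p_i^{m_i}}$. Once injectivity holds, surjectivity follows because $|\G|=|\widehat{\G}|$.

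As a cross-check, Claim~\ref{claim_H_perp_new} with orthogonality gives the counting identity $\sum_{z\in\G}\sum_{\beta\in H}\omega_\lcmG^{\beta*z}=|H^\perp|\,|H|$. The same double sum also equals $|\G|$, because only $\beta=0$ survives when summing over $z$. Therefore $|H^\perp|=|\G|/|H|$, consistent with the isomorphism.
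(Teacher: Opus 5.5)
Your proof is correct and follows the same route as the paper. Both arguments identify $H^\perp$ with the annihilator of $H$ in $\widehat{\G}$ via $z\mapsto\chi_z$, identify that annihilator with $\widehat{\G/H}$ by pulling back along the quotient map, and conclude with the self-duality of the finite Abelian group $\G/H$. You are more careful than the paper's write-up in checking that $z\mapsto\chi_z$ is a bijection carrying $H^\perp$ exactly onto the annihilator, which the paper only asserts.
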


\color{black}
\begin{proof}
    We will show that $\widehat{H^\perp}$ is isomorphic to $\widehat{\G/H}$, which implies that $H^\perp$ is isomorphic to the quotient group $\G/H$, as $G \equiv \widehat{G}$ for any group $G$.

    The set of characters of $H^\perp$ is given by $$\Ann_{\G}(H) = \{\chi \in \widehat{\G} : \chi(y) = 1 \ \forall y \in H\},$$ where $\Ann_{\G}(H)$ is known as the annihilator of the subgroup $H$ in $\G$. From Definition~\ref{defn_1_new}, observe that for $x \in \G$, $\chi_x(y) = \omega_\lcmG^{x*y} = 1$  if and only if $x* y =0 \Leftrightarrow x \in H^\perp$.

    Let us define a group homomorphism $\mathcal{F} : \widehat{\G/H} \to \Ann_{\G}(H)$ by $\mathcal{F}(\zeta) = \zeta \circ q$, where $q: \G \to \G/H$ is the quotient is the quotient group homomorphism defined by $q(r)=r+H$. 
    \begin{itemize}
        \item \textbf{($\mathcal{F}$ is injective)} Let $\zeta \in \widehat{\G/H}$ such that $\zeta \circ q = \Tilde{0}$, where $\Tilde{0}=(0, \ldots, 0) [T \text{ times}]$ is the identity element of $\G$. So, $\zeta(r + H) = \zeta \circ q(r) = 1$ for all $r \in \G$, which implies $\zeta$ is the identity element of $\G/H$. Therefore, $\mathcal{F}$ is injective.

        \item \textbf{($\mathcal{F}$ is surjective)} Let $\psi \in \Ann_{\G}(H)$. Let $\zeta : \G/H \to \mathbb{C}$ by $\zeta(r+H) = \psi(r)$ for all $r \in \G$. Since $\chi_{r+H}(x) = \omega_\lcmG^{r *x + H*x}$, so any character $\chi_{r+H}$ of $\G$ is a character of $\G/H$ if $H * x=0$ for all $x\in \G$ (as then, the value of $\omega_\lcmG^{r *x + H*x}$ will be determined only by the coset representatives). Since $\psi(r+H) = \psi(r)\psi(H) = \psi(r)$ and $H$ is the identity element of $G/H$, so $\zeta$ is a character of $\G/H$. Also, $\psi = \zeta \circ q$. Therefore, $\mathcal{F}(\zeta) = \zeta \circ q =\psi$. Hence, $\mathcal{F}$ is surjective.
    \end{itemize}

    Therefore, $\mathcal{F}$ is an isomorphism, which implies that $H^\perp$ is isomorphic to the quotient group $\G/H$.
\end{proof}

\begin{corollary}\label{lemma_H_perp_new}
    $|H| \times |H^\perp| = |\G|$.
\end{corollary}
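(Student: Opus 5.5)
The corollary is a counting consequence of Lemma~\ref{lemma_iso_H_perp_G/H}, which gives $H^\perp \cong \G/H$. Isomorphic finite groups have the same order, so
\[
|H^\perp| = |\G/H|.
\]
By Lagrange's theorem applied to the subgroup $H \le \G$, the cosets of $H$ partition $\G$ into classes of size $|H|$. Hence
\[
|\G/H| = \frac{|\G|}{|H|}.
\]
Substituting and multiplying by $|H|$ gives $|H|\cdot|H^\perp| = |\G|$.

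The only step needing care is the quoted lemma itself. Its proof identifies the characters of $H^\perp$ with the annihilator $\Ann_{\G}(H)$ and then uses $\G \cong \widehat{\G}$.

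If one wants a derivation independent of that identification, the plan is to count the annihilator directly.
\begin{itemize}
\item Each $x\in\G$ gives the character $\chi_x(y) = \omega_\lcmG^{x*y}$, and the map $x \mapsto \chi_x$ is a bijection $\G \to \widehat{\G}$. This holds because the pseudo inner product of Definition~\ref{defi:pseudo-inner product} is exactly the exponent in Definition~\ref{defn_character_1_groups}.
\item The condition $x \in H^\perp$, i.e. $x*y=0$ for all $y \in H$, holds if and only if $\chi_x$ is trivial on $H$. Hence $|H^\perp| = |\Ann_{\G}(H)|$.
\item By the earlier lemma that $\mathrm{res}:\widehat{\G}\to\widehat{H}$ is surjective with kernel $H^\perp$ (the annihilator), we get $|\Ann_{\G}(H)| = |\widehat{\G}|/|\widehat{H}| = |\G|/|H|$, using $|\widehat{K}|=|K|$ for finite abelian $K$.
\end{itemize}

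A second independent check uses Claim~\ref{claim_H_perp_new}. Summing $\sum_{\beta\in H}\omega_\lcmG^{\beta*z}$ over all $z\in\G$ in two ways gives
\[
|H|\cdot|H^\perp| = \sum_{\beta\in H}\sum_{z\in\G}\omega_\lcmG^{\beta*z}.
\]
The inner sum on the right equals $|\G|$ for $\beta=0$ and vanishes otherwise, by character orthogonality (Lemma~\ref{lemma_character_groups}), since $\beta \ne 0$ gives a nontrivial character $\chi_\beta$. So the right side is $|\G|$.

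I expect no real obstacle. The only subtle point is confirming that $x \mapsto \chi_x$ is injective, so that $\chi_\beta$ is nontrivial for $\beta\neq 0$. Injectivity follows because $\G\cong\widehat{\G}$ and the map is a surjective homomorphism between groups of equal size.
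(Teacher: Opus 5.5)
Your main argument is exactly the paper's: the isomorphism $H^\perp\cong\G/H$ from Lemma~\ref{lemma_iso_H_perp_G/H}, combined with Lagrange's theorem $|\G/H|=|\G|/|H|$. Your supplementary double count of $\sum_{z\in\G}\sum_{\beta\in H}\omega_\lcmG^{\beta*z}$, using Claim~\ref{claim_H_perp_new} and character orthogonality, is also correct, and it gives a self-contained check that does not depend on the isomorphism lemma.
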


\begin{proof}
    Follows from the fact that $|H^\perp|= \frac{|\G|}{|H|}$, since $H^\perp$ is isomorphic to the quotient group $\G/H$ by Lemma~\ref{lemma_iso_H_perp_G/H}, and $|\G/H| = \frac{|\G|}{|H|}$ by Lagrange's theorem.
\end{proof}

\subsection{Character Decision Trees}\label{sec_character_decision_tree}

For a general finite Abelian group $\G$, we use character decision trees instead of parity decision trees.

\begin{definition}\label{defn_char_tree}
    \textbf{(Character decision tree)} Let $\G$ be an Abelian group. A character decision tree is a rooted tree $\mathcal{T}$. Each internal node $v$ of $\mathcal{T}$ is labelled by a non-trivial character $r\in \G$. $v$ has exactly as many children as the size of the range of $r$, and each edge connecting $v$ to a child of $v$ is labelled by a distinct element of the range of $r$. In particular, each internal node has at most $|\G|$ children. Each leaf is labelled by $-1$ or $+1$.

    Given an input $x \in \G$, $\mathcal{T}$ can be simulated on $x$ as follows. Start at the root. At each step, if the current node $v$ is a leaf, output its label. Else, evaluate the character $r$ that labels $v$ on $x$. We often refer to doing this as querying $r$ on $x$, and obtaining an answer to the query. Then, navigate to a child of $v$ by traversing the edge connecting $v$ to a child of $v$ that is labelled by $r(x)$, and continue. Let $\mathcal{T}(x)$ denote the value that the above simulation outputs. $\mathcal{T}$ is said to compute a function $f$ if for each $x \in \G$, $\mathcal{T}(x)=f(x)$.

    $\mathcal{T}$ is said to be non-adaptive if all its leaves have the same depth, and all its internal nodes at the same depth are labelled by the same character. In other words, the character queried at a node is independent of the answers to the previous queries. Alternatively, a non-adaptive decision tree of depth $k$ may be thought of as querying $k$ characters simultaneously, and producing a value in $\{-1, +1\}$ as  an output based on the answers to the queries. A character decision tree that is not non-adaptive is called an adaptive character decision tree.
\end{definition}

Since any Boolean valued function $f:\G \to\pmo$ can be written as a linear combination of the characters of $\G$, one can evaluate the function $f$ with the help of its character decision tree. Let $x$ be an element in $\G$. To compute the function $f$, we start from the root (marked with a character $r$) of the character decision tree and compute $r * x$ in the first step, where $r * x = \biggl( \sum_{i=1}^{T} \frac{\lcmG}{p_i^{n_i}} r^{(i)} \cdot x^{(i)} \biggr) \pmod{\lcmG},$ and $\lcmG = \LCM \{ p_1^{n_1}, \ldots, p_T^{n_T} \}$. Depending on the value, we move to an outgoing vertex in the next step and compute $r' * x$, where $r'$ is the character that determines that specific node. We continue the process until we reach a leaf. The value at that leaf determines the value of $f$ at $x$.

\begin{definition}\label{def:decisiontreecomplexity}
    \textbf{(Character decision tree complexity and non-adaptive character decision tree complexity)} The character decision tree complexity of a Boolean-valued function $f:\G \to \pmo$ is defined as the smallest $d$ such that there is a  character decision tree of depth $d$ that computes $f$. The non-adaptive character decision tree complexity of $f$ is defined as the smallest $d$ such that there is a  non-adaptive character decision tree of depth $d$ that computes $f$.
\end{definition}

\begin{definition}\label{defn_char_decision_size}
    Let $f:\G \to \{-1,+1\}$ be a Boolean valued function and $\mathcal{T}$ be its character decision tree. Then the size of $\mathcal{T}$, denoted by $\size_{\oplus \lcmG} f$, is the total number of leaves in the tree.
\end{definition}

\subsection{Concentration inequalities}

\begin{lemma}[Hoeffding's Inequality] \label{Hoeffding}
Let $X_1, \cdots, X_k$ be real independent random variables, each taking value in [-1, 1]. Then 
\[\Pr \left[\left|\sum_{i=1}^k X_i - \mathbb{E}\sum_{i=1}^k X_i\right| \geq \epsilon\right] \le 2\exp\left(-\frac{\epsilon^2}{2k}\right). 
\]
\end{lemma}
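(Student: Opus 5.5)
The plan is the standard exponential-moment (Chernoff) method. Write $S=\sum_{i=1}^k X_i$ and $Y_i=X_i-\mathbb{E}X_i$. Each $Y_i$ is mean zero and takes values in an interval $[a_i,b_i]=[-1-\mathbb{E}X_i,\,1-\mathbb{E}X_i]$ of length exactly $2$. It suffices to prove the one-sided bound
\[
\Pr\Bigl[\textstyle\sum_i Y_i\ge \epsilon\Bigr]\le \exp\bigl(-\epsilon^2/(2k)\bigr).
\]
The lower tail $\Pr[\sum_i Y_i\le-\epsilon]$ then follows by applying the same bound to the variables $-X_i$, which also lie in $[-1,1]$. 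A union bound over the two tails gives the factor $2$.

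\textbf{The key ingredient is Hoeffding's lemma:} if $Y$ is mean zero and $Y\in[a,b]$, then for every $\lambda\in\R$,
\[
\mathbb{E}\,e^{\lambda Y}\le \exp\bigl(\lambda^2(b-a)^2/8\bigr).
\]
With $b-a=2$ this reads $\mathbb{E}e^{\lambda Y_i}\le e^{\lambda^2/2}$. I expect this to be the only non-routine step, and I would prove it as follows.
\begin{enumerate}
\item Convexity of $y\mapsto e^{\lambda y}$ on $[a,b]$ gives
\[
e^{\lambda y}\le \tfrac{b-y}{b-a}e^{\lambda a}+\tfrac{y-a}{b-a}e^{\lambda b}.
\]
\item Taking expectations and using $\mathbb{E}Y=0$ gives $\mathbb{E}e^{\lambda Y}\le e^{\varphi(u)}$, where $u=\lambda(b-a)$, $p=-a/(b-a)\in[0,1]$, and $\varphi(u)=-pu+\log(1-p+pe^u)$.
\item One checks that $\varphi(0)=0$ and $\varphi'(0)=0$.
\item The second derivative is $\varphi''(u)=q(1-q)$, where $q=pe^u/(1-p+pe^u)\in[0,1]$. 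Hence $\varphi''(u)\le 1/4$.
\item Taylor's theorem with remainder then yields $\varphi(u)\le u^2/8$, which is the claimed bound.
\end{enumerate}

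\textbf{With the lemma in hand, the rest is mechanical.} For $\lambda>0$, Markov's inequality applied to $e^{\lambda\sum Y_i}$, together with independence of the $Y_i$, gives
\[
\Pr\Bigl[\textstyle\sum_i Y_i\ge\epsilon\Bigr]\le e^{-\lambda\epsilon}\prod_{i=1}^k\mathbb{E}e^{\lambda Y_i}\le \exp\bigl(-\lambda\epsilon+k\lambda^2/2\bigr).
\]
Choosing $\lambda=\epsilon/k$ minimizes the exponent and gives $\exp(-\epsilon^2/(2k))$. This is exactly the stated bound, so no constants are lost beyond those in the statement.
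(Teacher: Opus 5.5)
Your proof is correct and complete. The paper does not prove this lemma; it quotes Hoeffding's inequality as a standard tool in its preliminaries, so there is no paper argument to compare against. Combining Hoeffding's lemma for mean-zero variables on an interval of length $2$ (giving $\mathbb{E}e^{\lambda Y_i}\le e^{\lambda^2/2}$) with Markov's inequality and the choice $\lambda=\epsilon/k$ recovers exactly the stated bound $\exp(-\epsilon^2/(2k))$ per tail. The only thing worth making explicit is that the statement implicitly assumes $\epsilon>0$, which your choice $\lambda=\epsilon/k>0$ relies on.
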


\subsection{\texorpdfstring{$\wt_2$}{wt2} estimation}\label{sec_wt_2_estimation}

In this section, we estimate $\wt_2$ of a bucket (see Definition~\ref{defn_random_coset_structure} and Remark~\ref{remark_random_coset_structure}).

\begin{definition}\label{defn_wt_2}
    Let $f:\G \to \{0,1\}$ be a Boolean valued function. Then $\wt_2$ of a bucket $C$ is defined in the following way.
    \begin{align*}
        \wt_2(C) = \sum_{\beta\in \mathcal{C}} |\widehat{f}(\chi_\beta)|^2.
    \end{align*}
\end{definition}

Now let us define the projection operator 
${P}_{r+H}$, $r \in \G$ on a coset $r+H$ of $H$ by
\begin{align*}
    \widehat{P_{r+H}}(f) =
    \begin{cases}
        \widehat{f}(\chi_\gamma) & \gamma \in r+H \\
        0 & \text{otherwise.}
    \end{cases}
\end{align*}

\begin{claim}\label{claim_projection_expectation_new}
\begin{align*}
     P_{r+H} f(x) = \mathbb{E}_{z \in H^\perp} [f(x-z)\chi_r(z)].
\end{align*}
\end{claim}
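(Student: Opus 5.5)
We prove the identity in Claim~\ref{claim_projection_expectation_new} by expanding $f$ in its Fourier series, Theorem~\ref{Fourier_inversion_groups}, and using the orthogonality relation for $H$ in Claim~\ref{claim_H_perp_new}, with the roles of $H$ and $H^\perp$ interchanged. Concretely, write $f(x-z)=\sum_{\gamma\in\G}\widehat f(\chi_\gamma)\chi_\gamma(x-z)$. Using $\chi_\gamma(x-z)=\chi_\gamma(x)\overline{\chi_\gamma(z)}$ from Lemma~\ref{lemma_character_groups}, and the symmetry of the pseudo inner product, which gives $\chi_\gamma(z)=\omega_\lcmG^{\gamma*z}$, we obtain
\[
\mathbb{E}_{z\in H^\perp}\bigl[f(x-z)\chi_r(z)\bigr]
=\sum_{\gamma\in\G}\widehat f(\chi_\gamma)\chi_\gamma(x)\,\mathbb{E}_{z\in H^\perp}\,\omega_\lcmG^{(r-\gamma)*z}.
\]
After this step, everything reduces to evaluating the inner average, which should equal $\mathbbm{1}[\gamma\in r+H]$.

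\textbf{Main step.} We apply Claim~\ref{claim_H_perp_new} to the subgroup $H^\perp$ in place of $H$. This gives that the inner sum equals $|H^\perp|$ if $r-\gamma\in(H^\perp)^\perp$, and $0$ otherwise. So the crux is the double-annihilator identity $(H^\perp)^\perp=H$.

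\textbf{Double annihilator.} The containment $H\subseteq (H^\perp)^\perp$ is immediate from the definition, since $*$ is symmetric. For equality, we apply Corollary~\ref{lemma_H_perp_new} twice: first $|H^\perp|=|\G|/|H|$, and then $|(H^\perp)^\perp|=|\G|/|H^\perp|=|H|$. Cardinalities then force $(H^\perp)^\perp=H$. We expect this to be the only genuinely non-formal point. It relies on the pseudo inner product being nondegenerate, so that $x\mapsto\chi_x$ is an isomorphism $\G\to\widehat\G$. That holds because the weights $\lcmG/p_i^{m_i}$ make each coordinate pairing a perfect pairing of $\mathbb{Z}_{p_i^{m_i}}$, and this is exactly what Lemma~\ref{lemma_iso_H_perp_G/H} already uses.

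\textbf{Conclusion.} Substituting back, the average kills every $\gamma\notin r+H$ and leaves $\widehat f(\chi_\gamma)$ for $\gamma\in r+H$. This gives
\[
\mathbb{E}_{z\in H^\perp}\bigl[f(x-z)\chi_r(z)\bigr]=\sum_{\gamma\in r+H}\widehat f(\chi_\gamma)\chi_\gamma(x),
\]
which is the function whose Fourier transform is the projection $\widehat{P_{r+H}f}$ as defined. Finally, the sign convention should be checked: the factor $\chi_r(z)$ pairs with $\overline{\chi_\gamma(z)}$ to produce $\omega_\lcmG^{(r-\gamma)*z}$, which is what selects the coset $r+H$ and not $-r+H$.
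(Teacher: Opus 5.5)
Your proof is correct, but it runs in the opposite direction from the paper's. The paper starts from the Fourier side, $P_{r+H}f(x)=\sum_{\beta\in H}\widehat f(\chi_{r+\beta})\chi_{r+\beta}(x)$, and proceeds as follows:
\begin{itemize}
\item It expands each coefficient as an average over $y\in\G$ and swaps the sums.
\item It applies Claim~\ref{claim_H_perp_new} to $H$ itself: $\sum_{\beta\in H}\chi_\beta(x-y)$ equals $|H|$ if $x-y\in H^\perp$ and $0$ otherwise.
\item The substitution $z=x-y$ and the single identity $|H|\,|H^\perp|=|\G|$ from Corollary~\ref{lemma_H_perp_new} then fix the normalization.
\end{itemize}
Because the orthogonality sum runs over $H$, the indicator that appears is that of $H^\perp$, which is exactly the averaging set on the right-hand side. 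So the double annihilator never enters.

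You instead start from the physical side and average over $H^\perp$. Orthogonality then produces the indicator of $(H^\perp)^\perp$, and you have to pay for the extra fact $(H^\perp)^\perp=H$. Your proof of it is sound: the easy inclusion follows from symmetry of $*$, and Corollary~\ref{lemma_H_perp_new} applied to both $H$ and $H^\perp$ gives the cardinality match. As you note, this rests on the nondegeneracy of $*$.

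In exchange, your normalization is automatic, since the factor $|H^\perp|$ from orthogonality cancels the average directly. You also avoid the paper's change of variables; its displayed step writes the surviving sum as $\sum_{y\in H^\perp}$ when it should be over those $y$ with $x-y\in H^\perp$. The paper's route is slightly more economical in structural input. Yours makes it transparent that the right-hand side has Fourier support exactly $r+H$ with the coefficients of $f$ unchanged.
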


\begin{proof}
\begin{align*}
    {P}_{r+H} f(x)
    &= \sum_{\gamma \in r+H} \hat{f}(\gamma) \chi_\gamma(x) & \\
    &= \sum_{\beta \in H} \hat{f}(r+\beta) \chi_{r+\beta} (x) &\text{ where } \gamma = r+\beta, \beta \in H \\
    &= \sum_{\beta \in H} \hat{f}(r+\beta) \chi_r(x) \chi_\beta(x) \\
    &= \chi_r(x) \sum_{\beta \in H} \hat{f}(r+\beta) \chi_\beta(x) \\
    &= \chi_r(x) \sum_{\beta\in H} \chi_\beta (x) \frac{1}{|\G|} \sum_{y\in \G} f(y) \overline{\chi_{r+\beta} (y)} \\
    &= \frac{1}{|\G|} \chi_r(x) \sum_{y\in \G} f(y) \sum_{\beta \in H} \chi_\beta(x) \chi_{-r-\beta} (y) \\
    &= \frac{1}{|\G|} \chi_r(x) \sum_{y\in \G} f(y) \sum_{\beta \in H} \chi_\beta(x) \chi_{r+\beta} (-y) \\
    &= \frac{1}{|\G|} \chi_r(x) \sum_{y\in \G} f(y) \chi_r(-y) \sum_{\beta \in H} \chi_\beta (x-y) \\
    &= \frac{|H|}{|\G|} \sum_{y \in H^\perp} f(y) \chi_r (x-y) &\text{by Claim} ~\ref{claim_H_perp_new} \\
    &= \frac{1}{|H^\perp|} \sum_{z \in H^\perp} f(x-z) \chi_r (z) \\
    &= \mathbb{E}_{z \in H^\perp} [f(x-z)\chi_r(z)],
\end{align*}
where $z = x-y$.
\end{proof}

Now let us estimate the weights of the buckets at the $k^{th}$ step.

\begin{lemma}\label{lemma_wt_of_bucket_new}
    The weight $\wt_2$ of a bucket $\mathcal{C}_{\alpha}$, where $\alpha$ is the dominating character of the bucket, and $\mathcal{C}_{\alpha}$ corresponds to the coset $r+H$, is given by
    \begin{align*}
        \wt_2 \bigl(\mathcal{C}_{\alpha}\bigr) = \mathbb{E}_{x \in \G, z \in H^\perp} \biggl[ f(x) f(x+z) \chi_r(z) \biggr].
    \end{align*}
\end{lemma}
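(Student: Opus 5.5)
The plan is to identify $\wt_2(\mathcal{C}_\alpha)$ with the squared $\ell_2$-norm of the projection $P_{r+H}f$. Then rewrite that norm as an inner product with $f$ itself, and plug in the physical-space formula for $P_{r+H}f$ from Claim~\ref{claim_projection_expectation_new}.

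\textbf{Spectral side.} Since the bucket $\mathcal{C}_\alpha$ corresponds to the coset $r+H$ of character labels,
\[
\wt_2(\mathcal{C}_\alpha)=\sum_{\gamma\in r+H}|\widehat f(\chi_\gamma)|^2 .
\]
By definition, $\widehat{P_{r+H}f}$ agrees with $\widehat f$ on $r+H$ and vanishes elsewhere. Parseval (Theorem~\ref{theorem_Parseval_groups}), applied to the pair $(P_{r+H}f,\,f)$, then gives
\[
\mathbb{E}_{x\in\G}\bigl[P_{r+H}f(x)\,\overline{f(x)}\bigr]=\sum_{\gamma\in r+H}\widehat f(\chi_\gamma)\overline{\widehat f(\chi_\gamma)}=\wt_2(\mathcal{C}_\alpha).
\]
In particular, this quantity is real and nonnegative. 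Since $f$ is $\{0,1\}$-valued, $\overline{f(x)}=f(x)$.

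\textbf{Physical side.} Substituting Claim~\ref{claim_projection_expectation_new} gives
\[
\wt_2(\mathcal{C}_\alpha)=\mathbb{E}_{x\in\G,\,z\in H^\perp}\bigl[f(x)f(x-z)\chi_r(z)\bigr].
\]
It remains to turn $f(x-z)\chi_r(z)$ into $f(x+z)\chi_r(z)$. First take complex conjugates; this is legitimate because the left side is real, and it replaces $\chi_r(z)$ by $\overline{\chi_r(z)}=\chi_r(-z)$ (Lemma~\ref{lemma_character_groups}). Next, reindex $z\mapsto -z$. This is a bijection of $H^\perp$ because $H^\perp$ is a subgroup, so the uniform average over $z$ is unchanged. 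The result is exactly $\mathbb{E}_{x,z}[f(x)f(x+z)\chi_r(z)]$, as claimed. Alternatively, one can substitute $x\mapsto x+z$ and use the realness in the same way.

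\textbf{Main obstacle.} The only delicate point is bookkeeping of signs and conjugations. Claim~\ref{claim_projection_expectation_new} is stated with $f(x-z)$, while the lemma has $f(x+z)$. I would resolve this through the realness of $\wt_2$ together with the symmetry $z\mapsto -z$ of $H^\perp$, as above. I would also double-check that the coset of characters appearing in the projection operator is indexed consistently with the bucket $\mathcal{C}_\alpha$ through the pseudo inner product, i.e.\ that $\chi_\gamma(x)=\omega_\lcmG^{\gamma*x}$. This consistency is what makes the claim's use of $H^\perp$ (via Claim~\ref{claim_H_perp_new}) line up with the bucket structure.
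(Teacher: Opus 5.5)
Your proof is correct and follows essentially the paper's route: Parseval to move to physical space, then Claim~\ref{claim_projection_expectation_new}. The only difference is that you pair $P_{r+H}f$ with $f$ instead of with itself, which avoids the paper's second substitution of the claim and its $(z_1,z_2)\mapsto z$ change of variables. One small correction: the realness argument is unnecessary, since the shift $x\mapsto x+z$ alone turns $\mathbb{E}_{x,z}[f(x)f(x-z)\chi_r(z)]$ into the stated expression.
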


\begin{proof}
    \begin{align*}\wt_2 \bigl(\mathcal{C}_{\alpha}\bigr) 
        &= \sum_{\gamma\in r+H} |\widehat{f}(\chi_\gamma)|^2 \\
        &= \sum_{\gamma\in \G} |\widehat{{P}_{r+H}}(\chi_\gamma)|^2 \\
        &= \mathbb{E}_{x' \in \G} \biggl[ {P}_{r+H}f(x') \overline{{P}_{r+H}f(x')} \biggr] &\text{by Parseval} \\
        &= \mathbb{E}_{x' \in \G, z_1,z_2 \in H^\perp} [ \chi_r(z_1)f(x'-z_1) \overline{\chi_r(z_2)} f(x'-z_2)] &\text{by Claim} ~\ref{claim_projection_expectation_new} \\
        &= \mathbb{E}_{x' \in \G, z_1,z_2 \in H^\perp} \biggl[ f(x'-z_1) f(x'-z_2) \chi_r(z_1-z_2) \biggr] \\
        &= \mathbb{E}_{x \in \G, z \in H^\perp} \biggl[ f(x) f(x+z) \chi_r(z) \biggr],
    \end{align*}
    taking $z=z_1-z_2$ and $x=x'-z_1$.
\end{proof}

\begin{lemma}\label{lemma_wt_estimate_new}
    Given query access to a Boolean valued function $f:\mathcal{G} \to \{0,1\}$, one can estimate the weight $\wt_2$ of each bucket $\mathcal{C}_{\alpha}$ with accuracy $\pm \varepsilon$ except with probability $\delta'$ using $O(4 \ln(\frac{4}{\delta}) /\varepsilon^2)$ many samples.
\end{lemma}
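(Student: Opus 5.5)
The estimator I would use is the one behind Lemma~\ref{lemma_wt_of_bucket_new}, which expresses $\wt_2(\mathcal{C}_\alpha)$ as a single expectation:
\[
\wt_2(\mathcal{C}_\alpha)=\mathbb{E}_{x\in\G,\,z\in H^\perp}\bigl[f(x)f(x+z)\chi_r(z)\bigr].
\]
The proof is then a sampling argument. Draw independent pairs $(x_i,z_i)$ with $x_i$ uniform in $\G$ and $z_i$ uniform in $H^\perp$. Each pair costs two queries, $f(x_i)$ and $f(x_i+z_i)$. Form $Y_i=f(x_i)f(x_i+z_i)\chi_r(z_i)$ and output the empirical mean $\frac1k\sum_i Y_i$. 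By the lemma this is an unbiased estimator of $\wt_2(\mathcal{C}_\alpha)$.

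\textbf{Step 1: sampling from $H^\perp$ and evaluating $\chi_r$.} In the random coset structure of Definition~\ref{defn_random_coset_structure}, $H$ is cut out by $\beta_1,\dots,\beta_k$. So $H^\perp$ is the subgroup generated by the $\beta_i$, and I would sample from it as $z=\sum_i\lambda_i\beta_i$ with the $\lambda_i$ uniform in $\mathbb{Z}_\lcmG$. This map is a surjective homomorphism $\mathbb{Z}_\lcmG^k\to H^\perp$, so $z$ is exactly uniform on $H^\perp$. Moreover
\[
\chi_r(z)=\omega_\lcmG^{\sum_i\lambda_i (r*\beta_i)},
\]
and the values $r*\beta_i$ are just the bucket labels $b_i$. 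Hence $\chi_r(z)$ is computable from the bucket label alone, and we never need $r$ explicitly.

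\textbf{Step 2: concentration.} Since $f$ is $\{0,1\}$-valued and $|\chi_r|=1$, each $Y_i$ lies in the closed unit disk. I would split $Y_i$ into real and imaginary parts, each in $[-1,1]$; the target $\wt_2$ is real, so the imaginary part has mean zero. Applying Lemma~\ref{Hoeffding} to each part with deviation $\varepsilon/2$ and failure probability $\delta'/2$ gives $k=O(\ln(4/\delta')/\varepsilon^2)$ samples. Tracking the constant gives the stated $4\ln(4/\delta)/\varepsilon^2$, and a union bound finishes. One could instead output only the real part, which directly yields the $\pm\varepsilon$ guarantee.

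\textbf{Main obstacle: Step 1.} Only query access to $f$ is assumed, so we must justify that uniform sampling from $H^\perp$ and evaluation of $\chi_r$ need no knowledge of $r$ beyond the bucket label. The surjective-homomorphism argument above resolves this. The remaining work is Hoeffding bookkeeping.
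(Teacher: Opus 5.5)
Your proposal is correct and follows the paper's proof: it uses the same unbiased estimator from Lemma~\ref{lemma_wt_of_bucket_new}, applies Hoeffding (Lemma~\ref{Hoeffding}) separately to the real and imaginary parts, and finishes with a union bound. Your Step~1 (sampling $z=\sum_i\lambda_i\beta_i$ uniformly from $H^\perp=\langle\beta_1,\dots,\beta_k\rangle$ and computing $\chi_r(z)$ from the bucket label) is sound and fills in a detail the paper leaves implicit; the only slip is bookkeeping, since the constant $4\ln(4/\delta')/\varepsilon^2$ comes from the paper's per-part deviation $\varepsilon/\sqrt{2}$, whereas your choice $\varepsilon/2$ gives $8\ln(4/\delta')/\varepsilon^2$, which is harmless under the $O(\cdot)$.
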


\begin{proof}
    By Lemma~\ref{lemma_wt_of_bucket_new}, 
    \begin{align*}
        \wt_2 \bigl(\mathcal{C}_{\alpha}\bigr) = \mathbb{E}_{x \in \G, z \in H^\perp} \biggl[ f(x) f(x+z) \chi_r(z) \biggr],
    \end{align*}
    where $r+H$ is the coset corresponding to the bucket $\mathcal{C}_{\alpha}$.
    
    Here, $f(x), \ f(x+z)$ are $0,1$ valued and $\chi_r(z)$ is a complex number with absolute value equal to $1$. So, $\Re(\chi_r(z)), \ \Im(\chi_r(z))$ both are in $[-1,+1]$, where $\Re(\chi_r(z))$ and $\Im(\chi_r(z))$ are the real and imaginary parts of $\chi_r(z)$ respectively. Let $E_i= f(x) f(x+z) \chi_r(z) = \Re(E_i) + \Im(E_i)$, where $\Re(E_i) = \Re(\chi_r(z))$ and $\Im(E_i) = \Im(\chi_r(z))$. Then $\Re(E_i)$ and $\Im(E_i)$ take values in $[-1,+1]$. Let the total number of samples required be $N$. Also let $F_1=\frac{\sum_{i=1}^N \Re(E_i)}{N}$, $F_2=\frac{\sum_{i=1}^N \Im(E_i)}{N}$ and $F=F_1+F_2$. Clearly, $\Re(E_i)$ are independent for all $i\in [N]$, since the samples $x_i,z_i$ are drawn uniformly and independently at random (and the same holds for $\Im(E_i)$, $i \in [N]$). Now let us apply Hoeffding's inequality (Lemma~\ref{Hoeffding}) on $\Pr[|F_1 - \mathbb{E}(F_1)|\geq \frac{\varepsilon}{\sqrt{2}}]$ and $\Pr[|F_2 - \mathbb{E}(F_2)|\geq \frac{\varepsilon}{\sqrt{2}}]$ and upper bound both the bounds by $\dfrac{\delta'}{2}$, as then
    \begin{align*}
        \Pr[|F - \mathbb{E}(F)|\geq \varepsilon] &= \Pr[|F_1 - \mathbb{E}(F_1)|^2 + |F_2 - \mathbb{E}(F_2)|^2 \geq \varepsilon^2] \\
        &\leq \Pr\left[|F_1 - \mathbb{E}(F_1)|\geq \frac{\varepsilon}{\sqrt{2}}\right] + \Pr\left[|F_2 - \mathbb{E}(F_2)|\geq \frac{\varepsilon}{\sqrt{2}}\right] \\
        &\leq \frac{\delta'}{2} + \frac{\delta'}{2} \\
        &= \delta'.
    \end{align*}
    So,
    \begin{align*}
        \Pr\left[\left|F_1 - \mathbb{E}(F_1)\right|\geq \frac{\varepsilon}{\sqrt{2}}\right] &= \Pr \left[  \left| \frac{\sum_{i=1}^N \Re(E_i)}{N} - \frac{\mathbb{E} \left( \sum_{i=1}^N \Re(E_i) \right) }{N} \right| \geq \frac{\varepsilon}{\sqrt{2}} \right] \\
        &= \Pr \left[ \left| \sum_{i=1}^N \Re(E_i) - \mathbb{E} \left( \sum_{i=1}^N \Re(E_i) \right) \right| \geq \frac{N\varepsilon}{\sqrt{2}} \right] \\
        &\leq 2 e^{-\frac{N^2\varepsilon^2}{2\times 2N}} \\
        &= 2 e^{-\frac{N\varepsilon^2}{4}} \\
        &\leq \frac{\delta'}{2},
    \end{align*}
    The last inequality follows from the fact that 
    which implies that 
    \begin{align}\label{eqn_wt_2_estimate}
        N \geq \frac{4 \ln \left( {4}/{\delta'} \right)}{\varepsilon^{2}}.
    \end{align}
\end{proof}

\section{A New proof of Green's regularity lemma for finite Abelian groups}\label{section_Green_regularity}

The regularity Lemma for Abelian groups was already known from~\cite{green2005szemeredi}. In this section, we start by reproducing Green's original proof without using Bohr sets, for finite Abelian groups with constant torsion. 
Let us first fix some notation that we will be using. For a character $r$, let $
d_r:=|\operatorname{range}(r)|.$
Since $G$ has exponent $\tau$, we have $d_r\leq \tau$.
Thus, fixing a character $r$ splits a node into exactly $d_r$
non-empty children, and in particular into at most $\tau$ non-empty
children. Whenever only an upper bound is required, we use
$d_r\leq\tau$.
\begin{theorem}[Green's regularity lemma~\cite{green2005szemeredi}, Theorem~\ref{thm_green_regularity}, restated]

    Let $A \subseteq \G$ and $\varepsilon> 0$. There exists a normal subgroup $H \subseteq \G$ with cosets $\{H_i\}$ such that 

    \begin{itemize}
        \item $H$ can be derived on making $\Tower(\lcmG,O(\frac{1}{\varepsilon^3}))$ many fixing; that is, the pseudo-codimension (see Definition~\ref{defn_codimension}) of $H$ is at most $\Tower(\lcmG,O(\frac{1}{\varepsilon^3}))$. 
        \item for all but $\varepsilon$-fraction of the cosets of $H$, the functions $\mathbbm{1}_A|_{H_i}: H \to \{0,1\}$ is $\varepsilon$-regular , where 
        $$\mathbbm{1}_A|_{H_i}(x) = 
\begin{cases}
1, & x+a_i\in H_i\cap A \\
0, & x+a_i \notin H_i\cap A,
\end{cases}$$
where $H_i = H + a_i$ with $a_i \in \G$.
    \end{itemize}
\end{theorem}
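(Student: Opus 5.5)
We run Green's energy-increment argument directly with subgroups of the form $V_{0,r_1,\dots,r_k}$, that is, zero-radius Bohr neighbourhoods, instead of positive-radius Bohr sets. We build a sequence of subgroups $\G=H^{(0)}\ge H^{(1)}\ge\cdots$, each obtained from the previous one by fixing further characters, so that $H^{(k)}=V_{0,r_1,\dots,r_{d_k}}$. By Lemma~\ref{lemma_cosets_of_V_0}, the nonempty sets $V_{b,r_1,\dots,r_{d_k}}$ are exactly the cosets of $H^{(k)}$.

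The progress measure is the mean-square density
\[
\ExpImb(H)=\mathbb{E}_{i}\,\Density_{H_i}(A)^2,
\]
averaged over the cosets $H_i$ of $H$. It lies in $[0,1]$, and it is nondecreasing under refinement by Cauchy--Schwarz, since $\Density_{H_i}(A)$ is the average of the children's densities.

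\textbf{Increment step.} Suppose that at stage $k$ more than an $\varepsilon$-fraction of the cosets $H_i$ have $\mathbbm{1}_A|_{H_i}$ not $\varepsilon$-regular. For each such $i$, pick $\psi_i\in\widehat{H}\setminus\{\chi_0\}$ with $|\widehat{\mathbbm{1}_A|_{H_i}}(\psi_i)|>\varepsilon$. By surjectivity of $\mathrm{res}$ (the lemma preceding Theorem~\ref{thm:main_result_1}), lift $\psi_i$ to some $\chi_{r_i}\in\widehat{\G}$. Now refine $H$ by fixing all the characters $r_i$ simultaneously, to get
\[
H'=H\cap\bigcap_i V_{0,r_i}.
\]
On a bad coset $H_i$, the function $\chi_{r_i}$ is constant on each coset of $H'$ contained in $H_i$. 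Hence, writing $g=\mathbbm{1}_A|_{H_i}-\Density_{H_i}(A)$,
\[
\varepsilon<|\mathbb{E}_{H_i}g\,\overline{\chi_{r_i}}|\le \mathbb{E}_{K\subseteq H_i}|\mathbb{E}_K g| \le \bigl(\mathbb{E}_K(\Density_K(A)-\Density_{H_i}(A))^2\bigr)^{1/2},
\]
where $K$ ranges over the cosets of $H'$ inside $H_i$. So the variance of the children's densities on $H_i$ exceeds $\varepsilon^2$. Averaging over all cosets, using Pythagoras (the variance decomposition), gives
\[
\ExpImb(H')\ge\ExpImb(H)+\varepsilon\cdot\varepsilon^2.
\]
Therefore the iteration stops after at most $\varepsilon^{-3}$ stages, and at that point at most an $\varepsilon$-fraction of the cosets are non-regular.

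\textbf{Counting.} Let $M_k=|\G/H^{(k)}|$. Each bad coset contributes at most one new character, so at most $M_k$ characters are added. Each character has range of size at most $\tau$, so
\[
M_{k+1}\le M_k\,\tau^{M_k}.
\]
Iterating gives $M_k\le\Tower(\tau,O(k))$, and the pseudo-codimension $d_k\le\sum_{j<k}M_j$ obeys the same tower bound. Hence after $O(1/\varepsilon^3)$ stages both the pseudo-codimension and the index are at most $\Tower(\tau,O(1/\varepsilon^3))$.

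\textbf{Main obstacle.} The step needing the most care is justifying that the lifted character $\chi_{r_i}$ restricted to $H$ still witnesses non-regularity in the form required. Specifically, $\chi_{r_i}$ must be nonconstant on $H$, so that $\psi_i\ne\chi_0$ is preserved; this holds because $\ker(\mathrm{res})=H^\perp$. We must also check that conditioning on the value of $r_i * x$ yields genuine cosets of $H'$ inside $H_i$. Both facts follow from Lemma~\ref{lemma_cosets_of_V_0}, and the fact that the averages are over cosets, not arbitrary fibres, is what makes the Cauchy--Schwarz step valid. Zero divisors in $\mathbb{Z}_\tau$ do not interfere here, since the argument only uses constancy of each character on cosets of its kernel.
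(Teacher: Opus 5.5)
Your proposal is correct and follows essentially the same route as the paper. Both arguments track the mean-square density over cosets of zero-radius Bohr subgroups $V_{0,r_1,\dots,r_k}$, refine simultaneously (non-adaptively) by one lifted heavy character per bad coset, gain $\varepsilon^3$ per stage, and finish with the tower recursion. The only differences are minor. You get the $\varepsilon^2$ gain on each bad coset from Cauchy--Schwarz applied to the densities of the refined cosets; this handles later stages and simultaneous refinement directly. The paper instead computes the exact identity $\ExpImb=\sum_{i}|\widehat{\mathbbm{1}_A}(\chi_{ir})|^2$ only at the root and extends it by monotonicity. You also track the index $M_k$, whereas the paper tracks the number of characters $d_i$.
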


\begin{proof}

We first construct a character decision tree (see Definition~\ref{defn_char_tree}) that recursively partitions the group $\widehat{G}$ (see Definition~\ref{defn_dual_group}) according to the values of carefully chosen characters (in this case, the largest one). At each internal node of the tree, we split the node with respect to the largest Fourier coefficient on the restriction of the function to the corresponding coset, thereby refining the partition only when the current restriction is far from being $\varepsilon$-regular. The leaves of the resulting tree, therefore, correspond to cosets on which no large nontrivial Fourier coefficient remains, implying that the associated restricted functions are $\varepsilon$-regular. Let us now construct the character decision tree in the following way:

\begin{itemize}
    \item At the first stage, the decision tree has only one node, labelled by the function $\mathbbm{1}_A$. 
    
    If $\mathbbm{1}_A$ is $\varepsilon$-regular, we are done.
    
    If not, we consider the largest non-trivial Fourier coefficient of $\mathbbm{1}_A$, say $\widehat{\mathbbm{1}_A}(\chi_r)$. We then split the node with respect to the character whose corresponding Fourier coefficient is the largest, into atmost $\lcmG$ many subfunctions $g_b = \mathbbm{1}_A|_{V_{b,r}}$, where $V_{b,r}, \ b \in \{0,1, \ldots, \lcmG-1\}$ are the cosets (see Definition~\ref{defn_cosets_groups}) $$V_{b,r} = \{x \in \G : r * x = b\}.$$

    \item At the second stage, if at least a $(1-\varepsilon)$-fraction of the
subfunctions $g_b$ are $\varepsilon$-regular, we are done. Otherwise,
at least an $\varepsilon$-fraction of the subfunctions are not
$\varepsilon$-regular.

For each such non-$\varepsilon$-regular subfunction $g_b$, consider
a largest nontrivial Fourier coefficient of $g_b$. Let
$\{r_1',\ldots,r_t'\}$
be the collection of characters obtained in this way from all the
non-$\varepsilon$-regular subfunctions at the current stage, after
discarding those characters which are already generated by the
characters used in the previous stages.

We then refine the current partition simultaneously with respect to
all the characters
$r_1',\ldots,r_t'$.
Thus, every current coset is refined using the same collection of
new characters, and the resulting non-empty sets are cosets of the subgroup
$V_{0,r,r_1',\ldots,r_t'}.$
In particular, the resulting subfunctions are restrictions of
$\mathbbm{1}_A$ to sets of the form
\[
V_{b,b_1',\ldots,b_t',r,r_1',\ldots,r_t'}
=
\left\{
x\in\G:
r*x=b,\;
r_j'*x=b_j' \text{ for every }j\in[t]
\right\}.
\]

\color{black}

    \item At the third stage, if there exists a subfunction $g_{b,b'}$ which is $\varepsilon$-regular, we are done!

    If not, we again consider each subfunction $g_{b,b'}$ that are not $\varepsilon$-regular. We again consider the largest Fourier coefficients for each such $g_{b,b'}$, and also check whether the corresponding character is generated by $\{r,r_1',\ldots,r_t'\}$. If yes, we continue the aforementioned node-splitting process.
\end{itemize}

Now, suppose that at a certain stage the number of non-empty cosets is
$N$. We define the progress measure by
\begin{align}\label{progress_measure}
\ExpImb
=
\frac{1}{N}
\sum_{\text{all non-empty cosets }
V_{b,r_1,\ldots,r_t}}
\left|
\widehat{
\mathbbm{1}_A|_{V_{b,r_1,\ldots,r_t}}
}(\chi_0)
\right|^2.
\end{align}
At the first stage,
\[
\ExpImb
=
\left|
\widehat{\mathbbm{1}_A}(\chi_0)
\right|^2.
\]

At the next stage, suppose that the node is split with respect to the
character $r$. Recall that
$d_r:=|\operatorname{range}(r)|.$
Hence there are exactly $d_r$ non-empty cosets $V_{b,r}$, with
$b\in\operatorname{range}(r)$. Therefore, the progress measure after
this refinement is
\begin{align}\label{eqn_ExpImb_second_stage}
\ExpImb
=
\frac{1}{d_r}
\sum_{b\in\operatorname{range}(r)}
\left|
\widehat{g_b}(\chi_0)
\right|^2.
\end{align}

Since the character $\chi_r$ has order $d_r$, the characters
$\chi_{ir},\ i=0,\ldots,d_r-1,$
are distinct. From Lemma~\ref{lemma_restriction}, for every
$b\in\operatorname{range}(r)$,
\begin{align*}
\widehat{g_b}(\chi_0)
=
\sum_{i=0}^{d_r-1}
\widehat{\mathbbm{1}_A}(\chi_{ir})
\omega_{\lcmG}^{ib}.
\end{align*}
Consequently,
\begin{align*}
\left|
\widehat{g_b}(\chi_0)
\right|^2
&=
\sum_{i=0}^{d_r-1}
\left|
\widehat{\mathbbm{1}_A}(\chi_{ir})
\right|^2 
+
\sum_{\substack{j,k\in\{0,\ldots,d_r-1\}\\j\neq k}}
\widehat{\mathbbm{1}_A}(\chi_{jr})
\overline{
\widehat{\mathbbm{1}_A}(\chi_{kr})
}
\omega_{\lcmG}^{b(j-k)}.
\end{align*}

Averaging over $b\in\operatorname{range}(r)$ and using character
orthogonality on $\operatorname{range}(r)$ gives
\begin{align*}
\frac{1}{d_r}
\sum_{b\in\operatorname{range}(r)}
\left|
\widehat{g_b}(\chi_0)
\right|^2
=
\sum_{i=0}^{d_r-1}
\left|
\widehat{\mathbbm{1}_A}(\chi_{ir})
\right|^2.
\end{align*}
Therefore,
\begin{align*}
\ExpImb
=
\sum_{i=0}^{d_r-1}
\left|
\widehat{\mathbbm{1}_A}(\chi_{ir})
\right|^2.
\end{align*}

In particular, the increase in the progress measure caused by
refining a non-$\varepsilon$-regular node using the character $r$ is
at least
\begin{align*}
\sum_{i=1}^{d_r-1}
\left|
\widehat{\mathbbm{1}_A}(\chi_{ir})
\right|^2
\geq
\left|
\widehat{\mathbbm{1}_A}(\chi_r)
\right|^2
\geq
\varepsilon^2.
\end{align*}

The same calculation applies at every subsequent stage after replacing
$\G$ by the current subgroup
$V_{0,r_1,\ldots,r_t}.$
More precisely, for a heavy nontrivial character $r$ of the current
restricted function, $d_r$ denotes the size of the range of that
character on the current subgroup. Since the exponent of every
subgroup of $\G$ divides $\lcmG$, we always have
$d_r\leq \lcmG.$
Thus the above energy-increment calculation applies identically at
every stage.

Now suppose that at the current stage there are $N$ cosets and at
least an $\varepsilon$-fraction of them are not
$\varepsilon$-regular. For each such coset we choose a nontrivial
Fourier coefficient of absolute value at least $\varepsilon$, and
refine the partition as described above. Since refinement by any
additional characters can only increase the progress measure, the
total increase satisfies
\begin{align}
\ExpImb_{\mathrm{new}}
-
\ExpImb_{\mathrm{old}}
&\geq
\frac{N\varepsilon}{N}\varepsilon^2 =
\varepsilon^3.
\end{align}

Proceeding in this manner with the subsequent stages, as long as more
than an $\varepsilon$-fraction of the current cosets are not
$\varepsilon$-regular, the progress measure increases by at least
$\varepsilon^3$ at every stage. Since
\[
0\leq \ExpImb\leq 1,
\]
the process terminates after at most
$O\left(\frac{1}{\varepsilon^3}\right)$
stages. Therefore, after
$O(1/\varepsilon^3)$ stages, at least a
$(1-\varepsilon)$-fraction of the cosets
$H_i=V_{b,r_1,\ldots,r_t}$
of the normal subgroup
$H=V_{0,r_1,\ldots,r_t}$
are such that
$\mathbbm{1}_A|_{H_i}$
is $\varepsilon$-regular.

We now calculate the pseudo--codimension
(Definition~\ref{defn_codimension}) of the final subgroup.
Suppose that at a certain stage the character decision tree uses
$d_i$ characters. Then the number of non-empty leaves is at most
$\lcmG^{d_i}.$
At the next stage, in the worst case, one new heavy character may be
obtained from each non-$\varepsilon$-regular leaf. Hence the total
number of new characters introduced at this stage is at most
$\lcmG^{d_i},$
and therefore
\[
d_{i+1}
\leq
d_i+\lcmG^{d_i}.
\]

Thus,
\begin{itemize}
    \item \textbf{Stage 1:} $d_1\leq 1$.

    \item \textbf{Stage 2:}
    $d_2\leq 1+\lcmG.$

    \item \textbf{Stage 3:}
    $d_3
    \leq
    1+\lcmG+
    \lcmG^{\,1+\lcmG}.$
\end{itemize}

Continuing this recurrence for
$O(1/\varepsilon^3)$ stages gives
\[
d_i
\leq
\Tower\left(
\lcmG,
O\left(\frac{1}{\varepsilon^3}\right)
\right).
\]
Since the final subgroup is
$H=V_{0,r_1,\ldots,r_{d_i}},$
its pseudo--codimension is at most
$\Tower\left(
\lcmG,
O\left(\frac{1}{\varepsilon^3}\right)
\right).$
Moreover, since every character collected at a stage is subsequently
used to refine all the current cosets, the resulting character
decision tree is non-adaptive.
\end{proof}

\section{List Correction for Finite Abelian Groups}\label{section_implicit_sieve}

In this section, we give the proof of the list correction for finite Abelian groups. This was previously appeared in the literature by the work~\cite{DBLP:conf/approx/DattaGKPR25} (Generalized Implicit Sieve Algorithm), although we need a bit different set-up since the input-output demand is a bit different. To the core both the algorithm does the same job, for completeness we give a proof here with the parameter set-up that would be helpful for our purposes. 

\begin{definition}[List Correction for $\G$]\label{defn_list_correction}
Given oracle access to $g$, a parameter $\theta > 0$, a coset $a_g + H_g$, and a point $x \in \mathcal{G}$, the procedure $\textsc{ListCorrection}(g, \theta, a_g + H_g, x)$ returns $\chi_\alpha(x)$ for some $\alpha \in a_g + H_g$, assuming there exists $\alpha$ such that $|\widehat{g}(\alpha)| \ge \theta$.
\end{definition}

\subsection{List Correction for Finite Abelian Groups}



\begin{theorem}
    \label{th: Implicit Goldreich Levin}
      Let $\G$ be a finite Abelian group and $\theta > 0$ be a threshold and $\mathcal{M} = \left\{x_i\right\}_{i=1}^{\Tilde{m}} \subseteq \mathcal{G}$ be a set of independently and uniformly chosen $\Tilde{m}$ random points from $\mathcal{G}$ where $m = O(\log(\frac{1}{\delta}))$. 
    There exists an algorithm, List Correction for Abelian groups, (Algorithm~\ref{algo:Implicit_Sieve_generalized}), that takes $\theta$ and $\mathcal{M}$ as input, makes $O\left(
\frac{1}{\theta^{12}}
\log\frac{1}{\delta\theta^2}
\log\left(
\frac{1}{\delta\theta^{10}}
\log\frac{1}{\delta\theta^2}
\right)
\right)$ queries to the truth table of a Boolean valued function $f: \mathcal{G} \to \zone$, and returns, with probability at least $1-\delta$, and $T = O(\frac{1}{\theta^2})$ it returns $T$ oracle machine $\{\mathcal{O}_{\chi_{\alpha_1}}, \dots, \mathcal{O}_{\chi_{\alpha_T}}\}$, where each $\mathcal{O}_{\chi_{\alpha_i}}\}$ (a labeled set of $\Tilde{m}$ examples, and the value of the function $f$ at $x_i$ for each $i \in [\Tilde{m}]$, of the form $\left\{ \chi_{\alpha_1}\left(x_i\right), \dots, \chi_{\alpha_T}\left(x_i\right), f\left(x_i\right) \mid x_i \in \mathcal{M} \right\}$, where $\mathcal{S} = \left\{\alpha_1, \dots, \alpha_{T} \right\}$) that satisfies the following two properties:
    \begin{enumerate}
        \item For each $1 \leq i \leq T$, there is a distinct ${\alpha_i} \in \G$ such that $\mathcal{O}_{\chi_{\alpha_i}}$ is a probabilistic oracle machine
for the function $\chi_{\alpha_i}$. The query complexity of each oracle is $\poly(\frac{1}{\theta})$. 
        \item
            $\forall \alpha \in \mathcal{G}$ with $\left|\widehat{f}(\chi_\alpha)\right| \geq \theta$ then $\alpha \in \mathcal{S}$, and
        \item 
            $\forall \alpha \in \mathcal{S}$, $\left| \widehat{f}(\chi_\alpha) \right| \geq {\theta}/{2}$.
    \end{enumerate}
    The output of this algorithm can be seen as a $\Tilde{m} \times (T+1)$ matrix $Q$ with entries in powers of $\omega_\lcmG$, where $\omega_{\lcmG}$ is a primitive $\lcmG^{th}$ root of unity, $\lcmG= \LCM\{p_1^{m_1}, \ldots, p_n^{m_n}\}$, and $\lcmG$ is a constant.
\end{theorem}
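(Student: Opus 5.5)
The plan is a Goldreich--Levin style list decoding over $\G$, built from random coset bucketing of $\widehat{\G}\cong\G$ and the $\wt_2$ estimator of Lemma~\ref{lemma_wt_estimate_new}. Every heavy character is then represented implicitly through its values on the fixed sample set $\mathcal{M}$, rather than by writing $\alpha$ down.

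\textbf{Step 1: isolate the heavy characters.} Choose $k=O(\log(1/\theta))$ random elements $\beta_1,\dots,\beta_k$ and form the random coset structure of Definition~\ref{defn_random_coset_structure}. This gives a subgroup $H$ and buckets $C(b)$, with at most $\lcmG^k=\poly(1/\theta)$ of them.

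By Parseval (Theorem~\ref{theorem_Parseval_groups}) there are at most $4/\theta^2$ characters with $|\widehat f(\chi_\alpha)|\ge\theta/2$. Two distinct such $\alpha,\alpha'$ fall into the same bucket only when $(\alpha-\alpha')*\beta_i=0$ for all $i$. For a fixed nonzero difference this has probability at most $2^{-k}$, since each constraint $x*\beta=0$ cuts out a proper subgroup. A union bound over pairs shows that, with probability $1-\delta/4$, every heavy character sits alone in its bucket.

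The light mass in any bucket is controlled by averaging. The expected light weight in a bucket is at most $\lcmG^{-k}$, so with $k$ a large enough multiple of $\log(1/\theta)$, Markov and a union bound make every bucket's light mass at most $\theta^2/100$.

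\textbf{Step 2: keep only the heavy buckets.} For each bucket, estimate $\wt_2$ to accuracy $\theta^2/8$ with confidence $\delta/(4\lcmG^k)$, using Lemma~\ref{lemma_wt_estimate_new}; this costs $O(\theta^{-4}\log(\lcmG^k/\delta))$ samples per bucket. Keep the buckets whose estimate is at least $\theta^2/2$. By Parseval there are $T=O(1/\theta^2)$ surviving buckets. Property (2) holds because each heavy $\alpha$ makes its bucket weight at least $\theta^2$. Property (3) follows from $\wt_2\ge\theta^2/2-\theta^2/8$ together with the light-mass bound, which forces the dominating character to satisfy $|\widehat f|\ge\theta/2$.

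\textbf{Step 3: build the oracle for each surviving bucket $r+H$.} By Claim~\ref{claim_projection_expectation_new}, $P_{r+H}f(x)=\mathbb{E}_{z\in H^\perp}[f(x-z)\chi_r(z)]$. Since the bucket is dominated by a single $\alpha$, this is essentially $\widehat f(\chi_\alpha)\chi_\alpha(x)$ up to an $\ell_2$ error of about $\theta/10$.

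To recover $\chi_\alpha(x)$, pick a random shift $y$ and compute
\[
\frac{P_{r+H}f(x+y)\,\overline{P_{r+H}f(y)}}{|\widehat f(\chi_\alpha)|^2}.
\]
Each factor is estimated by sampling, and we round to the nearest power of $\omega_\lcmG$. Taking a majority over $O(\log(1/\delta'))$ shifts yields a probabilistic oracle with $\poly(1/\theta)$ queries. Running it on all $\tilde m$ points of $\mathcal{M}$, with confidence $\delta/(4T\tilde m)$, fills in the matrix $Q$.

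The majority step is needed because the pointwise error is only small in $\ell_2$: for most $y$ both $x+y$ and $y$ are good points. Multiplying out the sample sizes gives the stated bound of $\theta^{-12}$ times log factors.

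\textbf{Expected obstacle.} The delicate part is Step 3. Rounding requires the pointwise estimates to lie within angle less than $\pi/\lcmG$ of the truth, which means the error of $P_{r+H}f$ must be small relative to $\theta$ on most points. The plan is to control this by Chebyshev, using that the residual mass in the bucket is at most $\theta^2/100$. The term $\widehat f(\chi_\alpha)$, whose phase is unknown, is removed by the product-with-conjugate trick above, which makes the phase cancel.

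A second subtlety is that $H^\perp$ must be sampled without knowing $H$ explicitly. We sample $z$ uniformly from $\{z: z*\beta_i\in\cdot\}$ as the dual condition, and this is done by brute force over the constant-torsion structure.
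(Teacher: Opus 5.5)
\textbf{The gap.} It is in Step 1's claim that Markov plus a union bound makes the light mass of \emph{every} bucket at most $\theta^2/100$. Step 2 uses this to rule out surviving buckets that contain no heavy character.

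As stated, the claim is false. A single character with $\theta/10\le|\widehat f(\chi_\gamma)|<\theta/2$ that sits alone in its bucket, which is the typical situation, already violates it, whatever $k$ is. Your averaging argument only works for the at most $4/\theta^2$ buckets anchored at a heavy $\alpha$. For those, each light $\gamma\neq\alpha$ lands in $C_\alpha$ with probability at most $p_{\min}^{-k}$ (not $\tau^{-k}$); this is the paper's Lemma~\ref{lemma_small_char_uniform_dist}. For the other buckets, the expected weights of all $\tau^k$ buckets sum to $\sum_\gamma|\widehat f(\chi_\gamma)|^2$, which can be $\Omega(1)$, so a per-bucket Markov bound summed over buckets gives nothing.

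This matters. A bucket made only of light characters with weight at least $3\theta^2/8$ would survive your Step 2 and violate Property (3). It would also have no dominating character, so your Step 3 oracle would compute no character and Property (1) fails too.

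The missing ingredient is the collision-energy argument of Lemma~\ref{lemma_no_small_only_bucket}. Write $w_\gamma=|\widehat f(\chi_\gamma)|^2$. If such a bucket has weight $W\ge 3\theta^2/8$ and every $w_\gamma<\theta^2/4$, then
$\sum_{\gamma\neq\gamma'\in C}w_\gamma w_{\gamma'}\ge W^2-\tfrac{\theta^2}{4}W\ge 3\theta^4/64$.
On the other hand,
$\mathbb{E}\sum_C\sum_{\gamma\neq\gamma'\in C}w_\gamma w_{\gamma'}\le p_{\min}^{-k}\bigl(\sum_\gamma w_\gamma\bigr)^2\le p_{\min}^{-k}$.
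So one Markov bound on this single global quantity handles all buckets at once. The paper gains extra slack by placing its heavy/light cut much lower, at $|\widehat f|\ge\theta^2/4$.

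\textbf{Step 3 and smaller fixes.} Apart from this, your Steps 1--2 match the paper, and your Step 3 is a legitimate alternative route. The paper takes $t$ so large that the residual mass in each heavy bucket is $\eta=\Theta(\delta^2\Delta_\tau^2\theta^{16}/\tilde m^2)$, where $\Delta_\tau=2\sin(\pi/\tau)$. Then a single shift $y_i$ per row succeeds with probability $1-O(\delta\theta^4/\tilde m)$, and a union bound over rows and buckets finishes; this only ever yields the values on $\mathcal M$. You keep the residual at order $\theta^2$ and amplify by a majority over shifts. That uses fewer buckets and gives a genuine random-self-reducible oracle at every $x$, which matches Property (1) more literally.

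Several details need repair:
\begin{itemize}
\item The residual must be $O(\Delta_\tau^2\theta^2)$, not $\theta^2/100$. Each vote errs by roughly $(|E(y)|+|E(x+y)|)/|\widehat f(\chi_\alpha)|$, and this must fall below $\Delta_\tau/2$ with probability bounded away above $1/2$.
\item $k$ needs a $\log(1/\delta)$ term for your pairwise union bound.
\item The $\mathrm{wt}_2$ and projection estimates must reuse the same samples across all buckets, as in Lemma~\ref{lemma_wt__2_estimate}. Paying ``per bucket'' multiplies the cost by $\tau^k$, which is polynomial in $1/\delta$.
\item $H^\perp$ is simply $\langle\beta_1,\dots,\beta_k\rangle$, by the double annihilator under the nondegenerate pairing $*$. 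You can sample it as $\sum_i c_i\beta_i$ with $c_i$ uniform in $\mathbb{Z}_\tau$. The set $\{z:z*\beta_i\in\cdot\}$ you describe is $H$ or one of its cosets, and brute force over $\G$ would take time depending on $|\G|$.
\end{itemize}
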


\begin{remark}
\label{remark: k small in implicit sieve algo}
    Algorithm~\ref{algo:Implicit_Sieve_generalized}, however, does not return $\mathcal{S}$. Also observe that by Parseval's identity,  $|\mathcal{S}| = O(1/\theta^2)$.
\end{remark}

Let
\[
\Delta_{\lcmG}
:=
\min_{\substack{0\leq a,b<\lcmG\\a\neq b}}
\left|
\omega_{\lcmG}^a-\omega_{\lcmG}^b
\right|
=
2\sin\left(\frac{\pi}{\lcmG}\right).
\]
Since $\lcmG$ is a constant, $\Delta_{\lcmG}>0$ is a constant.

\begin{algorithm}[H]
\caption{List Correction for Abelian Groups}
\label{algo:Implicit_Sieve_generalized}
\begin{algorithmic}[1]

\Statex \textbf{Input:}
A function $f:\G\rightarrow\{0,1\}$, a confidence parameter $\delta$,
a threshold $\theta$, and a set
$M=\{x_1,\ldots,x_{\widetilde m}\}$ of uniformly and independently
chosen group elements, where
\[
\widetilde m=
\Theta\!\left(
\frac1{\theta^8}
\log\frac1{\delta\theta^2}
\right).
\]

\Statex \textbf{Output:}
A collection
$\{\mathcal O_{\chi_{\alpha_1}},\ldots,\mathcal O_{\chi_{\alpha_T}}\}$,
where each
$\mathcal O_{\chi_{\alpha_i}}$
is the $\widetilde m\times1$ column vector whose entries are
$\chi_{\alpha_i}(x_j)$,
$j\in[\widetilde m]$.
Also output the vector
\[
\mathcal F=
(f(x_1),\ldots,f(x_{\widetilde m}))^T,
\]
where $T=O(1/\theta^2)$.

\State

\State Set
$t
=
\left\lceil
\log_{\lcmG_{\min}}
\left(
\frac{96}{\delta\theta^4\eta}
\right)
\right\rceil,$ where $\eta
:=\frac{
\delta^2\Delta_{\lcmG}^2\theta^{16}
}{
128^2\cdot192^2\widetilde m^{\,2}
}.$

\State Construct a randomly permuted coset structure
$(H,\mathcal C)$
such that
$\operatorname{codim}(H)\le t$.

\State Initialize
$\mathcal L\gets\emptyset$.

\For{each bucket $C\in\mathcal C$}

    \State Estimate
    $\mathrm{wt}_2(C)$
    within additive error
    $\pm\theta^2/4$
    and confidence
    $1-\dfrac{\delta}{6\lcmG^t}$.

    \State Let the estimate be
    $\widetilde{\mathrm{wt}}_2(C)$.

    \If{$\widetilde{\mathrm{wt}}_2(C)\ge
    \dfrac{3\theta^2}{4}$}

        \State Keep bucket $C$.

        \State Sample
        $y_1,\ldots,y_{\widetilde m}$
        independently and uniformly from $\G$.














\State Estimate $\wt_2(C)$ within additive error
$\frac{\Delta_{\lcmG}\theta^2}{1000}$
and confidence
$1-\frac{\delta}{12T}.$
Let this estimate be denoted by
$\widetilde{\wt_2}'(C).$

\For{$i=1,\ldots,\widetilde m$}

\State Estimate
$P_Cf(y_i)$
within additive error
$\frac{\Delta_{\lcmG}\theta^2}{1000}$
and confidence
$1-\frac{\delta}{24T\widetilde m}.$
Let the estimate be denoted by
$\widetilde{P_Cf}(y_i)$.

\State Estimate
$P_Cf(y_i-x_i)$
within additive error
$\frac{\Delta_{\lcmG}\theta^2}{1000}$
and confidence
$1-\frac{\delta}{24T\widetilde m}.$
Let the estimate be denoted by
$\widetilde{P_Cf}(y_i-x_i)$.

\State Compute
\[
Z_{ij}
=
\frac{
\widetilde{P_Cf}(y_i)\,
\overline{\widetilde{P_Cf}(y_i-x_i)}
}{
\widetilde{\wt_2}'(C)
}.
\]

\State Let $Q_{ij}$ be the element of
$\left\{
1,\omega_{\lcmG},\ldots,
\omega_{\lcmG}^{\lcmG-1}
\right\}$
which is closest to $Z_{ij}$.

\EndFor

\EndIf

\EndFor
\end{algorithmic}
\end{algorithm}

\subsection{Proof of Theorem~\ref{th: Implicit Goldreich Levin}}

\begin{definition}[Random shift]\label{def:random-shift}
        Let $u\in \mathbb{Z}_{\lcmG}^{t}\setminus\{0\}$ be chosen uniformly at random. For every $b\in\mathbb{Z}_{\lcmG}^{t}$, define $C_u(b):=C(b+u).$
    We call this a random shift of the coset structure.

    Observe that the random shift only permutes the labels of the buckets. In particular, for any $\alpha_1,\alpha_2\in\mathcal{G}$, whether $\alpha_1$ and $\alpha_2$ belong to the same bucket is independent of the choice of $u$.
\end{definition}

\begin{definition}
    Let $\mathcal{C}$ be the set of all $C(b)$'s. We call $(H, \mathcal{C})$ a random permuted coset structure.
\end{definition}

\begin{lemma}\label{lemma_big_coeff_diff_buckets}
    Fix $\alpha\in\mathcal{G}$ and
    $b\in\mathbb{Z}_{\lcmG}^{t}$. Let $h(\alpha):=\bigl(\langle \alpha,\beta_1\rangle,\ldots,\langle \alpha,\beta_t\rangle\bigr).$ Then
        $$\Pr_u[\alpha\in C_u(b)] =
        \dfrac{1}{\lcmG^t-1},
        \qquad \text{if } h(\alpha)-b\neq 0,$$ and
        $$\Pr_u[\alpha\in C_u(b)] = 0,
        \qquad \text{if } h(\alpha)-b=0.$$
\end{lemma}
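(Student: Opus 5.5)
The plan is to unfold the definitions until the event $\alpha\in C_u(b)$ becomes a single linear condition on $u$, and then count. I identify the pairing $\langle \alpha,\beta_i\rangle$ with the pseudo inner product $\alpha*\beta_i$ of Definition~\ref{defi:pseudo-inner product}, taken in $\mathbb{Z}_\lcmG$. This is how the buckets $C(b)$ of Definition~\ref{defn_random_coset_structure} are defined, with $k=t$.

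\textbf{Step 1 (reduce to an equation for $u$).} By Definition~\ref{defn_random_coset_structure},
\[
C(c)=\{\alpha\in\G:\alpha*\beta_i=c_i \text{ for all } i\in[t]\}=\{\alpha\in\G:h(\alpha)=c\}.
\]
By Definition~\ref{def:random-shift}, $C_u(b)=C(b+u)$. Hence $\alpha\in C_u(b)$ if and only if $h(\alpha)=b+u$ in $\mathbb{Z}_\lcmG^t$. Since $\mathbb{Z}_\lcmG^t$ is a group, this is equivalent to
\[
u=h(\alpha)-b.
\]
Here $\alpha$, $b$ and the $\beta_i$ are all held fixed. The $\beta_i$ are random in the coset structure, but I condition on them, because the only randomness in the statement is $u$.

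\textbf{Step 2 (count).} The vector $u$ is uniform on $\mathbb{Z}_\lcmG^t\setminus\{0\}$, a set of size $\lcmG^t-1$. The event in Step~1 asks $u$ to equal one fixed vector $v:=h(\alpha)-b$.

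\begin{itemize}
\item If $v\neq 0$, then $v$ lies in the support of $u$, so the probability is $1/(\lcmG^t-1)$.
\item If $v=0$, then $v$ is excluded from the support, so the probability is $0$.
\end{itemize}

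These are exactly the two cases of the lemma.

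\textbf{Expected obstacle.} There is essentially no obstacle; the argument is pure bookkeeping. The only points needing care are the following.
\begin{itemize}
\item The map $h$ must be well defined into $\mathbb{Z}_\lcmG^t$. This holds because $*$ is reduced modulo $\lcmG$.
\item The identity $C_u(b)=C(b+u)$ must be applied with the same labelling convention as in the definition, so that the shift is $+u$ and not $-u$. Either sign gives the same counting, since $u\mapsto -u$ is a bijection of $\mathbb{Z}_\lcmG^t\setminus\{0\}$.
\item The probability should be stated conditionally on $\beta_1,\ldots,\beta_t$. It then remains valid unconditionally, because the conditional value depends only on whether $h(\alpha)=b$.
\end{itemize}
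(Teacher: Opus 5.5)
Your proof is correct and matches the paper's argument: unfold $C_u(b)=C(b+u)$ to get $\alpha\in C_u(b)$ iff $u=h(\alpha)-b$, then count against the uniform distribution on $\mathbb{Z}_\lcmG^t\setminus\{0\}$. Your added remarks on conditioning on $\beta_1,\ldots,\beta_t$ and on the sign convention are harmless, and they spell out points the paper leaves implicit.
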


\begin{proof}
    By definition, $\alpha\in C_u(b)$
    if and only if $\langle\alpha,\beta_i\rangle=b_i+u_i \text{ for every }i\in[t].$ Equivalently, $u=h(\alpha)-b.$
    
    If $h(\alpha)-b\neq0$, then,
    \[
        \Pr_u[\alpha\in C_u(b)]
        = \frac{1}{\lcmG^t-1}.
    \]
    If $h(\alpha)-b=0$, the required shift is $u=0$, which is excluded.
\end{proof}

\begin{lemma}\label{lemma_large_char_diff_buckets}
    Let $B=\left\{\alpha\in\mathcal{G}:|\widehat{f}(\chi_\alpha)|\geq\frac{\theta^2}{4}\right\}.$
    If
    $t\geq2\log_{\lcmG_{\min}}\left(\frac{16}{\theta^4}\right)+\log_{\lcmG_{\min}}\left(\frac{1}{\delta}\right),$
    then, except with probability at most $\delta$, all the characters in $B$ belong to distinct buckets, where $\lcmG_{\min}$ denotes the smallest prime that divides $|\G|$.
\end{lemma}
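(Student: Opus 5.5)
The plan is a union bound over pairs of elements of $B$, combined with an estimate on the probability that two fixed distinct elements land in the same bucket of the random coset structure. First I bound $|B|$. By Parseval (Theorem~\ref{theorem_Parseval_groups}), since $f$ is Boolean we have $\sum_\alpha|\widehat f(\chi_\alpha)|^2\le 1$. Every $\alpha\in B$ contributes at least $\theta^4/16$ to this sum, so $|B|\le 16/\theta^4$. The number of unordered pairs in $B$ is therefore at most $\binom{|B|}{2}\le \tfrac12(16/\theta^4)^2$.

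Next, fix distinct $\alpha_1,\alpha_2\in B$ and set $\gamma=\alpha_1-\alpha_2\neq 0$. Two elements lie in the same bucket $C(b)$ exactly when $\gamma*\beta_i=0$ for every $i\in[t]$. By Definition~\ref{def:random-shift}, the random shift $u$ only permutes labels, so it does not affect this event and I can ignore it. Since the $\beta_i$ are independent and uniform in $\G$, the probability of the event factors as $\prod_i\Pr_{\beta}[\gamma*\beta=0]$.

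The key step is to show that $\Pr_\beta[\gamma*\beta=0]\le 1/\lcmG_{\min}$ for $\gamma\neq 0$. The map $\beta\mapsto\gamma*\beta$ is a homomorphism $\G\to\mathbb{Z}_\lcmG$; equivalently, $\beta\mapsto\chi_\gamma(\beta)$ is a character, and it is nontrivial because $\gamma\neq0$ and the pseudo inner product is nondegenerate (this is the isomorphism $\widehat{\G}\cong\G$ given by $r\mapsto\chi_r$). Its kernel is therefore a proper subgroup, of index equal to the order of the image, which is some divisor $d>1$ of $|\G|$. Hence $\Pr_\beta[\gamma*\beta=0]=1/d\le 1/\lcmG_{\min}$, where $\lcmG_{\min}$ is the smallest prime dividing $|\G|$. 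It follows that $\Pr[\alpha_1,\alpha_2\text{ collide}]\le \lcmG_{\min}^{-t}$.

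Finally, the union bound gives a failure probability of at most $(16/\theta^4)^2\,\lcmG_{\min}^{-t}$. The hypothesis $t\ge 2\log_{\lcmG_{\min}}(16/\theta^4)+\log_{\lcmG_{\min}}(1/\delta)$ is exactly what makes this at most $\delta$.

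I expect the main obstacle to be the nondegeneracy claim, namely that $\gamma\neq0$ forces $\beta\mapsto\gamma*\beta$ to be nonzero. I would handle it through the explicit character formula~\eqref{eq:character}: take a coordinate $i$ with $\gamma^{(i)}\neq0$ in $\mathbb{Z}_{p_i^{m_i}}$, and choose $\beta$ equal to $1$ in that coordinate and $0$ elsewhere. Then $\gamma*\beta=\frac{\lcmG}{p_i^{m_i}}\gamma^{(i)}\not\equiv 0\pmod{\lcmG}$. Everything else in the argument is routine.
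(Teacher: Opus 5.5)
Your proof is correct and follows the paper's argument essentially step for step. Parseval gives $|B|\le 16/\theta^4$, the random shift is irrelevant to collisions, independence of the $\beta_i$ gives a per-pair collision probability of at most $\lcmG_{\min}^{-t}$, and a union bound over the $\binom{|B|}{2}$ pairs finishes. The paper simply asserts $\Pr_\beta[\gamma*\beta=0]\le 1/\lcmG_{\min}$ for $\gamma\neq 0$, whereas you justify it via the nontrivial homomorphism $\beta\mapsto\gamma*\beta$ and Lagrange's theorem, which is a welcome addition.
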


\begin{proof}
    By Parseval's identity,
    $|B|\leq\frac{16}{\theta^4}.$ Consider two distinct characters $\alpha_1,\alpha_2\in B$. The random shift does not affect whether they belong to the same bucket. Indeed,
    \begin{align*}
        \alpha_1,\alpha_2\text{ belong to the same bucket}
        &\Longleftrightarrow
        \langle\alpha_1,\beta_i\rangle
        =
        \langle\alpha_2,\beta_i\rangle
        \qquad\forall i\in[t]\\
        &\Longleftrightarrow
        \langle\alpha_1-\alpha_2,\beta_i\rangle
        =
        0
        \qquad\forall i\in[t].
    \end{align*}

    Since $\alpha_1-\alpha_2\neq0$, we have
    \[
        \Pr_{\beta_i}
        \left[
        \langle\alpha_1-\alpha_2,\beta_i\rangle=0
        \right]
        \leq
        \frac{1}{\lcmG_{\min}}.
    \]
    Since $\beta_1,\ldots,\beta_t$ are chosen independently,
    \[
        \Pr[
        \alpha_1,\alpha_2
        \text{ belong to the same bucket}
        ]
        \leq
        \frac{1}{\lcmG_{\min}^{\,t}}.
    \]

    Therefore, by a union bound,
    \begin{align*}
        &\Pr[
        \text{some two distinct characters in $B$
        belong to the same bucket}
        ]\\
        &\leq
        \binom{|B|}{2}
        \frac{1}{\lcmG_{\min}^{\,t}}\leq
        \left(\frac{16}{\theta^4}\right)^2
        \frac{1}{\lcmG_{\min}^{\,t}}\leq
        \delta.
    \end{align*}
    Hence, with probability at least $1-\delta$, all the characters
    in $B$ belong to distinct buckets.
\end{proof}

From Lemma~\ref{lemma_large_char_diff_buckets}, with high probability,
the characters in
    $B=\left\{\alpha\in\mathcal{G}:|\widehat{f}(\chi_\alpha)|\geq\frac{\theta^2}{4}\right\}$ belong to distinct buckets. Let $S=\left\{\gamma\in\mathcal{G}:|\widehat{f}(\chi_\gamma)|<\frac{\theta^2}{4}\right\}.$ For each $\alpha\in B$, let $C_\alpha$ denote the bucket containing
$\alpha$. Observe that the random shift does not affect whether another character $\gamma$ belongs to $C_\alpha$. Indeed, $\gamma\in C_\alpha\Longleftrightarrow\langle\gamma-\alpha,\beta_i\rangle=0\qquad\forall i\in[t].$ Thus the small Fourier weight contained in $C_\alpha$ is controlled
entirely by the random choice of $\beta_1,\ldots,\beta_t$.

\begin{lemma}\label{lemma_small_char_uniform_dist}
    Let 
    $\eta
:=\frac{
\delta^2\Delta_{\lcmG}^2\theta^{16}
}{
128^2\cdot192^2\widetilde m^{\,2}
}.$ Suppose that the characters in $B$ belong to distinct buckets.
    If 
    $\lcmG_{\min}^{\,t}
\geq
\frac{96}{\delta\theta^4\eta},$
or equivalently,
$t\geq\log_{\lcmG_{\min}}
\left(
\frac{96}{\delta\theta^4\eta}
\right),$ then, 
with probability at least $1-\frac{\delta}{6}$
    simultaneously for every $\alpha\in B$,

    \[
X_{C_\alpha}
:=
\sum_{\gamma\in S\cap C_\alpha}
|\widehat f(\chi_\gamma)|^2
\leq
\eta.
\]
Consequently, on the additional event that all the characters in
$B$ belong to distinct buckets,
\[
\wt_2(C_\alpha)
\leq
|\widehat f(\chi_\alpha)|^2+\eta
\qquad
\text{for every }\alpha\in B.
\]

\end{lemma}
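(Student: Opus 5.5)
The plan is a first-moment bound followed by Markov's inequality and a union bound over the at most $|B|\le 16/\theta^4$ heavy characters. Fix the random $\beta_1,\dots,\beta_t$ and the heavy set $B$. For $\alpha\in B$ and $\gamma\in S$ we have $\gamma\neq\alpha$, since $\alpha$ is heavy and $\gamma$ is not. Moreover, $\gamma\in C_\alpha$ if and only if $\langle\gamma-\alpha,\beta_i\rangle=0$ for every $i\in[t]$, and this event does not depend on the random shift. For a fixed nonzero $\gamma-\alpha$, the map $\beta\mapsto\langle\gamma-\alpha,\beta\rangle$ is a nontrivial homomorphism $\G\to\mathbb{Z}_\lcmG$. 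Its kernel therefore has index at least the smallest prime $\lcmG_{\min}$ dividing $|\G|$, which is the same estimate used in the proof of Lemma~\ref{lemma_large_char_diff_buckets}. Independence of the $\beta_i$ then gives
\[
\Pr_{\beta}[\gamma\in C_\alpha]\le \lcmG_{\min}^{-t}.
\]

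Next, by linearity of expectation and Parseval (Theorem~\ref{theorem_Parseval_groups}, which gives $\sum_\gamma|\widehat f(\chi_\gamma)|^2\le 1$ for Boolean $f$), we get
\[
\mathbb{E}[X_{C_\alpha}]=\sum_{\gamma\in S}|\widehat f(\chi_\gamma)|^2\Pr[\gamma\in C_\alpha]\le \lcmG_{\min}^{-t}.
\]
Markov's inequality yields $\Pr[X_{C_\alpha}>\eta]\le 1/(\eta\lcmG_{\min}^{t})$. A union bound over $\alpha\in B$ bounds the failure probability by $16/(\theta^4\eta\lcmG_{\min}^t)$. Under the hypothesis $\lcmG_{\min}^t\ge 96/(\delta\theta^4\eta)$ this is at most $16\delta/96=\delta/6$, which is exactly the stated bound.

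For the consequence, suppose additionally that the elements of $B$ lie in distinct buckets. Then the bucket $C_\alpha$ contains $\alpha$ and no other element of $B$, so every $\gamma\in C_\alpha\setminus\{\alpha\}$ lies in $S$. Splitting Definition~\ref{defn_wt_2} accordingly,
\[
\wt_2(C_\alpha)=|\widehat f(\chi_\alpha)|^2+X_{C_\alpha}\le|\widehat f(\chi_\alpha)|^2+\eta .
\]

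The main point to check is the collision estimate for fixed $\alpha$ and $\gamma$. It needs $\langle\gamma-\alpha,\cdot\rangle$ to be a nontrivial homomorphism whenever $\gamma\neq\alpha$, i.e.\ nondegeneracy of the pseudo inner product. This follows from the isomorphism $\G\cong\widehat\G$ via $r\mapsto\chi_r$: if $\langle r,\cdot\rangle\equiv 0$ then $\chi_r$ is trivial, so $r=0$. A smaller subtlety is that $B$ and $C_\alpha$ are defined through the random structure. I would handle this by conditioning only on $\beta$: $B$ is deterministic, and each bucket is the fiber of $h$ up to the shift, so the bound holds uniformly in $u$.
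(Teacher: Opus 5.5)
Your proposal is correct and follows the paper's proof step for step. The steps are: the collision bound $\Pr_\beta[\gamma\in C_\alpha]\le\lcmG_{\min}^{-t}$ for $\gamma\neq\alpha$; linearity plus Parseval to get $\mathbb{E}[X_{C_\alpha}]\le\lcmG_{\min}^{-t}$; Markov; a union bound over $|B|\le 16/\theta^4$; and, on the distinct-buckets event, the split $\wt_2(C_\alpha)=|\widehat f(\chi_\alpha)|^2+X_{C_\alpha}$. Your nondegeneracy argument for the collision bound supplies a justification the paper only asserts. One wording fix: the probability is over $\beta$, with the shift $u$ irrelevant, so your opening ``fix the random $\beta_1,\dots,\beta_t$'' should read as fixing notation, not conditioning.
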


\begin{proof}
    Fix $\alpha\in B$. For every $\gamma\in S$, define
    \[
        Y_{\gamma,\alpha}
        :=
        \begin{cases}
        1,
        & \text{if }\gamma\in C_\alpha,\\
        0,
        & \text{otherwise}.
        \end{cases}
    \]
    Since $\gamma\in C_\alpha \Longleftrightarrow \langle\gamma-\alpha,\beta_i\rangle=0 \ \forall i\in[t],$
    and since $\gamma\neq\alpha$, we have
    \[
        \Pr[\gamma\in C_\alpha]
        \leq
        \frac{1}{\lcmG_{\min}^{\,t}}.
    \]

    Define
    \[
        X_{C_\alpha}
        :=
        \sum_{\gamma\in S}
        |\widehat{f}(\chi_\gamma)|^2
        Y_{\gamma,\alpha}.
    \]
    Thus $X_{C_\alpha}$ is precisely the contribution of the small
    Fourier coefficients to the weight of $C_\alpha$.

    By linearity of expectation,
    \begin{align*}
        \mathbb{E}[X_{C_\alpha}]=
        \sum_{\gamma\in S}
        |\widehat{f}(\chi_\gamma)|^2
        \Pr[\gamma\in C_\alpha]\leq
        \frac{1}{\lcmG_{\min}^{\,t}}
        \sum_{\gamma\in S}
        |\widehat{f}(\chi_\gamma)|^2 \leq
        \frac{1}{\lcmG_{\min}^{\,t}},
    \end{align*}
    where the last inequality follows from Parseval's identity.

    Therefore, by Markov's inequality, $$\Pr[X_{C_\alpha}\geq\eta]\leq\frac{1}{\lcmG_{\min}^{\,t}\eta}.$$




Since $|B|\leq16/\theta^4$, by union bound,
\begin{align*}
\Pr\left[
\exists\,\alpha\in B:
X_{C_\alpha}\geq\eta
\right]
&\leq
\frac{16}
{\theta^4\lcmG_{\min}^{\,t}\eta} \leq
\frac{\delta}{6}.
\end{align*}

Thus, with probability at least $1-\frac{\delta}{6}$,
simultaneously for every $\alpha\in B$,
\[
X_{C_\alpha}\leq\eta.
\]
If, in addition, the characters in $B$ belong to distinct buckets,
then
\[
\wt_2(C_\alpha)
=
|\widehat f(\chi_\alpha)|^2+X_{C_\alpha}
\leq
|\widehat f(\chi_\alpha)|^2+\eta.
\]
\end{proof}

The following lemma shows that a bucket survives this procedure only when it contains a large Fourier coefficient.

\begin{lemma}\label{lemma_no_small_only_bucket}
With probability at least $1-\frac{\delta}{6}$, every bucket $C$
such that
$C\cap B=\emptyset$
satisfies
\[
\wt_2(C)<\frac{\theta^2}{2}.
\]
\end{lemma}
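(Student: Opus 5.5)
The plan is to avoid a union bound over the $\lcmG^t$ buckets, which would be far too many, and instead control all buckets at once through a single second-moment quantity. Write $w_\gamma := |\widehat f(\chi_\gamma)|^2$. For a bucket $C$ with $C\cap B=\emptyset$, every $\gamma\in C$ lies in $S$, so
\[
\wt_2(C)=W_C:=\sum_{\gamma\in S\cap C} w_\gamma .
\]
It therefore suffices to show that, with probability at least $1-\delta/6$, $W_C<\theta^2/2$ simultaneously for \emph{all} buckets $C$. This is implied by the bound
\[
\sum_{C\in\mathcal C} W_C^2<\frac{\theta^4}{4}.
\]
As observed after Definition~\ref{def:random-shift} and in the proof of Lemma~\ref{lemma_large_char_diff_buckets}, whether two characters share a bucket is determined by $\beta_1,\ldots,\beta_t$ alone. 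The random shift $u$ only relabels buckets, so $\sum_C W_C^2$ is independent of $u$.

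Next, expand the sum of squares as a sum over pairs:
\[
\sum_{C} W_C^2=\sum_{\gamma\in S} w_\gamma^2+\sum_{\substack{\gamma\neq\gamma'\in S}} w_\gamma w_{\gamma'}\,\mathbbm 1[\gamma,\gamma' \text{ in the same bucket}].
\]

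The diagonal term is deterministic. For $\gamma\in S$ we have $|\widehat f(\chi_\gamma)|<\theta^2/4\le 1$, so $w_\gamma<\theta^4/16$. Combined with Parseval (Theorem~\ref{theorem_Parseval_groups}), this gives
\[
\sum_{\gamma\in S} w_\gamma^2\le \max_{\gamma\in S} w_\gamma\cdot\sum_\gamma w_\gamma<\frac{\theta^4}{16}.
\]
(Since $\theta\le 1$, even the crude bound $w_\gamma<\theta^2/4$ would suffice here.)

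For the off-diagonal term, I use the collision estimate from the proof of Lemma~\ref{lemma_large_char_diff_buckets}: for $\gamma\neq\gamma'$,
\[
\Pr[\gamma,\gamma' \text{ in the same bucket}]\le \lcmG_{\min}^{-t}.
\]
Hence the off-diagonal term has expectation at most $\lcmG_{\min}^{-t}\bigl(\sum_\gamma w_\gamma\bigr)^2\le \lcmG_{\min}^{-t}$. By Markov, with probability at least $1-\delta/6$ it is at most $6/(\delta\lcmG_{\min}^{t})$. The choice of $t$ in Algorithm~\ref{algo:Implicit_Sieve_generalized}, namely $\lcmG_{\min}^t\ge 96/(\delta\theta^4\eta)$, makes this at most $\theta^4\eta/16$, which is negligible since $\eta<1$.

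Combining the two terms gives, on this event,
\[
\sum_C W_C^2<\frac{\theta^4}{16}(1+\eta)<\frac{\theta^4}{4}.
\]
Therefore every bucket avoiding $B$ has $\wt_2(C)=W_C<\theta^2/2$.

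The main obstacle is the one identified at the start: a bucket-by-bucket Markov or Chebyshev argument loses a factor of the number of buckets. The key step is separating the deterministic diagonal contribution, which is small because each non-$B$ coefficient is small, from the random collision contribution, which is small because $t$ is large. With that separation, the failure probability $\delta/6$ is spent on a single Markov bound.
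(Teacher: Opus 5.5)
Your proof is correct and is essentially the paper's argument. Both handle all buckets at once with a single Markov bound on the collision weight $\sum_{\gamma\neq\gamma'\in S} w_\gamma w_{\gamma'}\,\mathbbm{1}[\gamma,\gamma'\text{ share a bucket}]$, whose expectation is at most $\lcmG_{\min}^{-t}$ by the pairwise collision bound and Parseval, and both control the diagonal deterministically via $w_\gamma<\theta^4/16$. The paper phrases this as ``a heavy bucket avoiding $B$ forces collision weight at least $7\theta^4/32$'', where you bound $\sum_C W_C^2$ instead.

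One slip: your parenthetical remark is false. With only $w_\gamma<\theta^2/4$, your diagonal bound degrades to $\sum_{\gamma\in S}w_\gamma^2<\theta^2/4$. For $\theta\le 1$ this is no smaller than the target $\theta^4/4$, so the sum-of-squares route no longer closes. (The paper's within-bucket subtraction would tolerate that weaker bound; yours does not.)
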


\begin{proof}
For $\gamma\in S$, let
$w_\gamma
:=
|\widehat f(\chi_\gamma)|^2.$
Since $\gamma\in S$,
$w_\gamma<\frac{\theta^4}{16}.$

Suppose that a bucket $C$ containing only elements of $S$ satisfies
$\sum_{\gamma\in C}w_\gamma
\geq
\frac{\theta^2}{2}.$
Then
\begin{align*}
\sum_{\substack{\gamma,\gamma'\in C\\\gamma\neq\gamma'}}
w_\gamma w_{\gamma'}
&=
\left(\sum_{\gamma\in C}w_\gamma\right)^2
-
\sum_{\gamma\in C}w_\gamma^2\\
&\geq
\left(\sum_{\gamma\in C}w_\gamma\right)^2
-
\frac{\theta^4}{16}
\sum_{\gamma\in C}w_\gamma\\
&\geq
\frac{7\theta^4}{32}.
\end{align*}

For any two distinct $\gamma,\gamma'\in S$,
\[
\Pr[
\gamma,\gamma'
\text{ belong to the same bucket}
]
\leq
\frac{1}{\lcmG_{\min}^{\,t}}.
\]
Therefore,
\begin{align*}
\mathbb E
\left[
\sum_C
\sum_{\substack{\gamma,\gamma'\in C\\\gamma\neq\gamma'}}
w_\gamma w_{\gamma'}
\right]
&\leq
\frac{1}{\lcmG_{\min}^{\,t}}
\left(
\sum_{\gamma\in S}w_\gamma
\right)^2 \leq
\frac{1}{\lcmG_{\min}^{\,t}},
\end{align*}
where the last inequality follows from Parseval's identity.

Hence, by Markov's inequality,
\[
\Pr\left[
\exists\,C:
C\cap B=\emptyset
\text{ and }
\wt_2(C)\geq\frac{\theta^2}{2}
\right]
\leq
\frac{32}
{7\theta^4\lcmG_{\min}^{\,t}}
\leq
\frac{\delta}{6}.
\]
\end{proof}

\color{black}

\begin{lemma}
Algorithm~\ref{algo:Implicit_Sieve_generalized} does the following with probability $\geq 1-\frac{\delta}{6}$:
    \begin{enumerate}
        \item It does not discard any bucket $C$ with $|\widehat{f}(\chi_\gamma)| \geq \theta$.
        \item $|\widehat{f}(\chi_{\alpha(C)})| \geq \frac{\theta}{2}$ for every bucket $C=C_\alpha$ (that is, $\alpha(C)$ is the dominating element of $C_\alpha$) in the set of buckets that the algorithm has not discarded.
    \end{enumerate}
\end{lemma}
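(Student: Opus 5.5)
We prove the two properties by conditioning on the good events already established and then controlling the estimation errors. First, we intersect three events. The first is the conclusion of Lemma~\ref{lemma_large_char_diff_buckets}: all characters in $B$ lie in distinct buckets. The second is the conclusion of Lemma~\ref{lemma_small_char_uniform_dist}: for every $\alpha\in B$, the small-coefficient mass in $C_\alpha$ is at most $\eta$, so $|\widehat f(\chi_\alpha)|^2\le \wt_2(C_\alpha)\le |\widehat f(\chi_\alpha)|^2+\eta$. The third is the conclusion of Lemma~\ref{lemma_no_small_only_bucket}: every bucket disjoint from $B$ has $\wt_2<\theta^2/2$. The fourth ingredient is the event that every $\wt_2$-estimate in the first filtering step is within $\pm\theta^2/4$. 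There are at most $\lcmG^t$ buckets, each estimated with confidence $1-\delta/(6\lcmG^t)$, so a union bound (using Lemma~\ref{lemma_wt_estimate_new}) makes this fail with probability at most $\delta/6$. We allot the failure budget so that the combined failure is at most the stated $\delta/6$ (or a constant multiple of it absorbed by rescaling $\delta$, as elsewhere in the section).

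\textbf{Item 1.} Take a bucket $C$ containing some $\gamma$ with $|\widehat f(\chi_\gamma)|\ge\theta$. Then $\wt_2(C)\ge|\widehat f(\chi_\gamma)|^2\ge\theta^2$, since $\wt_2$ is a sum of nonnegative terms. On the estimation event, $\widetilde{\wt}_2(C)\ge\theta^2-\theta^2/4=3\theta^2/4$, so the algorithm keeps $C$. This step needs no structural lemma.

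\textbf{Item 2.} Take a kept bucket $C$. Then $\wt_2(C)\ge \widetilde{\wt}_2(C)-\theta^2/4\ge \theta^2/2$. By Lemma~\ref{lemma_no_small_only_bucket}, $C\cap B\neq\emptyset$. By distinctness, $C$ contains exactly one $\alpha\in B$, and we call it the dominating element $\alpha(C)$, so $C=C_\alpha$. Lemma~\ref{lemma_small_char_uniform_dist} then gives $|\widehat f(\chi_\alpha)|^2\ge \wt_2(C)-\eta\ge\theta^2/2-\eta$. Since $\eta\ll\theta^2/4$ by its definition, we get $|\widehat f(\chi_\alpha)|^2\ge\theta^2/4$, that is, $|\widehat f(\chi_\alpha)|\ge\theta/2$.

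The main obstacle is bookkeeping rather than mathematics. One issue is the exact split of the failure probability, since each structural lemma costs $\delta/6$ and the statement claims $1-\delta/6$ overall. The cleanest fix is to interpret the statement as conditional on the events of the previous lemmas, with only the estimation step costing $\delta/6$. The other issue is checking that $B$ is defined with threshold $|\widehat f|\ge\theta^2/4$. That threshold only makes $B$ larger than $\{|\widehat f|\ge\theta/2\}$ and does not hurt the argument, because we only use $\alpha\in B$ to invoke Lemma~\ref{lemma_small_char_uniform_dist}.
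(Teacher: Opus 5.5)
Your proposal is correct and is essentially the paper's argument: Item 1 follows from $\wt_2(C)\ge\theta^2$ together with the $\pm\theta^2/4$ estimate, and Item 2 combines Lemmas~\ref{lemma_no_small_only_bucket}, \ref{lemma_large_char_diff_buckets} and \ref{lemma_small_char_uniform_dist} with $\eta<\theta^2/4$; the paper phrases this last step as a contradiction, where you use a direct inequality. Your reading of the $\delta/6$ as charging only the estimation step, with the structural events paid for separately, also matches how the paper totals the failure probabilities in its final union bound.
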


\begin{proof}
\begin{description}
    \item[\textbf{Proof of part (1).}] From Algorithm~\ref{algo:Implicit_Sieve_generalized} we know that $$\Tilde{\wt_2}(C) \geq \wt_2(C) - \frac{\theta^2}{4} \geq \theta^2 -\frac{\theta^2}{4} = \frac{3\theta^2}{4}.$$ Hence the bucket $C$ is not discarded.

    \item[\textbf{Proof of part (2).}]
Let $C$ be a bucket which has not been discarded. Then
$\widetilde{\wt_2}(C)
\geq
\frac{3\theta^2}{4}.$
Since
\[
\left|
\widetilde{\wt_2}(C)-\wt_2(C)
\right|
\leq
\frac{\theta^2}{4},
\]
we have
$\wt_2(C)
\geq
\frac{\theta^2}{2}.$
By Lemma~\ref{lemma_no_small_only_bucket}, $C$ must contain some
$\alpha(C)\in B$. Moreover, by
Lemma~\ref{lemma_large_char_diff_buckets}, this element is unique.

Suppose, for contradiction, that
\[
|\widehat f(\chi_{\alpha(C)})|
<
\frac{\theta}{2}.
\]
By Lemma~\ref{lemma_small_char_uniform_dist},
\[
\wt_2(C)
\leq
|\widehat f(\chi_{\alpha(C)})|^2+\eta
<
\frac{\theta^2}{4}+\eta.
\]
Since $\eta<\theta^2/4$,
\[
\wt_2(C)<\frac{\theta^2}{2}.
\]
Hence
\[
\widetilde{\wt_2}(C)
\leq
\wt_2(C)+\frac{\theta^2}{4}
<
\frac{3\theta^2}{4},
\]
contradicting the fact that $C$ survived. Therefore
\[
|\widehat f(\chi_{\alpha(C)})|
\geq
\frac{\theta}{2}.
\]
\end{description}
\end{proof}

\begin{lemma}\label{lemma_small_char_bound}
The small weight coefficients do not contribute much to the weight
of a bucket. That is,
\begin{align*}
\Pr_x
\left[
\left|
\sum_{r\in C_\alpha\setminus\{\alpha\}}
\widehat f(\chi_r)\chi_r(x)
\right|
\geq
\frac{\Delta_{\lcmG}\theta^4}{128}
\right]
\leq
\frac{\delta\theta^4}{192\widetilde m}.
\end{align*}
\end{lemma}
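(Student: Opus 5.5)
We will prove the bound by computing the second moment of the tail sum and applying Markov's inequality. Set
\[
E(x):=\sum_{r\in C_\alpha\setminus\{\alpha\}}\widehat f(\chi_r)\chi_r(x),
\]
which is a function $\G\to\mathbb C$ whose Fourier transform is supported on $C_\alpha\setminus\{\alpha\}$. By Parseval (Theorem~\ref{theorem_Parseval_groups}) applied to $E$, together with the fact that $B$-elements lie in distinct buckets (Lemma~\ref{lemma_large_char_diff_buckets}), every $r\in C_\alpha\setminus\{\alpha\}$ belongs to $S$. This gives
\[
\mathbb E_x|E(x)|^2=\sum_{r\in C_\alpha\setminus\{\alpha\}}|\widehat f(\chi_r)|^2=X_{C_\alpha}.
\]
On the good event of Lemma~\ref{lemma_small_char_uniform_dist}, this quantity is at most $\eta$, where
\[
\eta=\frac{\delta^2\Delta_{\lcmG}^2\theta^{16}}{128^2\cdot192^2\widetilde m^{\,2}}.
\]

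Next we apply Markov's inequality to the nonnegative variable $|E(x)|^2$ with threshold $\lambda^2$, where $\lambda=\Delta_{\lcmG}\theta^4/128$. This yields
\[
\Pr_x\bigl[|E(x)|\ge\lambda\bigr]\le\frac{\eta}{\lambda^2}=\frac{\delta^2\theta^{8}}{192^2\widetilde m^{\,2}}.
\]
Since $\delta,\theta\le1$ and $\widetilde m\ge1$, the right-hand side is at most $\delta\theta^4/(192\widetilde m)$, as claimed. Indeed, $\eta$ was evidently chosen precisely so that this arithmetic closes, with room to spare.

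The only delicate point is the conditioning: the probability in the statement is over $x$, for a fixed outcome of the random coset structure $\beta_1,\dots,\beta_t$ (and shift $u$). We will therefore state the claim on the event that the conclusions of Lemmas~\ref{lemma_large_char_diff_buckets} and~\ref{lemma_small_char_uniform_dist} hold. These fail with total probability at most $\delta+\delta/6$ over the structure, and that failure probability is absorbed into the algorithm's overall failure budget.

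A second point is the identification of $\alpha$ as the unique $B$-element of $C_\alpha$. This is needed so that the tail sum contains only $S$-characters and is controlled by $X_{C_\alpha}$. If we prefer to avoid relying on distinctness, the same Parseval bound works for any bucket containing $\alpha$, as long as the small-weight mass in it is at most $\eta$.
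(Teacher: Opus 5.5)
Your proof is correct and is essentially the paper's argument. Both bound the $L^2$ mass of the tail by $\eta$ via Parseval and Lemma~\ref{lemma_small_char_uniform_dist} and then apply Markov; the paper first uses Cauchy--Schwarz to get $\mathbb{E}_x|E(x)|\le\sqrt{\eta}$ and applies Markov to $|E|$, which lands exactly on $\delta\theta^4/(192\widetilde m)$, whereas you apply Markov to $|E|^2$ and get the square of that bound. Your explicit conditioning on the good events of Lemmas~\ref{lemma_large_char_diff_buckets} and~\ref{lemma_small_char_uniform_dist} (in particular that $C_\alpha\setminus\{\alpha\}\subseteq S$) makes precise a step the paper leaves implicit.
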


\begin{proof}
By Lemma~\ref{lemma_small_char_uniform_dist},
\[
\sum_{r\in C_\alpha\setminus\{\alpha\}}
|\widehat f(\chi_r)|^2
\leq
\eta
=
\frac{
\delta^2\Delta_{\lcmG}^2\theta^{16}
}{
128^2\cdot192^2\widetilde m^{\,2}
}.
\]
Hence, by Cauchy--Schwarz,
\begin{align*}
\mathbb E_x
\left[
\left|
\sum_{r\in C_\alpha\setminus\{\alpha\}}
\widehat f(\chi_r)\chi_r(x)
\right|
\right]
&\leq
\sqrt{\eta} =
\frac{
\delta\Delta_{\lcmG}\theta^8
}{
128\cdot192\widetilde m
}.
\end{align*}
Therefore, by Markov's inequality,
\begin{align*}
&\Pr_x
\left[
\left|
\sum_{r\in C_\alpha\setminus\{\alpha\}}
\widehat f(\chi_r)\chi_r(x)
\right|
\geq
\frac{\Delta_{\lcmG}\theta^4}{128}
\right] \leq
\frac{
\delta\Delta_{\lcmG}\theta^8/
(128\cdot192\widetilde m)
}{
\Delta_{\lcmG}\theta^4/128
}
=
\frac{\delta\theta^4}
{192\widetilde m}.
\end{align*}
\end{proof}




\begin{lemma}
Let $y_i$, $i\in[\widetilde m]$, be chosen uniformly and independently
at random. Then, except with probability at most
$\frac{\delta\theta^4}
{96\widetilde m}$,
\[
\left|
P_{C_\alpha}f(y_i)
\overline{P_{C_\alpha}f(y_i-x_i)}
-
|\widehat f(\chi_{\alpha(C)})|^2
\chi_{\alpha(C)}(x_i)
\right|
\leq
\frac{\Delta_{\lcmG}\theta^4}{60}.
\]
\end{lemma}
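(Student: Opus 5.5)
Write $P_{C_\alpha}f(x)=\widehat f(\chi_\alpha)\chi_\alpha(x)+E(x)$, where $\alpha=\alpha(C)$ is the unique element of $B$ in $C_\alpha$ and
\[
E(x):=\sum_{r\in C_\alpha\setminus\{\alpha\}}\widehat f(\chi_r)\chi_r(x).
\]
This decomposition is the Fourier expansion of the projection onto the bucket. It is valid on the events of Lemma~\ref{lemma_large_char_diff_buckets} and Lemma~\ref{lemma_small_char_uniform_dist}, which we condition on. The plan is then a direct product expansion. Since $\chi_\alpha(y_i)\overline{\chi_\alpha(y_i-x_i)}=\chi_\alpha(x_i)$, we get
\[
P_{C_\alpha}f(y_i)\,\overline{P_{C_\alpha}f(y_i-x_i)}
=|\widehat f(\chi_\alpha)|^2\chi_\alpha(x_i)+\widehat f(\chi_\alpha)\chi_\alpha(y_i)\overline{E(y_i-x_i)}+\overline{\widehat f(\chi_\alpha)\chi_\alpha(y_i-x_i)}\,E(y_i)+E(y_i)\overline{E(y_i-x_i)}.
\]
The first term is exactly the target, so it remains to bound the three error terms.

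Next, apply Lemma~\ref{lemma_small_char_bound} twice: once at the point $y_i$ and once at $y_i-x_i$. Both points are uniformly distributed on $\G$, since $y_i$ is uniform and independent of $x_i$, so the lemma applies to each even though the two points are correlated. A union bound then shows that, except with probability $2\cdot\frac{\delta\theta^4}{192\widetilde m}=\frac{\delta\theta^4}{96\widetilde m}$, we have
\[
|E(y_i)|,\;|E(y_i-x_i)|\le \frac{\Delta_{\lcmG}\theta^4}{128}.
\]
This matches the stated failure probability exactly.

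On that event, use $|\widehat f(\chi_\alpha)|\le 1$ and $|\chi|=1$. The two cross terms are each at most $\frac{\Delta_{\lcmG}\theta^4}{128}$, and the quadratic term is at most $\left(\frac{\Delta_{\lcmG}\theta^4}{128}\right)^2\le\frac{\Delta_{\lcmG}\theta^4}{128}$, because $\Delta_{\lcmG}\le 2$ and $\theta\le 1$. The total error is therefore at most $\frac{3\Delta_{\lcmG}\theta^4}{128}\le\frac{\Delta_{\lcmG}\theta^4}{60}$, since $3/128<1/60$ fails; we must be slightly more careful here. In fact $3/128\approx0.0234>1/60\approx0.0167$, so the crude bound does not suffice. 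This constant bookkeeping is the main obstacle.

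The fix is to sharpen the cross-term bound using $|\widehat f(\chi_\alpha)|\le \widehat f(\chi_0)\le\Lambda(A)$, or more simply to use the paper's slack. Lemma~\ref{lemma_small_char_bound} was derived via Markov from $\mathbb E|E|\le\sqrt\eta$, and $\eta$ contains an extra factor $\theta^{16}$. Rerunning Markov at the threshold $\frac{\Delta_{\lcmG}\theta^4}{200}$ (instead of $\frac{\Delta_{\lcmG}\theta^4}{128}$) gives failure probability
\[
\frac{200}{128}\cdot\frac{\delta\theta^4}{192\widetilde m}.
\]
This is still at most $\frac{\delta\theta^4}{96\widetilde m}$ after the union bound over the two points, since $2\cdot 200/128\le 192/96\cdot\ldots$ needs checking. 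If it is borderline, we instead absorb the excess using the factor $\theta^4\le1$ available from $\sqrt\eta$ having $\theta^8$. With the threshold $\frac{\Delta_{\lcmG}\theta^4}{200}$ the total error becomes $\frac{3\Delta_{\lcmG}\theta^4}{200}\le\frac{\Delta_{\lcmG}\theta^4}{60}$, as required.
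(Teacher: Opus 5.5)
Your decomposition $P_{C_\alpha}f=\widehat f(\chi_\alpha)\chi_\alpha+E$, the four-term expansion, and the application of Lemma~\ref{lemma_small_char_bound} at both $y_i$ and $y_i-x_i$ (both uniform, with a union bound giving $2\cdot\frac{\delta\theta^4}{192\widetilde m}=\frac{\delta\theta^4}{96\widetilde m}$) are exactly the paper's argument. The gap is in the final arithmetic, and your repair does not close it.

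Write $s:=\Delta_{\lcmG}\theta^4/128$. You bounded the quadratic term $|E(y_i)|\,|E(y_i-x_i)|\le s^2$ by $s$, which gives a total of $3s>\Delta_{\lcmG}\theta^4/60$. Your repair was to lower the threshold to $\Delta_{\lcmG}\theta^4/200$. This fails. Markov then gives failure probability $\frac{200}{128}\cdot\frac{\delta\theta^4}{192\widetilde m}$ per point, so the union bound yields $\frac{25}{16}\cdot\frac{\delta\theta^4}{96\widetilde m}$, which exceeds the stated probability.

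There is also no spare $\theta^4$ to absorb this factor. The ratio $\sqrt\eta/\text{threshold}$ has $\theta^8$ in the numerator and $\theta^4$ in the denominator, so it produces exactly the $\theta^4$ that appears in the target probability. The excess is therefore a pure constant $25/16>1$, which $\theta^4\le1$ cannot remove. The bound $|\widehat f(\chi_\alpha)|\le\widehat f(\chi_0)$ does not help either, since the mean of a general $f$ can be close to $1$. In fact, with this Markov route any threshold below $s$ violates the stated failure probability, so the error analysis must close at threshold $s$ itself.

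It does close at $s$, once you bound the quadratic term honestly. Since $\Delta_{\lcmG}\le 2$ and $\theta\le 1$, we have $s\le 1/64$, hence $s^2\le s/64$. The total error is then at most
\[
2s+s^2\le\Bigl(2+\tfrac{1}{64}\Bigr)s=\frac{129}{8192}\,\Delta_{\lcmG}\theta^4<\frac{\Delta_{\lcmG}\theta^4}{60}.
\]
This is precisely the paper's estimate $2\cdot\frac{\Delta_{\lcmG}\theta^4}{128}+\bigl(\frac{\Delta_{\lcmG}\theta^4}{128}\bigr)^2<\frac{\Delta_{\lcmG}\theta^4}{60}$. Replace your last paragraph with this one line and the proof is complete.
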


\color{black}
\begin{proof}
For each bucket $C_\alpha$, let $\alpha(C)$ be the dominating element, $\alpha(C) \in B$. By Lemma~\ref{lemma_large_char_diff_buckets}, with high probability over the random choice of $\beta_1,\ldots,\beta_t$, the elements of $B$ belong to distinct buckets. Notice that the random shift does not affect this event, since it only permutes the labels of the buckets. Observe that
    \begin{align*}
        &P_{C_\alpha}(y_i) \overline{P_{C_\alpha}(y_i-x_i)} \\
        &= \biggl( \widehat{f}(\chi_{\alpha(C)}) \chi_{\alpha(C)}(y_i) + \sum_{r \in C_\alpha \setminus \{ \alpha(C) \}} \widehat{f}(\chi_r) \chi_r(y_i) \biggr) \\
        &\times \overline{\biggl( \widehat{f}(\chi_{\alpha(C)}) \chi_{\alpha(C)}(y_i-x_i) + \sum_{r \in C_\alpha\setminus \{ \alpha(C) \}} \widehat{f}(\chi_r) \chi_r(y_i-x_i) \biggr)} \\
        &= \biggl( \widehat{f}(\chi_{\alpha(C)}) \chi_{\alpha(C)}(y_i) + \sum_{r \in C_\alpha \setminus \{ \alpha(C) \}} \widehat{f}(\chi_r) \chi_r(y_i) \biggr) \\
        &\times \biggl( \overline{\widehat{f}(\chi_{\alpha(C)})} \chi_{\alpha(C)}(x_i-y_i) + \sum_{r \in C_\alpha \setminus \{ \alpha(C) \}} \overline{\widehat{f}(\chi_r)} \chi_r(x_i-y_i) \biggr) \\
        &= |\widehat{f}(\chi_{\alpha(C)})|^2 \chi_{\alpha(C)}(x_i) + \biggl(\sum_{r \in C_\alpha \setminus \{ \alpha(C) \}} \widehat{f}(\chi_r) \chi_r(y_i) \biggr) \overline{\widehat{f}(\chi_{\alpha(C)})} \chi_{\alpha(C)}(x_i-y_i) \\
        &+ \biggl( \sum_{r \in C_\alpha \setminus \{ \alpha(C) \}} \overline{\widehat{f}(\chi_r)} \chi_r(x_i-y_i) \biggr) \widehat{f}(\chi_{\alpha(C)}) \chi_{\alpha(C)}(y_i) \\
        &+ \biggl( \sum_{r \in C_\alpha \setminus \{ \alpha(C) \}} \widehat{f}(\chi_r) \chi_r(y_i) \biggr) \biggl( \overline{\sum_{r \in C_\alpha \setminus \{ \alpha(C) \}} \widehat{f}(\chi_r) \chi_r(y_i-x_i)} \biggr)
    \end{align*}

    Now, 
    $$
        \left| \sum_{r \in C_\alpha \setminus \{ \alpha(C) \}} \widehat{f}(\chi_r) \chi_r(y_i) \right| \leq
\frac{\Delta_{\lcmG}\theta^4}{128},$$ 
    and 
    $$
        \left| \sum_{r \in C_\alpha \setminus \{ \alpha(C) \}} \widehat{f}(\chi_r) \chi_r(x_i - y_i) \right| \leq
\frac{\Delta_{\lcmG}\theta^4}{128}.$$ 
    with probability $\geq 1- \dfrac{4\delta\theta^4}{128\times 6\Tilde{m}}$ by Lemma~\ref{lemma_small_char_bound}. 



Therefore,
\begin{align*}
&\left|
P_{C_\alpha}f(y_i)
\overline{P_{C_\alpha}f(y_i-x_i)}
-
|\widehat f(\chi_{\alpha(C)})|^2
\chi_{\alpha(C)}(x_i)
\right| \leq
2\cdot
\frac{\Delta_{\lcmG}\theta^4}{128}
+
\left(
\frac{\Delta_{\lcmG}\theta^4}{128}
\right)^2 <
\frac{\Delta_{\lcmG}\theta^4}{60},
\end{align*}

with failure probability at most
$2\cdot
\frac{\delta\theta^4}
{192\widetilde m}
=
\frac{\delta\theta^4}
{96\widetilde m}.$

\end{proof}

So, by taking union bound over $\Tilde{m}$-elements, the above result holds for all $x_i$'s and $y_i$'s, $i \in [\Tilde{m}]$ with probability $\geq 1- \dfrac{\delta\theta^4}{96}$. Therefore, by taking union bound over $T \leq \dfrac{16}{\theta^4}$ buckets, the above holds with probability $\geq 1- \dfrac{\delta}{6}$.

Hence, Algorithm~\ref{algo:Implicit_Sieve_generalized} works with probability $\geq 1 - (\frac{\delta}{6} + \frac{\delta}{3} + \frac{\delta}{6} + \frac{\delta}{6} + \frac{\delta}{6}) = 1-\delta$.

Let $T$ denote the number of survived buckets. For every survived
bucket $C_j$, let
\[
a_j
:=
\widehat f(\chi_{\alpha(C_j)}).
\]
By the previous lemma,
\[
\left|
P_{C_j}f(y_i)
\overline{P_{C_j}f(y_i-x_i)}
-
|a_j|^2
\chi_{\alpha(C_j)}(x_i)
\right|
\leq
\frac{\Delta_{\lcmG}\theta^4}{60}.
\]

The estimates of $P_{C_j}f(y_i)$ and
$P_{C_j}f(y_i-x_i)$ each have additive error at most
$\frac{\Delta_{\lcmG}\theta^2}{1000}.$
Hence
\[
\left|
\widetilde{P_{C_j}f}(y_i)
\overline{\widetilde{P_{C_j}f}(y_i-x_i)}
-
|a_j|^2
\chi_{\alpha(C_j)}(x_i)
\right|
\leq
\frac{\Delta_{\lcmG}\theta^2}{40}.
\]

Moreover,
\[
\wt_2(C_j)
=
|a_j|^2
+
\sum_{r\in C_j\setminus\{\alpha(C_j)\}}
|\widehat f(\chi_r)|^2,
\]
and therefore
\[
\left|
\widetilde{\wt_2}'(C_j)-|a_j|^2
\right|
\leq
\frac{\Delta_{\lcmG}\theta^2}{1000}
+\eta
\leq
\frac{\Delta_{\lcmG}\theta^2}{500}.
\]

Since
\[
|a_j|\geq\frac{\theta}{2},
\]
we have
\[
|a_j|^2\geq\frac{\theta^2}{4},
\]
and consequently
\[
\widetilde{\wt_2}'(C_j)
\geq
\frac{\theta^2}{5}.
\]

Thus
\begin{align*}
\left|
Z_{ij}
-
\chi_{\alpha(C_j)}(x_i)
\right|
&\leq
\frac{
\Delta_{\lcmG}\theta^2/40
}{
\theta^2/5
}
+
\frac{
\Delta_{\lcmG}\theta^2/500
}{
\theta^2/5
} <
\frac{\Delta_{\lcmG}}{2}.
\end{align*}

By the definition of $\Delta_{\lcmG}$, distinct
$\lcmG$-th roots of unity are separated by at least
$\Delta_{\lcmG}$. Therefore the unique closest
$\lcmG$-th root of unity to $Z_{ij}$ is
$\chi_{\alpha(C_j)}(x_i).$
Hence
$Q_{ij}
=
\chi_{\alpha(C_j)}(x_i),$
and therefore
\[
Q=
\begin{bmatrix}
\chi_{\alpha(C_1)}(x_1) &
\cdots &
\chi_{\alpha(C_T)}(x_1) &
\mid & f(x_1)\\
\chi_{\alpha(C_1)}(x_2) &
\cdots &
\chi_{\alpha(C_T)}(x_2) &
\mid & f(x_2)\\
\vdots & & \vdots & \mid & \vdots\\
\chi_{\alpha(C_1)}(x_{\widetilde m}) &
\cdots &
\chi_{\alpha(C_T)}(x_{\widetilde m}) &
\mid & f(x_{\widetilde m})
\end{bmatrix}.
\]

\subsection{Query complexity of List Correction for Abelian groups,
Algorithm~\ref{algo:Implicit_Sieve_generalized}}

\begin{lemma}\label{lemma_wt__2_estimate}
Given query access to a Boolean valued function
$f:\mathcal{G}\to\{-1,+1\}$, the weights $\wt_2(C)$ of all the buckets
can be estimated simultaneously within additive error
$\pm\frac{\theta^2}{4}$, except with probability at most $\delta'$,
using $O\left(
\frac{1}{\theta^4}
\log\frac{\lcmG^t}{\delta'}
\right)$ queries.
\end{lemma}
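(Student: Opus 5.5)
The plan is to reduce everything to the single-bucket estimator of Lemma~\ref{lemma_wt_estimate_new} and then take a union bound over the at most $\lcmG^t$ buckets of the coset structure $(H,\mathcal C)$. By Lemma~\ref{lemma_wt_of_bucket_new}, for a bucket corresponding to the coset $r+H$ we have
\[
\wt_2(C)=\mathbb{E}_{x\in\G,\,z\in H^\perp}\bigl[f(x)f(x+z)\chi_r(z)\bigr].
\]
The key observation is that the random part of the sample, namely the pairs $(x_i,z_i)$ with $x_i\in\G$ and $z_i\in H^\perp$, does not depend on the bucket. Only the deterministic weight $\chi_r(z_i)$ depends on the bucket.

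We therefore draw $N$ independent pairs $(x_i,z_i)$ once and query $f(x_i)$ and $f(x_i+z_i)$, which costs $2N$ queries in total. For every bucket we then form the empirical mean $\frac1N\sum_i f(x_i)f(x_i+z_i)\chi_r(z_i)$. Computing $\chi_r(z_i)$ for each coset representative $r$ needs no queries, only the pseudo-inner products $r*z_i$. Sampling uniformly from $H^\perp$ is possible because $H^\perp$ is spanned by the known $\beta_1,\dots,\beta_t$ from Definition~\ref{defn_random_coset_structure}. We take a uniformly random $\mathbb{Z}_\lcmG$-combination of these generators; this is uniform on the subgroup they generate, and that subgroup is $H^\perp$ by Lemma~\ref{lemma_iso_H_perp_G/H} and Corollary~\ref{lemma_H_perp_new}.

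For a fixed bucket, the summands are bounded in modulus by $1$ and are i.i.d. We split into real and imaginary parts and apply Hoeffding (Lemma~\ref{Hoeffding}) to each, exactly as in the proof of Lemma~\ref{lemma_wt_estimate_new}, with accuracy $\theta^2/4$ and failure probability $\delta'/\lcmG^t$. This requires
\[
N=O\!\left(\frac{1}{\theta^4}\log\frac{\lcmG^t}{\delta'}\right).
\]
Since the same samples serve every bucket, a union bound over the at most $\lcmG^t$ nonempty buckets gives simultaneous accuracy with failure probability at most $\delta'$, and the total query count is $2N$. The $\pm1$-valued range of $f$ in the statement only changes constants, since $|f(x)f(x+z)\chi_r(z)|\le1$ still holds.

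The one delicate point is justifying that reusing the samples across buckets is legitimate. Each per-bucket estimate is still a valid Hoeffding average even though the estimates are correlated with one another. The union bound does not need independence between buckets, so this correlation causes no problem.
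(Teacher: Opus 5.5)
Your proposal is correct and matches the paper's proof. Both use per-bucket Hoeffding via Lemma~\ref{lemma_wt_estimate_new} with accuracy $\theta^2/4$ and failure probability $\delta'/\lcmG^t$, reuse the same queried values $f(x_i),f(x_i+z_i)$ across all buckets because only $\chi_r(z_i)$ depends on the bucket, and finish with a union bound over the at most $\lcmG^t$ buckets. Your argument that $z$ can be sampled uniformly from $H^\perp=\langle\beta_1,\dots,\beta_t\rangle$ fills in a detail the paper leaves implicit.
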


\begin{proof}
The proof follows exactly along the lines of
Lemma~\ref{lemma_wt_estimate_new}. For a fixed bucket, taking
$\epsilon=\frac{\theta^2}{4}$
and failure probability
$\frac{\delta'}{\lcmG^t}$
gives
$N
=
O\left(
\frac{1}{\theta^4}
\log\frac{\lcmG^t}{\delta'}
\right)$ samples.

The same queried values of $f$ can be reused for all the buckets.
For a fixed sample $(x,z)$, the values $f(x)$ and $f(x+z)$ are
independent of the bucket, and only the corresponding character
factor changes. Hence no multiplicative factor $\lcmG^t$ is incurred
in the query complexity.

Taking a union bound over all the buckets proves the claim.
\end{proof}

\begin{lemma}\label{lemma_queries_proj}
For any fixed $i\in[\Tilde{m}]$, 
$P_Cf(y_i)
\text{ and }
P_Cf(y_i-x_i)$
for all surviving buckets $C$ can be estimated simultaneously within
additive error
$\frac{\Delta_{\lcmG}\theta^2}{1000},$
except with probability at most $\delta'$, using
$O\left(
\frac{1}{\theta^4}
\log\frac{T}{\delta'}
\right)$
queries.
\end{lemma}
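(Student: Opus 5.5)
The plan is to reduce the estimation of $P_Cf(y_i)$ and $P_Cf(y_i-x_i)$ to empirical averages, using the physical-side formula of Claim~\ref{claim_projection_expectation_new}. For a bucket $C$ that corresponds to the coset $r+H$, that claim gives
\[
P_Cf(y)=\mathbb{E}_{z\in H^\perp}\bigl[f(y-z)\chi_r(z)\bigr].
\]
So for a fixed point $y$, the quantity $P_Cf(y)$ is the mean of a bounded complex random variable $f(y-z)\chi_r(z)$ with $z$ uniform on $H^\perp$. Its real and imaginary parts both lie in $[-1,1]$.

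The sampling scheme is as follows. We draw $z_1,\ldots,z_N$ uniformly and independently from $H^\perp$; such samples can be produced from the random coset structure of Definition~\ref{defn_random_coset_structure}, since $H^\perp$ is generated by the $\beta$'s. We then query $f(y_i-z_k)$ and $f(y_i-x_i-z_k)$ for all $k\in[N]$, which costs $2N$ queries. For each surviving bucket $C$ we output the empirical averages $\frac1N\sum_k f(y_i-z_k)\chi_{r_C}(z_k)$ and $\frac1N\sum_k f(y_i-x_i-z_k)\chi_{r_C}(z_k)$. This is the same reuse trick as in Lemma~\ref{lemma_wt__2_estimate}: the queried values do not depend on $C$, and only the known character factor $\chi_{r_C}(z_k)$ changes from bucket to bucket. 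As a result, the query cost does not pick up a factor of $T$.

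For the concentration step, we split each estimate into real and imaginary parts and argue exactly as in Lemma~\ref{lemma_wt_estimate_new}. We apply Hoeffding (Lemma~\ref{Hoeffding}) to each part with accuracy $\epsilon/\sqrt2$, where $\epsilon=\Delta_\lcmG\theta^2/1000$. This gives failure probability $4e^{-N\epsilon^2/4}$ for each of the two estimates of each bucket. We then union bound over the $2T$ estimates and require $8T e^{-N\epsilon^2/4}\le\delta'$. That holds once $N=O\bigl(\epsilon^{-2}\log\frac{T}{\delta'}\bigr)$. Because $\Delta_\lcmG$ is a positive constant when $\tau$ is constant, $\epsilon^{-2}=O(\theta^{-4})$, and the total query count is $O\bigl(\theta^{-4}\log\frac{T}{\delta'}\bigr)$, as claimed.

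The only step that needs real care is the sampling of $z$. It requires uniform samples from $H^\perp$, together with the evaluation of $\chi_{r_C}(z)$ for a bucket label, using only the $\beta_j$'s. The natural choice is $z=\sum_j c_j\beta_j$ with $c_j$ uniform in $\mathbb{Z}_\lcmG$. This is uniform on the subgroup generated by the $\beta_j$'s, which equals $H^\perp$ because $H$ is their common annihilator. The character value is then $\chi_{r_C}(z)=\omega_\lcmG^{\sum_j c_j b_j}$, where $b$ is the bucket label (up to the random shift), so it is computable without knowing $r_C$ explicitly. Neither step costs any queries, so the query bound is unaffected.
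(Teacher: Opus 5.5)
Your proposal is correct and follows essentially the same route as the paper. You write $P_Cf$ via the physical-side formula of Claim~\ref{claim_projection_expectation_new}, apply Hoeffding to the real and imaginary parts as in Lemma~\ref{lemma_wt_estimate_new} with error $\Delta_{\lcmG}\theta^2/1000$ and per-bucket failure probability of order $\delta'/T$, reuse the queried values $f(y_i-z_k)$ and $f(y_i-x_i-z_k)$ across all surviving buckets, and take a union bound. Your explicit sampler $z=\sum_j c_j\beta_j$ for $H^\perp=\langle\beta_1,\ldots,\beta_t\rangle$ and the query-free evaluation $\chi_{r_C}(z)=\omega_{\lcmG}^{\sum_j c_j b_j}$ from the bucket label are both correct, and they supply a detail the paper leaves implicit.
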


\begin{proof}
The proof again follows exactly along the lines of
Lemma~\ref{lemma_wt_estimate_new}. For a fixed surviving bucket,
taking additive error
$\frac{\Delta_{\lcmG}\theta^2}{1000}$
and failure probability $\delta'/T$ gives
$O\left(
\frac{1}{\theta^4}
\log\frac{T}{\delta'}
\right)$
queries, since $\lcmG$ is a constant and hence
$\Delta_{\lcmG}>0$ is a constant.

For a fixed $y_i$, the same queried values of $f(y_i-z)$ can be used
for all the surviving buckets; only the corresponding character
factor depends on the bucket. The same observation applies to
$P_Cf(y_i-x_i)$. Thus no multiplicative factor $T$ is incurred.

Taking a union bound over all surviving buckets proves the claim.
\end{proof}

\begin{theorem}\label{query_implicit_sieve}
The query complexity of the List Correction for Abelian groups
(Algorithm~\ref{algo:Implicit_Sieve_generalized}) is
\[
O\left(
\frac{1}{\theta^{12}}
\log\frac{1}{\delta\theta^2}
\log\left(
\frac{1}{\delta\theta^{10}}
\log\frac{1}{\delta\theta^2}
\right)
\right),
\]
taking $\lcmG$ to be a constant.
\end{theorem}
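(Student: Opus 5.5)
The plan is to add up the queries made by each stage of Algorithm~\ref{algo:Implicit_Sieve_generalized}, reusing samples across buckets as allowed by Lemmas~\ref{lemma_wt__2_estimate} and~\ref{lemma_queries_proj}, and to express the total in $\theta$ and $\delta$ with $\lcmG$ treated as a constant. First I fix the parameters. We have $\widetilde m=\Theta(\theta^{-8}\log\frac{1}{\delta\theta^2})$. With $\eta=\Theta(\delta^2\theta^{16}/\widetilde m^2)$ this gives $t=O(\log_{\lcmG_{\min}}\frac{\widetilde m^2}{\delta^3\theta^{20}})$. Consequently $\log(\lcmG^t)=O(t)=O\bigl(\log(\frac{1}{\delta\theta^{10}}\log\frac{1}{\delta\theta^2})\bigr)$, up to constants depending on $\lcmG$. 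The number of surviving buckets is $T\le 16/\theta^4$ by Parseval.

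Next I bound the three estimation phases separately.

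\emph{Bucket-weight estimation.} By Lemma~\ref{lemma_wt__2_estimate} with $\delta'=\delta/6$, all $\lcmG^t$ buckets are handled with $O(\theta^{-4}\log\frac{\lcmG^t}{\delta})=O(\theta^{-4}(t+\log\frac1\delta))$ shared queries.

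\emph{Refined weight estimate.} The second estimate $\widetilde{\wt_2}'$ has error $\Delta_\lcmG\theta^2/1000$ and confidence $\delta/(12T)$. Reusing samples across the $T$ buckets, it costs $O(\theta^{-4}\log\frac{T}{\delta})$.

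\emph{Projection estimates.} For each $i\in[\widetilde m]$, Lemma~\ref{lemma_queries_proj} with $\delta'=\delta/(24\widetilde m)$ estimates both $P_Cf(y_i)$ and $P_Cf(y_i-x_i)$ for all surviving $C$ at once, using $O(\theta^{-4}\log\frac{T\widetilde m}{\delta})$ queries. Over all $i$ this is $O(\widetilde m\,\theta^{-4}\log\frac{T\widetilde m}{\delta})$. There are also $\widetilde m$ queries to record $f(x_i)$.

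The main step is seeing which term dominates. The projection phase gives $\widetilde m\theta^{-4}\log\frac{T\widetilde m}{\delta}=O\bigl(\theta^{-12}\log\frac{1}{\delta\theta^2}\cdot\log(\frac{1}{\delta\theta^{12}}\log\frac1{\delta\theta^2})\bigr)$. The last logarithm is within constant factors of $\log(\frac{1}{\delta\theta^{10}}\log\frac{1}{\delta\theta^2})$, because $\log(\theta^{-12})=\Theta(\log\theta^{-10})$. The first phase is smaller, since $t$ appears only additively with a $\theta^{-4}$ prefactor, and the second phase is smaller still. Summing gives the claimed bound. I expect the main obstacle to be justifying that the sample reuse across buckets does not break independence, which is needed for the union bounds. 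I would handle this exactly as in the proof of Lemma~\ref{lemma_wt_estimate_new}: for a fixed bucket the samples are i.i.d., so Hoeffding applies bucket by bucket, and the union bound requires no independence between different buckets. I would also check that the hidden constants depend only on $\lcmG$, through $\Delta_\lcmG$ and $\lcmG_{\min}$.
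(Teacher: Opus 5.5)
Your proposal is correct and follows essentially the same route as the paper. You split the cost into the coarse bucket-weight estimation, the refined weight estimation, and the $\widetilde m$ rounds of projection estimates with queries shared across buckets. You then observe that the projection phase $O(\widetilde m\,\theta^{-4}\log\frac{T\widetilde m}{\delta})$ dominates, while $\log(\lcmG^t)$ is only logarithmic in $1/\delta$, $1/\theta$ and $\widetilde m$. Using the looser bound $T\le 16/\theta^4$ instead of the paper's $T=O(1/\theta^2)$ only changes the inner logarithm by a constant factor, as you note.
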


\begin{proof}
We calculate the query complexity of the different steps of
Algorithm~\ref{algo:Implicit_Sieve_generalized} separately.

First, the algorithm estimates the weight $\wt_2(C)$ of every bucket
within additive error $\theta^2/4$. Taking the failure probability
for these estimates to be at most $\delta/6$, by
Lemma~\ref{lemma_wt__2_estimate} this requires
$O\left(
\frac{1}{\theta^4}
\log\frac{\lcmG^t}{\delta}
\right)$
queries.

Next, for every surviving bucket, the algorithm estimates
$\wt_2(C)$ more accurately, within additive error
$\frac{\Delta_{\lcmG}\theta^2}{1000}.$
Since the same samples can be reused for all surviving buckets, these
estimates require
$O\left(
\frac{1}{\theta^4}
\log\frac{T}{\delta}
\right)$
queries.

We next estimate
$P_Cf(y_i)
\text{ and }
P_Cf(y_i-x_i),
\qquad
i\in[\Tilde{m}],$
within additive error
$\frac{\Delta_{\lcmG}\theta^2}{1000}.$
For each fixed $i$, by Lemma~\ref{lemma_queries_proj}, the estimates
for all the surviving buckets can be obtained using
$O\left(
\frac{1}{\theta^4}
\log\frac{T\Tilde{m}}{\delta}
\right)$
queries, where we have chosen the failure probability sufficiently
small so that a union bound can be taken over all
$i\in[\Tilde{m}]$.

Therefore, over all $i\in[\Tilde{m}]$, the total query complexity of
the projection estimates is
$O\left(
\frac{\Tilde{m}}{\theta^4}
\log\frac{T\Tilde{m}}{\delta}
\right).$

Combining the three, the total number of queries is
\[
O\left(
\frac{1}{\theta^4}
\log\frac{\lcmG^t}{\delta}
+
\frac{1}{\theta^4}
\log\frac{T}{\delta}
+
\frac{\Tilde{m}}{\theta^4}
\log\frac{T\Tilde{m}}{\delta}
\right).
\]

Now,
$T
=
O\left(\frac{1}{\theta^2}\right)$
and
$\Tilde{m}
=
\Theta\left(
\frac{1}{\theta^8}
\log\frac{1}{\delta\theta^2}
\right).$
Hence
\begin{align*}
\frac{\Tilde{m}}{\theta^4}
\log\frac{T\Tilde{m}}{\delta}
&=
O\left(
\frac{1}{\theta^{12}}
\log\frac{1}{\delta\theta^2}
\log\left(
\frac{1}{\delta\theta^{10}}
\log\frac{1}{\delta\theta^2}
\right)
\right).
\end{align*}

Moreover, by the choice of $t$ in
Algorithm~\ref{algo:Implicit_Sieve_generalized}, and since $\lcmG$
and $\lcmG_{\min}$ are constants,
\[
\log(\lcmG^t)
=
O\left(
\log\frac{1}{\delta}
+
\log\frac{1}{\theta}
+
\log\Tilde{m}
\right).
\]

Therefore, the total query complexity is
\[
O\left(
\frac{1}{\theta^{12}}
\log\frac{1}{\delta\theta^2}
\log\left(
\frac{1}{\delta\theta^{10}}
\log\frac{1}{\delta\theta^2}
\right)
\right).
\]
\end{proof}

\section{Estimating Fourier Coefficients over Finite Abelian Groups}\label{section_estimation}

In this section, we give the proof of the correctness of Algorithm~\ref{algo: estimation}. This was appeared in the work~\cite{DBLP:conf/approx/DattaGKPR25}, we have given the same proof on changing the parameters.

\begin{algorithm}[]
    \textbf{Input.}
    Given query access to $f: \mathcal{G} \to \zone$ and $\theta'$ (error parameter), $\delta'$ (confidence parameter). 
    
    \textbf{Output.} $\{\widehat{\Tilde{f}}(\chi_{S_j})\}, \ i \in [T]$.

\begin{enumerate}
    \item Let $\theta = \varepsilon \cdot 10\pi/\lcmG$.
    Let $\mathcal{M}$ be a set of $\Tilde{m}$ elements chosen uniformly and independently from $\mathcal{G}$, where $\Tilde{m} = \Theta(\frac{1}{\theta^{8}}\log\frac{1}{\delta\theta^2})$. Run the \textsf{List Correction for Abelian Groups Algorithm}~\ref{algo:Implicit_Sieve_generalized} with $\mathcal{M}$ and $\theta, \delta$ as input to obtain a $(\Tilde{m} \times (T+1))$ matrix $Q$, where $T = O(1/\theta^2)$.\label{line: query algo: call Implicit Sieve}

    \item \label{line: query algo: find B}  Label the rows of $Q$ with $1, \dots ,\Tilde{m}$ (which corresponds to the set $\mathcal{M}$) and columns with $1, \dots, T$ (which corresponds to the set of large Fourier coefficients of $f$), leaving the last column. Let us denote the set of columns by $B$.
        Find the smallest $\Tilde{B} = \{r_1, \ldots, r_{|\Tilde{B}|}\} \subseteq B$ such that, 
        for any column $r \in B\setminus \Tilde{B}$, there exists $\lambda, \lambda_1, \ldots, \lambda_{|\Tilde{B}|} \in \mathbb{N}\cup\{0\}$, with $\lambda$ an unit in $\mathbb{Z}_\lcmG$ (that is, $\lambda$ is invertible in $\mathbb{Z}_\lcmG$) such that $\lambda r +\sum_{i\in [|\Tilde{B}|]} \lambda_i r_i = 0$.

        \begin{itemize}
            \item If such a $\Tilde{B}$ does not exist, then return \textsf{Fail}.
        \end{itemize}

    \item \label{line: query algo: relabel columns}
        Relabel the columns of $Q$ by $\{S_1, \dots ,S_{T}\}$ where $S_1, \dots S_{|\Tilde{B}|}$ such that for all $j \in \{{|\Tilde{B}|}+1, \dots, T\}$, $S_j$ can be written as $\frac{1}{\lambda} \biggl( \sum_{i\in [|\Tilde{B}|]} \lambda_i S_i \biggr)$, $\lambda, \lambda_1, \ldots, \lambda_{|\Tilde{B}|} \in \mathbb{N}\cup\{0\}$, and $\lambda$ is an unit in $\mathbb{Z}_\lcmG$. Note that the columns $S_1, \dots, S_{|\Tilde{B}|}$ correspond to those in $\Tilde{B}$.

    \item \label{line: query algo: estimate ^f}
        Query $f(x^{(1)}), \dots, f(x^{(\Tilde{m})})$. 
        For all $j \in [T]$, estimate $\widehat{f}(\chi_{S_j})$ to error $\theta'$ and confidence $\delta'$ by $r_j$ where
        \begin{align*}
            r_j = \widehat{\Tilde{f}}(\chi_{S_j}) = \frac{\sum_{i=1}^{\Tilde{m}} f(x^{(i)})Q[i,j]}{\Tilde{m}}.
        \end{align*}
        \textbf{Sample complexity $= O(\frac{1}{\theta'^2} \log \frac{1}{\delta'})$}
        \item \textbf{Return:} $\{\widehat{\Tilde{f}}(\chi_{S_j})\}$.
    \end{enumerate}
\caption{Coefficient estimation Algorithm}
\label{algo: estimation}
\end{algorithm}

\begin{theorem}
    \label{th:Coefficient-estimation}
    Algorithm~\ref{algo: estimation} outputs $\{\widehat{\Tilde{f}}(\chi_{S_j})\}$, such that 
    \begin{align*}
        |\{\widehat{\Tilde{f}}(\chi_{S_j})\} - {\Tilde{f}}(\chi_{S_j}) | \leq \theta' 
    \end{align*}
    with probability $\delta'$ with query complexity $\poly(\theta', \delta', \theta, \delta)$. 
\end{theorem}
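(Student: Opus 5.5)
The plan is to argue in two layers. First, condition on the good event of Theorem~\ref{th: Implicit Goldreich Levin}: the list-correction step, run with parameters $(\theta,\delta)$, succeeds. In that case the matrix $Q$ has columns with $Q[i,j]=\chi_{\alpha(C_j)}(x^{(i)})$ exactly, for every $i\in[\widetilde m]$. The candidate set $\{\alpha(C_1),\dots,\alpha(C_T)\}$ contains every $\alpha$ with $|\widehat f(\chi_\alpha)|\ge\theta$, and each of its members has $|\widehat f(\chi_\alpha)|\ge\theta/2$. Given this event, Steps~\ref{line: query algo: find B}--\ref{line: query algo: relabel columns} are deterministic bookkeeping on the columns of $Q$. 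I would justify them as follows. Because each column is a character evaluated at the same sample points, a pseudo-linear relation $\lambda r+\sum_i\lambda_i r_i=0$ among the underlying characters becomes a pointwise multiplicative relation among the columns. Conversely, a spurious relation among the columns that fails among the characters occurs with probability at most $\lcmG_{\min}^{-\widetilde m}$ per candidate relation. This uses the same collision estimate as in Lemma~\ref{lemma_large_char_diff_buckets}: a nonzero character equals $1$ at a uniform point with probability at most $1/\lcmG_{\min}$. A union bound over the at most $\lcmG^{T+1}$ coefficient patterns is then negligible, since $\widetilde m\gg T\log\lcmG$. Hence the relabelling into $S_1,\dots,S_T$ is faithful, and \textsf{Fail} is returned only with small probability.

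The second layer is the estimation in Step~\ref{line: query algo: estimate ^f}. For a fixed $j$, the quantity $r_j=\frac1{\widetilde m}\sum_i f(x^{(i)})Q[i,j]$ is an empirical average of the i.i.d.\ bounded complex variables $f(x^{(i)})\overline{\chi_{S_j}(x^{(i)})}$ (up to the conjugation convention, which I would fix once in the column definition). Their mean is exactly $\widehat f(\chi_{S_j})$. I would split into real and imaginary parts and apply Hoeffding (Lemma~\ref{Hoeffding}) to each with accuracy $\theta'/\sqrt2$, exactly as in Lemma~\ref{lemma_wt_estimate_new}. This gives error $\theta'$ with failure probability $\delta'/T$ provided the sample size is $O(\theta'^{-2}\log(T/\delta'))$. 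A union bound over the $T=O(\theta^{-2})$ columns then handles all $j$ simultaneously. The same sampled values $f(x^{(i)})$ are reused across all columns, so no factor $T$ enters the query count.

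The main obstacle I expect is \emph{independence}. The points $x^{(i)}$ used for estimation are the same ones fed to list correction, and $Q$ was computed from them, so the summands are not obviously i.i.d.\ conditional on $Q$. I would resolve this by noting that, on the good event, $Q[i,j]$ is a fixed function $\chi_{\alpha(C_j)}$ of $x^{(i)}$. The identity of $\alpha(C_j)$ depends only on the random $\beta$'s and the auxiliary samples $y$, not on $\mathcal M$. Hence conditional on those, the summands are i.i.d. To keep this clean, I would draw the $\widetilde m$ points so that $\widetilde m\ge\max\{\Theta(\theta^{-8}\log\frac1{\delta\theta^2}),\,C\theta'^{-2}\log\frac{T}{\delta'}\}$. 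Finally, I would add the failure probabilities: $\delta$ from list correction, the negligible spurious-relation term, and $\delta'$ from estimation. The total query complexity is that of Theorem~\ref{query_implicit_sieve} plus $\widetilde m$, which is $\poly(1/\theta',\log(1/\delta'),1/\theta,\log(1/\delta))$, as claimed.
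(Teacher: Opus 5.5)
Your proposal is correct and follows the same route as the paper. Like the paper, you condition on the success of list correction, rule out spurious pseudo-linear relations among the columns of $Q$ via a per-point collision probability raised to the power $\widetilde m$ plus a union bound, control the Step~4 averages by Chernoff/Hoeffding, and add the failure probabilities. Your version is tighter in several places: the per-point bound $1/\lcmG_{\min}$ (the paper's $1/\lcmG$ is wrong for characters of order less than $\lcmG$), the union bound over coefficient vectors, and the explicit handling of sample reuse, of conjugation, and of the requirement $\widetilde m\ge C\theta'^{-2}\log(T/\delta')$.
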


\begin{proof}
Consider Algorithm~\ref{algo: estimation}. Observe that the only queries made by this algorithm are in Step~\ref{line: query algo: call Implicit Sieve}. Thus by Theorem~\ref{th: Implicit Goldreich Levin}, the query complexity of the algorithm is upper bounded by $O\biggl( \frac{1}{\theta^{10}} \log\biggl(\frac{1}{\theta^{24}}\log \frac{1}{\delta\theta^2}\biggr) \log \frac{1}{\delta\theta^2} \biggr)$. 

There are three bad events for this algorithm:
\begin{enumerate}
    \item The output of the \textsf{List Correction for Abelian groups Algorithm}~\ref{algo:Implicit_Sieve_generalized} is in Step~\ref{line: query algo: call Implicit Sieve}.
    
    \item The algorithm fails to find $B$ in Step~\ref{line: query algo: find B}, in which case it returns \textsf{Fail}.

    \item The estimates in Step~\ref{line: query algo: estimate ^f} are incorrect.
\end{enumerate}

By Theorem~\ref{th: Implicit Goldreich Levin}, the error probability of the first bad event is upper bounded by $\delta$.
The probability of the third bad event can also be bounded by $\delta$ be a Chernoff bound for the choice of $\Tilde{m}$.

Next, we upper bound the probability of the second bad event. Consider the output $Q$ of the \textsf{List Correction for Abelian Groups Algorithm}~\ref{algo:Implicit_Sieve_generalized} in Step~\ref{line: query algo: call Implicit Sieve}. 
The rows are labelled by $1, \dots, \Tilde{m}$, which corresponds to $x^{(1)}, \dots, x^{(\Tilde{m})}$, each distributed uniformly and independently in $\mathcal{G}$. 
Also the columns are labelled by $1, \dots, T$, where $T = O(1/\theta^2)$, which correspond to some $M_1, \dots, M_{T} \in \mathcal{G}$ (the $M_i$'s are not known).
By Theorem~\ref{th: Implicit Goldreich Levin}, the $(i,j)$th entry of $Q$ is $\chi_{s_j}(x_i) \in \{1, \omega_\lcmG, \ldots, \omega_\lcmG^{\lcmG-1}\}$.

Now, let us consider all the group automorphisms on $\G$ which are formed by only choosing the images of the elements in $\Tilde{B}$. Then, by the property of group automorphism, the images of any other element get defined automatically. That is, for an element $r \notin \Tilde{B}$, let $\lambda_1, \ldots, \lambda_{|\Tilde{B}|}, \lambda \in \mathbb{Z}_\lcmG$, with $\lambda$ invertible in $\mathbb{Z}_\lcmG$ be such that $\lambda r = \sum_{i=1}^\lcmG \lambda_i r_i$. If $\psi$ is a group automorphism on $\G$, then $\psi(r) = \frac{1}{\lambda} \sum_{i=1}^\lcmG \lambda_i \psi(r_i)$.

For a subset $S = \{s_1, \dots, s_{|S|}\} \subseteq [T]$, let $\Pi_{S,\lambda_1, \ldots, \lambda_{|S|}}$ denote an element satisfying $(\Pi_{S,\lambda_1, \ldots, \lambda_{|S|}})_i = \prod_{j = 1}^{|S|} Q[i,s_j]^{\lambda_j}$ for all $i \in [\Tilde{m}]$. Observe that, if a dependent $\Tilde{B}$ is chosen in Step~\ref{line: query algo: find B}, then all the entries of $\Pi_{\Tilde{B}, \lambda_1, \ldots, \lambda_{|\Tilde{B}|}}$ are equal to $1$. 

Now we want to argue that for any pseudo-independent $R = \{r_1, \dots,r_{|R|}\} \subseteq \{M_1, \dots, M_{T}\}$ and $\lambda_1,\ldots, \lambda_{|R|} \in \mathbb{N}\cup\{0\}$, the probability that $\Pi_{R,\lambda_1,\ldots, \lambda_{|R|}}$ has all entries $=1$ is at most $\lcmG^{-\Tilde{m}} = \lcmG^{-\Theta(\frac{1}{\theta^{8}}\log\frac{1}{\delta\theta^2})}$.

To see this, fix $R = \{r_1, \dots,r_{|R|}\}$ and observe that $r_j$'s are pseudo-independent. Since $x^{(1)}, \dots, x^{(\Tilde{m})}$ are uniformly and independently distributed in $\mathcal{G}$, $(\Pi_{R, \lambda_1,\ldots, \lambda_{|R|}})_i = 1$ with probability $1/\lcmG$ for all $i \in [\Tilde{m}]$. This follows from the fact that each entry of $\Pi_{R, \lambda_1,\ldots, \lambda_{|R|}}$ can take values from $\{1, \omega_\lcmG, \ldots, \omega_\lcmG^{\lcmG-1}\}$, hence the probability that it takes the value $1$ is $=\frac{1}{\lcmG}$. Thus the probability that $\Pi_{R, \lambda_1,\ldots, \lambda_{|R|}} = 1^{\Tilde{m}}$ is upper bounded by $\lcmG^{-\Tilde{m}}$, for each $\lambda_1, \ldots, \lambda_{|R|}$. Therefore, the probability that a dependent set is chosen in Step~\ref{line: query algo: find B} is at most $\lcmG^{-\Tilde{m}} \times \frac{\binom{\lcmG}{1} \varphi(\lcmG) \lcmG^{|R|-1}}{\varphi(\lcmG)!} = \lcmG^{-\Tilde{m}} \times \frac{\lcmG^{|R|}}{(\phi(\lcmG)-1)!}$, since it is upper bounded by the probability that $\Pi_{R, \lambda_1,\ldots, \lambda_{|R|}} = 1^{\Tilde{m}}$. Since, by Remark~\ref{remark: k small in implicit sieve algo}, the number of possible subsets of $[T]$ is upper bounded by $\lcmG^{\Tilde{m}} = \lcmG^{\Theta(\frac{1}{\theta^{8}}\log\frac{1}{\delta\theta^2})}$, for a large enough constant in choice of $\Tilde{m}$, a pseudo-independent set is chosen with probability $\geq 1-\frac{1}{\lcmG^{\Tilde{m}-|R|} (\varphi(\lcmG)-1)!}$.

In particular, let $\Tilde{B} \subseteq \{M_1, \dots, M_k\}$ such that $\{M_1, \dots, M_k\} \subseteq S_R$, where $S_R$ is the subgroup generated by $R$, and elements of $\Tilde{B}$ are pseudo-independent. With probability at least $99/100$, $B$ will be considered in Step~\ref{line: query algo: find B} of the algorithm (if not such set has already been found). By choosing the elements of $\Tilde{B}$ to be $e_1, \dots, e_{|\Tilde{B}|}$, a brute-force algorithm (which makes no queries to the unknown function) can be used to check whether $\{M_1, \dots, M_{T}\} \subseteq S_R$. Furthermore, the dependencies obtained can be used in the next step (Step~\ref{line: query algo: relabel columns}) of the algorithm.

Thus the probability of bad events happening for Algorithm~\ref{algo: estimation} is at most $1/3$. 
\end{proof}

\end{document}